\newtheorem{theorem}{Theorem}[section]
\newtheorem{lemma}[theorem]{Lemma}
\newtheorem{corollary}[theorem]{Corollary}
\newtheorem{propos}[theorem]{Proposition}
\newtheorem{definition}[theorem]{Definition}
\newtheorem{algorithm}[theorem]{Algorithm}
\newcommand{\E}{{\bf E}}
\newcommand{\Var}{{\bf Var}}
\newcommand{\RR}{\mathbb R}
\newcommand{\ZZ}{\mathbb Z}
\newcommand{\cI}{\mathcal I}
\newcommand{\cF}{\mathcal F}
\newcommand{\cD}{\mathcal D}
\def\b1{{\bf 1}}
\newcommand{\cond}{\ |\ }
\newcommand{\lcond}{\ \left|\ }
\newcommand{\zo}{\{0,1\}}
\newcommand{\on}{\{-1,1\}}
\newcommand{\zon}{\{0,1\}^n}
\newcommand{\sgn}{\mathsf{sign}}
\newcommand{\eps}{\epsilon}
\newcommand{\infl}{\mathsf{Infl}}
\newcommand{\U}{\mathcal{U}}
\newcommand{\equ}[1]{

\begin{equation}
#1
\end{equation}}
\newcommand{\alequn}[1]{\begin{align*} #1 \end{align*}}
\newcommand{\C}{{\mathcal C}}
\newcommand{\A}{{\mathcal A}}
\newcommand{\B}{{\mathcal B}}
\newcommand{\eat}[1]{}
\newcommand{\poly}{\mathrm{poly}}
\newcommand{\eg}{e.g.\ }
\newcommand{\fr}[1]{\frac{1}{#1}}
\newcommand{\I}{{\cal I}}
\newcommand{\F}{{\mathcal F}}
\newcommand{\etal}{{\em et al.\ }}
\newcommand{\ra}{\rangle}
\newcommand{\la}{\langle}
\title{Optimal Bounds on Approximation \\ of Submodular and XOS Functions by Juntas}
\author{Vitaly Feldman \\
IBM Research - Almaden \and Jan Vondr\'{a}k \\
IBM Research - Almaden\\
}
\begin{document}
\maketitle

\begin{abstract}
We investigate the approximability of several classes of real-valued functions by functions of a small number of variables ({\em juntas}). Our main results are tight bounds on the number of variables required to approximate a function $f:\zo^n \rightarrow [0,1]$ within $\ell_2$-error $\epsilon$ over the uniform distribution:
\begin{itemize}
\item If $f$ is submodular, then it is $\epsilon$-close to a function of $O(\frac{1}{\epsilon^2} \log \frac{1}{\epsilon})$ variables. This is an exponential improvement over previously known results \cite{FeldmanKV:13}. We note that $\Omega(\frac{1}{\epsilon^2})$ variables are necessary even for linear functions.
\item If $f$ is fractionally subadditive (XOS) it is $\epsilon$-close to a function of $2^{O(1/\epsilon^2)}$ variables. This result holds for all functions with low total $\ell_1$-influence and is a real-valued generalization of Friedgut's theorem for boolean functions. We show that $2^{\Omega(1/\epsilon)}$ variables are necessary even for XOS functions.
\end{itemize}

As applications of these results, we provide learning algorithms over the uniform distribution. For XOS functions, we give a PAC learning algorithm that runs in time $2^{1/\poly(\epsilon)} \poly(n)$. For submodular functions we give an algorithm in the more demanding PMAC learning model \cite{BalcanHarvey:12full} which requires a multiplicative $(1+\gamma)$ factor approximation with probability at least $1-\eps$ over the target distribution. Our uniform distribution algorithm runs in time $2^{1/\poly(\gamma\epsilon)} \poly(n)$. This is the first algorithm in the PMAC model that can achieve a constant approximation factor arbitrarily close to 1 for all submodular functions (even over the uniform distribution). It relies crucially on our bounds for approximation by juntas. As follows from the lower bounds in \cite{FeldmanKV:13} both of these algorithms are close to optimal. We also give applications for proper learning, testing and agnostic learning of these classes.

\end{abstract}

\thispagestyle{empty}
\newpage
\setcounter{page}{1}

\section{Introduction}
\label{sec:intro}

In this paper, we study the structure and learnability of several classes of real-valued functions over the uniform distribution on the Boolean hypercube $\zo^n$. The primary class of functions that we consider is the class of submodular functions. Submodularity, a discrete analog of convexity, has played an essential role in combinatorial optimization~\cite{E70,L83,Q95,F97,FFI00}.
Recently, interest in submodular functions has been revived by new applications
in algorithmic game theory as well as machine learning.
In machine learning, several applications \cite{GKS05,KGGK06,KSG08} have relied on the fact
that the information provided by a collection of sensors is a submodular function.
In algorithmic game theory, submodular functions have found application as {\em valuation functions}
with the property of diminishing returns \cite{LLN06,DS06,Vondrak08}. Along with submodular functions, other related classes have been studied in the algorithmic game theory context: coverage functions, gross substitutes, fractionally subadditive (XOS) functions, etc. It turns out that these classes are all contained in a broader class, that of {\em self-bounding functions}, introduced in the context of concentration of measure inequalities \cite{BoucheronLM:00}.  We refer the reader to Section~\ref{sec:prelims} for definitions and relationships of these classes.

Our focus in this paper is on {\em structural properties} of these classes of functions, specifically on their approximability by {\em juntas} (functions of a small number of variables) over the uniform distribution on $\zon$. Approximations of various function
classes by juntas is one of the fundamental topics in Boolean function analysis \cite{NisanSzegedy:92,Friedgut:98,Bourgain:02,FriedgutKN:02} with a growing number of applications in learning theory, computational complexity and algorithms \cite{DinurSafra:05, ChawlaKKRS:06, KrauthgamerRabani:06, ODonnellServedio:07, KhotRegev:08, GopalanMR:12, FeldmanKV:13}. A classical result in this area is Friedgut's theorem  \cite{Friedgut:98} which states that every boolean function $f$ is $\epsilon$-close to a function of $2^{O(\infl(f)/\epsilon^2)}$ variables, where $\infl(f)$ is the total influence of $f$ (see Sec.~\ref{sec:lowsens-prelims} for the formal definition). Such a result is not known for general real-valued functions, and in fact one natural generalization Freidgut's theorem is known not to hold \cite{ODonnellServedio:07}. However, it was recently shown \cite{FeldmanKV:13} that every submodular function with range $[0,1]$ is $\eps$ close in $\ell_2$-norm to a $2^{O(1/\epsilon^2)}$-junta. Stronger results are known in the special case when a submodular function only takes $k$ different values (for some small $k$). For this case Blais \etal prove existence of a junta of size $(k \log(1/\eps))^{O(k)}$ \cite{BlaisOSY:13manu} and Feldman \etal give a $(2^k/\eps)^5$ bound \cite{FeldmanKV:13}.

As in \cite{FeldmanKV:13}, our interest in approximation by juntas is motivated by applications to learning of submodular and XOS functions. The question of learning submodular functions from random examples was first formally considered by Balcan and Harvey \cite{BalcanHarvey:12full} who motivate it by learning of valuation functions. Reconstruction of submodular functions up to some multiplicative factor from value queries (which allow the learner to ask for the value of the function at any point) was also considered by Goemans \etal \cite{GHIM09}. These works and wide-spread applications of submodular functions have recently lead to significant attention to several additional variants of the problem of learning and testing submodular functions as well as their structural properties \cite{GuptaHRU:11,SV11,CheraghchiKKL:12,BadanidiyuruDFKNR:12,BalcanCIW:12,RaskhodnikovaYaroslavtsev:13,FeldmanKV:13,BlaisOSY:13manu}.
We survey related work in more detail in Sections \ref{sec:our-results} and \ref{sec:related-work}.

\subsection{Our Results}
\label{sec:our-results}
Our work addresses the following two questions: (i) what is the optimal size of junta that $\eps$-approximates a submodular function, and in particular whether the known bounds are optimal; (ii) which more general classes of real-valued functions can be approximated by juntas, and in particular whether XOS functions have such approximations.

In short, we provide the following answers: (i) For submodular functions with range $[0,1]$, the optimal $\eps$-approximating junta has size $\tilde{O}({1}/{\epsilon^2})$. This is an exponential improvement over the bounds in \cite{FeldmanKV:13,BlaisOSY:13manu}
which shows that submodular functions behave almost as linear functions (which are submodular) and are simpler than XOS functions which require a $2^{\Omega(1/\eps)}$-junta to approximate. This result is proved using new techniques.
(ii) All functions with range $[0,1]$ and constant total $\ell_1$-influence can be approximated in $\ell_2$-norm by a $2^{O(1/\eps^2)}$-junta. We show that this captures submodular functions, XOS and even self-bounding functions. This result is a real-valued generalization of Friedgut's theorem and is proved using the same technique.

We now describe these structural results formally and then describe new learning and testing algorithms that rely on them.

\subsubsection{Structural results}
Our main structural result is an approximation of submodular functions by juntas.
\begin{theorem}
\label{thm:submod-junta}
For any $\epsilon \in (0,\frac12)$ and any submodular function $f:\{0,1\}^n \rightarrow [0,1]$,
there exists a submodular function $g:\{0,1\}^n \rightarrow [0,1]$ depending only on a subset of variables $J \subseteq [n]$, $|J| = O(\frac{1}{\epsilon^2} \log \frac{1}{\epsilon})$, such that $\|f - g\|_2 \leq \epsilon$.
\end{theorem}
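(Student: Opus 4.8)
The plan is to exploit two structural features of submodular functions: their concentration of measure (self-bounding property) and the behavior of their Fourier/influence structure. A submodular $f:\zon\to[0,1]$ is $1$-Lipschitz after rescaling in the appropriate sense and is self-bounding, so by the concentration results of \cite{BoucheronLM:00} it is tightly concentrated around its mean; more importantly, one can control the \emph{total $\ell_1$-influence}. The key quantitative fact I would use is that for a submodular $f$ with range $[0,1]$, the sum of ``$\ell_1$-style'' influences $\sum_i \infl_i(f)$ is bounded by an absolute constant (this follows from submodularity: the marginal values $f(S\cup\{i\})-f(S)$ are monotone in $S$ and bounded, so each coordinate's total variation telescopes). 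This already feeds the Friedgut-type machinery, but that only yields a $2^{O(1/\eps^2)}$ junta, which is the weak bound we want to beat. So the real work is extracting the much stronger $\tilde O(1/\eps^2)$ bound.

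To get the polynomial bound, I would proceed as follows. First, identify the set $J$ of ``$\eps$-important'' variables: those $i$ with $\infl_i(f)$ (suitably defined, e.g.\ the expected squared marginal or the $\ell_1$ marginal) above some threshold $\tau = \poly(\eps)$. Because $\sum_i \infl_i(f) = O(1)$, we get $|J| = O(1/\tau)$; the point is to show $\tau$ can be taken as large as $\Omega(\eps^2/\log(1/\eps))$ while still capturing enough of $f$. Second, define $g$ by averaging out the coordinates outside $J$: $g(x) = \E_{y}[f(x_J, y_{\bar J})]$. Submodularity is preserved under such averaging (a convex combination / restriction of submodular functions is submodular), and $g$ has range $[0,1]$ and depends only on $J$, so the only thing to verify is $\|f-g\|_2 \le \eps$. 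Third, bound $\|f-g\|_2^2$. Writing $f - g = \sum_{S\not\subseteq J} \hat f(S)\chi_S$, we need $\sum_{S\not\subseteq J}\hat f(S)^2$ to be small. The naive bound $\sum_{S\ni i, i\notin J}\hat f(S)^2 \le \infl_i(f) \le \tau$ summed over $i\notin J$ does not terminate (there are up to $n$ such $i$), so one needs the submodular structure again: low-influence coordinates of a submodular function must have \emph{small} contribution collectively, not just individually. Concretely, I expect to use a ``boosting''/noise-stability argument — applying the noise operator $T_\rho$ and using that submodular functions have rapidly decaying Fourier tails (their Fourier mass on level-$k$ sets decays, a fact provable via the self-bounding concentration inequality and hypercontractivity) — to argue that the mass on sets not contained in $J$ is dominated by the mass on sets that intersect $\{i : \infl_i(f) > \tau\}$, which is exactly $J$, up to $\eps^2$ error.

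The main obstacle, and the part requiring genuinely new technique (as the authors flag), is the last step: showing that discarding \emph{all} low-influence coordinates at once costs only $\eps$ in $\ell_2$, rather than the exponential loss one gets from the generic Friedgut argument. The generic argument controls $\sum_{|S|>k}\hat f(S)^2$ via influence and then truncates degree, forcing $k = O(1/\eps^2)$ and hence a $2^{k}$ junta; here we must instead directly control the \emph{non-$J$ Fourier mass at all levels simultaneously}. I anticipate the right tool is a second-moment / variance decomposition specific to submodular functions: since $f$ restricted to a random subcube (fixing $J$) has variance that can be bounded using the marginal-monotonicity and boundedness, one shows $\E_{x_J}\big[\Var_{y_{\bar J}} f(x_J,y_{\bar J})\big] = \|f-g\|_2^2$ is controlled by $\sum_{i\notin J}\infl_i(f)$ times a logarithmic factor coming from a dyadic bucketing of influences (which is where the $\log(1/\eps)$ in the bound appears). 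Making this variance bound rigorous — presumably via a careful telescoping over the coordinates outside $J$ combined with the concentration of submodular functions to kill the contribution of the ``bulk'' of tiny-influence variables — is where the real effort lies; everything else (preservation of submodularity, the influence budget, defining $g$) is routine.
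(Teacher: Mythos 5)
There is a genuine gap, and it sits exactly at the step you yourself flag as ``where the real effort lies'': the proposal never actually bounds $\|f-g\|_2$ after discarding the low-influence coordinates, and the selection rule you propose cannot in general be repaired to make that bound true. Concretely, if the threshold is applied to the expected-squared marginal ($\ell_2^2$-influence) at level $\tau=\Theta(\eps^2/\log(1/\eps))$, the approach already fails for linear functions: take $f(x)=\frac1a\sum_{i\in A}x_i$ with $a=\eps^{-3/2}$. Every coordinate has $\infl^2_i(f)=\frac{1}{4a^2}=\Theta(\eps^3)\ll\tau$, so your $J$ is empty, yet the best junta on $J$ (the constant $\E[f]$) has error $\sqrt{\Var[f]}=\frac{1}{2\sqrt a}=\Theta(\eps^{3/4})\gg\eps$; in fact one must keep $\Theta(\eps^{-3/2})$ individually negligible coordinates. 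If instead you threshold the $\ell_1$-influence, the junta-size budget $|J|\le 2/\tau$ is fine, but then you have no error bound at all: the only generic estimate for $\E_{x_J}[\Var f_{x_J}]$ is $\sum_{i\notin J}\infl^2_i(f)$ (Efron--Stein), which is not controlled by $\tau$ and can be constant, and the tools you invoke to do better --- the noise operator, hypercontractivity, Fourier tail decay, ``dyadic bucketing'' --- are precisely the Friedgut-type machinery that you correctly note only yields a $2^{O(1/\eps^2)}$ junta. The underlying issue is that a small \emph{average} influence does not prevent a discarded coordinate from having a marginal of order $1$ on a non-negligible fraction of the subcube over $J$, and you need \emph{all} discarded coordinates to have small marginals \emph{simultaneously with very high probability} over $x_J$ in order to invoke the Lipschitz-concentration bound; nothing in the proposal produces such a union-boundable, exponentially small failure probability.

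The paper's proof is built around exactly this missing mechanism, and it is not Fourier-analytic at all. Variables are chosen greedily and adaptively: $i$ is added if $\Pr[\partial_i f(\b1_{S(\delta)})>\alpha]>1/2$, where $S(\delta)$ is a $\delta$-sparse random subset of the variables chosen so far (with a mirrored condition on the flipped cube for negative marginals of non-monotone functions), with $\alpha\approx\eps^2$ and $\delta\approx 1/\log(|J|/\eps^2)$. A telescoping argument bounds the number of chosen variables by $4/(\alpha\delta)$, and --- this is the crux --- the Goemans--Vondr\'ak boosting lemma for down-monotone events upgrades the failure probability of each excluded variable from $1/2$ at density $\delta$ to $2^{-1/(2\delta)}$ at density $1/2$, which is small enough to union bound over all excluded coordinates. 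Hence on all but an $\eps^2/8$ fraction of assignments to the chosen variables the restricted function is $\alpha$-Lipschitz, and the self-bounding variance bound $\Var[f_{x_{J'}}]\le 2\alpha\E[f_{x_{J'}}]$ finishes the $\ell_2$ estimate; finally, since one pass only gives $|J'|=O(\eps^{-2}\log(n/\eps))$, a bootstrapping argument (applying the statement to itself at accuracy $\eps^2$) removes the $\log n$ and yields $O(\eps^{-2}\log(1/\eps))$. So the parts of your plan that are correct (averaging out $\bar J$ preserves submodularity and range, total $\ell_1$-influence of a submodular $[0,1]$-function is $O(1)$, $\|f-g\|_2^2=\E_{x_J}[\Var f_{x_J}]$) are indeed the routine parts; the selection criterion and the high-probability Lipschitz/concentration argument that replace your thresholding step are the new content, and they are absent from the proposal.
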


We also show that this result extends to arbitrary product distributions, with a dependence on the bias of the distribution (see Appendix~\ref{sec:product-distribution}). In the special case of submodular functions that take values in $\{0,1,\ldots,k\}$, our result can be simplified to give a junta of size $O(k\log(k/\eps))$  ($\eps$ being the disagreement probability). This is an exponential improvement over bounds in both \cite{FeldmanKV:13} and \cite{BlaisOSY:13manu} (see Corollary \ref{cor:submod-junta-pseudo} for a formal statement).

\medskip

\noindent{\bf Proof technique.}  Our proof is based on a new procedure that selects variables to be included in the approximating junta for a submodular function $f$. We view the hypercube $\{0,1\}^n$ as subsets of $\{1,2,\ldots,n\}$ and refer to $f(S \cup \{i\}) - f(S)$ as the marginal value of variable $i$ on set $S$. Iteratively, we add a variable $i$ if its marginal value is large enough with probability at least $1/2$ taken over sparse random subsets of the variables that are already chosen. 
One of the key pieces of the proof is the use of a ``{\em boosting lemma}\footnote{The terminology comes from \cite{GoemansVondrak:06} and has no connection with the notion of boosting in machine learning.}" on down-monotone events of  Goemans and Vondr\'ak \cite{GoemansVondrak:06}. We use it to show that our criterion for selection of the variables implies that with very high probability over a random and uniform choice of a subset of the selected variables, the marginal value of each of the variables that are excluded is small. The probability of having small marginal value is high enough to apply a union bound over all excluded variables. Bounded marginal values are equivalent to the function being Lipschitz in all the excluded variables which allows us to apply {\em concentration of Lipschitz submodular functions} to replace the functions of excluded variables by constants. Concentration bounds for submodular functions were first given by Boucheron \etal \cite{BoucheronLM:00} and are also a crucial component of some of the prior works in this area \cite{BalcanHarvey:12full,GuptaHRU:11,FeldmanKV:13}.

One application of this procedure allows us to reduce the number of variables from $n$ to $O(\frac{1}{\epsilon^2} \log \frac{n}{\epsilon})$. This process can be repeated until the number of variables becomes $O(\frac{1}{\epsilon^2} \log \frac{1}{\epsilon})$.

Using a more involved argument based on the same ideas we show that monotone submodular functions can with high probability be {\em multiplicatively} approximated by a junta. Formally, $g$ is a multiplicative $(\alpha,\epsilon)$-approximation to $f$ over a distribution $D$, if $\Pr_D[f(x) \leq g(x) \leq \alpha f(x)] \geq 1-\epsilon$. In the PMAC learning model, introduced by Balcan and Harvey \cite{BalcanHarvey:12full} a learner has to output a hypothesis that multiplicatively $(\alpha,\epsilon)$-approximates the unknown function. It is a relaxation of the worst case multiplicative approximation used in optimization but is more demanding than the $\ell_1$/$\ell_2$-approximation that is the main focus of our work. We prove the following:

\begin{theorem}
\label{thm:PMAC-junta}
For every monotone submodular function $f:\{0,1\}^n \rightarrow \RR_+$ and every $\gamma, \epsilon \in (0,1)$, there is a monotone submodular function $h:\{0,1\}^J \rightarrow \RR_+$ depending only on a subset of variables $J \subseteq [n], |J| = O(\frac{1}{\gamma^2} \log \frac{1}{\gamma \epsilon} \log \frac{1}{\epsilon})$ such that $h$ is a multiplicative $(1+\gamma,\epsilon)$-approximation of $f$ over the uniform distribution.
\end{theorem}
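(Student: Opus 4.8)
The plan is to run the marginal‑value selection procedure behind Theorem~\ref{thm:submod-junta} at every relevant \emph{scale} of $f$ and glue the resulting additive guarantees into a multiplicative one. A first, routine reduction removes the unboundedness of the range: by concentration of monotone (self‑bounding) submodular functions \cite{BoucheronLM:00} there is a threshold $T=O(\E[f]+c\log\frac1\epsilon)$, where $c$ is the largest single‑coordinate marginal of $f$, with $\Pr[f>T]\le\epsilon/3$; the capped function $\min(f,T)$ is still monotone submodular and agrees with $f$ off an $\epsilon/3$‑fraction, and since the relation $f(x)\le h(x)\le(1+\gamma)f(x)$ is invariant under rescaling both $f$ and $h$, it suffices to $(1+\gamma,2\epsilon/3)$‑approximate $\min(f,T)/T$. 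So assume from now on that $f:\zon\to[0,1]$ is monotone submodular, hence $1$‑Lipschitz, and (as in the proof of Theorem~\ref{thm:submod-junta}) that $n$ has already been reduced to $\poly(\tfrac1{\gamma\epsilon})$ by a constant number of outer iterations.

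Next comes the dyadic decomposition. Put $s_j=2^{-j}$ and $B_j=\{x:f(x)\in(s_{j+1},s_j]\}$ for $j=0,1,\dots,L$ with $L=O(\log\frac1{\gamma\epsilon})$. Using the \emph{lower}‑tail of submodular concentration one argues that inputs with $f(x)$ far below its conditional behaviour, i.e. essentially $\bigcup_{j>L}B_j$, carry total mass at most $\epsilon/3$ (and where $f(x)=0$ the approximation is exact for any monotone submodular $h$ vanishing there). On each $B_j$ we aim for an \emph{additive} approximation within $\gamma s_j/3$. To achieve this with a single junta I would run the selection rule of Theorem~\ref{thm:submod-junta} once for each scale $s_j$: a variable $i$ is added to $J$ if, with probability at least $1/2$ over a sparse uniformly random subset $R$ of the variables chosen so far, the marginal $f(R\cup\{i\})-f(R)$ exceeds $\tau s_j$, where $\tau=\Theta\!\big(\gamma^2/\log\frac1{\gamma\epsilon}\big)$ is tuned so that a Lipschitz constant of $\tau s_j$ is small enough for the concentration estimate below. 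As in Theorem~\ref{thm:submod-junta} each pass selects $O(1/\tau)=O(\frac1{\gamma^2}\log\frac1{\gamma\epsilon})$ variables, so in total $|J|=O\!\big(\frac1{\gamma^2}\log\frac1{\gamma\epsilon}\log\frac1\epsilon\big)$, as claimed.

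The boosting lemma of Goemans and Vondr\'ak \cite{GoemansVondrak:06} is then applied exactly as in Theorem~\ref{thm:submod-junta}: for a uniformly random restriction of the coordinates in $J$, each excluded variable $i\in\bar J$ has marginal value at most $\tau s_j$ with probability $1-\delta$, and with $\delta=\poly(\gamma\epsilon)/n$ we may union‑bound over all $i\in\bar J$ and all $L$ scales. Hence, off an $\epsilon/3$‑fraction of inputs $x$ lying in some $B_j$, the restricted function $y\mapsto f(x_J,y)$ on the non‑junta coordinates is $\tau s_j$‑Lipschitz; concentration of Lipschitz submodular functions \cite{BoucheronLM:00} then puts $f(x_J,x_{\bar J})$ within $\gamma s_j/3$ of its conditional mean $\bar f(x_J):=\E_y[f(x_J,y)]$. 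One takes $h$ to be a monotone submodular function of $x_J$ obtained by ``rounding up'' $\bar f(x_J)$ within each dyadic scale — concretely, adding a small multiple of the scale of $\bar f(x_J)$, or using the monotone submodular restriction $x_J\mapsto f(x_J,\b1_{\bar J})$ on the ranges where it is provably within $(1+\gamma)$ of $\bar f$ — so that $f(x)\le h(x)\le(1+\gamma)f(x)$ holds on all the $B_j$ simultaneously, outside the accumulated $\epsilon$‑fraction.

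The main obstacle, and the reason the argument is ``more involved'' than Theorem~\ref{thm:submod-junta}, is the passage from additive to multiplicative control near small values of $f$. Two points need care: (i) showing that only $O(\log\frac1{\gamma\epsilon})$ scales carry non‑negligible mass, which genuinely uses the sharp \emph{lower}‑tail concentration of monotone submodular functions, not just the upper tail; and (ii) making the boosting‑lemma step and the Lipschitz‑concentration step hold at the \emph{right} resolution $\tau s_j$ at scale $s_j$ and \emph{simultaneously for all scales}, so that one fixed junta $J$ and one fixed candidate $h$ work on every $B_j$. The accompanying bookkeeping — union bounds over $\bar J$, over the $L$ scales, and over the outer iterations that shrink $n$, together with the verification that the constructed $h$ is actually monotone submodular and two‑sidedly sandwiched between $f$ and $(1+\gamma)f$ — is where the technical weight of the proof lies.
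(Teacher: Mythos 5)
Your plan (global normalization, dyadic level sets $B_j$ of $f$, one run of the \emph{additive} selection rule of Theorem~\ref{thm:submod-junta} per scale) has two genuine gaps, and they are exactly the points the paper's proof is built to avoid. First, the truncation to $L=O(\log\frac{1}{\gamma\epsilon})$ scales is unjustified: there is no lower-tail concentration statement that forces $\bigcup_{j>L}B_j$ to have mass $\epsilon/3$. A monotone submodular function can take arbitrarily small \emph{nonzero} values on a constant fraction of the cube relative to any global normalization --- e.g.\ $f(x)=x_1+\eta x_2$ with $\eta$ arbitrarily small equals $\eta$ on a quarter of all points --- and on those points the multiplicative guarantee must still hold, so they cannot be discarded; the cited concentration bound $\Pr[Z\leq(1-\lambda)\E[Z]]\leq e^{-\lambda^2\E[Z]/4}$ is useless here because after scaling to unit Lipschitz constant $\E[Z]$ is only a constant. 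Second, the size bound per scale is wrong as stated: with the absolute threshold $\tau s_j$, the counting argument of Lemma~\ref{lem:size} (expected cumulative contribution versus the range $[0,1]$) gives $O(\frac{1}{\tau s_j\delta})$ selected variables, not $O(1/\tau)$, and summing over scales blows up to $\poly(1/(\gamma\epsilon))$. Fixing this forces you to make the selection criterion itself \emph{relative}, which is no longer ``running Theorem~\ref{thm:submod-junta} once per scale'' and needs a new counting argument.

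That is what the paper does, in one pass with no level-set decomposition: Algorithm~\ref{alg:PMAC-junta} includes $i$ when $\partial_i f(\b1_T)>\beta f(\b1_{T\cup\bar S})$ with probability $>1/2$ over sparse $T$, i.e.\ the threshold is measured against the value at the top of the current subcube. The size bound $|J'|\leq 2/(\beta\delta)$ then comes from a telescoping argument showing $|R^+|<1/\beta$ for every $R$ (Lemma~\ref{lem:PMAC-size}), not from the bounded-range argument; the boosting lemma still applies because the event $\{x_{J'}:\partial_i f(x_{J'})>\beta f(x_{J'},\b1_{\bar{J'}})\}$ is down-monotone, which uses monotonicity \emph{and} submodularity (Lemma~\ref{lem:PMAC-boost}); and the passage from ``Lipschitz relative to the top of the subcube'' to a multiplicative error around the conditional mean uses Feige's inequality $\E[f_{x_{J'}}]\geq\frac12 f(x_{J'},\b1_{\bar{J'}})$ together with the Chernoff-type bound of Lemma~\ref{lem:submod-chernoff}, after which $h(x)=(1+\gamma/3)\E[f_{x_{J'}}]$ is manifestly monotone submodular and sandwiched as required --- your ``round up within each dyadic scale'' construction is not obviously submodular. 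Finally, the $\log\frac{1}{\gamma\epsilon}$ factor in the theorem comes from iterating Lemma~\ref{lem:PMAC-reduce} with geometrically decreasing $\gamma_i,\epsilon_i$, not from a union bound over scales. Your acknowledged difficulties (i) and (ii) are thus not bookkeeping; they are where the argument, as proposed, breaks.
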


We then show that broader classes of functions such as XOS and self-bounding can also be approximated by juntas, although of an exponentially larger size. We denote by $\infl^1(f)$ the total $\ell_1$-influence of $f$ and by $\infl^2(f)$ the total $\ell_2^2$-influence of $f$ (see Sec.~\ref{sec:lowsens-prelims} for definitions). We prove the result via the following generalization of the well-known Friedgut's theorem for boolean functions.
\begin{theorem}
\label{thm:Friedgut-junta-intro}
Let $f:\zo^n \rightarrow \RR$ be any function and $\epsilon>0$. There exists a function $g:\{0,1\}^n \rightarrow \RR$ depending only on a subset of variables $J \subseteq [n]$, $|J| = 2^{O(\infl^2(f)/\epsilon^2)} \cdot (\infl^1(f))^3/\eps^4$ such that $\|f - g\|_2 \leq \epsilon$.
For a submodular, XOS or self-bounding $f:\zo^n \rightarrow [0,1]$, $\infl^2(f) \leq \infl^1(f) = O(1)$, giving $|J| = 2^{O(1/\epsilon^2)}$.
\end{theorem}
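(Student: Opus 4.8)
Assume, as in the applications, that $f:\zon\to[0,1]$. The plan is to follow Friedgut's proof of his junta theorem for boolean functions, with the boundedness of $f$ supplying the one ingredient that comes for free in the $\on$-valued case. Fix a degree cutoff $k$ and an influence threshold $\tau$, to be chosen at the end; let $J=\{i\in[n]:\infl^1_i(f)\ge\tau\}$ be the ``important'' coordinates, and let $g=\sum_{S\subseteq J}\hat f(S)\chi_S$ be the orthogonal projection of $f$ onto functions depending only on $J$. Then $\|f-g\|_2^2=\sum_{S\not\subseteq J}\hat f(S)^2$, which I split as the high-degree mass $\sum_{|S|>k}\hat f(S)^2$ plus the low-degree ``unimportant'' mass $\sum_{|S|\le k,\,S\not\subseteq J}\hat f(S)^2$. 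The first is bounded in the standard way, $\sum_{|S|>k}\hat f(S)^2\le\frac1k\sum_S|S|\hat f(S)^2=\Theta(\infl^2(f)/k)$, so $k=\Theta(\infl^2(f)/\eps^2)$ makes it at most $\eps^2/2$.

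For the low-degree ``unimportant'' mass, every set in the sum contains some $i\notin J$, so it suffices to bound $\sum_{S\ni i,\,|S|\le k}\hat f(S)^2$ for each unimportant $i$. This quantity equals $\|(\partial_i f)^{\le k-1}\|_2^2$, where $\partial_i f=\sum_{S\ni i}\hat f(S)\chi_{S\setminus i}$ is the discrete derivative and $(\cdot)^{\le k-1}$ is truncation to degree $\le k-1$. Since truncation is an orthogonal projection I can write this as the inner product $\la\partial_i f,(\partial_i f)^{\le k-1}\ra$, and then H\"older with exponents $\tfrac43$ and $4$ followed by the Bonami hypercontractive inequality $\|h\|_4\le 3^{(\deg h)/2}\|h\|_2$ applied to the degree-$(k-1)$ function $(\partial_i f)^{\le k-1}$ yield
$$\|(\partial_i f)^{\le k-1}\|_2^2\ \le\ \|\partial_i f\|_{4/3}\cdot 3^{(k-1)/2}\,\|(\partial_i f)^{\le k-1}\|_2 ,$$
hence $\|(\partial_i f)^{\le k-1}\|_2\le 3^{(k-1)/2}\|\partial_i f\|_{4/3}$. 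This is the step where range $[0,1]$ is used: $|\partial_i f|\le 1$ pointwise, so $\|\partial_i f\|_{4/3}^{4/3}=\E[|\partial_i f|^{4/3}]\le\E[|\partial_i f|]=\infl^1_i(f)$, i.e.\ $\|\partial_i f\|_{4/3}^2\le\infl^1_i(f)^{3/2}$. Therefore $\sum_{S\ni i,\,|S|\le k}\hat f(S)^2\le 3^{k-1}\infl^1_i(f)^{3/2}$, and summing over the unimportant coordinates (those with $\infl^1_i(f)<\tau$),
$$\sum_{\substack{|S|\le k\\ S\not\subseteq J}}\hat f(S)^2\ \le\ 3^{k-1}\sum_{i\notin J}\infl^1_i(f)^{3/2}\ \le\ 3^{k-1}\tau^{1/2}\sum_i\infl^1_i(f)\ =\ 3^{k-1}\tau^{1/2}\infl^1(f) .$$
Choosing $\tau=\Theta\!\big(\eps^4/(3^{2k}\infl^1(f)^2)\big)$ makes this $\le\eps^2/2$, so $\|f-g\|_2\le\eps$; and since at most $\infl^1(f)/\tau$ coordinates can have $\ell_1$-influence at least $\tau$, we get $|J|\le\infl^1(f)/\tau=\Theta(3^{2k})\cdot\infl^1(f)^3/\eps^4=2^{O(\infl^2(f)/\eps^2)}\cdot(\infl^1(f))^3/\eps^4$, as claimed.

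For the final sentence, note that $(\partial_i f)^2\le|\partial_i f|$ pointwise (range $[0,1]$ again), so $\infl^2_i(f)\le\infl^1_i(f)$ for every $i$ and hence $\infl^2(f)\le\infl^1(f)$. That $\infl^1(f)=O(1)$ for submodular, XOS and self-bounding functions with range $[0,1]$ is essentially their self-bounding property; for monotone submodular $f$, for instance, $\infl^1(f)=\sum_i\E_x[f(x^{i\to1})-f(x^{i\to0})]=\frac{d}{dt}F(t\mathbf1)\big|_{t=1/2}$ with $F$ the multilinear extension, and $F(t\mathbf1)$ is concave on $[0,1]$ with values in $[0,1]$, so its derivative at $\tfrac12$ is at most $2$; the non-monotone submodular, XOS and general self-bounding cases admit analogous $O(1)$ bounds, after which the displayed bound specializes to $|J|=2^{O(1/\eps^2)}$.

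The step that needs the most care is the one that forces both $\infl^1$ and $\infl^2$ into the statement. In Friedgut's boolean proof $\partial_i f$ is $\{-1,0,1\}$-valued, so $\|\partial_i f\|_{4/3}^{4/3}=\|\partial_i f\|_2^2=\infl_i(f)$ holds automatically and a single ``total influence'' $\sum_S|S|\hat f(S)^2$ simultaneously controls the degree cutoff and the union bound over unimportant variables. For real-valued $f$ no single quantity does both: the degree cutoff naturally costs $\infl^2(f)$ while the union bound costs a fractional power of $\infl^1(f)$, and the only clean way I see to relate the intermediate $\ell_{4/3}$-norm of $\partial_i f$ back to $\infl^1_i(f)$ — rather than to some $\ell_{4/3}$-influence, which need not be $O(1)$ for these classes — is via $\|\partial_i f\|_\infty\le 1$. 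Ensuring the final exponent depends only on $\infl^2(f)$ and the polynomial prefactor only on $\infl^1(f)$ is exactly what that boundedness step buys.
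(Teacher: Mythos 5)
Your argument is essentially the paper's: it is Friedgut's proof with the hypercontractive step run at the $\ell_{4/3}$ level. Your H\"older$(4/3,4)$ plus Bonami bound $\|(\partial_i f)^{\leq k-1}\|_2 \leq 3^{(k-1)/2}\|\partial_i f\|_{4/3}$ is exactly the content of the paper's Lemma~\ref{lem:low-influence-bound-general} specialized to $\kappa=4/3$ (there phrased via the noise operator, $\|T_{\sqrt{\kappa-1}}\tfrac{\partial_i f}{2}\|_2\le\|\tfrac{\partial_i f}{2}\|_\kappa$), and your choices $k=\Theta(\infl^2(f)/\eps^2)$, $\tau=\Theta(\eps^4 3^{-2k}(\infl^1(f))^{-2})$ reproduce the paper's parameters and the claimed junta size. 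The computations are correct as written.

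The one place you fall short of the statement is its generality: the theorem is asserted for arbitrary $f:\zo^n\to\RR$, and your key inequality $\|\partial_i f\|_{4/3}^{4/3}\le\E[|\partial_i f|]$ uses $|\partial_i f|\le 1$, i.e.\ boundedness of the range, which you assumed at the outset. For unbounded $f$ this step fails. The paper avoids it by thresholding on the $\ell_{4/3}^{4/3}$-influences themselves (Theorem~\ref{th:rvjunta} is stated for $\infl^\kappa_i$) and only afterwards converting to $\ell_1$-influences via the pointwise inequality $|y|^{4/3}\le |y|+y^2$, which gives $\infl^{4/3}(f)\le\infl^1(f)+\infl^2(f)$ for every real-valued $f$; the extra $\infl^2$ contributions are then absorbed into the $2^{O(\infl^2(f)/\eps^2)}$ factor. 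The same one-line patch repairs your argument (replace $(\infl^1_i(f))^{3/2}$ by $(\infl^1_i(f)+\infl^2_i(f))^{3/2}$ and threshold on $\infl^1_i+\infl^2_i$), so the gap is minor, but as written your proof covers only the bounded case. Separately, for the last sentence the paper's Lemma~\ref{lem:submod} gives $\infl^1(f)\le a\|f\|_1$ for $a$-self-bounding $f$ directly from the definition, which handles XOS, non-monotone submodular and self-bounding functions uniformly; your multilinear-extension argument covers only the monotone submodular case, and the remaining cases should be routed through that self-bounding bound rather than asserted as ``analogous.''
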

Friedgut's theorem gives approximation by a junta of size $2^{O(\infl(f)/\epsilon^2)}$ for a boolean $f$. For a boolean function, the total influence $\infl(f)$ (also referred to as average sensitivity) is equal to both $\infl^1(f)$ and $\infl^2(f)$ (up to a fixed constant factor). Previously it was observed that Friedgut's theorem is not true if $\infl^2(f)$ is used in place of $\infl(f)$ in the statement \cite{ODonnellServedio:07}. However we show that with an additional factor which is just polynomial in $\infl^1(f)$ one can obtain a generalization.
O'Donnell and Servedio \cite{ODonnellServedio:07} generalized the Friedgut's theorem to bounded discretized real-valued functions. They prove a bound of $2^{O(\infl^2(f)/\epsilon^2)} \cdot \gamma^{-O(1)}$, where $\gamma$ is the discretization step. This special case is easily implied by our bound. Technically, our proof is a simple refinement of the proof of Friedgut's theorem.

The second component of this result is a simple proof that self-bounding functions (and hence submodular and XOS) have constant total $\ell_1$-influence. An immediate implication of this fact alone is that self-bounding functions can be approximated by functions of Fourier degree $O(1/\eps^2)$. For the special case of submodular functions this was proved by Cheraghchi \etal also using Fourier analysis, namely, by bounding the noise stability of submodular functions \cite{CheraghchiKKL:12}. Our more general proof is substantially simpler.

We show that this result is almost tight, in the sense that even for XOS functions 
 $2^{\Omega(1/\epsilon)}$ variables are necessary for an $\epsilon$-approximation in $\ell_1$ (see Thm.~\ref{thm:XOS-example}).
Thus we obtain an almost complete picture, in terms of how many variables are needed to achieve an $\epsilon$-approximation depending on the target function --- see Figure~\ref{fig:table}.


\begin{figure}
\centering
$\begin{array}{|| c || c | c ||} \hline
\mbox{Class of functions} & \mbox{junta size lower bound} & \mbox{junta size upper bound} \\
\hline \hline
\mbox{linear} & \Omega(1/\epsilon^2) \mbox{\ [Folkl., see Lem.~\ref{lem:linear-example}]} & O(1/\epsilon^2) \mbox{\ [Folkl.]} \\
\hline
\mbox{coverage} & \mbox{as above}   & O(1/\epsilon^2)\ \cite{FeldmanK14} \\
\hline
\mbox{submodular} & \mbox{as above} & O(1/\epsilon^2 \cdot \log (1/\epsilon)) \mbox{\ [Thm.~\ref{thm:submod-junta}]} \\
\hline
\mbox{XOS and self-bounding} & 2^{\Omega(1/\epsilon)} \mbox{\ [Thm.~\ref{thm:XOS-example}]} & 2^{O(1/\epsilon^2)} \mbox{\ [Thm.~\ref{thm:Friedgut-junta-intro}]}\\
\hline
\mbox{constant total $\ell_1$-influence} & 2^{\Omega(1/\epsilon)}\ \cite{Friedgut:98} & 2^{O(1/\epsilon^2)} \mbox{\ [Thm.~\ref{thm:Friedgut-junta-intro}]}\\
\hline
\mbox{constant total $\ell_2^2$-influence} & \Omega(n)\ \cite{ODonnellServedio:07} & n \\
\hline
\end{array} $
\caption{\small Overview of junta approximations: bounds on the size of a junta achieving an $\epsilon$-approximation in $\ell_2$ for a function with range $[0,1]$. \label{fig:table}}
\end{figure}

\subsubsection{Applications}
We provide several applications of our structural results to learning and testing.  These applications are based on new algorithms as well as standard approaches to learning over the uniform distribution.

For submodular functions our main application is a PMAC learning algorithm over the uniform distribution.

\begin{theorem}
\label{thm:pmac-learn-submod-intro}
There exists an algorithm $\A$ that given $\gamma,\eps \in (0,1]$ and access to random and uniform examples of a submodular function $f:\zo^n \rightarrow \RR_+$, with probability at least $2/3$, outputs a function $h$ which is a multiplicative $(1+\gamma,\eps)$-approximation to $f$ (over the uniform distribution). Further, $\A$ runs in time $\tilde{O}(n^2) \cdot 2^{\tilde{O}(1/(\eps\gamma)^2)}$ and uses $\log(n) \cdot 2^{\tilde{O}(1/(\eps\gamma)^2)}$ examples.
\end{theorem}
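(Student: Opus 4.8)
The idea is to convert the structural result of Theorem~\ref{thm:PMAC-junta} into an algorithm with two stages: first isolate, from random examples, a small set $J$ of coordinates carrying essentially all of the structure of $f$; then learn the behaviour of $f$ as a function of the pattern on $J$ by bucketing the sample. We may assume $f$ is monotone (the general non-negative case reduces to this, or admits a short separate treatment), and rescale so that Theorem~\ref{thm:PMAC-junta} applied with parameters $\gamma/c, \eps/c$, for a suitable constant $c$, produces a monotone submodular $h^\ast$ depending only on $J^\ast$ with $|J^\ast| = d := O(\frac{1}{\gamma^2}\log\frac{1}{\gamma\eps}\log\frac{1}{\eps})$ that is a $(1+\gamma/c,\eps/c)$-multiplicative approximation of $f$. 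It therefore suffices for $\A$ to output a coordinate set $J$ inducing an approximation at least as good as the one witnessed by $J^\ast$, together with a value $h(y)$ for each $y \in \zo^J$ that multiplicatively sandwiches $f$ on most inputs $x$ with $x_J = y$.

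\textbf{Stage 1: finding the relevant coordinates.} From $m_1 = \log n \cdot 2^{\tilde O(1/(\eps\gamma)^2)}$ uniform examples we estimate, for every $i \in [n]$, a statistic $\alpha_i$ that is computable from random examples alone (no value queries) and that faithfully tracks the ``relevance'' of $i$ in the sense used inside the proof of Theorem~\ref{thm:PMAC-junta}; the natural candidate is built from the degree-$1$ Fourier coefficient $\hat f(\{i\}) = \E[(1-2x_i) f(x)]$ and the empirical conditional means $\E[f \mid x_i = b]$, which for monotone submodular $f$ is the average marginal value of $i$ and, by diminishing returns, lies between its marginal on $[n]$ and on $\emptyset$. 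Set $J = \{ i : \hat\alpha_i \ge \theta \}$ with threshold $\theta = 2^{-\tilde O(1/(\eps\gamma)^2)}$. Two facts close this stage: (a) every coordinate essential to a good multiplicative junta approximation has $\alpha_i$ not much smaller than $\theta$, so $J$ induces an approximation no worse than $h^\ast$'s --- this is the step that must extract quantitative information from the variable-selection rule and the boosting lemma behind Theorem~\ref{thm:PMAC-junta}; and (b) $\sum_i \alpha_i = O(1)$, which follows from $\infl^1(f) = O(1)$ for submodular $f$ (Theorem~\ref{thm:Friedgut-junta-intro}), so $|J| \le O(1/\theta) = 2^{\tilde O(1/(\eps\gamma)^2)}$. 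A Chernoff bound for each $\hat\alpha_i$ together with a union bound over the $n$ coordinates accounts for the $\log n$ factor in the sample complexity, and scanning the $\poly(n)$-size sample over all coordinates gives the $\tilde O(n^2)$ running time.

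\textbf{Stage 2: learning the junta.} Let $d' = |J| = 2^{\tilde O(1/(\eps\gamma)^2)}$. Draw $m_2 = 2^{d'} \cdot \poly(1/\eps)$ additional examples and group them by the pattern $x_J \in \zo^J$; with high probability every pattern is hit $\poly(1/\eps)$ times. Output $h(y) := (1+\gamma/c) \cdot \widetilde\mu(y)$, where $\widetilde\mu(y)$ is the empirical mean (or an appropriate low quantile) of $f$ over the examples with $x_J = y$. Since $J$ captures all the relevant variables, conditioned on $x_J = y$ the function $f$ is Lipschitz in the remaining coordinates with small Lipschitz constant --- this is exactly the property underlying Theorem~\ref{thm:PMAC-junta}, and is where concentration of Lipschitz submodular functions enters --- so for all but an $\eps/c$ fraction of patterns $y$ the conditional law of $f$ is concentrated within a multiplicative $(1+\gamma/c)$ of $\widetilde\mu(y)$. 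Combining the one-sided $(1+\gamma/c)$ cushions with the error budgets coming from the structural approximation, the bad patterns, and the sampling error yields $\Pr_x[ f(x) \le h(x) \le (1+\gamma) f(x) ] \ge 1-\eps$, and a constant-probability amplification completes the proof.

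\textbf{Expected main obstacle.} The crux is Stage~1(a): showing that coordinates selected by the value-query procedure inside Theorem~\ref{thm:PMAC-junta} are not missed by a statistic estimable from random examples, and pinning down how small $\theta$ must be --- this is what fixes the $2^{\tilde O(1/(\eps\gamma)^2)}$ factors in both bounds. A secondary difficulty is that the target is a \emph{multiplicative} approximation: the conditional concentration must be strong enough, and $\widetilde\mu(y)$ placed on the correct side, for a $(1+\gamma)$ sandwich to survive a union over the $2^{d'}$ patterns, which in turn drives the per-bucket sample size and hence the stated sample and time complexity. Finally, the reduction of general non-negative submodular $f$ to the monotone case (or a direct argument) should be verified, since multiplicative guarantees are delicate wherever $f$ can vanish.
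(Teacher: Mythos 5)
Your plan has two genuine gaps, one quantitative and one structural. The quantitative one is fatal to the stated bounds: in Stage~1 your threshold $\theta = 2^{-\tilde O(1/(\eps\gamma)^2)}$ on an influence-type statistic only yields a junta of size $|J| \le O(1/\theta) = 2^{\tilde O(1/(\eps\gamma)^2)}$, and Stage~2 then buckets the sample by the pattern $x_J \in \zo^J$, which requires $2^{|J|}\cdot\poly(1/\eps) = 2^{2^{\tilde O(1/(\eps\gamma)^2)}}$ examples and time --- doubly exponential in $1/(\eps\gamma)$, not the claimed $2^{\tilde O(1/(\eps\gamma)^2)}$. (This doubly-exponential behavior is exactly what the paper cannot avoid for XOS functions, Corollary~\ref{cor:pmac-learn-xos}.) To stay singly exponential the hypothesis must depend on a junta of \emph{polynomial} size $\tilde O(1/(\eps\gamma)^2)$; the paper gets this by a two-step procedure you are missing: Theorem~\ref{thm:find-junta-examples} uses estimated degree-$1$ \emph{and} degree-$2$ Fourier coefficients (the degree-$2$ ones are needed to control $\|f_J-f_{I\cap J}\|_1$ via submodularity) to find a candidate set $I$ of size only $\poly(1/\eps')$, and then Theorem~\ref{thm:proper-learning} brute-forces over size-$\tilde O(1/\eps'^2)$ subsets of $I$, solving an LP for a proper submodular junta, so the learned $g$ depends on $\tilde O(1/(\eps\gamma)^2)$ variables. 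Relatedly, your Stage~1(a) --- that thresholding a statistic estimable from random examples cannot miss the variables of the structural junta --- is precisely the step you leave open; the paper never proves such a statement and instead argues only that projecting onto the detected set preserves the quality of \emph{some} small junta (Lemma~\ref{lem:junta-error-bound}).

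The structural gap is your opening reduction to the monotone case and the use of Theorem~\ref{thm:PMAC-junta}. The theorem you are proving is claimed for all non-negative submodular functions, there is no known reduction of the non-monotone case to the monotone one, and the paper explicitly leaves the non-monotone multiplicative-junta question open; its proof of Theorem~\ref{thm:pmac-learn-submod} deliberately avoids Theorem~\ref{thm:PMAC-junta} and instead bootstraps the additive result. Concretely, the paper scales by an estimate of $\E[f]$ (valid since $\|f\|_1 \ge \frac14\|f\|_\infty$ for non-negative submodular $f$, Lemma~\ref{lem:submod-norms}), learns $g$ with $\ell_1$-error $\eps'=\gamma\eps/2400$, and declares a subcube $z\in\zo^J$ good when $g(z)\ge 1/20$ and the conditional $\ell_1$-error is at most $20\eps'$; on good subcubes Markov's inequality converts the additive guarantee into a multiplicative $(1+\gamma,\eps/2)$ one, and on the remaining subcubes --- where $f$ may be small and additive accuracy gives no multiplicative control, a regime your Stage~2 does not address --- it \emph{recurses} on the restricted function, showing at least a $1/10$ fraction of inputs is resolved per level so that depth $O(\log(1/\eps))$ and $2^{\tilde O(1/(\eps\gamma)^2)}$ total invocations suffice. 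Without this recursive rescaling mechanism (or the structural Theorem~\ref{thm:PMAC-junta}, unavailable in the non-monotone case), your argument does not deliver the multiplicative guarantee on the low-value region.
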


We remark that this algorithm works even for non-monotone submodular functions and does not in fact rely on our multiplicative-approximation junta result (Theorem~\ref{thm:PMAC-junta}, which works only for monotone submodular functions). Instead, we boostrap the $\ell_2$-approximation result (Theorem~\ref{thm:submod-junta}) as follows.
Theorem~\ref{thm:submod-junta} guarantees an $\ell_2$-approximating junta of size $\tilde{O}(1/\eps^2)$. The main challenge here is that the criterion for including variables used in the proof of Theorem \ref{thm:submod-junta} cannot be (efficiently) evaluated using random examples alone. Instead we give a general algorithm to find a larger approximating junta whenever an approximating junta exists. This algorithm relies only on submodularity of the function and in our case finds a junta of size $\tilde{O}(1/\eps^5)$. From there one can easily use brute force to find a $\tilde{O}(1/\eps^2)$-junta in time $2^{\tilde{O}(1/\eps^2)}$. 

We show that using the function $g$ returned by this building block we can partition the domain into $2^{\tilde{O}(1/\epsilon^2)}$ subcubes  such that on a constant fraction of those subcubes $g$ gives a multiplicative $(1+\gamma,\eps)$ approximation. We then apply the building block recursively for $O(\log(1/\eps))$ levels.

In addition, the algorithm for finding close-to-optimal $\ell_2$-approximating junta allows us to learn properly (by outputting a submodular function) in time $2^{\tilde{O}(1/\eps^2)} \poly(n)$. Using a standard transformation we can also test whether the input function is submodular or $\epsilon$-far (in $\ell_1$) from submodular, in time $2^{\tilde{O}(1/\epsilon^2)} \cdot \poly(n)$ and using just $2^{\tilde{O}(1/\epsilon^2)} + \poly(1/\eps) \log n$ random examples. (Using earlier results, this would have been possible only in time doubly-exponential in $\epsilon$.) We give the details of these results in Section \ref{sec:applications-learning}.

For XOS functions, we give a PAC learning algorithm with $\ell_2$ error using the junta and low Fourier degree approximation for self-bounding functions (Theorem~\ref{thm:Friedgut-junta-intro}).

\begin{theorem}
\label{thm:XOS-learning-intro}
There exists an algorithm $\A$ that given $\eps > 0$ and access to random uniform examples of an XOS function $f:\zo^n \rightarrow [0,1]$, with probability at least $2/3$, outputs a function $h$, such that $\|f-h\|_2 \leq \epsilon$. Further, $\A$ runs in time $2^{O(1/\eps^4)} \poly(n)$ and uses $2^{O(1/\eps^4)} \log n$ random examples.
\end{theorem}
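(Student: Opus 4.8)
The plan is to combine the two structural guarantees that Theorem~\ref{thm:Friedgut-junta-intro} provides for XOS functions --- closeness in $\ell_2$ to a $2^{O(1/\eps^2)}$-junta and closeness to a polynomial of degree $d=O(1/\eps^2)$ --- with one additional observation that makes the relevant variables of the junta \emph{efficiently identifiable from random examples}. The proof of Theorem~\ref{thm:Friedgut-junta-intro} shows that the approximating junta may be taken to be the high-influence set $J=\{i:\infl^1_i(f)\ge\delta_0\}$ with $\delta_0 = \eps^{4}\cdot 2^{-O(1/\eps^2)}$, where $\infl^1_i(f)=\E_x|f(x)-f(x^{\oplus i})|$; since $\infl^1(f)=\sum_i\infl^1_i(f)=O(1)$ for XOS functions, $|J|\le\infl^1(f)/\delta_0 = 2^{O(1/\eps^2)}$. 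The key point is that every XOS function is monotone nondecreasing, so $f(x^{i\to 1})-f(x^{i\to 0})\ge 0$ pointwise and therefore $\infl^1_i(f)=\E_x[f(x^{i\to 1})-f(x^{i\to 0})]=2\widehat f(\{i\})$ (in the $\pm1$ Fourier convention). Unlike $\infl^1_i(f)$ for a general function, the degree-$1$ coefficient $\widehat f(\{i\})=\E_x[f(x)\chi_{\{i\}}(x)]$ is directly estimable from uniform random examples.

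First I would estimate $\widehat f(\{i\})$ for every $i\in[n]$ from $2^{O(1/\eps^2)}\log n$ random examples; Hoeffding's bound together with a union bound over the $n$ coordinates gives simultaneous additive error $\le\delta_0/4$ with probability $\ge 5/6$. Let $\widetilde J$ be the set of coordinates whose estimate exceeds $3\delta_0/4$ in magnitude. Then $\widetilde J$ contains every $i$ with $\infl^1_i(f)\ge\delta_0$ (hence $J\subseteq\widetilde J$, so $f$ is still within $\eps/3$ of a $\widetilde J$-junta), while every $i\in\widetilde J$ has $\infl^1_i(f)\ge\delta_0$, whence $|\widetilde J|\le\infl^1(f)/\delta_0 = 2^{O(1/\eps^2)}$. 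This phase runs in time $2^{O(1/\eps^2)}\poly(n)$ and is the only place where a $\log n$ factor enters the sample complexity.

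Next I would run the standard Fourier-based low-degree ($\ell_2$) learning procedure, but restricted to subsets of $\widetilde J$ of size at most $d=O(1/\eps^2)$: estimate $\widehat f(S)$ from fresh random examples for each such $S$ and output $h=\sum_{S\subseteq\widetilde J,\,|S|\le d}\widetilde{\widehat f}(S)\,\chi_S$. Correctness follows by the usual error decomposition. Since $\infl^2(f)=O(1)$ for XOS functions, Markov's inequality gives $\sum_{|S|>d}\widehat f(S)^2\le\infl^2(f)/d\le\eps^2/9$, so $f$ is within $\eps/3$ of its degree-$\le d$ truncation; combined with $\eps/3$-closeness to a $\widetilde J$-junta $g$, the projection $g^{\le d}$ --- a degree-$\le d$ polynomial supported on $\widetilde J$ --- lies within $2\eps/3$ of $f$, hence so does the orthogonal projection $\sum_{S\subseteq\widetilde J,|S|\le d}\widehat f(S)\chi_S$. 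Estimating each of the $|\widetilde J|^{d}=2^{O(d\log|\widetilde J|)}=2^{O(1/\eps^4)}$ relevant coefficients to total squared error $\eps^2/9$ costs $2^{O(1/\eps^4)}$ examples and $2^{O(1/\eps^4)}\poly(n)$ time. Summing the errors yields $\|f-h\|_2\le\eps$; the sample complexity is $2^{O(1/\eps^4)}\log n$, the running time is $2^{O(1/\eps^4)}\poly(n)$, and taking the failure probability of each phase to be $1/6$ gives overall success probability $\ge 2/3$.

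\textbf{Main obstacle.} The only genuinely new step beyond a black-box invocation of Theorem~\ref{thm:Friedgut-junta-intro} is recovering the relevant-variable set from random examples alone: the selection criterion in Theorem~\ref{thm:Friedgut-junta-intro} is stated in terms of $\infl^1_i(f)$, which cannot in general be estimated from i.i.d.\ uniform examples, so one must exploit the monotonicity of XOS functions to replace it by the estimable quantity $\widehat f(\{i\})$. The two things to verify carefully are that $J=\{i:\infl^1_i(f)\ge\delta_0\}$ really is a legitimate choice of junta inside the proof of Theorem~\ref{thm:Friedgut-junta-intro} (it is --- this is exactly the Friedgut high-influence set, the refinement being only that the hypercontractive estimate is applied to the bounded derivative $D_if$ and controlled via $\infl^1_i(f)$), and that the identity $\infl^1_i(f)=2\widehat f(\{i\})$ for monotone $f$ gives the constants one needs. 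Everything after $\widetilde J$ is identified is the textbook low-degree algorithm on $|\widetilde J|=2^{O(1/\eps^2)}$ variables, whose $|\widetilde J|^{O(d)}$ complexity is precisely the claimed $2^{O(1/\eps^4)}$.
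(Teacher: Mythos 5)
Your proposal is correct and follows essentially the same route as the paper: it identifies the influential variables by estimating degree-$1$ Fourier coefficients, using monotonicity of XOS functions to equate $\infl^1_i(f)$ with $|\hat f(\{i\})|$ (the paper's Corollary~\ref{cor:find-junta-lowsens-examples} via Corollary~\ref{cor:lowsens-junta-l1-spectral-bound}), and then fits a degree-$O(1/\eps^2)$ polynomial over the $2^{O(1/\eps^2)}$ recovered variables. The only difference is cosmetic: you estimate each of the $2^{O(1/\eps^4)}$ low-degree coefficients directly, whereas the paper runs least-squares regression over the same set of parities; both yield the stated $2^{O(1/\eps^4)}\poly(n)$ time and $2^{O(1/\eps^4)}\log n$ sample bounds.
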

In this case the algorithm is fairly standard: we use the fact that XOS functions are monotone and hence their influential variables can be detected from random examples (as for example in \cite{Servedio:04mondnf}). Given the influential variables we can exploit the low Fourier degree approximation to find a hypothesis using $\ell_2$ regression over the low degree parities (as done in \cite{FeldmanKV:13}).

This algorithm naturally extends to any {\em monotone} real-valued function of low total $\ell_1$-influence, of which XOS functions are a special case. Using the algorithm in Theorem \ref{thm:XOS-learning-intro} we also obtain a PMAC-learning algorithm for XOS functions using the same approach as we used for submodular functions. However the dependence of the running time and sample complexity on $1/\gamma$ and $1/\epsilon$ is doubly-exponential in this case (see Cor.~\ref{cor:pmac-learn-xos} for details). To our knowledge, this is the first PMAC learning algorithm for XOS functions that can achieve constant approximation factor in polynomial time for all XOS functions.

\smallskip
\noindent {\bf Organization.} We present a detailed discussion of the classes of functions that we consider and technical preliminaries in Section~\ref{sec:prelims}. The proof of our main structural result (Thm.~\ref{thm:submod-junta}) is presented in Section \ref{sec:submod-junta}. Its extension to multiplicative approximation of monotone submodular functions (Thm.~\ref{thm:PMAC-junta}) is given in Section \ref{sec:PMAC-junta}. An extension to the case of general product distributions is presented in Appendix~\ref{sec:product-distribution}. In Section \ref{sec:low-sensitivity} we give the proof of real-valued generalization of Friedgut's theorem (Thm.~\ref{thm:Friedgut-junta-intro}). Section \ref{sec:lower-bounds} gives examples of functions that prove tightness of our bounds for submodular and XOS functions. The details of our algorithmic applications to PAC and PMAC learning are in Section \ref{sec:applications-learning}.
We state several implications of our structural results to agnostic learning and testing in Section \ref{sec:additinal-app}.

\subsection{Related Work}
\label{sec:related-work}
Reconstruction of submodular functions up to some multiplicative factor (on every point) from value queries was first considered by Goemans \etal \cite{GHIM09}. They show a polynomial-time algorithm for reconstructing monotone submodular functions with $\tilde{O}(\sqrt{n})$-factor approximation and prove a nearly matching lower-bound. This was extended to the class of all subadditive functions in \cite{BadanidiyuruDFKNR:12} which studies
small-size approximate representations of valuation functions (referred to as {\em sketches}). Theorem \ref{thm:PMAC-junta} shows that allowing an $\eps$ error probability (over the uniform distribution) makes it possible to get a multiplicative $(1+\gamma)$-approximation using a $\poly(1/\gamma,\log{(1/\eps))}$-sized sketch. This sketch can be found in polynomial time using value queries (see Section \ref{sec:PMAC-junta}).

Balcan and Harvey initiated the study of learning submodular functions from random examples coming from an unknown distribution and introduced the PMAC learning model described above \cite{BalcanHarvey:12full}. They give an O($\sqrt{n}$)-factor PMAC learning algorithm and show an information-theoretic $\Omega(\sqrt[3]{n})$-factor impossibility result for submodular functions. Subsequently, Balcan \etal gave a distribution-independent PMAC learning algorithm for XOS functions that achieves an $\tilde{O}(\sqrt{n})$-approximation and showed that this is essentially optimal \cite{BalcanCIW:12}. They also give a PMAC learning algorithm in which the number of clauses defining the target XOS function determines the running time and the approximation factor that can be achieved (for polynomial-size XOS functions it implies $O(n^\beta)$-approximation factor in time $n^{O(1/\beta)}$ for any $\beta > 0$).

The lower bound in \cite{BalcanHarvey:12full} also implies hardness of learning of submodular function with $\ell_1$(or $\ell_2$)-error: it is impossible to learn a submodular function $f:\{0,1\}^n \rightarrow [0,1]$ in $\poly(n)$ time within any nontrivial $\ell_1$-error over general distributions. We emphasize that these strong lower bounds rely on a very specific distribution concentrated on a sparse set of points, and show that this setting is very different from the setting of uniform/product distributions which is the focus of this paper.

For product distributions, Balcan and Harvey show that 1-Lipschitz monotone submodular functions of minimum nonzero value at least $1$ have concentration properties implying a PMAC algorithm with a multiplicative $(O(\log \frac{1}{\eps}),\eps)$-approximation \cite{BalcanHarvey:12full}. The approximation is by a constant function and the algorithm they give approximates the function by its mean on a small sample. Since a constant is a function of $0$ variables, their result can be viewed as an extreme case of approximation by a junta. 
Our result gives multiplicative $(1+\gamma,\epsilon)$-approximation for arbitrarily small $\gamma,\epsilon>0$. The main point of Theorem \ref{thm:PMAC-junta}, perhaps surprising, is that the number of required variables grows only polynomially in $1/\gamma$ and logarithmically in $1/\epsilon$.

Learning of submodular functions with additive rather than multiplicative guarantees over the uniform distribution was first considered by Gupta \etal who were motivated by applications in private data release \cite{GuptaHRU:11}. They show that submodular functions can be $\epsilon$-approximated by a collection of $n^{O(1/\epsilon^2)}$ $\epsilon^2$-Lipschitz submodular functions. Concentration properties imply that each $\epsilon^2$-Lipschitz submodular function can be $\epsilon$-approximated by a constant. This leads to a learning algorithm running in time $n^{O(1/\epsilon^2)}$, which however requires value queries in order to build the collection. Cheraghchi \etal use an argument based on noise stability to show that submodular functions
can be approximated in $\ell_2$ by functions of Fourier degree $O(1/\eps^2)$ \cite{CheraghchiKKL:12}. This leads to an $n^{O(1/\eps^2)}$ learning algorithm which uses only random examples and, in addition, works in the agnostic setting. Most recently, Feldman \etal show that the decomposition from \cite{GuptaHRU:11} can be computed by a low-rank binary decision tree \cite{FeldmanKV:13}. They then show that this decision tree can then be pruned to obtain depth $O(1/\eps^2)$ decision tree that approximates a submodular function. This construction implies approximation by a $2^{O(1/\eps^2)}$-junta of Fourier degree $O(1/\eps^2)$. They used these structural results to give a PAC learning algorithm running in time $\poly(n) \cdot 2^{O(1/\eps^4)}$. Note that our multiplicative $(1+\gamma,\eps)$-approximation in this case implies $O(\gamma + \eps)$ $\ell_2$-error (but $\ell_2$-error gives no multiplicative guarantees). In \cite{FeldmanKV:13} it is also shown that $2^{\Omega(\eps^{-2/3})}$ random examples (or even value queries) are necessary to PAC learn monotone submodular functions to $\ell_1$-error of $\eps$. This implies that our learning algorithms for submodular and XOS functions cannot be substantially improved.

In a recent work, Raskhodnikova and Yaroslavtsev consider learning and testing of submodular functions taking values in the range $\{0,1,\ldots,k\}$ (referred to as {\em pseudo-Boolean}) \cite{RaskhodnikovaYaroslavtsev:13}. The error of a hypothesis in their framework is the probability that the hypothesis disagrees with the unknown function. They build on the approach from \cite{GuptaHRU:11} to show that pseudo-Boolean submodular functions can be expressed as $2k$-DNF and then apply Mansour's algorithm for learning DNF \cite{Mansour:95} to obtain a $\poly(n) \cdot k^{O(k \log{k/\eps})}$-time PAC learning algorithm using value queries. In this special case the results in \cite{FeldmanKV:13} give approximation of submodular functions by junta of size $\poly(2^k/\eps)$ and $\poly(2^k/\eps,n)$ PAC learning algorithm from random examples. In an independent work, Blais \etal prove existence of a junta of size $(k \log(1/\eps))^{O(k)}$ and use it to give an algorithm for testing submodularity using $(k \log(1/\eps))^{\tilde{O}(k)}$ value queries \cite{BlaisOSY:13manu}.

It is interesting to remark that several largely unrelated methods point to approximating junta being of exponential size, namely, pruned decision trees in \cite{FeldmanKV:13}; Friedgut's theorem based analysis in this work; two Sunflower lemma-style arguments in \cite{BlaisOSY:13manu}. However, unexpectedly (at least for the authors), a polynomial-size junta suffices.

Previously, approximations by juntas of size polynomial in $1/\eps$ were only known in some simple special cases of submodular functions. Boolean submodular functions are disjunctions and hence, over the uniform distribution, can be approximated by an $O(\log(1/\eps))$-junta. It can be easily seen that linear functions are approximable by $O(1/\eps^2)$-juntas. Coverage functions which are non-negative linear combinations of monotone disjunctions have been recently shown to be approximable by $O(1/\eps^2)$-juntas \cite{FeldmanK14}. More generally, for Boolean functions the results in \cite{DiakonikolasServedio:09} imply that linear threshold functions with constant total influence can be $\eps$-approximated by a junta of size polynomial in $1/\eps$. In both \cite{DiakonikolasServedio:09} and \cite{FeldmanK14} the techniques are unrelated to ours. 

\section{Preliminaries} \label{sec:prelims}

\subsection{Classes of valuation functions}

Let us describe several classes of functions on the discrete cube, which can be also equivalently viewed as set functions. The functions in these classes share some form of the property of ``forbidden complementarities" --- e.g., $f(\{a,b\})$ cannot be more than $f(\{a\}) + f(\{b\})$. These functions could be monotone or non-monotone; we call a function monotone if $f(S) \leq f(T)$ whenever $S \subset T$.



\noindent{\bf Linear functions.}
Linear (or additive) functions are functions in the form $f(S) = \sum_{i \in S} a_i$. This is the smallest class in the hierarchy that we consider here.

\noindent{\bf Submodular functions.}
Submodular functions are defined by the condition $f(A \cup B) + f(A \cap B) \leq f(A) + f(B)$ for all $A,B$.
A monotone submodular function can be viewed as a valuation on sets with the property of {\em diminishing returns}:
the marginal value of an element, $f_S(i) = f(S \cup \{i\}) - f(S)$, cannot increase if we enlarge the set $S$.
Non-monotone submodular functions play a role in combinatorial optimization, primarily as generalizations
of the {\em cut function} in a graph, $c(S) = |E(S, \bar{S})|$, which is known to be submodular. Another important subclass of monotone submodular functions is the class of {\em rank functions of matroids}: $r(S) = \max \{|I|: I \in \cI, I \subseteq S \}$, where $\cI$ is the family of independent sets in a matroid. In fact, it is known that a function of this type is submodular if and only if $\cI$ forms a matroid.

\noindent{\bf Fractionally subadditive functions (XOS).}
A set function $f$ is fractionally subadditive if $f(A) \leq \sum \beta_i f(B_i)$ whenever $\beta_i \geq 0$
 and $\sum_{i:a \in B_i} \beta_i \geq 1 \ \forall a \in A$.

This class is broader than that of (nonnegative) monotone submodular functions (but does not contain non-monotone functions, since fractionally-subadditive functions are monotone by definition). For fractionally subadditive functions such that $f(\emptyset) = 0$,
there is an equivalent definition known as ``XOS" or maximum of non-negative linear functions \cite{Feige:06}:
$f$ is XOS iff $f(S) = \max_{i \in [m]} \sum_{j \in S} w_{ij}$, where $m$ any positive integer and $w_{ij}$'s are arbitrary non-negative real-valued weights (note that for every $i$, $g_i(S) = \sum_{j \in S} w_{ij}$ is a non-negative linear function).

It is instructive to consider again the example of rank functions: $r(S) = \max \{|I|: I \in \cI, I \subseteq S \}$. As we mentioned,
$r(S)$ is submodular exactly when $\cI$ forms a matroid. In contrast, $r(S)$ is XOS for {\em any down-closed} set system $\cI$ (satisfying $A \subset B \in \cI \Rightarrow A \in \cI$;
this follows from an equivalent formulation of a rank function for down-closed set systems, $r(S) = \max \{ |S \cap I|: I \in \cI\}$). In this sense, XOS is a significantly broader class than submodular functions. Another manifestation of this fact is that optimization problems like $\max \{f(S): |S| \leq k\}$ admit constant-factor approximation algorithms using polynomially many {\em value queries} to $f$ when $f$ is submodular, but no such algorithms exist for XOS functions.

\noindent{\bf Subadditive functions.}
Subadditive functions are defined by the condition $f(A \cup B) \leq f(A) + f(B)$ for all $A,B$.
Subadditive functions are more general than submodular and fractionally subadditive functions.
In fact, subadditive functions are in some sense much less structured than fractionally subadditive  functions.
It is easy to verify that every function $f:2^N \rightarrow \{1,2\}$ is subadditive.
While submodular and fractionally subadditive functions satisfy ``dimension-free" concentration bounds, this is not true for subadditive functions (see \cite{Vondrak10} for more details).

\noindent{\bf Self-bounding functions.}
Self-bounding functions were defined by Boucheron, Lugosi and Massart
\cite{BoucheronLM:00} and further generalized by McDiarmid and Reed \cite{MR06} as a unifying class of functions that enjoy strong concentration properties.
Self-bounding functions are defined generally on product spaces $X^n$; here we restrict our attention to the hypercube,
i.e.~the case where $X = \{0,1\}$.
We identify functions on $\zo^n$ with set functions on $N = [n]$ in a natural way.
By $\bf 0$ and $\b1$, we denote the all-zeroes and all-ones vectors in $\{0,1\}^n$ respectively (corresponding to $\emptyset$ and $N$ sets).

\begin{definition}
For a function $f:\zo^n \rightarrow \RR$ and any $x \in \zo^n$,  let $\min_{x_i}f(x) = \min{\{f(x), f(x\oplus e_i)\}}$.
Then $f$ is $(a,b)$-self-bounding, if
for all $x \in \zo^n$ and $i \in [n]$,
\begin{eqnarray}
 f(x) - \min_{x_i} f(x) & \leq & 1, \label{eq:Lipschitz} \\
 \sum_{i=1}^{n} (f(x) - \min_{x_i} f(x)) & \leq & a f(x) + b.
\end{eqnarray}
\end{definition}


In this paper, we are primarily concerned with $(a,0)$-self-bounding functions, to which we also refer as $a$-self-bounding functions. Note that the definition implies that $f(x) \geq 0$ for every $a$-self-bounding function. Self-bounding functions include ($1$-Lipschitz) fractionally subadditive functions. To subsume $1$-Lipschitz non-monotone submodular functions, it is sufficient to consider the slightly more general $2$-self-bounding functions --- see \cite{Vondrak10}. The $1$-Lipschitz condition will not play a role in this paper, as we normalize functions to have values in the $[0,1]$ range.

Self-bounding functions satisfy {\em dimension-free concentration bounds}, based on the entropy method of Boucheron, Lugosi and Massart \cite{BoucheronLM:00}. Currently this is the most general class of functions known to satisfy such concentration bounds. The entropy method for self-bounding functions is general enough to rederive bounds such as Talagrand's concentration inequality.
An example of a self-bounding function (related to applications of Talagrand's inequality) is a function with the property of {\em small certificates}: $f:X^n \rightarrow \ZZ_+$ has small certificates, if it is 1-Lipschitz and whenever $f(x) \geq k$, there is a set of coordinates $S \subseteq [n]$, $|S|=k$, such that if $y|_S = x|_S$, then $f(y) \geq k$. Such functions often arise in combinatorics, by defining $f(x)$ to equal the maximum size of a certain structure appearing in $x$.
Another well-studied class of self-bounding functions arises from Rademacher averages which are widely used to measure the complexity of model classes in statistical learning theory \cite{Koltchinskii01,BartlettMendelson:02}. See \cite{BoucheronLB03} for a more detailed discussion and additional examples.

The definition of self-bounding functions is more symmetric than that of submodular functions: note that the definition does not change if we swap the meaning of $0$ and $1$ for any coordinate. This is a natural property in the setting of machine learning; the learnability of functions on $\zo^n$ should not depend on switching the meaning of $0$ and $1$ for any particular coordinate.

\subsection{Norms and discrete derivatives}

The $\ell_1$ and $\ell_2$-norms of $f:\zon\rightarrow \RR$ are defined by $\|f\|_1 =  \E_{x \sim \U} [|f(x)|]$ and $\|f\|_2 =  (\E_{x \sim \U} [f(x)^2])^{1/2}$, respectively, where $\U$ is the uniform distribution.

\begin{definition}[Discrete derivatives]
For $x \in \zon$, $b \in \zo$ and $i \in n$, let $x_{i\leftarrow b}$ denote the vector in $\zo^n$ that equals $x$  with $i$-th coordinate set to $b$. For a function $f:\zon \rightarrow \RR$ and index $i \in [n]$ we define
$\partial_i f(x) = f(x_{i\leftarrow 1}) - f(x_{i\leftarrow 0})$.
We also define $\partial_{i,j} f(x) = \partial_i \partial_j f(x)$.
\end{definition}

A function is monotone (non-decreasing) if and only if for all $i\in [n]$ and $x\in \zo^n$, $\partial_i f(x) \geq 0$.
For a submodular function, $\partial_{i,j} f(x) \leq 0$, by considering the submodularity condition for $x_{i \leftarrow 0, j \leftarrow 0}$, $x_{i \leftarrow 0, j \leftarrow 1}$, $x_{i \leftarrow 1, j \leftarrow 0}$, and  $x_{i \leftarrow 1, j \leftarrow 1}$.

\smallskip
\noindent{\bf Absolute error vs.~error relative to norm:}
In our results, we typically assume that the values of $f(x)$ are in a bounded interval $[0,1]$,
and our goal is to learn $f$ with an additive error of $\epsilon$. Some prior work considered an error relative to the norm of $f$, for example at most $\epsilon \|f\|_1$ \cite{CheraghchiKKL:12}. In fact, it is known that for a non-negative submodular or XOS function $f$, $\|f\|_1 = \E[f] \geq \frac14 \|f\|_\infty$ \cite{Feige:06,FMV07} and hence this does not make much difference. If we scale $f(x)$ by $\frac{1}{4 \|f\|_1}$, we obtain a function with values in $[0,1]$ and learning the original function within an additive error of $\epsilon \|f\|_1$ is equivalent to learning the scaled function within an error of $\epsilon/4$.


\section{Junta Approximations of Submodular Functions}

First we turn to the class of submodular functions and their approximations by functions of a small number of variables.

\subsection{Additive Approximation For Submodular Functions}
\label{sec:submod-junta}

Here we prove Theorem~\ref{thm:submod-junta}, a bound of $\tilde{O}(1/\epsilon^2)$ on the size of a junta needed to approximate a submodular function bounded by $[0,1]$ within an additive error of $\epsilon$.
The core of our proof is the following (seemingly weaker) statement. We remark that in this paper all logarithms are base 2.

\begin{lemma}
\label{lem:submod-reduce}
For any $\epsilon \in (0,\frac12)$ and any submodular function $f:\{0,1\}^J \rightarrow [0,1]$,
there exists a submodular function $h:\{0,1\}^J \rightarrow [0,1]$ depending only on a subset of variables $J' \subseteq J$, $|J'| \leq \frac{128}{\epsilon^2} \log \frac{16|J|}{\epsilon^2}$, such that $\|f - h\|_2 \leq \frac12 \epsilon$.
\end{lemma}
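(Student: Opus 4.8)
\textbf{Overview of the approach.} The plan is to follow exactly the ``selection procedure'' outlined in the proof-technique paragraph. I will build the junta $J'$ greedily: process the variables in some order, and when considering variable $i$, decide whether to include it in $J'$ based on whether its marginal value $f_S(i) = f(S \cup \{i\}) - f(S)$ is ``large'' (say, at least some threshold $\tau = \Theta(\epsilon^2)$) with probability at least $1/2$ over a \emph{sparse} random subset $S$ of the variables already placed in $J'$. By ``sparse'' I mean $S$ is chosen by including each already-selected variable independently with some small probability $p$ (to be tuned around $p = \Theta(\epsilon^2 \log(|J|/\epsilon))$, matching the target bound). The first thing to check is that this procedure can only select few variables: each time a new variable $i$ is added, it contributes (in expectation over the sparse $S$) at least $\tau/2$ to $f$; since $f$ is submodular and bounded by $1$, and marginal values telescope/are dominated appropriately, a standard potential argument bounds $|J'| = O(1/\tau) \cdot (\text{sparsity factor})$. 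Getting the constants to land at $\frac{128}{\epsilon^2}\log\frac{16|J|}{\epsilon^2}$ is a matter of bookkeeping.

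\textbf{Key steps in order.} (1) Define the threshold $\tau$, sparsity $p$, and the selection rule precisely, and prove the bound on $|J'|$ as above. (2) Fix a uniformly random subset $R \subseteq J'$ and argue that, \emph{with high probability over $R$}, every excluded variable $j \in J \setminus J'$ has small marginal value on $R$, i.e.\ $f_R(j) \le \tau$. Here is where the Goemans--Vondr\'ak boosting lemma enters: by construction, each excluded $j$ has $f_S(j) < \tau$ with probability $\ge 1/2$ when $S$ is a $p$-sparse random subset of the variables selected \emph{before} $j$ was considered; the event $\{f_S(j) < \tau\}$ is down-monotone in $S$ (by submodularity, removing elements from $S$ only increases the marginal $f_S(j)$, so... careful: actually it is the complementary event $\{f_S(j)\ge\tau\}$ that is down-monotone, which is what we want), so the boosting lemma amplifies the probability from $1/2$ at sparsity $p$ to $1 - 2^{-\Omega(1/p)}$ at sparsity $1$ (a uniform random subset). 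With $p \approx \epsilon^2 / \log(|J|/\epsilon)$ this failure probability is $\ll \epsilon^2 / |J|$, so a union bound over the at most $|J|$ excluded variables shows that with probability $\ge 1 - \epsilon^2/4$ (say) over $R$, all excluded variables are simultaneously $\tau$-Lipschitz on the ``slice'' through $R$. (3) On a good $R$, the function $f$ restricted to flipping only the coordinates outside $J'$ (with the $J'$-coordinates as free variables) is $\tau$-Lipschitz in every coordinate of $J\setminus J'$, and it is still submodular; apply the concentration bound for Lipschitz submodular functions (Boucheron--Lugosi--Massart) to conclude that, conditioned on the $J'$-coordinates, $f$ is tightly concentrated around its conditional mean with deviation $O(\sqrt{\tau})$. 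Define $h(x)$ to be this conditional expectation $\E_{y}[f(x_{J'}, y_{J\setminus J'})]$, which depends only on $J'$ and is submodular (averaging preserves submodularity). (4) Combine: with probability $\ge 1-\epsilon^2/4$ over the choice of $R$ (used only in the analysis), and then using concentration, bound $\|f - h\|_2^2$ by $O(\tau) + O(\epsilon^2) + (\text{boundedness} \times \text{failure prob}) \le \epsilon^2/4$, giving $\|f-h\|_2 \le \epsilon/2$.

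\textbf{Main obstacle.} The delicate point is Step~(2): correctly setting up the down-monotonicity so that the Goemans--Vondr\'ak boosting lemma applies, and then choosing $p$ and $\tau$ so that the \emph{three} competing error terms — the amplified-and-union-bounded failure probability ($\sim |J| \cdot 2^{-c/p}$, which forces $p \lesssim \epsilon^2/\log(|J|/\epsilon)$ and hence drives the $\log(|J|/\epsilon^2)$ factor in the final bound), the concentration deviation ($O(\sqrt{\tau})$, forcing $\tau \lesssim \epsilon^2$), and the junta-size budget ($|J'| = O(\log(|J|)/(p\,\tau))$ roughly, wait—need $|J'|$ small, so the sparsity/threshold tradeoff must be balanced) — all fit inside the stated constants. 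I also need to be careful that the selection rule uses only variables chosen \emph{before} $j$ when certifying exclusion, so that the ``$\ge 1/2$'' guarantee is genuinely available at the time $j$ is rejected and is not spoiled by later additions (monotonicity of the marginal in $S$ handles this: adding more variables to $J'$ can only shrink $f_S(j)$, so the guarantee is preserved). The rest — preservation of submodularity under restriction and averaging, and the Lipschitz-concentration invocation — is routine given the tools cited in the excerpt.
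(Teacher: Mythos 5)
Your architecture is the same as the paper's (greedy selection of variables whose discrete derivative is large with probability $\ge 1/2$ over $\delta$-sparse random subsets of the already-selected set, the Goemans--Vondr\'ak boosting lemma to amplify the exclusion guarantee to the uniform regime, a union bound over the $\le |J|$ excluded variables, the junta $h(x)=\E_y[f(x_{J'},y)]$, and the variance bound for $\alpha$-Lipschitz submodular restrictions), but there is a genuine gap: your test and your boosting step control only the event $\partial_j f \ge \tau$ at the \emph{bottom} of each subcube (excluded coordinates set to $0$). By submodularity this upper-bounds $\partial_j f$ throughout the subcube, but the Lipschitz property you invoke in Step (3) is two-sided: you also need $\partial_j f \ge -\tau$ at the \emph{top} of the subcube, and nothing in your procedure certifies this. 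Since the lemma is stated for arbitrary, possibly non-monotone, submodular $f$ (and is used that way in Theorem~\ref{thm:submod-junta}), this is fatal as written: take $f(x)=\frac1a\sum_{i\in A}(1-x_i)$ with $|A|=a=\lceil 1/(4\epsilon^2)\rceil$, which is linear (hence submodular) and decreasing; every discrete derivative is $-1/a<0<\tau$, so your rule selects $J'=\emptyset$, yet the best junta on $\emptyset$ is a constant and $\|f-\E[f]\|_2=\frac{1}{2\sqrt a}\approx\epsilon>\frac12\epsilon$. The paper handles this with a second, mirrored phase: it builds a set $T$ by testing $\Pr[\partial_i f(\b1_{J\setminus T(\delta)})<-\alpha]>1/2$, applies the boosting lemma to the (again down-monotone) family $\{F\subseteq J': \partial_i f(\b1_{J\setminus F})<-\alpha\}$, returns $J'=S\cup T$, declares $x_{J'}$ bad if either $\partial_i f(x_{J'})>\alpha$ or $\partial_i f(x_{J'}+\b1_{J\setminus J'})<-\alpha$, and bounds $|T|$ via the reflected function $\bar f(\b1_R)=f(\b1_{J\setminus R})$. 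Your argument is essentially correct for monotone submodular $f$, but you must add this second phase (with the corresponding doubling of the size bound to $4/(\alpha\delta)$) to prove the lemma as stated.

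Two smaller points of bookkeeping. First, your constraint on the sparsity is off: the union bound only requires the per-variable amplified failure probability $2^{-\Omega(1/p)}$ to be about $\epsilon^2/|J|$, so $p=\Theta(1/\log(|J|/\epsilon^2))$ suffices (the paper takes $\delta=1/(2\log\frac{16|J|}{\epsilon^2})$ and $\alpha=\epsilon^2/16$); your stated requirement $p\lesssim\epsilon^2/\log(|J|/\epsilon)$ over-constrains $p$ and would make the junta size $O(1/(p\tau))$ come out $\tilde O(1/\epsilon^4)$ rather than the claimed $\frac{128}{\epsilon^2}\log\frac{16|J|}{\epsilon^2}$. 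Second, no exponential concentration is needed in Step (3): since $\|f-h\|_2^2=\E_{x_{J'}}[\Var[f_{x_{J'}}]]$, the Efron--Stein-type bound $\Var[f_{x_{J'}}]\le 2\alpha\,\E[f_{x_{J'}}]\le \epsilon^2/8$ on good points, plus the $\le\epsilon^2/8$ mass of bad points times the bounded range, already gives $\|f-h\|_2\le\frac12\epsilon$.
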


Note that if $|J|=n$ and $\epsilon=\Omega(1)$, Lemma~\ref{lem:submod-reduce} reduces the number of variables to $O(\log n)$ rather than a constant. However, we show that this is enough to prove Theorem~\ref{thm:submod-junta}, effectively by repeating this argument. In fact, it was previously shown \cite{FeldmanKV:13} that submodular functions can be $\epsilon$-approximated by functions of $2^{O(1/\epsilon^2)}$ variables. One application of Lemma~\ref{lem:submod-reduce} to this result brings the number of variables down to $\tilde{O}(\frac{1}{\epsilon^4})$, and another repetition of the same argument brings it down to $O(\frac{1}{\epsilon^2} \log \frac{1}{\epsilon})$. This is a possible way to prove Theorem~\ref{thm:submod-junta}. Nevertheless, we do not need to rely on this previous result, and we can derive Theorem~\ref{thm:submod-junta} directly from Lemma~\ref{lem:submod-reduce} as follows.

\begin{proof}[Proof of Theorem~\ref{thm:submod-junta}]
Let $f:\{0,1\}^n \rightarrow [0,1]$ be a submodular function. We shall prove a bound of $|J| \leq \frac{4000}{\epsilon^2} \log \frac{1}{\epsilon}$ for the size of the approximating junta.

Observe that this bound holds trivially for $\epsilon \leq n^{-1/2}$, because then we are allowed to choose $J = [n]$. For contradiction, suppose that there is $\epsilon \in (n^{-1/2},1/2)$ for which the statement of Theorem~\ref{thm:submod-junta} does not hold. Let ${\cal E} \subseteq (n^{-1/2}, 1/2)$ be the set of all $\epsilon$ for which the statement does not hold, and pick an $\epsilon \in {\cal E}$ such that $\epsilon < 2 \inf {\cal E}$. Then, the statement still holds for $\epsilon_2 = \epsilon^2 < \frac12 \epsilon$.

By the statement of Theorem~\ref{thm:submod-junta} for $\epsilon_2$, there is a subset of variables $J$ of size $|J| \leq \frac{4000}{\epsilon_2^2} \log \frac{1}{\epsilon_2} = \frac{4000}{\epsilon^4} \log \frac{1}{\epsilon^2} \leq \frac{2^{13}}{\epsilon^5}$ and a submodular function $g$ depending only on $J$, such that $\|f - g\|_2 \leq \epsilon_2 \leq \frac12 \epsilon$. Now let us apply Lemma~\ref{lem:submod-reduce} to $g$ with parameter $\epsilon$. Thus, there exists a submodular function $h$ such that $\|g-h\|_2 \leq \frac12 \epsilon$, and $h$ depends only on a subset of variables $J' \subseteq J$, $|J'| \leq \frac{128}{\epsilon^2} \log \frac{16|J|}{\epsilon}$. We have $|J| \leq \frac{2^{13}}{\epsilon^5}$, and therefore
$|J'| \leq \frac{128}{\epsilon^2} \log \frac{2^{17}}{\epsilon^6} \leq \frac{128}{\epsilon^2} \log \frac{1}{\epsilon^{23}}$  (using $\epsilon \leq \frac12$). We conclude that $|J'|  \leq \frac{128 \cdot 23}{\epsilon^2} \log \frac{1}{\epsilon} \leq \frac{4000}{\epsilon^2} \log \frac{1}{\epsilon}$ as required in Theorem~\ref{thm:submod-junta}.
By the triangle inequality, we have $\|f-h\|_2 \leq \|f-g\|_2 + \|g-h\|_2 \leq \frac12 \epsilon + \frac12 \epsilon = \epsilon$. However, this would mean that the statement of Theorem~\ref{thm:submod-junta} holds for $\epsilon$ as well, which is a contradiction.
\end{proof}

In the rest of this section, our goal is to prove Lemma~\ref{lem:submod-reduce}.

\paragraph{What we need.}
Our proof relies on two previously known facts: a concentration result for submodular functions, and a ``boosting lemma" for down-monotone events.

\medskip
\noindent{\bf Concentration of submodular functions.}
It is known that a $1$-Lipschitz nonnegative submodular function $f$ is concentrated within a standard deviation of $O(\sqrt{\E[f]})$ \cite{BoucheronLM:00,Vondrak10}. This fact was also used in previous work on learning of submodular functions \cite{BalcanHarvey:12full,GuptaHRU:11,FeldmanKV:13}. Exponential tail bounds are known in this case, but we do not even need this. We quote the following result which follows from the Efron-Stein inequality (the first part is stated as Corollary 2 in \cite{BoucheronLB03}, Section 2.2; the second part follows easily from the same proof).

\begin{lemma}
\label{lem:Lipschitz}
For any self-bounding function $f:\{0,1\}^n \rightarrow \RR_+$ under a product distribution,
$$ \Var[f] \leq \E[f].$$
For any $a$-self-bounding function $f:\{0,1\}^n \rightarrow \RR_+$ under a product distribution,
$$ \Var[f] \leq a \E[f].$$
\end{lemma}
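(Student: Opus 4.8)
The plan is to deduce both inequalities from the Efron--Stein inequality, which is available because a product distribution makes the coordinates $x_1,\dots,x_n$ independent. Recall the relevant form of Efron--Stein: writing $\Var_i[f]$ for the variance of $f$ over $x_i$ with the other coordinates held fixed, one has $\Var[f] \le \sum_{i=1}^n \E[\Var_i[f]]$; moreover, since the conditional mean minimizes the conditional mean square error, $\Var_i[f] \le \E_i[(f - g_i)^2]$ for \emph{any} function $g_i$ that does not depend on the $i$-th coordinate. Combining these, $\Var[f] \le \sum_{i=1}^n \E[(f - g_i)^2]$ for every such choice of $g_i$.

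First I would make the natural choice $g_i(x) = \min_{x_i} f(x) = \min\{f(x), f(x\oplus e_i)\}$, which by its very definition does not depend on $x_i$, so the displayed bound applies and gives $\Var[f] \le \sum_{i=1}^n \E\big[(f - \min_{x_i} f)^2\big]$. Next I invoke the Lipschitz part \eqref{eq:Lipschitz} of the self-bounding definition: the quantity $t_i(x) := f(x) - \min_{x_i} f(x)$ satisfies $0 \le t_i(x) \le 1$ pointwise (nonnegativity is immediate since $\min_{x_i}f \le f$), hence $t_i(x)^2 \le t_i(x)$. Therefore $\Var[f] \le \sum_i \E[t_i^2] \le \E\big[\sum_i t_i\big]$, and applying the second self-bounding condition with $b=0$, namely $\sum_i t_i(x) \le a f(x)$, yields $\Var[f] \le \E[a f] = a\,\E[f]$. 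Taking $a = 1$ gives the first inequality and keeping $a$ arbitrary gives the second, exactly as claimed; this is the ``same proof'' for both parts.

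I do not expect a genuine obstacle here. The only points requiring care are bookkeeping ones: (i) Efron--Stein needs independence of the coordinates, which is precisely what a product distribution provides; (ii) $x \mapsto \min_{x_i} f(x)$ must be recognized as a legitimate function of the remaining coordinates, which is clear; and (iii) the step $t_i^2 \le t_i$ is exactly where the $1$-Lipschitz condition \eqref{eq:Lipschitz} enters, converting a sum of squared discrete differences into the linear self-bounding bound. Everything else is a one-line application of the second self-bounding inequality and monotonicity of expectation.
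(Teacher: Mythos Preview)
Your proof is correct and is exactly the Efron--Stein argument the paper points to (the paper does not spell out the proof but cites it as Corollary~2 in \cite{BoucheronLB03} with the remark that the $a$-self-bounding case follows from the same proof). The choice $g_i = \min_{x_i} f$, the bound $t_i^2 \le t_i$ from \eqref{eq:Lipschitz}, and the final application of $\sum_i t_i \le a f$ are precisely the intended steps.
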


We use the fact that $1$-Lipschitz monotone submodular functions are self-bounding, and $1$-Lipschitz nonmonotone submodular functions are $2$-self-bounding (see \cite{Vondrak10}). By scaling, we obtain the following for $\alpha$-Lipschitz submodular functions (see also \cite{FeldmanKV:13}).

\begin{corollary}
\label{cor:Lipschitz}
For any $\alpha$-Lipschitz monotone submodular function $f:\{0,1\}^n \rightarrow \RR_+$ under a product distribution,
$$ \Var[f] \leq \alpha \E[f].$$
For any $\alpha$-Lipschitz (nonmonotone) submodular function $f:\{0,1\}^n \rightarrow \RR_+$ under a product distribution,
$$ \Var[f] \leq 2 \alpha \E[f].$$
\end{corollary}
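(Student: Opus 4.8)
The plan is to obtain Corollary~\ref{cor:Lipschitz} from Lemma~\ref{lem:Lipschitz} by a one-line rescaling, using the two structural facts stated immediately above the corollary: a $1$-Lipschitz nonnegative monotone submodular function is self-bounding, and a $1$-Lipschitz nonnegative non-monotone submodular function is $2$-self-bounding (both from \cite{Vondrak10}). Concretely, given an $\alpha$-Lipschitz submodular $f$, I would pass to $g := f/\alpha$, which is $1$-Lipschitz, and transfer the variance bound back to $f$ through the quadratic/linear scaling of $\Var$ and $\E$. The only things that need checking are that the rescaling preserves all the relevant hypotheses and that the arithmetic works out.

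\textbf{Step 1: degenerate case.} First I would dispose of $\alpha = 0$: then $f(x) = f(x \oplus e_i)$ for all $x \in \zon$ and $i \in [n]$, so $f$ is constant, $\Var[f] = 0$, and both claimed inequalities hold trivially. So from now on assume $\alpha > 0$ and set $g := f/\alpha$.

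\textbf{Step 2: $g$ satisfies the hypotheses of Lemma~\ref{lem:Lipschitz}.} Since $f$ is $\alpha$-Lipschitz, $|f(x) - f(x \oplus e_i)| \le \alpha$, hence $|g(x) - g(x \oplus e_i)| \le 1$, and in particular $g(x) - \min_{x_i} g(x) = \max\{0,\, g(x) - g(x \oplus e_i)\} \le 1$, which is the first (Lipschitz) condition in the definition of self-boundedness. Multiplying by the positive constant $1/\alpha$ preserves the submodularity inequality, preserves nonnegativity ($g \ge 0$ since $f \ge 0$ and $\alpha > 0$), and preserves monotonicity. Thus $g$ is a $1$-Lipschitz nonnegative submodular function, monotone iff $f$ is. By the quoted facts from \cite{Vondrak10}, $g$ is self-bounding if $f$ is monotone, and $2$-self-bounding if $f$ is non-monotone.

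\textbf{Step 3: transfer the bound.} In the monotone case, Lemma~\ref{lem:Lipschitz} gives $\Var[g] \le \E[g]$. Under the same product distribution, $\E[g] = \E[f]/\alpha$ and $\Var[g] = \Var[f]/\alpha^2$, so this rearranges to $\Var[f] \le \alpha\, \E[f]$. In the non-monotone case, the second part of Lemma~\ref{lem:Lipschitz} with $a = 2$ gives $\Var[g] \le 2\,\E[g]$, i.e.\ $\Var[f] \le 2\alpha\, \E[f]$. This completes the argument. There is no genuinely hard step: all the substance sits in Lemma~\ref{lem:Lipschitz} (the Efron--Stein/entropy-method concentration bound) and in the self-boundedness of Lipschitz submodular functions from \cite{Vondrak10}; the only points requiring care are the $\alpha = 0$ edge case and verifying that rescaling leaves every hypothesis of Lemma~\ref{lem:Lipschitz} intact.
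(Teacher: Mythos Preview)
Your proposal is correct and is exactly the argument the paper intends: the paper derives the corollary from Lemma~\ref{lem:Lipschitz} ``by scaling,'' using the facts from \cite{Vondrak10} that $1$-Lipschitz monotone (resp.\ nonmonotone) nonnegative submodular functions are self-bounding (resp.\ $2$-self-bounding). Your explicit handling of the $\alpha=0$ edge case and the verification that rescaling preserves all hypotheses just fill in the details the paper leaves implicit.
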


\medskip
\noindent{\bf Boosting lemma for down-monotone events.}
The following was proved as Lemma 3 in \cite{GoemansVondrak:06}.

\begin{lemma}
\label{lem:boosting}
Let $\cF \subseteq \{0,1\}^X$ be down-monotone (if $x \in \cF$ and $y \leq x$ coordinate-wise, then $y \in \cF$). For $p \in (0,1)$, define
$$ \sigma_p = \Pr[X(p) \in \cF] $$
where $X(p)$ is a random subset of $X$, each element sampled independently with probability $p$. Then
$$ \sigma_p = (1-p)^{\phi(p)} $$
where $\phi(p)$ is a non-decreasing function for $p \in (0,1)$.
\end{lemma}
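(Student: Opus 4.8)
The plan is to show that $\sigma_p = (1-p)^{\phi(p)}$ with $\phi$ non-decreasing, which is equivalent to saying that $\phi(p) = \log_{1-p} \sigma_p = \frac{\ln \sigma_p}{\ln(1-p)}$ is non-decreasing in $p$. Since $\ln(1-p) < 0$, this is the same as showing that $g(p) := -\frac{\ln \sigma_p}{-\ln(1-p)}$ is non-increasing, or after a change of variables, that a suitable log-derivative inequality holds. The cleanest route I would take is the substitution $p = 1 - e^{-t}$, so that $t = -\ln(1-p)$ ranges over $(0,\infty)$ as $p$ ranges over $(0,1)$, and define $\psi(t) = -\ln \sigma_{p(t)}$. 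Then $\phi(p) = \psi(t)/t$, and the claim becomes: $\psi(t)/t$ is non-decreasing in $t$. A standard sufficient condition for $\psi(t)/t$ to be non-decreasing is that $\psi$ is convex with $\psi(0) = 0$ (indeed $\psi(0^+) = -\ln \sigma_0 = -\ln 1 = 0$ since at $p=0$ the empty set is in the down-monotone family $\cF$, assuming $\cF$ nonempty). So it suffices to prove that $t \mapsto -\ln \Pr[X(1-e^{-t}) \in \cF]$ is convex on $(0,\infty)$.

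To prove this convexity, I would write $\sigma_p$ explicitly. Let $a_k$ be the number of sets in $\cF$ of size $k$ (out of subsets of $X$, $|X| = N$). Then
\[
\sigma_p = \sum_{k=0}^{N} a_k \, p^k (1-p)^{N-k}.
\]
Substituting $p = 1 - e^{-t}$, i.e. $1 - p = e^{-t}$ and $p = 1 - e^{-t}$, gives
\[
\sigma_p = e^{-Nt} \sum_{k=0}^N a_k (e^t - 1)^k.
\]
Thus $-\ln \sigma_p = Nt - \ln\!\Big( \sum_k a_k (e^t-1)^k \Big)$. The term $Nt$ is linear, so convexity of $-\ln\sigma_p$ in $t$ is equivalent to \emph{concavity} of $t \mapsto \ln\big(\sum_k a_k (e^t-1)^k\big)$, equivalently log-concavity of $Q(t) := \sum_k a_k (e^t - 1)^k$ as a function of $t$. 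Here is where the down-monotonicity of $\cF$ must enter: it forces the sequence $(a_k)$ to satisfy a combinatorial inequality. Specifically, down-monotonicity gives that $a_k / \binom{N}{k}$ is non-increasing in $k$ (each $k$-set in $\cF$ has all its $(k-1)$-subsets in $\cF$, and a double-counting argument yields $(k) a_k \le (N-k+1) a_{k-1}$, i.e. $a_k/\binom{N}{k} \le a_{k-1}/\binom{N}{k-1}$). I would then need to deduce log-concavity of $Q(t)$ from this; the substitution $u = e^t - 1 \in (0,\infty)$ reduces it to showing that $R(u) = \sum_k a_k u^k$ is such that $\ln R(e^t - 1)$ is concave in $t$.

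The main obstacle — and the technical heart — is establishing this log-concavity from the single inequality $a_k/\binom{N}{k}$ non-increasing. I expect the right tool is to write $Q(t) = \E_{\mathrm{Bin}}[\cdots]$ or to use the FKG/Harris inequality directly at the probabilistic level rather than manipulating coefficients: namely, $\sigma_p = \Pr_p[\cF]$ and for $p < q$ one couples $X(q)$ as $X(p) \cup X(r)$ with $1 - q = (1-p)(1-r)$; since $\cF$ is down-monotone (a decreasing event) and the added coordinates only help to \emph{leave} $\cF$, one gets $\Pr_q[\cF] = \Pr[X(p) \cup X(r) \in \cF] \le \Pr[X(p) \in \cF]\cdot(\text{correction})$, and iterating/differentiating this submultiplicativity in the $t$-parametrization yields exactly the convexity of $-\ln\sigma_p$ in $t$. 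Concretely: $\Pr[X(p)\cup X(r) \in \cF \mid X(p) = S] = \Pr[X(r) \subseteq \cF_S]$ where $\cF_S = \{T : S \cup T \in \cF\}$ is again down-monotone, and one shows $\sigma_p' / \sigma_p$ (in the $t$-variable) is monotone by comparing these conditional survival probabilities across scales. I would carry out the derivative computation: letting $F(t) = -\ln \sigma_{1-e^{-t}}$, show $F'(t) = \sum_i \Pr[i \in X(p),\ X(p)\setminus\{i\}\in\cF,\ X(p)\notin \cF \mid \cdot]$-type expression that is manifestly non-decreasing because conditioning on being "near the boundary of $\cF$" becomes more likely as $p$ grows. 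This probabilistic argument is what the original Goemans–Vondr\'ak proof does, and it is cleaner than wrestling with the polynomial coefficients; so that is the route I would commit to, with the coefficient computation kept only as a sanity check.
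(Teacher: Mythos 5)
The paper does not prove Lemma~\ref{lem:boosting} at all --- it is imported verbatim from Goemans and Vondr\'ak \cite{GoemansVondrak:06} (Lemma~3 there) --- so the comparison is between your proposal and the known proof of that lemma. Judged on its own terms, your proposal has a genuine gap at exactly the step you call the technical heart. Writing $t=-\ln(1-p)$ and $\psi(t)=-\ln\sigma_p$, the lemma is precisely the statement that $\psi(t)/t$ is non-decreasing (star-shapedness of $\psi$), and your reduction to that is fine; but you then propose to derive it from convexity of $\psi$, equivalently log-concavity of $Q(t)=\sum_k a_k(e^t-1)^k$. That is a strictly stronger property, and it is false for general down-monotone families. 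Concretely, take $X=\{1,2,3\}$ and $\cF=\{\emptyset,\{1\},\{2\},\{3\},\{1,2\}\}$, so $\sigma_p=(1-p)+p(1-p)^2$. With $q=e^{-t}$ this gives $\psi(t)=t-\ln\bigl(1+e^{-t}-e^{-2t}\bigr)=t-e^{-t}+\tfrac32 e^{-2t}+O(e^{-3t})$, hence $\psi''(t)=-e^{-t}+O(e^{-2t})<0$ for all large $t$: $\psi$ is strictly concave on a whole ray (numerically already around $t\approx 2$), so it is not convex and $\ln Q(t)$ is not concave, even though the lemma of course holds for this family. Consequently no argument --- neither via the coefficient inequality $a_k/\binom{N}{k}$ non-increasing (which is true but cannot rescue a false intermediate claim) nor via ``differentiating the coupling'' --- can establish the convexity you reduce to.

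The probabilistic fallback you sketch also falls short. The coupling $X(q)\stackrel{d}{=}X(p)\cup X(r)$ with $1-q=(1-p)(1-r)$, together with down-monotonicity and independence, gives only $\sigma_q\le\sigma_p\,\sigma_r$, i.e.\ superadditivity $\psi(t+s)\ge\psi(t)+\psi(s)$. This yields $\psi(kt)\ge k\psi(t)$ for integer $k$, hence $\phi(p)\le\phi(q)$ only when $\ln(1-q)/\ln(1-p)$ is an integer; superadditivity does not imply monotonicity of $\psi(t)/t$ (star-shapedness implies superadditivity, not conversely), and by the example above it certainly does not imply convexity, so the claim that iterating/differentiating the submultiplicativity ``yields exactly the convexity'' is where the argument breaks. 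It is also not what Goemans and Vondr\'ak do: their proof establishes the two-point inequality $\sigma_q\le\sigma_p^{\lambda}$ for every real $\lambda=\ln(1-q)/\ln(1-p)\ge 1$ directly, by induction on the ground set --- fixing one coordinate writes $\sigma_p=(1-p)\sigma_p^{(0)}+p\,\sigma_p^{(1)}$ with both subfamilies down-monotone and $\sigma_p^{(1)}\le\sigma_p^{(0)}$, and an elementary power-mean type inequality combines the inductive bounds --- rather than any convexity statement in $t$. So the missing content is exactly the bridge from your (correct) star-shapedness reformulation to a proof of the two-point inequality for arbitrary real exponents.
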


\paragraph{The proof of Lemma~\ref{lem:submod-reduce}}

Given a submodular function $f:\{0,1\}^J \rightarrow [0,1]$, let $F:[0,1]^J \rightarrow [0,1]$ denote the multilinear extension of $f$: $F(x) = \E[f(\hat{x})]$ where $\hat{x}$ has independently random 0/1 coordinates with expectations $x_i$. We also denote by $\b1_S$ the characteristic vector of a set $S$.

\begin{algorithm}
\label{alg:junta}
Given $f:\{0,1\}^J \rightarrow [0,1]$, produce a small set of important coordinates $J'$ as follows (for parameters $\alpha,\delta>0$):
\begin{itemize}
\item Set 
 $S = T = \emptyset$.
\item As long as there is $i \notin S$ such that $\Pr[\partial_i f(\b1_{S(\delta)}) > \alpha] > 1/2$, include $i$ in $S$. \\
{\em (This step is sufficient for monotone submodular functions.)}
\item As long as there is $i \notin T$ such that $\Pr[\partial_i f(\b1_{J \setminus T(\delta)}) < -\alpha] > 1/2$, include $i$ in $T$. \\
{\em (This step deals with non-monotone submodular functions.)}
\item Return $J' = S \cup T$.
\end{itemize}
\end{algorithm}

The intuition here (for monotone functions) is that we include greedily all variables whose contribution is significant, when measured at a random point where the variables chosen so far are set to $1$ with a (small) probability $\delta$. The reason for this is that we can bound the number of such variables, and at the same time we can prove that the contribution of unchosen variables is very small {\em with high probability}, when the variables in $J'$ are assigned uniformly at random (this part uses the boosting lemma). This is helpful in estimating the approximation error of this procedure.

First, we bound the number of variables chosen by the procedure. The argument is essentially that if the procedure had selected too many variables, their expected cumulative contribution would exceed the bounded range of the function. This argument would suffice for monotone submodular functions. The final proof is somewhat technical because of the need to deal with potentially negative discrete derivatives of non-monotone submodular functions.

\begin{lemma}
\label{lem:size}
The number of variables chosen by the procedure above is $|J'| \leq \frac{4}{\alpha \delta}$.
\end{lemma}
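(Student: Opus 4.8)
The plan is to bound the sizes of the sets $S$ and $T$ separately, each by $\frac{2}{\alpha\delta}$, and then conclude $|J'| = |S \cup T| \leq |S| + |T| \leq \frac{4}{\alpha\delta}$. I will give the argument for $S$; the argument for $T$ is symmetric after replacing $f$ by the function $\tilde f(x) = f(\b1_J \oplus x)$ (or, more directly, by reading off the analogous telescoping in the opposite direction), and submodularity is preserved under coordinate complementation.

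For the bound on $|S|$, enumerate the elements added to $S$ in order as $i_1, i_2, \ldots, i_m$, and let $S_k = \{i_1,\ldots,i_k\}$ be the set just before $i_{k+1}$ is added. The key structural fact I would use is the diminishing returns property: since $f$ is submodular, for any $i$ and any sets $A \subseteq B$ with $i \notin B$ we have $\partial_i f(\b1_A) \geq \partial_i f(\b1_B)$, and more generally the marginal $f(S\cup\{i\}) - f(S)$ is non-increasing in $S$. When $i_{k+1}$ is added, the selection criterion says $\Pr[\partial_{i_{k+1}} f(\b1_{S_k(\delta)}) > \alpha] > 1/2$, where $S_k(\delta)$ is a $\delta$-random subset of $S_k$. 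Conditioned on the random subset $R = S_k(\delta)$, the quantity $\partial_{i_{k+1}} f(\b1_R)$ is the marginal of $i_{k+1}$ on the set $R \subseteq S_k$. First I would argue that the expected total marginal contribution of $i_{k+1}, i_{k+2}, \ldots$ measured appropriately is controlled. Concretely, consider the random set $\b1_{S_m(\delta)}$ (a $\delta$-random subset of the \emph{final} $S = S_m$) and write $f(\b1_{S_m(\delta)}) - f(\emptyset)$ as a telescoping sum of marginals along an exposure of the elements of $S_m(\delta)$ in the order $i_1, i_2, \ldots$; by submodularity each term where $i_{k+1} \in S_m(\delta)$ contributes at least $\partial_{i_{k+1}} f(\b1_{\{i_1,\ldots,i_k\} \cap S_m(\delta)})$, and since $\{i_1,\ldots,i_k\}\cap S_m(\delta)$ has the same distribution as $S_k(\delta)$, in expectation this is at least $\delta \cdot \E[\partial_{i_{k+1}} f(\b1_{S_k(\delta)})]$.

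Now the crux: the selection criterion guarantees $\E[\partial_{i_{k+1}} f(\b1_{S_k(\delta)})]$ is bounded below by a positive quantity \emph{for monotone functions}, where $\partial_i f \geq 0$ always, so that $\Pr[\partial_{i_{k+1}} f(\b1_{S_k(\delta)}) > \alpha] > 1/2$ immediately yields $\E[\partial_{i_{k+1}} f(\b1_{S_k(\delta)})] > \alpha/2$. Summing the telescoping bound over $k = 0, \ldots, m-1$ gives $\E[f(\b1_{S_m(\delta)})] - f(\emptyset) \geq \delta \sum_{k=0}^{m-1} \E[\partial_{i_{k+1}} f(\b1_{S_k(\delta)})] > \delta \cdot m \cdot \frac{\alpha}{2}$, and since the left side is at most $1$ (the range is $[0,1]$), we get $m \leq \frac{2}{\alpha\delta}$. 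The main obstacle, as the paper flags in the paragraph preceding the lemma, is that for \emph{non-monotone} submodular functions a discrete derivative $\partial_i f$ can be negative, so $\Pr[\partial_i f > \alpha] > 1/2$ no longer forces $\E[\partial_i f] > \alpha/2$, and the telescoping marginals need not be non-negative so they cannot simply be summed against the bound $f \leq 1$. To handle this I would truncate: replace each marginal by $\max(\partial_{i_{k+1}} f(\b1_{S_k(\delta)}), 0)$ or work with the event $\{\partial_{i_{k+1}} f(\b1_{S_k(\delta)}) > \alpha\}$ directly, and bound the total \emph{positive} contribution along the exposure sequence. A clean route is to note that for a submodular $f$ with range in $[0,1]$, the sum of positive marginals along any maximal chain is controlled (each positive marginal can only ``use up'' range that is later given back, but a careful accounting via the fact that partial sums of marginals equal $f$ evaluated at intermediate sets, which stays in $[0,1]$, bounds the total positive variation by $O(1)$ in expectation over the $\delta$-random exposure). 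Running this accounting separately for $S$ (using the increasing direction and the first criterion) and for $T$ (using the decreasing direction from $\b1_J$ and the second criterion) yields $|S|, |T| \leq \frac{2}{\alpha\delta}$ and hence $|J'| \leq \frac{4}{\alpha\delta}$.
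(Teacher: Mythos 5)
Your overall skeleton matches the paper's proof: bound $|S|$ and $|T|$ separately by $2/(\alpha\delta)$ (handling $T$ by complementing coordinates, exactly as the paper does via $\bar{f}(\b1_R)=f(\b1_{J\setminus R})$), charge each selected variable an expected contribution of at least $\frac12\delta\alpha$ along a $\delta$-random exposure of the selected set, and compare with the range $[0,1]$. Your argument for the monotone case is complete. The gap is in the non-monotone case, which is the real content of the lemma. You reduce it to the claim that the sum of positive marginals along the exposure chain of $R=S(\delta)$ is $O(1)$, and you justify this by saying that the partial sums of the marginals are values of $f$ at intermediate sets and hence lie in $[0,1]$. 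That justification does not suffice: boundedness of the values only gives (positive variation) $\leq$ (negative variation) $+\,1$, and nothing in your accounting bounds the negative variation --- a $[0,1]$-valued function along a chain of length $m$ can have positive variation $\Theta(m)$ if it zigzags, so the claim is true only because of submodularity, and you never say how submodularity enters. In other words, the one step you leave as ``a careful accounting'' is exactly the missing idea.

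The idea that makes it work is the paper's device: rather than telescoping over all of $R$ and truncating negative terms, telescope only over the subset $R^+=\{i\in R:\ \partial_i f(\b1_{R_{<i}})>\alpha\}$ of elements whose marginal over the previously exposed elements of $R$ exceeds the threshold (here $R_{<i}=R\cap S_{<i}$). Since $R^+_{<i}\subseteq R_{<i}$, submodularity gives $\partial_i f(\b1_{R^+_{<i}})\geq \partial_i f(\b1_{R_{<i}})>\alpha$ for every $i\in R^+$, so evaluating $f$ at $\b1_{R^+}$ itself yields $f(\b1_{R^+})=f(\mathbf{0})+\sum_{i\in R^+}\partial_i f(\b1_{R^+_{<i}})>\alpha|R^+|$; with $f\leq 1$ this forces $|R^+|<1/\alpha$ pointwise, i.e., it is this restriction to the large-marginal elements (not the boundedness of the partial sums along the original chain) that controls the ``positive contribution.'' Combining with $\E[|S(\delta)^+|]\geq \frac12\delta|S|$ --- each selected $i$ lies in $R$ with probability $\delta$, independently of $R\cap S_{<i}$, and its marginal exceeds $\alpha$ with probability more than $1/2$ --- gives $|S|\leq 2/(\alpha\delta)$ with no monotonicity assumption, and the complementation argument finishes $T$. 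So your proposal is on the right track and correct for monotone $f$, but as written the general case rests on an unproved (and as-justified, false) truncation claim whose proof is precisely the paper's key trick.
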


\begin{proof}
For each $i \in S$, let $S_{<i}$ be the subset of variables in $S$ included before the selection of $i$. For a set $R \subseteq S$ let $R_{<i}$ denote $R \cap S_{<i}$. Further, for $R \subseteq S$, let us define $R^+$ to be the set where $i \in R^+$ iff $i \in R$ and $\partial_i f(\b1_{R_{<i}}) > \alpha$; in other words, these are all the elements in $R$ that have a marginal contribution more than $\alpha$ to the previously included elements.

For each variable $i$ included in $S$, we have by definition $\Pr[\partial_i f(\b1_{S_{<i}(\delta)}) > \alpha] > 1/2$. Since each $i \in S$ appears in $S(\delta)$ with probability $\delta$, and (independently) $\partial_i f(\b1_{S_{<i}(\delta)}) > \alpha$ with probability at least $1/2$, we get that each element of $S$ appears in $S(\delta)^+$ with probability at least $\delta/2$. In expectation, $\E[|S(\delta)^+|] \geq \frac12 \delta |S|$.
Also, for any set $R \subseteq S$ and each $i \in R^+$, submodularity implies that
$\partial_i f(\b1_{R_{<i}^+}) \geq \partial_i f(\b1_{S_{<i}}) > \alpha$, since $R_{<i}^+ \subseteq R_{<i} \subseteq S_{<i}$.
Now we get that $$f(R^+) = f({\bf 0}) + \sum_{i\in R^+} \partial_i f(\b1_{R_{<i}^+}) > \alpha |R^+|.$$
From here we obtain that $$\E[f(S(\delta)^+)] > \alpha \E[|S(\delta)^+|] \geq \frac12 \alpha \delta |S|.$$ This implies that $|S| \leq \frac{2}{\alpha \delta}$, otherwise the expectation would exceed the range of $f$, which is $[0,1]$.

To bound the size of $T$ we observe that the function $\bar{f}$ defined as $\bar{f}(\b1_R) = f(\b1_{J\setminus R})$ for every $R \subseteq J$ is submodular and for every $i \in J$, $\partial_i \bar{f}(\b1_R) = - \partial_i f(\b1_{J \setminus R})$. The criterion for including the variables in $T$ is the same as criterion of including the variables in $S$ used for function $\bar{f}$ in place of $f$. Therefore, by an analogous argument, we cannot include more than $\frac{2}{\alpha \delta}$ elements in $T$, hence $|J'| = |S \cup T| \leq \frac{4}{\alpha \delta}$.
\end{proof}

The next step in the analysis replaces the condition used by Algorithm~\ref{alg:junta} by a probability bound exponentially small in $1/\delta$. The tool that we use here is the ``boosting lemma" (Lemma~\ref{lem:boosting}) which amplifies the probability bound from $1/2$ to $1/2^{1/(2\delta)}$, as the sampling probability goes from $\delta$ to $1/2$.

\begin{lemma}
\label{lem:high-prob}
With the same notation as above, if $\delta \leq 1/2$, then for any $i \in J \setminus J'$
$$ \Pr[\partial_i f(\b1_{J'(1/2)}) > \alpha] \leq 2^{-1/(2\delta)} $$
and
$$ \Pr[\partial_i f(\b1_{J \setminus J'(1/2)}) < -\alpha] \leq 2^{-1/(2\delta)}. $$
\end{lemma}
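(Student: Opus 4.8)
The plan is to apply the boosting lemma (Lemma~\ref{lem:boosting}) to the down-monotone family of "bad" configurations for a fixed excluded coordinate $i$. Fix $i \in J \setminus J'$, in particular $i \notin S$. Define $\cF_i = \{R \subseteq S : \partial_i f(\b1_R) \leq \alpha\}$, the collection of subsets of $S$ on which the marginal value of $i$ is \emph{small}. Submodularity of $f$ tells us exactly that $\partial_i f(\b1_R)$ is non-increasing in $R$ (this is the $\partial_{i,j}f \leq 0$ property recorded in the preliminaries), so enlarging $R$ can only decrease the marginal of $i$; hence $\cF_i$ is down-monotone: if $R \in \cF_i$ and $R' \subseteq R$ then $\partial_i f(\b1_{R'}) \geq \partial_i f(\b1_R) > \ldots$ — wait, the inequality goes the wrong way, so let me instead take the \emph{complementary} event. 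The clean choice is $\cF_i = \{R \subseteq S : \partial_i f(\b1_{S \setminus R}) \leq \alpha\}$; since $\partial_i f(\b1_{S\setminus R})$ is non-decreasing in $R$, shrinking $R$ keeps us in $\cF_i$, so $\cF_i$ is down-monotone, and $\Pr[S(p) \in \cF_i]$ under the complementation $R \mapsto S \setminus R$ equals $\Pr[\partial_i f(\b1_{S(1-p)}) \leq \alpha]$. Equivalently, and more directly: the family of $R \subseteq S$ with $\partial_i f(\b1_R) > \alpha$ is \emph{up}-monotone, so its complement within $\{0,1\}^S$ is down-monotone, and Lemma~\ref{lem:boosting} applies to the survival probability of the up-monotone event with the roles of $p$ and $1-p$ swapped; I will set this bookkeeping up carefully so that $\sigma_p := \Pr[\partial_i f(\b1_{S(p)}) > \alpha]$ satisfies $\sigma_p = 1 - (1-p)^{\phi(p)}$ — no: the boosting lemma as stated gives $\sigma_p = (1-p)^{\phi(p)}$ for a down-monotone event. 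So I work with the down-monotone event "$i$ has \emph{small} marginal", i.e. $q_p := \Pr[\partial_i f(\b1_{S(p)}) \leq \alpha] = (1-p)^{\phi(p)}$ with $\phi$ non-decreasing.

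Now I use the hypothesis. Since the selection procedure terminated without adding $i$ to $S$, we have $\Pr[\partial_i f(\b1_{S(\delta)}) > \alpha] \leq 1/2$, i.e. $q_\delta \geq 1/2$, so $(1-\delta)^{\phi(\delta)} \geq 1/2$, giving $\phi(\delta) \leq \frac{\log 2}{-\log(1-\delta)} \leq \frac{1}{\log(1/(1-\delta))}$; using $\delta \leq 1/2$ and $-\log(1-\delta) \geq \delta$ (natural-log style, but in base $2$: $-\log_2(1-\delta)\ge \delta$ fails, so I'll use $-\log_2(1-\delta) \geq \delta \log_2 e \geq \delta$), hence $\phi(\delta) \leq 1/\delta$. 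Wait — I need the direction that makes the final exponent $1/(2\delta)$, so let me be loose here and just say $\phi(\delta) \leq 1/(\delta)$ up to the constant the authors absorbed, and then evaluate at $p = 1/2$: since $\phi$ is non-decreasing, $\phi(1/2) \leq \ldots$ that's the wrong monotonicity direction! $\phi(1/2) \geq \phi(\delta)$, which is useless for an upper bound on $q_{1/2}$. The correct move: $q_{1/2} = (1/2)^{\phi(1/2)} \leq (1/2)^{\phi(\delta)}$ requires $\phi(1/2)\ge \phi(\delta)$, which we \emph{do} have. But that bounds $q_{1/2}$ from \emph{above}, whereas we want an upper bound on $\Pr[\partial_i f > \alpha] = 1 - q_{1/2}$, i.e. a \emph{lower} bound on $q_{1/2}$. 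So in fact the boosting lemma should be applied to the "large marginal" event being the down-monotone one — and indeed for the $T$-phase with $\b1_{J\setminus T(\delta)}$ the marginals are measured on large sets and the event $\partial_i f < -\alpha$ shrinks as the complementary set grows. Let me just reorganize: apply Lemma~\ref{lem:boosting} with $\cF$ = the down-monotone event "the coordinate's marginal has \emph{not} become extreme yet", parametrized so that increasing $p$ corresponds to moving toward $\b1_{J'(1/2)}$, conclude $\Pr[\text{not extreme at }p=1/2] = (1/2)^{\phi(1/2)} \leq (1/2)^{\phi(\delta)}$... The resolution the authors intend: the down-monotone event is $\{R : \partial_i f(\b1_R) > \alpha\}^c$ is wrong; rather $\{R : i\text{'s marginal stays }>\alpha\text{ even at }\b1_R\}$... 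I will sort the exact orientation from the $\partial_{i,j}f\le0$ sign during writing; the essential content is a single application of the boosting lemma.

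Granting the orientation, the endgame is: from $\Pr[\partial_i f(\b1_{S(\delta)}) > \alpha] \leq 1/2$ and $\delta \leq 1/2$, the boosting lemma yields $\Pr[\partial_i f(\b1_{S(1/2)}) > \alpha] \leq 2^{-1/(2\delta)}$; then note $J'(1/2) \supseteq S(1/2)$ restricted to $S$ and, by submodularity ($\partial_i f$ non-increasing in the argument set), $\partial_i f(\b1_{J'(1/2)}) \leq \partial_i f(\b1_{S(1/2) \cap S}) $, so the event $\partial_i f(\b1_{J'(1/2)}) > \alpha$ is contained in the event $\partial_i f(\b1_{S(1/2)}) > \alpha$, transferring the bound. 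The second displayed inequality is entirely symmetric: apply the first part to $\bar f$ (defined by $\bar f(\b1_R) = f(\b1_{J \setminus R})$, submodular with $\partial_i \bar f(\b1_R) = -\partial_i f(\b1_{J\setminus R})$), whose $S$-phase is exactly the $T$-phase of $f$; since $i \notin T$ the hypothesis $\Pr[\partial_i \bar f(\b1_{T(\delta)}) > \alpha] \leq 1/2$ holds, and the same argument gives $\Pr[\partial_i \bar f(\b1_{T(1/2)}) > \alpha] \leq 2^{-1/(2\delta)}$, i.e. $\Pr[\partial_i f(\b1_{J\setminus T(1/2)}) < -\alpha] \leq 2^{-1/(2\delta)}$, and then $\partial_i f(\b1_{J \setminus J'(1/2)}) \geq \partial_i f(\b1_{J \setminus T(1/2)})$ by submodularity (removing more coordinates only increases the marginal), so the bad event for $J'$ is contained in the bad event for $T$. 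The main obstacle, as the aborted attempts above show, is getting the monotonicity directions and the $p \leftrightarrow 1-p$ bookkeeping in the boosting lemma exactly right so that the non-decreasing function $\phi$ is evaluated at the \emph{smaller} argument $\delta$ where we have information and the survival probability is read off at $p=1/2$; everything else is a short calculation with $-\log_2(1-\delta) \geq \delta$ for $\delta \in (0,1/2]$.
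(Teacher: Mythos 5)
The overall architecture you describe --- one application of the boosting lemma per excluded coordinate, a submodularity transfer between $S$ (resp.\ $T$) and $J'$, and the reflection $\bar f(\b1_R)=f(\b1_{J\setminus R})$ reducing the second inequality to the first --- is exactly the paper's, and your transfer steps at $p=1/2$ are fine. But the crux of the lemma, namely which event is down-monotone and how the bound at $\delta$ propagates to $1/2$, is precisely what your proposal leaves unresolved, and your only explicit attempt at it is backwards. The correct event is $\cF=\{x\in\zo^{J'} : \partial_i f(x) > \alpha\}$, the \emph{large}-marginal event, and it is down-monotone: since $\partial_{i,j}f\le 0$, the derivative $\partial_i f$ is non-increasing in its argument, so $y\le x\in\cF$ gives $\partial_i f(y)\ge\partial_i f(x)>\alpha$. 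Your proposal asserts the opposite (that the large-marginal family is up-monotone and the small-marginal family is down-monotone), runs the computation on the small-marginal event, correctly observes that this yields only a useless upper bound on $\phi(\delta)$, and then defers the fix (``I will sort the exact orientation during writing'', ``Granting the orientation\ldots''). That deferred step is the entire content of the lemma, so as written there is a genuine gap rather than a complete proof.

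For the record, the computation with the right orientation: with $\sigma_p=\Pr[\b1_{J'(p)}\in\cF]$, the selection rule plus submodularity give $\sigma_\delta\le 1/2$, hence $(1-\delta)^{\phi(\delta)}\le 1/2$, which forces a \emph{lower} bound $\phi(\delta)\ge 1/(2\delta)$ because $(1-\delta)^{1/(2\delta)}\ge 1/2$ for $\delta\le 1/2$ (equivalently $\log_2\frac{1}{1-\delta}\le 2\delta$ on $(0,1/2]$; note this is the opposite of the inequality $-\log_2(1-\delta)\ge\delta$ that you quote, and it is exactly where the hypothesis $\delta\le 1/2$ enters). Monotonicity of $\phi$ then transfers this lower bound upward, $\phi(1/2)\ge\phi(\delta)\ge 1/(2\delta)$, so $\sigma_{1/2}=(1/2)^{\phi(1/2)}\le 2^{-1/(2\delta)}$: the information at the small parameter is a lower bound on $\phi$, which is precisely the direction monotonicity makes useful --- the ``wrong direction'' you ran into afflicts only the wrongly oriented event. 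Once this is fixed, your remaining bookkeeping (boosting on $S$ and transferring to $J'$ at $p=1/2$, versus the paper's transfer at $p=\delta$ followed by boosting on $J'$, and the $\bar f$ reduction for the $T$-side) is correct and equivalent to the paper's argument.
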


\begin{proof}
Let us prove the first inequality; the second one will be similar.
First, we know by the selection rule of the algorithm that for any $i \notin J'$,
$$ \Pr[\partial_i f(\b1_{S(\delta)}) > \alpha] \leq 1/2. $$
By submodularity of $f$ we get that for any $i \notin J'$,
$$ \Pr[\partial_i f(\b1_{J'(\delta)}) > \alpha] \leq 1/2. $$
Denote by $\cF \subseteq \{0,1\}^{J'}$ the family of points $x$ such that $\partial_i f(x) > \alpha$. By the submodularity of $f$, which is equivalent to partial derivatives being non-increasing, $\cF$ is a down-monotone set: if $y \leq x \in \cF$, then $y \in \cF$. If we define $\sigma_p = \Pr[J'(p) \in \cF]$ as in Lemma~\ref{lem:boosting}, we have $\sigma_\delta \leq 1/2$. Therefore, by Lemma~\ref{lem:boosting}, $\sigma_p = (1-p)^{\phi(p)}$ where $\phi(p)$ is a non-decreasing function. For $p = \delta$, we get $\sigma_\delta = (1-\delta)^{\phi(\delta)} \leq 1/2$, which implies $\phi(\delta) \geq 1/(2\delta)$ (note that $(1-\delta)^{1/(2\delta)} \geq 1/2$ for any $\delta \in [0,1/2]$). As $\phi(p)$ is non-decreasing, we must also have $\phi(1/2) \geq 1/(2\delta)$. This means $\sigma_{1/2} = (1/2)^{\phi(1/2)} \leq 1/2^{1/(2\delta)}$. Recall that $\sigma_{1/2} = \Pr[J'(1/2) \in \cF] = \Pr[\partial_i f(\b1_{J'(p)}) > \alpha]$ so this proves the first inequality.

For the second inequality, we denote similarly $\cF' = \{ F \subseteq J': \partial_i f(\b1_{J \setminus F}) < -\alpha \}$.
Again, this is a down-monotone set by the submodularity of $f$.
By the selection rule of the algorithm,
$ \sigma'_\delta = \Pr[J'(\delta) \in \cF'] = \Pr[\partial_i f(\b1_{J\setminus J'(\delta)}) < -\alpha] \leq
\Pr[\partial_i f(\b1_{J \setminus T(\delta)}) < -\alpha] \leq 1/2.$
This implies by Lemma~\ref{lem:boosting} that $\sigma'_{1/2} = \Pr[J'(1/2) \in \cF'] \leq 1/2^{1/(2\delta)}$. This proves the second inequality.
\end{proof}

\begin{proof}[Proof of Lemma~\ref{lem:submod-reduce}]
Given a submodular function $f:\{0,1\}^J \rightarrow [0,1]$, we construct a set of coordinates $J' \subseteq J$ as described above, with parameters $\alpha = \frac{1}{16} \epsilon^2$ and $\delta = 1 / (2\log \frac{16|J|}{\epsilon^2})$. Lemma~\ref{lem:size} guarantees that $|J'| \leq \frac{4}{\alpha \delta} = \frac{128}{\epsilon^2} \log \frac{16|J|}{\epsilon^2}$.

Let us use $x_{J'}$ to denote the $|J'|$-tuple of coordinates of $x$ indexed by $J'$.
Consider the subcube of $\{0,1\}^J$ where the coordinates on $J'$ are fixed to be $x_{J'}$.
In the following, all expectations are over a uniform distribution on the respective subcube, unless otherwise indicated.
We denote by $f_{x_{J'}}$ the restriction of $f$ to this subcube, $f_{x_{J'}}(y) = f(x_{J'},y)$.
We define $h:\{0,1\}^J \rightarrow [0,1]$ to be the function obtained by replacing each $f_{x_{J'}}$ by its expectation over the respective subcube:
$$ h(x) = \E[f_{x_{J'}}] = \E_{y \in \{0,1\}^{\bar{J'}}}[f(x_{J'},y)].$$
Obviously $h$ depends only on the variables in $J'$ and it is easy to see that it is submodular with range in $[0,1]$.
It remains to estimate the distance of $h$ from $f$. Observe that
\begin{eqnarray*}
 \|f - h\|_2^2 & = & \E_{x \in \{0,1\}^J}[(f(x) - h(x))^2] \\
  & = & \E_{x_{J'} \in \{0,1\}^{J'}} \E_{y \in \{0,1\}^{\bar{J'}}}[(f(x_{J'},y) - h(x_{J'},y))^2] \\
  & = & \E_{x_{J'} \in \{0,1\}^{J'}} \E_{y \in \{0,1\}^{\bar{J'}}}[(f_{x_{J'}}(y) - \E[f_{x_{J'}}])^2] \\
  & = & \E_{x_{J'} \in \{0,1\}^{J'}} [\Var[f_{x_{J'}}]].
\end{eqnarray*}
We partition the points $x_{J'} \in \{0,1\}^{J'}$ into two classes:
\begin{enumerate}
\item Call $x_{J'}$ bad, if there is $i \in J \setminus J'$ such that
\begin{itemize}
\item  $ \partial_i f(x_{J'}) > \alpha$, or
\item  $ \partial_i f(x_{J'} + \b1_{J \setminus J'}) < -\alpha.$
\end{itemize}
In particular, we call $x_{J'}$ bad for the coordinate $i$ where this happens.
\item Call $x_{J'}$ good otherwise, i.e. for every $i \in J \setminus J'$ we have
\begin{itemize}
\item  $ \partial_i f(x_{J'}) \leq \alpha $, and
\item $ \partial_i f(x_{J'} + \b1_{J \setminus J'}) \geq -\alpha.$
\end{itemize}
\end{enumerate}
Consider a good point $x_{J'}$ and the restriction of $f$ to the respective subcube, $f_{x_{J'}}$. The condition above means that for every $i \in J \setminus J'$, the marginal value of $i$ is at most $\alpha$ at the bottom of this subcube, and at least $-\alpha$ at the top of this subcube. By submodularity, it means that the marginal values are between $[-\alpha, \alpha]$, for all points of this subcube. Hence, $f_{x_{J'}}$ is a $\alpha$-Lipschitz submodular function. By Corollary~\ref{cor:Lipschitz},
$$ \Var[f_{x_{J'}}] \leq 2 \alpha \E[f_{x_{J'}}] \leq \frac{1}{8} \epsilon^2 $$
considering that $\alpha = \frac{1}{16} \epsilon^2$ and $f_{x_{J'}}$ has values in $[0,1]$.

If $x_{J'}$ is bad, then we do not have a good bound on the variance of $f_{x_{J'}}$. However, there cannot be too many bad points $x_{J'}$, due to Lemma~\ref{lem:high-prob}: Observe that the distribution of $x_{J'}$, uniform in $\{0,1\}^{J'}$,  is the same as what we denoted by $\b1_{J'(1/2)}$ in Lemma~\ref{lem:high-prob}, and the distribution of $x_{J'} + \b1_{J \setminus J'}$ is the same as $\b1_{J \setminus J'(1/2)}$. By Lemma~\ref{lem:high-prob}, we have that for each $i \in J \setminus J'$, the probability that $x_{J'}$ is bad for $i$ is at most $2 \cdot 2^{1/(2\delta)} = \frac{\epsilon^2}{8|J|}$. By a union bound over all coordinates $i \in J \setminus J'$, the probability that $x_{J'}$ is bad is at most $\frac18 \epsilon^2$.

Now we can estimate the $\ell_2$-distance between $f$ and $h$:
\begin{eqnarray*}
 \|f - h\|_2^2 & = & \E_{x_{J'} \in \{0,1\}^{J'}} [\Var[f_{x_{J'}}]] \\
  & \leq & \Pr[x_{J'} \mbox{ is bad}] \cdot 1 + \Pr[x_{J'} \mbox{ is good}] \cdot
  \E_{\mbox{\small good } x_{J'}} [\Var[f_{x_{J'}}]] \\
  & \leq & \Pr[x_{J'} \mbox{ is bad}] + \max_{\mbox{\small good } x_{J'} } [\Var[f_{x_{J'}}]]\\
  & \leq & \frac18 \epsilon^2 + \frac18 \epsilon^2 = \frac14 \epsilon^2.
\end{eqnarray*}
Hence, we conclude that $\|f-h\|_2 \leq \frac12 \epsilon$ as desired.
\end{proof}

We now briefly examine the special case of a submodular function taking values in $\{0,\fr{k},\frac{2}{k},\ldots,1\}$ for some integer $k$. This is just a scaled version of the pseudo-boolean case considered in \cite{RaskhodnikovaYaroslavtsev:13} and  \cite{BlaisOSY:13manu}.
By choosing $\alpha = \fr{k+1}$ and $\delta = 1/(2\log{\frac{2|J|}{\eps}})$ in the proof above we will obtain that an $\alpha$-Lipschitz function must be a constant (and, in particular, independent of all the variables in $J\setminus J'$). This means that we obtain exact equality for all but the ``bad" values of $x_{J'}$. The fraction of such values is at most $2 \cdot 2^{1/(2\delta)} \cdot |J| \leq \eps$ and therefore the submodular function $h(x) = f(x_J,\b1_{J\setminus J'})$ equals $f$ with probability at least $1-\epsilon$. As before, after one application we get a $O(k \cdot \log{(n/\eps)})$-junta and by repeating the application we can obtain a $O(k \cdot \log{(k/\eps)})$-junta.
\begin{corollary}
\label{cor:submod-junta-pseudo}
For any integer $k \geq 1$, $\epsilon \in (0,\frac12)$ and any submodular function $f:\{0,1\}^n \rightarrow \{0,1,\ldots,k\}$,
there exists a submodular function $g:\{0,1\}^n \rightarrow \{0,1,\ldots,k\}$ depending only on a subset of variables $J \subseteq [n]$, $|J| = O(k \log \frac{k}{\epsilon})$, such that $\Pr_\U[f \neq g] \leq \epsilon$.
\end{corollary}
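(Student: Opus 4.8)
The plan is to adapt the proof of Lemma~\ref{lem:submod-reduce}, but with a more aggressive choice of the threshold parameter $\alpha$ so that ``$\alpha$-Lipschitz'' forces a restriction of $f$ to be literally constant. Concretely, I would run Algorithm~\ref{alg:junta} with $\alpha = \frac{1}{k+1}$ and $\delta = 1/(2\log\frac{2|J|}{\epsilon})$. Since $f$ takes values in $\{0,\tfrac1k,\ldots,1\}$ after scaling (equivalently $f$ itself takes integer values in $\{0,\ldots,k\}$ and we track $\Pr[f\neq g]$), any discrete derivative $\partial_i f(x)$ that lies strictly between $-\alpha$ and $\alpha$ must in fact be $0$, because a nonzero derivative has absolute value at least $1/k > \alpha$. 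Lemma~\ref{lem:size} still bounds $|J'| \le \frac{4}{\alpha\delta} = O(k\log\frac{|J|}{\epsilon})$, and Lemma~\ref{lem:high-prob} still applies verbatim since it only uses submodularity and the selection rule, not the range of $f$.

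Next I would repeat the good/bad partition of the points $x_{J'}\in\{0,1\}^{J'}$ exactly as in the proof of Lemma~\ref{lem:submod-reduce}, with the same $\alpha$. For a good point, every $i\in J\setminus J'$ has $\partial_i f(x_{J'})\le\alpha$ at the bottom and $\partial_i f(x_{J'}+\b1_{J\setminus J'})\ge-\alpha$ at the top of the subcube; by submodularity (monotonicity of discrete derivatives) all derivatives in that subcube lie in $[-\alpha,\alpha]$, hence — by the integrality observation above — are identically $0$. Therefore the restriction $f_{x_{J'}}$ is constant, i.e.\ it does not depend on any variable outside $J'$; in particular $f(x_{J'},y) = f(x_{J'},\b1_{J\setminus J'})$ for all $y$. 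So defining $g(x) = f(x_{J'},\b1_{J\setminus J'})$ (a submodular $\{0,\ldots,k\}$-valued function of the variables in $J'$) we get exact agreement $g(x)=f(x)$ whenever the $J'$-part of $x$ is good.

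For the bad points I would just invoke the union bound exactly as before: by Lemma~\ref{lem:high-prob}, for each fixed $i\in J\setminus J'$ the probability (over uniform $x_{J'}$) that $x_{J'}$ is bad for $i$ is at most $2\cdot 2^{-1/(2\delta)}$, and with $\delta = 1/(2\log\frac{2|J|}{\epsilon})$ this is $2\cdot\frac{\epsilon}{2|J|} = \frac{\epsilon}{|J|}$; summing over at most $|J|$ coordinates gives $\Pr[x_{J'}\text{ bad}]\le\epsilon$. Hence $\Pr_\U[f\neq g]\le\epsilon$ with $g$ depending only on $|J'| = O(k\log\frac{|J|}{\epsilon})$ variables. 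This is the analogue of Lemma~\ref{lem:submod-reduce} for the pseudo-Boolean case: starting from $|J|=n$ it yields an $O(k\log\frac{n}{\epsilon})$-junta. To get the $n$-independent bound, I would then bootstrap exactly as in the proof of Theorem~\ref{thm:submod-junta}: apply the reduction twice (or argue by the same infimum/contradiction trick), which collapses $\log\frac{n}{\epsilon}$ down to $\log\frac{k}{\epsilon}$ and gives the claimed $|J| = O(k\log\frac{k}{\epsilon})$.

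The main thing to get right — and the only genuinely new point beyond Lemma~\ref{lem:submod-reduce} — is the integrality argument: one must verify that $\alpha=\frac{1}{k+1}$ is small enough that a derivative in $(-\alpha,\alpha)$ is forced to vanish (it is, since nonzero integer-valued derivatives have magnitude $\ge 1$, and after the $[0,1]$-scaling, $\ge\frac1k>\frac{1}{k+1}$), while still being large enough that Lemma~\ref{lem:size} gives a junta size only linear in $k$. Everything else is a routine re-run of the earlier proof with ``$\Var$ is small on good subcubes'' replaced by ``$f$ is constant on good subcubes,'' and ``$\ell_2$ error'' replaced by ``disagreement probability.''
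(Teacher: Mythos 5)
Your proposal is correct and follows essentially the same route as the paper: the paper proves the corollary by rerunning the proof of Lemma~\ref{lem:submod-reduce} with $\alpha=\frac{1}{k+1}$ and $\delta = 1/(2\log\frac{2|J|}{\epsilon})$, observing that an $\alpha$-Lipschitz restriction of a $\{0,\frac1k,\ldots,1\}$-valued submodular function must be constant, setting $h(x)=f(x_{J'},\b1_{J\setminus J'})$, union-bounding the bad $x_{J'}$ by $\epsilon$, and then iterating to pass from $O(k\log\frac{n}{\epsilon})$ to $O(k\log\frac{k}{\epsilon})$. The only small caveat is that literally applying the reduction twice starting from $n$ variables still leaves a $\log\log n$ term, so the finish should be the infimum/contradiction bootstrap of Theorem~\ref{thm:submod-junta} (which you also offer), matching the paper's ``repeating the application.''
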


\subsection{Multiplicative Approximation for Monotone Submodular Functions}
\label{sec:PMAC-junta}

In this section we show how our approximation theorem can be extended to multiplicative approximation with high probability as required by the PMAC model, introduced by Balcan and Harvey \cite{BalcanHarvey:12full}.
\eat{
This model requires a more stringent notion of approximation, where the hypothesis is within a multiplicative factor of the target function, except for a set of small probability measure. We give the following definition, closely related to the PMAC model of \cite{BalcanHarvey:12full}.

\begin{definition}
A function $h:\{0,1\}^n \rightarrow \RR_+$ is an $(\alpha,\epsilon)$-PMAC approximation of $f:\{0,1\}^n \rightarrow \RR_+$, w.r.t.~to distribution $\cal D$ on $\{0,1\}^n$, if
$$ \Pr_{x \sim {\cal D}}[f(x) \leq h(x) \leq \alpha f(x)] \geq 1 - \epsilon.$$
\end{definition}

We remark that for $\alpha = 1+\epsilon$, this is stronger than the notion of $\epsilon$-approximation in $\ell_1$: Normalizing the range to $[0,1]$, multiplicative error $1+\epsilon$ implies additive error at most $\epsilon$, and together with the $\epsilon$ probability of error, this implies a $2 \epsilon$-approximation in $\ell_1$.

Multiplicative $(\alpha,\eps)$-approximation is also related to the notion of an $\alpha$-sketch for valuation functions, studied in \cite{BadanidiyuruDFKNR:12}. The difference is that an $\alpha$-sketch is required to be described by polynomially many bits, and provide a multiplicative approximation {\em point-wise}, i.e., $\epsilon=0$.

The PMAC model requires that we learn an $(\alpha,\epsilon)$-PMAC approximation of the target function according to the above definition. The main positive result of \cite{BalcanHarvey:12full} is that under a product measure, a constant function provides a relatively good $(\alpha,\epsilon)$-PMAC approximation to a submodular function, depending on its Lipschitz properties. In particular, for a monotone $1$-Lipschitz function with minimum non-zero value $1$, a constant function (equal to its expectation) provides an $(O(\log (1/\epsilon), \epsilon)$-PMAC approximation. Note that since the hypothesis is restricted to be a constant function, the admissible values of $\alpha$ and $\epsilon$ are determined by $f$ itself --- it does not allow one to refine the approximation arbitrarily for a given target function $f$. Our goal here is to provide such a refinement for the uniform distribution.
}
We prove that  for any $\gamma>1, \epsilon>0$, a multiplicative $(\gamma,\epsilon)$-approximation for monotone submodular functions over the uniform distribution can be achieved by a function $h$ of a subset of variables whose cardinality depends only on $\gamma$ and $\epsilon$. More precisely, we prove the following.

\begin{theorem}[restatement of Theorem~\ref{thm:PMAC-junta}]
For every monotone submodular function $f:\{0,1\}^n \rightarrow \RR_+$ and every $\gamma, \epsilon \in (0,1)$, there is a monotone submodular function $h:\{0,1\}^J \rightarrow \RR_+$ depending only on a subset of variables $J \subseteq [n], |J| \leq \frac{2^{12}}{\gamma^2} \log \frac{16}{\gamma \epsilon} \log \frac{4}{\epsilon}$ such that $h$ is a multiplicative $(1+\gamma,\epsilon)$-approximation of $f$ over the uniform distribution. The function $h$ can be found with high probability using $\poly(n)$ value queries to $f$.
\end{theorem}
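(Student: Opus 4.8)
The plan is to run the monotone version of the variable–selection procedure of Algorithm~\ref{alg:junta} — for monotone submodular $f$ all discrete derivatives are nonnegative, so only the set $S$ is built — but to apply it $L=O(\log\frac1\epsilon)$ times, at successively finer ``scales,'' accumulating the selected coordinates into $J$. Since a multiplicative $(1+\gamma,\epsilon)$–approximation is invariant under scaling, we may normalize $\|f\|_\infty=1$, and then $\E[f]\ge\frac14$ by the standard fact that $\E[f]\ge\frac14\|f\|_\infty$ for nonnegative monotone submodular $f$. The essential difference from the additive setting of Lemma~\ref{lem:submod-reduce} is that replacing a restriction $f_{x_{J'}}$ by a constant gives a \emph{multiplicative} approximation on that subcube only when $f_{x_{J'}}$ is concentrated to within a $(1\pm\gamma)$ factor of its mean, which by concentration of submodular functions requires the conditional mean to be large compared with the Lipschitz parameter used in the selection step. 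Subcubes whose conditional mean is too small have to be treated recursively, and this is the source of the extra $\log\frac1\epsilon$ factor.

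For one level I would proceed as follows. Run the $S$–loop of Algorithm~\ref{alg:junta} with parameters $\alpha=\tilde\Theta(\gamma^2)$ and $\delta=\tilde\Theta(1)$ chosen so that the boosting bound $2^{-1/(2\delta)}$ beats a union bound over the remaining coordinates; Lemma~\ref{lem:size} then bounds the number of newly selected coordinates by $O(\frac1{\alpha\delta})=O(\frac1{\gamma^2}\log\frac1{\gamma\epsilon})$, and a reduction of $n$ to $\poly(1/(\gamma\epsilon))$ (by the same kind of iteration used to derive Theorem~\ref{thm:submod-junta}) keeps the logarithms free of $n$. By Lemma~\ref{lem:high-prob}, for all but a small fraction of $x_{J'}$ the restriction $f_{x_{J'}}$ is $\alpha$–Lipschitz. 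On such a restriction, the exponential–tail concentration of monotone submodular functions — a Bernstein–type strengthening of Corollary~\ref{cor:Lipschitz} with variance proxy $\E[f_{x_{J'}}]$ — gives $\Pr_y\bigl[\,|f_{x_{J'}}(y)-\E f_{x_{J'}}|>\gamma\,\E f_{x_{J'}}\,\bigr]\le\epsilon'$ whenever $\E f_{x_{J'}}\ge\tau:=\Theta(\alpha\log(1/\epsilon')/\gamma^2)$. On every such ``heavy'' subcube set $h(x_{J'})=(1+\gamma)\,\E[f_{x_{J'}}]$: this is monotone and submodular as a function of $x_{J'}$, it satisfies $h\ge f$ outside an $\epsilon'$–fraction and $h\le(1+O(\gamma))f$ there, so after rescaling $\gamma$ by a constant it is a multiplicative $(1+\gamma,\epsilon')$–approximation on that subcube.

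The crux is the ``light'' subcubes, those with $\E f_{x_{J'}}<\tau$. Since $\E[f_{x_{J'}}]$ is a monotone function of $x_{J'}$, the set of light $x_{J'}$ is down-monotone; and since $\E_{x_{J'}}[\E f_{x_{J'}}]=\E[f]\ge\frac14$ while $\E f_{x_{J'}}\le1$, a reverse Markov bound (with $\alpha$, hence $\tau$, taken with a small enough constant) shows that a constant fraction of the total mass lies in heavy subcubes, i.e.\ the light region carries at most a $(1-\Omega(1))$ fraction of the mass. One then recurses into the light region at the next, finer scale, choosing the level–$(t{+}1)$ parameters relative to the current conditional magnitude (after rescaling the reference value back to $1$). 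After $L=O(\log\frac1\epsilon)$ levels the still-uncovered mass drops below $\epsilon$; taking $\epsilon'$ a sufficiently small $\poly(\gamma\epsilon)$, so that the per-coordinate union bounds of Lemma~\ref{lem:high-prob} over all levels together with the $L$ per-subcube concentration failures contribute at most $\epsilon$, gives $\log(1/\epsilon')=O(\log\frac1{\gamma\epsilon})$ and hence $|J|=O\!\bigl(\tfrac{L}{\gamma^2}\log\tfrac1{\gamma\epsilon}\bigr)=O\!\bigl(\tfrac1{\gamma^2}\log\tfrac1{\gamma\epsilon}\log\tfrac1\epsilon\bigr)$. I expect the main obstacle to be precisely this recursive peeling: simultaneously (i) keeping the variable count additive in the number of levels, (ii) controlling error accumulation so that it stays below $\epsilon$, and (iii) verifying that $f$ restricted to the (down-monotone) light region retains enough structure — monotonicity, submodularity, and a genuine product distribution on each constituent subcube — to rerun the selection step with the boosting lemma still applicable.

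Finally, for the value-query claim: the only quantities the construction inspects are the selection tests $\Pr[\partial_i f(\b1_{S(\delta)})>\alpha]>\tfrac12$ and the conditional means $\E[f_{x_{J'}}]$ on heavy subcubes, both expectations of (after rescaling) bounded functions; these can be estimated to sufficient accuracy from $\poly(n,1/(\gamma\epsilon))$ value queries by Monte Carlo sampling. Including a coordinate when the empirical estimate exceeds $\tfrac12$ by a small margin, and noting that the analysis above is robust to $\poly(\gamma\epsilon)$ estimation errors, the whole junta $J$ together with the values of $h$ is produced with high probability using $\poly(n)$ queries.
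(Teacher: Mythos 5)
There is a genuine gap, and it sits exactly where you yourself flag the ``main obstacle'': the recursive peeling of light subcubes. For the statement to give a \emph{junta}, the new variables selected when you recurse must be a single small set common to all light subcubes. If you recurse separately on each light restriction $f_{x_{J'}}$ you get different variables for different $x_{J'}$, i.e.\ a branching structure with up to $2^{O(|J'|\log(1/\epsilon))}$ pieces --- this is precisely what the paper's PMAC \emph{learning} algorithm (Section~\ref{sec:PMAC-learning}) does, and its output is not a small junta. If instead you try to select one new set per level with an absolute threshold, the progress argument breaks after level one: your reverse-Markov step uses $\E[f]\ge\frac14\|f\|_\infty$, which is a global normalization, but conditioned on the light region the conditional means $\E[f_{x_{J'}}]$ can be arbitrarily small compared with the current scale $\tau$ (each light subcube has its own scale, only bounded by $\max_y f_{x_{J'}}(y)\le 2\E[f_{x_{J'}}]$), so no constant fraction of the remaining mass need become ``heavy'' at the next absolute scale, and $O(\log\frac1\epsilon)$ levels do not suffice. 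In addition, the light region is a down-monotone subset of $\{0,1\}^{J'}$ times a cube, not a product space, so the selection procedure and the boosting lemma (Lemma~\ref{lem:boosting}) do not directly apply to ``$f$ restricted to the light region.''

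The paper avoids recursion altogether by changing the selection rule rather than the decomposition: Algorithm~\ref{alg:PMAC-junta} includes $i$ when $\Pr_{T\sim S(\delta)}[\partial_i f(\b1_T) > \beta f(\b1_{T\cup\bar S})]>\frac12$, i.e.\ the threshold is \emph{relative to the value at the top of the current subcube}. Then (Lemma~\ref{lem:PMAC-boost}, using monotonicity and submodularity to get a down-monotone event, plus the boosting lemma) for all but an $\epsilon/2$ fraction of $x_{J'}$ the restriction $f_{x_{J'}}$ has derivatives at most $\beta f_{x_{J'}}(\b1_{\bar{J'}})$; combined with $\E[f_{x_{J'}}]\ge\frac12 f_{x_{J'}}(\b1_{\bar{J'}})$ (Feige), the scaled function has mean at least $1/(2\beta)$, and the exponential concentration bound (Lemma~\ref{lem:submod-chernoff}) gives a $(1\pm\gamma/3)$-factor concentration on \emph{every} good subcube simultaneously --- no heavy/light case split, no recursion, and $h(x)=(1+\gamma/3)\E[f_{x_{J'}}]$ works outright. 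The variable count is bounded by a different telescoping argument (Lemma~\ref{lem:PMAC-size}, $|J'|\le 2/(\beta\delta)$), and the $\log\frac1\epsilon$ factor in the junta size comes from taking $\beta=\Theta(\gamma^2/\log\frac4\epsilon)$ for the per-subcube tail bound, not from recursion depth; the dependence on $\log|J|$ is then removed by iterating Lemma~\ref{lem:PMAC-reduce} with geometrically shrinking $\gamma_i,\epsilon_i$. Your outer shell (normalization, Lemma~\ref{lem:size}-style counting, concentration on Lipschitz subcubes, Monte Carlo estimation for the value-query claim) is consistent with the paper, but without the relative threshold the light-subcube recursion does not yield the claimed $O(\frac{1}{\gamma^2}\log\frac{1}{\gamma\epsilon}\log\frac{1}{\epsilon})$ junta.
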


Observe that (for monotone submodular functions and ignoring the additional logarithm) this is stronger than Theorem~\ref{thm:submod-junta}: For any function with range $[0,1]$, a multiplicative $(1+\epsilon,\epsilon)$-approximation implies an additive error bounded by $\epsilon$, except for probability measure of $\epsilon$, which means the $\ell_1$ error is bounded by $2 \epsilon$.

The proof of Theorem~\ref{thm:PMAC-junta} is algorithmic and uses several ideas from the proof of Theorem~\ref{thm:submod-junta}. Again, we rely on the boosting lemma and concentration of submodular functions. However, the requirement of a multiplicative approximation to the target function leads to additional complications that we have been able to resolve only in the case of monotone submodular functions. We are not sure whether the theorem holds for non-monotone submodular functions, which we leave as an open question.


As in the case of $\ell_2$-error, to prove Theorem~\ref{thm:PMAC-junta} it is sufficient to prove the following statement.

\begin{lemma}
\label{lem:PMAC-reduce}
For every monotone submodular function $f:\{0,1\}^J \rightarrow \RR_+$ and every $\gamma, \epsilon \in (0,1)$, there is a monotone submodular function $h:\{0,1\}^J \rightarrow \RR_+$ depending only on a subset of variables $J' \subseteq J, |J'| \leq \frac{2^9}{\gamma^2} \log \frac{4}{\epsilon} \log \frac{2|J|}{\epsilon}$ such that $h$ is a multiplicative $(1+\gamma,\epsilon)$-approximation of $f$ over the uniform distribution.
\end{lemma}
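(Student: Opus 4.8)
The plan is to mimic the proof of Lemma~\ref{lem:submod-reduce}, but with a smaller threshold and a \emph{multiplicative} use of the concentration of submodular functions. Since $f$ is monotone I would only run the first phase of Algorithm~\ref{alg:junta}: apply the greedy selection with parameters $\alpha$ and $\delta$ to be fixed below, obtaining $J' = S$ with $|J'| \le 2/(\alpha\delta)$ by (the monotone case of) Lemma~\ref{lem:size}. By Lemma~\ref{lem:high-prob}, for a uniformly random assignment $x_{J'}$ to the chosen variables, with probability at least $1 - |J|\cdot 2^{-1/(2\delta)}$ every excluded coordinate $i\in J\setminus J'$ satisfies $\partial_i f \le \alpha$ at the bottom of the subcube indexed by $x_{J'}$; by submodularity the marginal of $i$ is then $\le\alpha$ throughout that subcube, so the restriction $f_{x_{J'}}$ is an $\alpha$-Lipschitz monotone submodular function. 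Call such $x_{J'}$ \emph{good}. Taking $\delta = 1/(2\log(2|J|/\epsilon))$ makes the fraction of bad $x_{J'}$ at most $\epsilon/2$.

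The new ingredient is how $h$ is defined on a good subcube and why it multiplicatively approximates $f$ there. On the subcube of a good $x_{J'}$ put $\mu = \E[f_{x_{J'}}]$ and set $h(x) := (1+\tfrac{\gamma}{3})\mu$ (one may instead anchor $h$ at $f_{x_{J'}}(\mathbf{0})$ and rescale the excess; this does not affect the argument, and either way $h$, being a positive multiple of the restriction of the multilinear extension, is monotone submodular). The event $|f_{x_{J'}}(y)-\mu| \le \tfrac{\gamma}{3}\mu$ implies $f_{x_{J'}}(y) \le h(x) \le (1+\gamma)f_{x_{J'}}(y)$, so it suffices to bound the probability that $f_{x_{J'}}$ deviates from its mean by more than $\tfrac{\gamma}{3}\mu$. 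By Corollary~\ref{cor:Lipschitz}, $\Var[f_{x_{J'}}] \le 2\alpha\mu$, and since an $\alpha$-Lipschitz monotone submodular function is $\alpha$ times a $1$-Lipschitz self-bounding function, it obeys the exponential tail bounds of Boucheron, Lugosi and Massart, giving $\Pr[\,|f_{x_{J'}}-\mu|>t\,] \le 2\exp(-\Omega(t^2/(\alpha\mu)))$ in the relevant range of $t$; with $t=\tfrac{\gamma}{3}\mu$ this is $2\exp(-\Omega(\gamma^2\mu/\alpha))$. Hence, provided $\mu$ is not too small compared with $\alpha$ — concretely $\mu = \Omega(\alpha\log(1/\epsilon)/\gamma^2)$ — this probability is at most $\epsilon/2$, and together with the bad-subcube bound the total error is at most $\epsilon$. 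Choosing $\alpha = \Theta(\gamma^2/\log(1/\epsilon))$ is exactly what turns this threshold on $\mu$ into a small absolute constant while keeping $|J'| \le 2/(\alpha\delta) = O(\gamma^{-2}\log(1/\epsilon)\log(|J|/\epsilon))$, which after tracking constants gives the stated $\frac{2^9}{\gamma^2}\log\frac{4}{\epsilon}\log\frac{2|J|}{\epsilon}$. Monotonicity is also used through $f_{x_{J'}}(y) \le f_{x_{J'}}(\b1) \le 4\mu$ (the inequality $\E[f]\ge\tfrac14\|f\|_\infty$ for nonnegative submodular $f$), which bounds the range of $f$ on the subcube and lets a one-sided self-bounding tail suffice.

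The hard part will be precisely the hypothesis ``$\mu$ not too small'': a good subcube on which $f_{x_{J'}}$ has expectation comparable to or below $\alpha$ is essentially a function vanishing near the base of the subcube but possibly larger elsewhere, and additive concentration gives no multiplicative control there. Dealing with these ``peaky'' subcubes is what goes beyond Lemma~\ref{lem:submod-reduce}. The approach I would take is to treat such subcubes at a finer scale: within each subcube whose expectation is small, recurse with the selection procedure applied to $f_{x_{J'}}$ using a threshold rescaled by (a lower bound on) the local expectation — equivalently, perform a dyadic split on the value of $f$ and run the argument scale by scale — and then verify that the variables accumulated over all scales still obey the claimed bound, the extra $\log(1/\epsilon)$ factor over the additive case being exactly the budget for this (one also has to check that the resulting piecewise $h$ remains monotone submodular). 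Finally, since the selection criterion of Algorithm~\ref{alg:junta} and the quantities $\mu,\alpha$ are all averages of values of $f$, they can be estimated to sufficient accuracy from $\poly(n)$ value queries, giving the algorithmic form of the statement; and since the argument relies on monotonicity both for $\E[f]\ge\tfrac14\|f\|_\infty$ and for the one-sided concentration, the non-monotone case is left open.
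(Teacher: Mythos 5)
There is a genuine gap, and you have put your finger on it yourself: the hypothesis ``$\mu$ not too small'' is not a technicality but the entire content of the lemma, and your sketch for removing it does not work as stated. Keeping the absolute threshold $\alpha$ from Algorithm~\ref{alg:junta} and then recursing inside ``peaky'' subcubes runs into two obstacles. First, the conclusion must be a single junta $J'$ of size $O(\gamma^{-2}\log\frac1\epsilon\log\frac{|J|}{\epsilon})$ on which one fixed function $h$ depends; different low-expectation subcubes would in general need different additional variables, and there can be exponentially many such subcubes, so the variables accumulated ``over all scales'' are not controlled by the stated bound (your remark that the extra $\log(1/\epsilon)$ factor is the budget for this is not right — that factor is already spent on making the per-subcube concentration failure probability $\epsilon/2$, i.e.\ on choosing the threshold $\Theta(\gamma^2/\log(1/\epsilon))$, both in your main case and in the paper's). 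Second, a hypothesis defined piecewise, subcube by subcube, is generally not submodular; the check you defer would fail, whereas the lemma asks for a monotone submodular $h$. Also note that since $f$ is $\RR_+$-valued with no lower bound on its nonzero values, a dyadic split ``on the value of $f$'' has no a priori bound of $O(\log(1/\epsilon))$ on the number of relevant scales for a multiplicative guarantee.

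The paper avoids all of this by changing the selection rule rather than patching its output: Algorithm~\ref{alg:PMAC-junta} includes $i$ when $\Pr_{T\sim S(\delta)}[\partial_i f(\b1_T) > \beta f(\b1_{T\cup\bar S})] > \tfrac12$, i.e.\ the threshold is \emph{relative} to the value of $f$ at the top of the current subcube. The size bound then needs a different (telescoping-sum) argument, Lemma~\ref{lem:PMAC-size}, giving $|J'|\le 2/(\beta\delta)$; the boosting lemma is applied to the event $\{x_{J'}:\partial_i f(x_{J'}) > \beta f(x_{J'},\b1_{\bar{J'}})\}$, whose down-monotonicity uses monotonicity together with submodularity (Lemma~\ref{lem:PMAC-boost}). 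On a good subcube the restriction $f_{x_{J'}}$ is then Lipschitz with constant $\beta f_{x_{J'}}(\b1_{\bar{J'}})$, i.e.\ relative to its own maximum, and Feige's inequality $\E[f_{x_{J'}}]\ge\tfrac12\|f_{x_{J'}}\|_\infty$ shows the rescaled function has mean at least $1/(2\beta)$ \emph{automatically}, so the multiplicative Chernoff bound (Lemma~\ref{lem:submod-chernoff}) applies on every good subcube no matter how small the actual values of $f$ are; this is exactly what dissolves your small-$\mu$ case, with $h(x)=(1+\gamma/3)\E[f_{x_{J'}}]$ remaining monotone submodular because it is a positive multiple of an averaging projection of $f$. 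Your main-case calculation (the $(1+\gamma/3)$ shift, the variance and tail bounds for an $\alpha$-Lipschitz restriction) is fine, but without the relative selection criterion the lemma's hard case remains unproved.
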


We find the desired set of significant variables by the following procedure, a modification of Algorithm~\ref{alg:junta}.

\begin{algorithm}
\label{alg:PMAC-junta}
Given $f:\{0,1\}^J \rightarrow [0,1]$, produce $J' \subseteq J$ as follows (for parameters $\beta,\delta>0$):
\begin{itemize}
\item Set $S := \emptyset$. We use $S(\delta)$ to denote a random subset of $S$ where each element appears independently with probability $\delta$.
\item As long as there is $i \notin S$ such that
$$ \Pr_{T \sim S(\delta)}[\partial_i f(\b1_T) > \beta f(\b1_{T \cup \bar{S}})] > \frac12 $$
include $i$ in $S$ and repeat. \\
\item Return $J' := S$.
\end{itemize}
\end{algorithm}

The intuition here is that variables get included in $S$ based on their contribution relative to $f(\b1_{T \cup \bar{S}}) = f(\b1_{S(\delta) \cup \bar{S}})$. Note that this is the top of the subcube defined by fixing the coordinates on $S$ to be equal to $x_S = \b1_T$. This is important for obtaining a decomposition such that in each such subcube, the function is sufficiently smooth relative to its own expectation and hence approximated by a constant within a small multiplicative factor. On the other hand, we can bound the number of variables that can be included in $S$ as follows.

\begin{lemma}
\label{lem:PMAC-size}
The cardinality of the set $J'$ returned by Algorithm~\ref{alg:PMAC-junta} is at most $2/(\beta \delta)$.
\end{lemma}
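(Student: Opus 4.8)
The plan is to mimic the counting argument from Lemma~\ref{lem:size}, but now measured multiplicatively against $f(\b1_{T \cup \bar S})$ rather than against the absolute scale $[0,1]$. Fix the final set $S = J'$ and enumerate its elements $i_1, i_2, \ldots$ in the order they were included; let $S_{<i}$ denote the prefix of $S$ chosen before $i$. For a random $R \sim S(\delta)$ define, as in Lemma~\ref{lem:size}, the ``good'' subset $R^+ \subseteq R$: $i \in R^+$ iff $i \in R$ and $\partial_i f(\b1_{R_{<i} \cup \bar S}) > \beta\, f(\b1_{R \cup \bar S})$, where $R_{<i} = R \cap S_{<i}$. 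The selection rule says that at the moment $i$ was added, $\Pr_{T \sim S_{<i}(\delta)}[\partial_i f(\b1_T) > \beta f(\b1_{T \cup \bar S})] > 1/2$; since $R_{<i}$ has the distribution of $S_{<i}(\delta)$ and $i \in R$ independently with probability $\delta$, each $i \in S$ lands in $R^+$ with probability at least $\delta/2$, so $\E[|R^+|] \geq \tfrac12 \delta |S|$.

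Next I would telescope along $R^+$ from the top of the subcube. Order the elements of $R^+$ as $j_1 < j_2 < \cdots < j_m$ (in the inclusion order) and write $f(\b1_{R^+ \cup \bar S})$ as a telescoping sum of marginals $\partial_{j_t} f$ evaluated at the appropriate prefixes $\b1_{\{j_1,\ldots,j_{t-1}\} \cup \bar S}$, plus $f(\b1_{\bar S})$. By monotonicity every term is nonnegative, so we keep only the last one: $f(\b1_{R^+ \cup \bar S}) \geq f(\b1_{\bar S}) + \partial_{j_m} f(\b1_{\{j_1,\ldots,j_{m-1}\}\cup\bar S})$. Here monotonicity is what makes the monotone case work and is exactly the place the non-monotone argument would break. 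By submodularity (derivatives are non-increasing under adding coordinates), $\partial_{j_m} f(\b1_{\{j_1,\ldots,j_{m-1}\}\cup\bar S}) \geq \partial_{j_m} f(\b1_{R_{<j_m}\cup\bar S}) > \beta f(\b1_{R\cup\bar S}) \geq \beta f(\b1_{R^+\cup\bar S})$, using $\{j_1,\ldots,j_{m-1}\} \subseteq R_{<j_m}$, $R^+ \subseteq R$, and monotonicity for the last step. Iterating this ``peel off the top element'' argument down through all of $R^+$ gives $f(\b1_{R^+\cup\bar S}) \geq f(\b1_{\bar S}) \cdot$ — more precisely, each peeling multiplies a lower bound by a factor $(1+\beta)^{-1}$ coming backwards, yielding $f(\b1_{\bar S}) \leq (1+\beta)^{-|R^+|} f(\b1_{R^+ \cup \bar S}) \le (1+\beta)^{-|R^+|} f(\b1_J)$; equivalently $f(\b1_J) \geq (1+\beta)^{|R^+|} f(\b1_{\bar S})$. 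Actually it is cleaner to run the telescoping the other way: show $f(\b1_{R^+\cup\bar S}) \ge (1+\beta) f(\b1_{(R^+\setminus\{j_m\})\cup\bar S})$ at each step, hence $f(\b1_{S\cup\bar S}) \ge f(\b1_{R^+\cup\bar S}) \ge (1+\beta)^{|R^+|} f(\b1_{\bar S})$.

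Finally I would combine the two estimates. Taking expectations and using $1+\beta \ge 2^{\beta}$ for $\beta \in (0,1)$ (or just $\ln(1+\beta) \ge \beta/2$), convexity of $x \mapsto (1+\beta)^x$ and Jensen give $f(\b1_J) \ge \E[(1+\beta)^{|R^+|}] f(\b1_{\bar S}) \ge (1+\beta)^{\E[|R^+|]} f(\b1_{\bar S}) \ge (1+\beta)^{\delta|S|/2} f(\b1_{\bar S})$. If $f(\b1_{\bar S}) > 0$ this forces $(1+\beta)^{\delta|S|/2} \le f(\b1_J)/f(\b1_{\bar S})$, which is not yet a bound independent of $f$; the fix is to note that we may assume $f(\b1_{\bar S}) \ge \tfrac12 f(\b1_J)$ — otherwise one can handle the subcube below $\b1_{\bar S}$ separately / the ratio is absorbed — giving $(1+\beta)^{\delta|S|/2} \le 2$, i.e.\ $|S| \le \tfrac{2\ln 2}{\delta \ln(1+\beta)} \le \tfrac{4\ln 2}{\beta\delta} \le \tfrac{2}{\beta\delta}$ after adjusting constants, hence $|J'| = |S| \le 2/(\beta\delta)$. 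The main obstacle is precisely this last point: controlling the multiplicative counting when $f(\b1_{\bar S})$ could in principle be tiny compared to $f(\b1_J)$; I expect the actual proof sidesteps it by working with the ratio $\partial_i f / f(\b1_{T\cup\bar S})$ directly inside the telescoping so that the range $[0,1]$ normalization never enters, exactly as the ``top of the subcube'' intuition in the paragraph before the lemma suggests, and monotonicity guarantees every intermediate value of $f$ along the telescope lies between $f(\b1_{\bar S})$ and $f(\b1_J)$.
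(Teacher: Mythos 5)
There are two genuine gaps here, and they interact. First, your definition of $R^+$ does not match the event controlled by the selection rule. Algorithm~\ref{alg:PMAC-junta} adds $i$ because $\Pr_{T\sim S_{<i}(\delta)}[\partial_i f(\b1_T) > \beta f(\b1_{T\cup \overline{S_{<i}}})] > 1/2$, i.e.\ the derivative is evaluated at the \emph{sparse} point $\b1_{R_{<i}}$ and compared against $f$ at $\b1_{R_{<i}\cup\overline{S_{<i}}}$. Your membership condition evaluates the derivative at the much larger point $\b1_{R_{<i}\cup\bar S}$ (with $\bar S$ the complement of the \emph{final} $J'$) and compares against $f(\b1_{R\cup\bar S})$. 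By submodularity your derivative is smaller than the one in the selection rule, and by monotonicity your threshold is also smaller, so neither event implies the other; hence the claim ``each $i\in S$ lands in $R^+$ with probability at least $\delta/2$'' is not justified for your $R^+$. Second, even granting that, your multiplicative telescoping ends with $(1+\beta)^{\delta|S|/2}\le f(\b1_J)/f(\b1_{\bar S})$, and this ratio is not bounded: $f(\b1_{\bar S})$ can be zero (e.g.\ $f(x)=\frac{1}{|J'|}\sum_{i\in J'}x_i$), and the patch ``assume $f(\b1_{\bar S})\ge \frac12 f(\b1_J)$'' has no justification; the closing arithmetic ($4\ln 2/(\beta\delta)\le 2/(\beta\delta)$) is also false, so the stated constant is not reached.

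The paper avoids both problems by keeping the membership condition \emph{identical} to the selection event: $R^+=\{i\in R:\ \partial_i f(\b1_{R\cap[i-1]}) > \beta f(\b1_{R\cup\overline{[i-1]}})\}$, where $[i-1]$ is the prefix at the time $i$ was chosen. Then $\Pr[i\in R^+]\ge\delta/2$ is immediate, and instead of a multiplicative argument one telescopes $f(\b1_R)$ from ${\bf 0}$ along the selection order: monotonicity lets you drop the terms with $i\notin R^+$, and each retained marginal exceeds $\beta f(\b1_{R\cup\overline{[i-1]}})\ge \beta f(\b1_R)$, so $f(\b1_R) > |R^+|\,\beta f(\b1_R)$ and hence $|R^+|<1/\beta$ deterministically for \emph{every} $R$ --- the same value $f(\b1_R)$ appears on both sides, so no normalization or ratio between top and bottom of the cube ever enters (this is exactly the ``keep the ratio inside the telescoping'' mechanism you correctly guessed at the end but did not carry out). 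Combining $|R^+|<1/\beta$ with $\E[|R^+|]\ge \delta|J'|/2$ gives $|J'|<2/(\beta\delta)$. Note that even within your own setup, summing the marginals additively (each is $>\beta f(\b1_{R\cup\bar S})$, and they total at most $f(\b1_{R\cup\bar S})-f(\b1_{\bar S})$) would have given $|R^+|\le 1/\beta$ without the $(1+\beta)^{|R^+|}$ detour; but you would still need to repair the definition of $R^+$ so that the $\delta/2$ probability bound actually follows from the algorithm's rule.
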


\begin{proof}
Consider the ordering of elements as they were selected by the algorithm, and assume w.l.o.g. that the ordering is $\{1,2,3,\ldots,|J'|\}$. Whenever an element $i$ is included, it is because $\Pr_{T \sim S(\delta)}[\partial_i f(\b1_T) > \beta f(\b1_{T \cup \bar{S}})] > \frac12$. Here, $S$ is the set of elements selected before $i$, that is $S = [i-1]$ in our ordering. Thus we can write $T = S(\delta) = R \cap [i-1]$, where $R = J'(\delta)$. The condition above can be written as $\Pr_{R \sim J'(\delta)}[\partial_i f(\b1_{R \cap [i-1]}) > \beta f(\b1_{R  \cup \overline{[i-1]}})] > \frac12$. For each $R \subseteq J'$, let us define $R^+$ as
$$R^+ = \{i \in R: \partial_i f(\b1_{R \cap [i-1]}) > \beta f(\b1_{R \cup \overline{[i-1]}}) \}.$$
Observe that by a telescoping sum, $$f(\b1_R) = f({\bf 0}) + \sum_{i \in R} \partial_i f(\b1_{R \cap [i-1]}) > \beta \sum_{i \in R^+} \partial_i f(\b1_{R \cup \overline{[i-1]}}) \geq |R^+| \cdot \beta f(\b1_R)$$ and hence $|R^+| < 1/\beta$ for every $R$.

Consider the expectation $\E_{R \sim J'(\delta)}[|R^+|]$. As we argued above, every time we include $i$ in $J'$, we have the property that $\Pr_{R \sim J'(\delta)}[f(\b1_{R \cap [i-1]}) > \beta f(\b1_{R \cup \overline{[i-1]}})] > \frac12$. Since $i$ appears in $R$ with probability $\delta$, independently of the condition $\partial_i f(\b1_{R \cap [i-1]}) > \beta f(\b1_{R \cup \overline{[i-1]}})$, this means that each element $i \in J'$ appears in $R^+$ with probability at least $\delta/2$. We conclude that $\E_{R \sim J'(\delta)}[|R^+|] \geq |J'| \delta / 2$. On the other hand, $|R^+| < 1/\beta$ for all $R$. This implies that $|J'| < 2 / (\beta \delta)$.
\end{proof}

Recall that so far, we were working with subsets of $J'$ sampled with a (small) probability $\delta$.
The next step is to prove that for a {\em uniformly} random assignment $x_{J'} \in \{0,1\}^{J'}$,  the function $f_{x_{J'}}(y) = f(x_{J'},y)$ for $y \in \{0,1\}^{\bar{J'}}$ has suitable Lipschitz properties for most values of $x_{J'}$. This relies on the boosting lemma, and in this step we require again that $f$ is a {\em monotone} submodular function.
In the following, all expectations are over a uniform distribution on the respective subcube, unless otherwise indicated.

\begin{lemma}
\label{lem:PMAC-boost}
The set $J'$ returned by Algorithm~\ref{alg:PMAC-junta} satisfies for every $i \notin J'$,
$$ \Pr_{x_{J'} \in \{0,1\}^{J'}}[\partial_i f(x_{J'}) > \beta f(x_{J'},\b1_{\bar{J'}})] \leq 2^{-1/(2\delta)}.$$
\end{lemma}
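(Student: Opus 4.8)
The plan is to mimic the proof of Lemma~\ref{lem:high-prob}, replacing the absolute threshold $\alpha$ by the function-dependent threshold $\beta f(\b1_{\cdot\cup\bar{J'}})$, and then to verify that the relevant event is still down-monotone so that the boosting lemma (Lemma~\ref{lem:boosting}) applies with the same amplification $1/2 \mapsto 2^{-1/(2\delta)}$.

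First I would fix $i \notin J'$ and, as in Algorithm~\ref{alg:PMAC-junta}, use the stopping condition: since $i$ was not included, $\Pr_{T \sim J'(\delta)}[\partial_i f(\b1_T) > \beta f(\b1_{T \cup \overline{J'}})] \leq \tfrac12$. (Here I use monotonicity/submodularity to pass from the set $S$ present at termination to $J'$ itself: enlarging the sampled set only decreases $\partial_i f$ by submodularity and only increases $f(\b1_{T\cup\bar{J'}})$ by monotonicity, so the event becomes harder to satisfy as we sample from the larger ground set $J'$.) Next, define $\cF = \{\, F \subseteq J' : \partial_i f(\b1_F) > \beta f(\b1_{F \cup \overline{J'}}) \,\}$. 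The key structural claim is that $\cF$ is down-monotone: if $G \subseteq F$ and $F \in \cF$, then $\partial_i f(\b1_G) \geq \partial_i f(\b1_F)$ by submodularity of $f$ (discrete derivatives are non-increasing), while $f(\b1_{G \cup \overline{J'}}) \leq f(\b1_{F \cup \overline{J'}})$ by monotonicity of $f$; combining, $\partial_i f(\b1_G) \geq \partial_i f(\b1_F) > \beta f(\b1_{F\cup\bar{J'}}) \geq \beta f(\b1_{G\cup\bar{J'}})$, so $G \in \cF$. This is exactly the step where both monotonicity and submodularity are genuinely used, and it is the crux of the argument.

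With down-monotonicity in hand, set $\sigma_p = \Pr[J'(p) \in \cF]$. We have $\sigma_\delta \leq \tfrac12$, so Lemma~\ref{lem:boosting} gives $\sigma_p = (1-p)^{\phi(p)}$ with $\phi$ non-decreasing; $(1-\delta)^{\phi(\delta)} \leq \tfrac12$ forces $\phi(\delta) \geq 1/(2\delta)$ (using $(1-\delta)^{1/(2\delta)} \geq \tfrac12$ for $\delta \in [0,\tfrac12]$), hence $\phi(1/2) \geq 1/(2\delta)$ and $\sigma_{1/2} = (1/2)^{\phi(1/2)} \leq 2^{-1/(2\delta)}$. Finally, observe that a uniformly random $x_{J'} \in \{0,1\}^{J'}$ has the same distribution as $\b1_{J'(1/2)}$, so $\Pr_{x_{J'}}[\partial_i f(x_{J'}) > \beta f(x_{J'},\b1_{\bar{J'}})] = \sigma_{1/2} \leq 2^{-1/(2\delta)}$, which is the claimed bound.

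The main obstacle, and the only place that required any thought beyond transcribing the earlier proof, is justifying the down-monotonicity of $\cF$ when the right-hand side of the defining inequality is itself a varying function of $F$ rather than a constant; this is precisely why the statement is restricted to monotone submodular $f$ — without monotonicity, $f(\b1_{G\cup\bar{J'}})$ could exceed $f(\b1_{F\cup\bar{J'}})$ and the inclusion $F \in \cF \Rightarrow G \in \cF$ could fail. One should also double-check that the passage from the terminal set $S$ to $J' = S$ in the stopping condition is valid, but this again follows from submodularity (and is the direct analogue of the "by submodularity of $f$" step in the proof of Lemma~\ref{lem:high-prob}).
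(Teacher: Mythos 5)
Your proof is correct and follows essentially the same route as the paper: the same down-monotone family $\cF$, the same use of monotonicity together with submodularity to establish down-monotonicity, and the same application of the boosting lemma (Lemma~\ref{lem:boosting}) combined with the identification of $\b1_{J'(1/2)}$ with a uniformly random point of $\{0,1\}^{J'}$. The only superfluous step is your worry about passing from the terminal set $S$ to $J'$: Algorithm~\ref{alg:PMAC-junta} returns $J' = S$, so the stopping condition holds verbatim for $J'$ and no extra monotonicity/submodularity argument is needed there.
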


\begin{proof}
Denote by $\cF \subseteq \{0,1\}^{J'}$ the family of points $x_{J'}$ such that the condition is satisfied, i.e.
$\cF = \{ x_{J'} \in \{0,1\}^{J'}: \partial_i f(x_{J'}) > \beta f(x_{J'},\b1_{\bar{J'}}) \}$. This is a down-monotone set: if $y \leq x  \in \cF$, then $y \in \cF$ because $\partial_i f(y) \geq \partial_i f(x)$, and $f(y,\b1_{\bar{J'}}) \leq f(x,\b1_{\bar{J'}})$ (here we are using both monotonicity and submodularity).

If we define $\sigma_p = \Pr[\b1_{J'(p)} \in \cF]$, this means that $\sigma_\delta \leq 1/2$. By Lemma~\ref{lem:boosting}, we have $\Pr[\b1_{J'(1/2)} \in \cF] = \sigma_{1/2} \leq 2^{-1/(2\delta)}$. As $\b1_{J'(1/2)}$ is distributed uniformly in $\{0,1\}^{J'}$, this is exactly the statement of Lemma~\ref{lem:PMAC-boost}.
\end{proof}

Finally, we finish the proof of Lemma~\ref{lem:PMAC-reduce} by using concentration properties of submodular functions. We refer to the following bound from \cite{Vondrak10}.

\begin{lemma}
\label{lem:submod-chernoff}
If $Z = f(X_1,\ldots,X_n)$ where $X_i \in \{0,1\}$ are independently random and $f$ is a nonnegative submodular function with discrete derivatives bounded by $[-1,1]$, then for any $\lambda>0$,
\begin{itemize}
\item $\Pr[Z \geq (1+\lambda) \E[Z]] \leq e^{-\lambda^2 \E[Z] / (4+5\lambda/3)}$.
\item $\Pr[Z \leq (1-\lambda) \E[Z]] \leq e^{-\lambda^2 \E[Z]/4}$.
\end{itemize}
\end{lemma}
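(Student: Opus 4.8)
The plan is to obtain this as the exponential upgrade of the Efron--Stein variance bound already recorded in Lemma~\ref{lem:Lipschitz}, by placing $f$ inside the class of self-bounding functions and invoking the entropy-method concentration inequalities of Boucheron, Lugosi and Massart. So the first step is to check that a nonnegative submodular $f$ whose discrete derivatives lie in $[-1,1]$ is $(2,0)$-self-bounding in the sense of the definition in Section~\ref{sec:prelims}, with the canonical witnesses $f^{(i)}(x)=\min_{x_i}f(x)$ (which indeed does not depend on the $i$-th coordinate). The Lipschitz inequality $0\le f(x)-\min_{x_i}f(x)=\max\{0,f(x)-f(x\oplus e_i)\}\le 1$ is immediate since $|\partial_i f|\le 1$ everywhere. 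For the other inequality $\sum_i\big(f(x)-\min_{x_i}f(x)\big)\le 2f(x)$, I would split the coordinates at which flipping strictly decreases $f$ into the set $B_+$ of coordinates with $x_i=1$ and the set $B_-$ of coordinates with $x_i=0$. Peeling the elements of $B_+$ off $x$ one at a time and using that a submodular marginal only increases when the base set shrinks gives $\sum_{i\in B_+}\big(f(x)-f(x\setminus\{i\})\big)\le f(x)-f(x\setminus B_+)\le f(x)$, the last step by nonnegativity; symmetrically, adding the elements of $B_-$ to $x$ one at a time and using that a submodular marginal only decreases when the base set grows gives $\sum_{i\in B_-}\big(f(x)-f(x\cup\{i\})\big)\le f(x)-f(x\cup B_-)\le f(x)$. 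Summing the two yields the claimed $2f(x)$ bound. This step uses submodularity and nonnegativity but never monotonicity, which is why the lemma holds for non-monotone submodular functions; it is precisely the fact already cited from \cite{Vondrak10} in the preliminaries.

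Second, I would feed the self-bounding structure into the entropy method of \cite{BoucheronLM:00} (in the form needed here, with $b=0$ and general $a$, this is also in \cite{Vondrak10,BoucheronLB03,MR06}). The route is standard: the self-bounding inequalities, combined with a modified logarithmic Sobolev inequality for the product measure, give a differential inequality for the log-moment-generating function $\varphi(s)=\log\E[e^{s(Z-\E[Z])}]$; the Herbst argument integrates it, and optimizing over $s>0$ and $s<0$ yields, for a (strongly) $(a,0)$-self-bounding nonnegative $Z$ with unit-bounded increments, the Bennett-type bounds $\Pr[Z\ge\E[Z]+t]\le\exp(-(a\E[Z])\,h(t/(a\E[Z])))$ and $\Pr[Z\le\E[Z]-t]\le\exp(-t^2/(2a\E[Z]))$, where $h(u)=(1+u)\ln(1+u)-u$.

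Finally, I would specialize to $a=2$ and set $t=\lambda\E[Z]$. The lower tail becomes exactly $\exp(-\lambda^2\E[Z]/4)$, and for the upper tail the Bennett exponent is $2\E[Z]\,h(\lambda/2)$, which by the elementary estimate $h(u)\ge u^2/(2+2u/3)$ is at least $\lambda^2\E[Z]/(4+2\lambda/3)\ge\lambda^2\E[Z]/(4+5\lambda/3)$; this gives both stated inequalities, with room to spare. The only genuinely fiddly part of a write-up is this last constant bookkeeping, together with verifying that one is entitled to the two-sided statement with $a=2>1$: this is fine exactly because submodular functions satisfy the \emph{strong} self-bounding inequality $\sum_i(f-\min_{x_i}f)\le af$ (no squares), whereas the $a\le1$ restriction that appears for the upper tail in the literature concerns only the weak variant involving $\sum_i(f-\min_{x_i}f)^2$. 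If one prefers to avoid the entropy method altogether, an alternative is to run the Doob martingale of $Z$ over the $n$ coordinates --- its increments lie in $[-1,1]$ by $1$-Lipschitzness, and its running conditional variances sum to at most $2\E[Z]$ by a conditional Efron--Stein estimate that again uses submodularity --- and apply Freedman's martingale Bernstein inequality; there the main obstacle is obtaining the almost-sure bound on the accumulated conditional variance, which is slightly stronger than the plain variance bound of Lemma~\ref{lem:Lipschitz}.
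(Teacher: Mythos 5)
The paper does not actually prove this lemma; it imports it verbatim from \cite{Vondrak10}, and the route taken there is exactly the one you propose: show that a nonnegative submodular $f$ with discrete derivatives in $[-1,1]$ is strongly $(2,0)$-self-bounding and then invoke the entropy-method tail bounds for self-bounding functions. Your verification of the self-bounding property is correct and complete: the Lipschitz condition is immediate, and the two telescoping estimates $\sum_{i\in B_+}(f(x)-f(x\setminus\{i\}))\le f(x)-f(x\setminus B_+)\le f(x)$ and $\sum_{i\in B_-}(f(x)-f(x\cup\{i\}))\le f(x)-f(x\cup B_-)\le f(x)$ use only submodularity (marginals are monotone in the base set) and nonnegativity, never monotonicity, as required.

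The gap is in the middle step: the Bennett-type inequality $\Pr[Z\ge \E[Z]+t]\le\exp\bigl(-a\E[Z]\,h(t/(a\E[Z]))\bigr)$ for \emph{strongly} $(a,0)$-self-bounding functions with $a=2$ is not a theorem in the literature, and your sketch does not yield it. The modified log-Sobolev step gives, for $\varphi(s)=\log\E[e^{s(Z-\E[Z])}]$ and $\psi(u)=e^u-u-1$, the inequality $s\varphi'(s)-\varphi(s)\le a\,\psi(-s)\,(\varphi'(s)+\E[Z])$; for $a=1$ the Herbst integration of this does produce $\varphi(s)\le\E[Z](e^s-s-1)$ and hence Bennett, but for $a=2$ the coefficient $s-2\psi(-s)=2-s-2e^{-s}$ becomes negative already at $s\approx 1.6$, and the argument only yields Bernstein-type bounds with an $a$-dependent linear correction. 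This is precisely why Boucheron--Lugosi--Massart (2009) and McDiarmid--Reed state $\Pr[Z\ge\E[Z]+t]\le\exp\bigl(-t^2/(2(a\E[Z]+b)+\tfrac{3a-1}{3}t)\bigr)$ and $\Pr[Z\le\E[Z]-t]\le\exp\bigl(-t^2/(2(a\E[Z]+b))\bigr)$ rather than a Bennett bound when $a>1$; your parenthetical that the $a\le 1$ caveat ``concerns only the weak variant'' misreads this point (the upper-tail theorem there is stated for the weak variant, holds for all $a$, and no clean Bennett form is claimed for the strong variant with $a>1$). The repair is immediate and gives exactly the lemma: strong $(2,0)$-self-bounding implies weak $(2,0)$-self-bounding because $(f-\min_{x_i}f)^2\le f-\min_{x_i}f$ when increments lie in $[0,1]$, and plugging $a=2$, $b=0$, $t=\lambda\E[Z]$ into the displayed bounds gives denominators $\bigl(4+\tfrac{5}{3}\lambda\bigr)\E[Z]$ and $4\E[Z]$, i.e.\ precisely the two stated inequalities, with no need for the unproved Bennett strengthening.
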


\begin{proof}[Proof of Lemma~\ref{lem:PMAC-reduce}]
Given a monotone submodular function $f:\{0,1\}^J \rightarrow \RR_+$ and $\gamma,\epsilon \in (0,1)$, we construct $J' \subseteq J$ by running Algorithm~\ref{alg:PMAC-junta} with parameters $\beta = \frac{1}{108} \gamma^2 / \log \frac{4}{\epsilon}$ and $\delta = 1 / (2 \log \frac{2|J|}{\epsilon})$. By Lemma~\ref{lem:PMAC-size}, the constructed subset of variables has size $|J'| \leq 2 / (\beta \delta) \leq 2^9 \gamma^{-2} \log \frac{4}{\epsilon} \log \frac{2|J|}{\epsilon}$.

By Lemma~\ref{lem:PMAC-boost}, we obtain a subset of variables $J'$ such that for every $i \notin J'$,
$$ \Pr_{x_{J'} \in \{0,1\}^{J'}}[\partial_i f(x_{J'}) > \beta f(x_{J'},\b1_{\bar{J'}})] \leq \frac{\epsilon}{2|J|}.$$
By the union bound,
$$ \Pr_{x_{J'} \in \{0,1\}^{J'}}[\exists i \in J \setminus J'; \partial_i f(x_{J'}) > \beta f(x_{J'},\b1_{\bar{J'}})] \leq \frac{\epsilon}{2}.$$
This means that with probability $1-\epsilon/2$ over the choice of $x_{J'} \in \{0,1\}^{J'}$, the point $x_{J'}$ is {\em good} in the sense that the function $f_{x_{J'}}(y) = f(x_{J'},y)$ for $y \in \{0,1\}^{\bar{J'}}$ has discrete derivatives bounded by $\partial_i f(x_{J'}) \leq \beta f_{x_{J'}}(\b1_{\bar{J'}})$. Fix any good point $x_{J'}$. By submodularity, the same bound holds for the derivatives evaluated at any point above $x_{J'}$. In addition, $f$ is monotone, hence $\partial_i f_{x_{J'}}(y) \in [0, \beta f_{x_{J'}}(\b1_{\bar{J'}})]$ for all $y \in \{0,1\}^{\bar{J'}}$.

Here we use a concentration bound for submodular functions (Lemma~\ref{lem:submod-chernoff}). Consider the function $f_{x_{J'}}$ for a good point $x_{J'}$. We apply the concentration bound to a scaled function $\tilde{f}(y) = f_{x_{J'}}(y) / (\beta f_{x_{J'}}(\b1_{\bar{J'}}))$. By the discussion above, $\tilde{f}$ has discrete derivatives in $[0,1]$. By Lemma~\ref{lem:submod-chernoff}, for $\lambda \in [0,1]$,
$$ \Pr_{y \in \{0,1\}^{\bar{J'}}}[|\tilde{f}(y) - \E[\tilde{f}]| > \lambda \E[\tilde{f}]] < 2 e^{-\lambda^2 \E[\tilde{f}] / 6}.$$
We also use a known fact \cite{Feige:06} that for any monotone submodular function, $\E[\tilde{f}] \geq \frac12 \|\tilde{f}\|_\infty = \frac12 \tilde{f}(\b1_{\bar{J'}}) = 1 / (2\beta)$. Going back to $f_{x_{J'}}$, we obtain
$$ \Pr_{y \in \{0,1\}^{\bar{J'}}}[|f_{x_{J'}}(y) - \E[f_{x_{J'}}]| > \lambda \E[f_{x_{J'}}]] < 2 e^{-\lambda^2 / (12 \beta)}.$$
We set $\lambda = \gamma/3$, and recall that we have $\beta = \frac{1}{108} \gamma^2 / \log \frac{4}{\epsilon}$. Therefore
$$ \Pr_{y \in \{0,1\}^{\bar{J'}}}[|f_{x_{J'}}(y) - \E[f_{x_{J'}}]| > \frac13 \gamma \E[f_{x_{J'}}]] < 2 e^{-\log \frac{4}{\epsilon}} \leq \frac{\epsilon}{2} $$
for every good point $x_{J'}$. Equivalently,
\begin{equation}
\label{eq:good-chernoff}
\Pr_{y \in \{0,1\}^{\bar{J'}}} \left[\frac{f_{x_{J'}}(y)}{1+\gamma/3} \leq \E[f_{x_{J'}}] \leq \frac{f_{x_{J'}}(y)}{1-\gamma/3} \right]
 > 1 - \frac{\epsilon}{2}
\end{equation}
for every good point $x_{J'}$.

We define our approximation to $f$ as follows:
$$ h(x) = \left(1 + \frac{\gamma}{3} \right) \E[f_{x_{J'}}]. $$
In other words, we average out the contributions of all variables outside of $J'$, and we adjust by a constant factor of $1+\frac{\gamma}{3}$, to make sure that $h(x) \geq f(x)$ with the desired probability. Observe that $h$ is a positive linear combination of monotone submodular functions, and hence also a monotone submodular function. Also, $h$ depends only on the variables in $J'$.

Now, our goal is to estimate the probability that $f(x) \leq h(x) \leq (1+\gamma) f(x)$. In the following, all probabilities and expectations are over uniform distributions. We have
\begin{eqnarray*}
& & \Pr_{x \in \{0,1\}^J}[f(x) \leq h(x) \leq (1+\gamma) f(x)] \\
 & = & \E_{x_{J'} \in \{0,1\}^{J'}}\left[ \Pr_{y \in \{0,1\}^{\bar{J'}}}[f_{x_{J'}}(y) \leq h(x_{J'}) \leq (1+\gamma) f_{x_{J'}}(y)] \right] \\
 & = & \E_{x_{J'} \in \{0,1\}^{J'}}\left[ \Pr_{y \in \{0,1\}^{\bar{J'}}}[f_{x_{J'}}(y) \leq (1+\gamma/3) \E[f_{x_{J'}}] \leq (1+\gamma) f_{x_{J'}}(y)] \right] \\
 & \geq & \E_{x_{J'} \in \{0,1\}^{J'}} \left[ \Pr_{y \in \{0,1\}^{\bar{J'}}}\left[\frac{f_{x_{J'}}(y)}{1+\gamma/3} \leq \E[f_{x_{J'}}] \leq \frac{f_{x_{J'}}(y)}{1-\gamma/3} \right]\right] \\
 & \geq & \Pr_{x_{J'} \in \{0,1\}^{J'}}[x_{J'} \mbox{ is good}] \cdot \left( 1 - \frac{\epsilon}{2} \right)
\end{eqnarray*}
using Eq.~(\ref{eq:good-chernoff}). As we argued above, a uniformly random point $x_{J'} \in \{0,1\}^{J'}$ is good with probability at least $1 - \epsilon/2$. Hence,
$$ \Pr_{x \in \{0,1\}^J}[f(x) \leq h(x) \leq (1+\gamma) f(x)] \geq \left(1 - \frac{\epsilon}{2} \right)^2 \geq 1 - \epsilon $$
which is the definition of multiplicative $(1+\gamma,\epsilon)$-approximation.
\end{proof}

Now we can finish the proof of Theorem~\ref{thm:PMAC-junta}.

\begin{proof}[Proof of Theorem~\ref{thm:PMAC-junta}]
Given a monotone submodular function $f:\{0,1\}^n \rightarrow \RR_+$, we use Lemma~\ref{lem:PMAC-reduce} repeatedly to reduce the number of variables. To work out the necessary parameters, we proceed backwards: Eventually, we want to obtain a multiplicative $(1+\gamma,\epsilon)$-approximation, using $O(\frac{1}{\gamma^2} \log \frac{1}{\gamma \epsilon} \log \frac{1}{\epsilon})$ variables. Let us define the following sequences: (for a constant $c$ to be determined later)
\begin{itemize}
\item $\gamma_i = \gamma / 2^{i}$, $\epsilon_i = \epsilon / 2^{i}$,
\item $n_0 = \lfloor \frac{c}{\gamma^2} \log \frac{16}{\gamma \epsilon} \log \frac{4}{\epsilon} \rfloor$,
\item $n_{i+1} = \lfloor \frac{\epsilon_i}{2} \cdot 2^{n_i \gamma_i^2 / (2^{9} \log \frac{4}{\epsilon_i})} \rfloor$.
\end{itemize}
The meaning of this sequence is that given a function $f_{i+1}$ of $n_{i+1}$ variables and parameters $\gamma_i, \epsilon_i$, we can find a function $f_i$ of $n_i$ variables which is a multiplicative $(1+\gamma_i, \epsilon_i)$-approximation of $f_{i+1}$ (using Lemma~\ref{lem:PMAC-reduce}, and inverting the relationship between $n_i$ and $n_{i+1}$). Note that the parameters $\gamma_i, \epsilon_i$ form geometric series adding up to at most $\gamma$ and $\epsilon$ respectively, and consequently $f_0$ is a multiplicative $(1+\gamma,\epsilon)$-approximation of $f_k$ for any $k>0$.

By induction, we prove the following for every $i \geq 0$:
\begin{equation}
\label{eq:PMAC-induc}
n_i \geq \lfloor c \frac{1}{\gamma_i^2} \log \frac{16}{\gamma_i \epsilon_i} \log \frac{4}{\epsilon_i} \rfloor.
\end{equation}
The base case holds by definition. So assume that (\ref{eq:PMAC-induc}) holds for $n_i$. For $n_{i+1}$, we obtain
\begin{eqnarray*}
n_{i+1} & = & \lfloor \frac{\epsilon_i}{2} \cdot 2^{n_i \gamma_i^2 / (2^{9} \log \frac{4}{\epsilon_i})} \rfloor  \\
& \geq & \lfloor \frac{\epsilon_i}{2} \cdot 2^{c (\log \frac{16}{\gamma_i \epsilon_i}) / 2^{10}} \rfloor \\
& = & \left\lfloor \frac{\epsilon_i}{2} \cdot \left(\frac{16}{\gamma_i \epsilon_i} \right)^{c / 2^{10}} \right\rfloor.
\end{eqnarray*}
We pick $c = 2^{14}$, and use $\gamma_i = 2 \gamma_{i+1},  \epsilon_i = 2 \epsilon_{i+1}$, which yields
\begin{eqnarray*}
n_{i+1} & \geq &
 \left\lfloor \epsilon_{i+1} \cdot \left(\frac{4}{\gamma_{i+1} \epsilon_{i+1}} \right)^{16} \right\rfloor \\
 & = & \lfloor 2^{32} \gamma_{i+1}^{-16} \epsilon_{i+1}^{-15} \rfloor \\
 & \geq & \left\lfloor  \frac{2^{14}}{\gamma_{i+1}^2} \log \frac{16}{\gamma_{i+1} \epsilon_{i+1}} \log \frac{4}{\epsilon_{i+1}} \right\rfloor.
\end{eqnarray*}
This proves Eq.~(\ref{eq:PMAC-induc}). Note that in particular, since $\gamma_i = \gamma/2^{i}, \epsilon_i = \epsilon/2^{i}$, this proves that $n_i$ grows at least as a geometric sequence, and will reach $n_t \geq n$ in $t = O(\log n)$ steps (in fact much faster, but we are not concerned with the exact number of iterations). Therefore, we can take $f_t = f$ to be our original function and work our way backwards, to obtain a multiplicative $(1+\gamma,\epsilon)$-approximation $f_0$ which depends on at most $n_0 = \lfloor  \frac{2^{14}}{\gamma^2} \log \frac{16}{\gamma \epsilon} \log \frac{4}{\epsilon} \rfloor$ variables.

We remark that the proof is constructive and we have in fact constructed the multiplicative junta approximation by a randomized polynomial-time algorithm (with value query access to $f$) that succeeds with high probability.
\end{proof}

\section{Approximation of Low-Influence Functions by Juntas}
\label{sec:low-sensitivity}

Here we show how structural results for submodular (weaker than the one in Section~\ref{sec:submod-junta}), XOS and self-bounding functions  can be proved in a unified manner using the notion of total influence.

\subsection{Preliminaries: Fourier Analysis}
\label{sec:lowsens-prelims}
We rely on the standard Fourier transform representation of real-valued functions over $\zon$ as linear combinations of parity functions.
For $S \subseteq [n]$, the parity function $\chi_S:\zon \rightarrow \on$ is defined by
$ \chi_S(x) = (-1)^{\sum_{i \in S} x_i}$.
\eat{ 
The parities form an orthonormal basis for functions on $\zon$ under the inner product product with respect to the uniform distribution. Thus, every function $f: \zon \rightarrow \RR$ can be written as a real linear combination of parities. The coefficients of the linear combination are referred to as Fourier coefficients of $f$.
For $f:\zon \rightarrow \RR$ and $S \subseteq [n]$, the Fourier coefficient $\hat{f}(S)$ is given by $\hat{f}(S) = \langle f, \chi_S \rangle.$ For any Fourier coefficient $\hat{f}(S)$, $|S|$ is called the \emph{degree} of the coefficient.
}
The Fourier expansion of $f$ is given by $f(x) = \sum_{S \subseteq [n]} \hat{f}(S) \chi_S(x)$. The {\em Fourier degree} of $f$ is the largest $|S|$ such that $\hat{f}(S) \neq 0$.
Note that Fourier degree of $f$ is exactly the polynomial degree of $f$ when viewed over $\on^n$ instead of $\zon$ and therefore it is also equal to the polynomial degree of $f$ over $\zon$. Let $f: \zon \rightarrow \RR$ and $\hat{f}: 2^{[n]} \rightarrow \RR$ be its Fourier transform. The {\em spectral $\ell_1$-norm} of $f$ is defined as $ \|\hat{f}\|_1 = \sum_{S \subseteq [n]} |\hat{f}(S)|.$

Observe that
$\partial_i f(x) = -2 \sum_{S \ni i}\hat{f}(S)\chi_{S\setminus\{i\}}(x)$, and
$\partial_{i,j} f(x) = 4 \sum_{S \ni i,j}\hat{f}(S)\chi_{S\setminus\{i,j\}}(x)$.

We use several notions of {\em influence} of a variable on a real-valued function which are based on the standard notion of influence for Boolean functions (\eg \cite{Ben-OrLinial:85,KahnKL:88}).
\begin{definition}[Influences]
For a real-valued $f:\zo^n \rightarrow \RR$, $i \in [n]$, and $\kappa \geq 0$ we define the {\em $\ell_\kappa^\kappa$-influence} of variable $i$ as $\infl^\kappa_i(f) = \|\fr{2}\partial_i f\|_\kappa^\kappa = \E[|\fr{2}\partial_i f|^\kappa]$. We define $\infl^\kappa(f) = \sum_{i\in[n]} \infl^\kappa_i(f)$ and refer to it as the {\em total $\ell_\kappa^\kappa$-influence} of $f$. For a boolean function $f:\zon\rightarrow \zo$, $\infl(f)$ is defined as $2\cdot \infl^1(f)$ and is also referred to as {\em average sensitivity}.
\end{definition}

The most commonly used notion of influence for real-valued functions is the $\ell_2^2$-influence which satisfies
$$\infl^2_i(f) = \left\|\fr{2}\partial_i f\right\|_2^2 = \sum_{S \ni i}\hat{f}^2(S)\ .$$
From here, the total $\ell_2^2$-influence is equal to $\infl^2(f) = \sum_S |S| \hat{f}^2(S)$.

\subsection{Self-bounding Functions Have Low Total Influence}
A key fact that we prove is that submodular, XOS and self-bounding functions have low total $\ell_1$-influence.

\begin{lemma}
\label{lem:submod}
Let $f:\zo^n\rightarrow \RR_+$ be an $a$-self-bounding function. Then $\infl^1(f) \leq a\cdot \|f\|_1$.
In particular, for an XOS function $f:\zo^n\rightarrow [0,1]$, $\infl^1(f) \leq 1$.
For a submodular $f:\zo^n \rightarrow [0,1]$, $\infl^1(f) \leq 2$.
\end{lemma}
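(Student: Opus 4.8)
The plan is to reduce the claim to a single clean identity relating the $\ell_1$-influence of a coordinate to the ``one-sided'' differences appearing in the definition of a self-bounding function, and then simply sum over coordinates and use the self-bounding inequality.

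First I would establish, for each fixed $i \in [n]$ and $x$ drawn uniformly from $\zo^n$, the identity
\begin{equation*}
\E_x\left[ f(x) - \min_{x_i} f(x) \right] \;=\; \E_x\left[ \tfrac12 |\partial_i f(x)| \right] \;=\; \infl^1_i(f).
\end{equation*}
To see this, partition $\zo^n$ into the $2^{n-1}$ pairs $\{y, y\oplus e_i\}$. Within a pair, the quantity $|\partial_i f|$ is constant and equals $|f(y) - f(y\oplus e_i)|$. For the point $z$ of the pair on which $f$ is larger (ties are harmless), $f(z) - \min_{x_i} f(z) = |\partial_i f(z)|$, while for the other point this difference is $0$. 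Hence the average of $f(x) - \min_{x_i} f(x)$ over the two points of the pair is $\tfrac12 |\partial_i f|$, and averaging over all pairs gives the identity; the last equality is just the definition of $\infl^1_i(f)$.

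Next I would sum this over $i$, interchange sum and expectation, and invoke the $(a,0)$-self-bounding inequality pointwise:
\begin{equation*}
\infl^1(f) \;=\; \sum_{i=1}^n \E_x\left[ f(x) - \min_{x_i} f(x) \right] \;=\; \E_x\!\left[ \sum_{i=1}^n \big( f(x) - \min_{x_i} f(x) \big) \right] \;\leq\; \E_x\big[ a\, f(x) \big] \;=\; a\,\|f\|_1,
\end{equation*}
using $b=0$ and the fact that an $a$-self-bounding function is nonnegative, so $\|f\|_1 = \E[f]$. For the two concrete consequences: a submodular or XOS function with range $[0,1]$ is automatically $1$-Lipschitz (each discrete derivative has absolute value at most $1$ since values lie in $[0,1]$); a $1$-Lipschitz XOS function is $1$-self-bounding and a $1$-Lipschitz (possibly non-monotone) submodular function is $2$-self-bounding, as recalled in Section~\ref{sec:prelims}. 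Plugging $a=1$, resp.\ $a=2$, into the displayed bound and using $\|f\|_1 \le 1$ yields $\infl^1(f) \le 1$ for XOS and $\infl^1(f) \le 2$ for submodular.

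There is essentially no hard step here: the only thing requiring care is the pairing argument for the identity, and specifically checking that the point of each pair with the \emph{smaller} value of $f$ contributes zero to $f(x) - \min_{x_i} f(x)$ — which is immediate from the definition $\min_{x_i} f(x) = \min\{f(x), f(x\oplus e_i)\}$. Everything else is linearity of expectation and the hypotheses already catalogued in the preliminaries.
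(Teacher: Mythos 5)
Your proof is correct and follows essentially the same route as the paper: the identity $\infl^1_i(f)=\E[f(x)-\min_{x_i}f(x)]=\E[(f(x)-f(x\oplus e_i))_+]$ (your pairing argument just makes explicit the paper's counting remark), followed by summing over $i$ and applying the self-bounding inequality pointwise before taking expectations. The handling of the special cases (range $[0,1]$ gives $1$-Lipschitzness, XOS is $1$-self-bounding, non-negative submodular is $2$-self-bounding, $\|f\|_1\le 1$) also matches the paper.
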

\begin{proof}
We have
$$ \infl^1(f) = \frac12 \sum_{i=1}^{n} \E[|f(x_{i \leftarrow 1}) - f(x_{i \leftarrow 0})|]
 = \sum_{i=1}^{n} \E[(f(x) - f(x \oplus e_i))_+] $$
where $x \oplus e_i$ is $x$ with the $i$-th bit flipped, and $(\bullet)_+ = \max \{\bullet,0\}$
is the positive part of a number. (Note that each difference $|f(x_{i \leftarrow 1}) - f(x_{i \leftarrow 0})|$ is counted twice in the first expectation and once in the second expectation.) By using the property of $a$-self-bounding functions, we know that
$ \sum_{i=1}^{n} (f(x) - f(x \oplus e_i))_+ \leq a f(x) $, which implies
$$ \infl^1(f) = \sum_{i=1}^{n} \E[(f(x) - f(x \oplus e_i))_+] \leq a \E[|f(x)|] = a \|f\|_1.$$
Finally, we recall that an XOS function is self-bounding and a non-negative submodular function is 2-self-bounding (see \cite{Vondrak10}).
\end{proof}

We note that for functions with a $[0,1]$ range, $\infl^2 (f) \leq \infl^1 (f)$, hence the above lemma also gives a bound on $\infl^2(f)$.
It is well-known that functions of low total $\ell_2^2$-influence can be approximated by low-degree polynomials. We recap this fact here.

\begin{lemma}
\label{lem:low-degree-conc}
Let $f:\zo^n\rightarrow \RR$ be any function and let $d$ be any positive integer. Then $\sum_{ S \subseteq[n], |S| > d } \hat{f}(S)^2 \leq \infl^2(f)/d $.
\end{lemma}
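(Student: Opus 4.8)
The plan is to use the Fourier-analytic formula for the total $\ell_2^2$-influence established in the preliminaries, namely $\infl^2(f) = \sum_{S \subseteq [n]} |S|\, \hat{f}(S)^2$, together with a trivial term-by-term comparison. This is a Markov-type argument on the "spectral mass" of $f$.

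First I would recall that, since $\infl^2_i(f) = \|\tfrac12 \partial_i f\|_2^2 = \sum_{S \ni i} \hat{f}(S)^2$ by Parseval applied to the Fourier expansion of $\partial_i f$, summing over $i \in [n]$ gives $\infl^2(f) = \sum_S |S|\, \hat{f}(S)^2$, where each $S$ is counted once for every element it contains. Next, for any $S$ with $|S| > d$ we have $\tfrac{1}{|S|} \le \tfrac{1}{d}$, hence $\hat{f}(S)^2 = \tfrac{1}{|S|}\cdot |S|\,\hat{f}(S)^2 \le \tfrac{1}{d}\, |S|\, \hat{f}(S)^2$. Summing this inequality over all $S$ with $|S| > d$ yields
\[
\sum_{|S| > d} \hat{f}(S)^2 \;\le\; \frac{1}{d}\sum_{|S| > d} |S|\, \hat{f}(S)^2 \;\le\; \frac{1}{d}\sum_{S \subseteq [n]} |S|\, \hat{f}(S)^2 \;=\; \frac{\infl^2(f)}{d},
\]
where the second inequality uses that every summand $|S|\,\hat{f}(S)^2$ is nonnegative, so extending the sum to all $S$ only increases it.

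There is essentially no obstacle here: the only points requiring (minimal) care are invoking the identity $\infl^2(f) = \sum_S |S|\hat{f}(S)^2$ correctly and observing the nonnegativity of the terms that justifies dropping the restriction $|S| \le d$ on the right-hand side. The whole argument is a one-line calculation once the Fourier formula for $\infl^2$ is in hand.
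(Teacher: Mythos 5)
Your proposal is correct and follows exactly the paper's argument: both use the identity $\infl^2(f) = \sum_S |S|\hat{f}(S)^2$ and the Markov-type observation that $\hat{f}(S)^2 \le \frac{|S|}{d}\hat{f}(S)^2$ for $|S|>d$. You simply spell out the term-by-term comparison that the paper leaves implicit.
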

\begin{proof}
From the definition of $\infl^2_i(f)$, we get that $\infl^2(f) = \sum_{S\subseteq[n]}|S|\hat{f}(S)^2$. Hence
$$\sum_{S\subseteq[n],\ |S| > d} \hat{f}(S)^2 \leq \fr{d} \infl^2(f)\  .$$
\end{proof}

This gives a simple proof that submodular and XOS functions are $\epsilon$-approximated in $\ell_2$ by polynomials of degree $2/\eps^2$ (which was proved for submodular functions in \cite{CheraghchiKKL:12}).
Next, we show a stronger statement, that these functions are $\eps$-approximated by $2^{O(1/\eps^2)}$-juntas of Fourier degree $O(1/\eps^2)$.

\subsection{Friedgut's Theorem for Real-Valued Functions}
\label{sec:friedgut-rv}
As we have shown in Lemma \ref{lem:submod}, self-bounding functions have low total $\ell_1$-influence. A celebrated result of Friedgut \cite{Friedgut:98} shows that any Boolean function on $\zo^n$ of low total influence is close to a function that depends on few variables. It is therefore natural to try and apply Friedgut's result to our setting. A commonly considered generalization of Boolean influences to real-valued functions uses $\ell_2^2$-influences which can be easily expressed using Fourier coefficients (\eg \cite{DinurFKO:06}). However, a Friedgut-style result is not true for real-valued functions when $\ell_2^2$-influences are used, as observed by O'Donnell and Servedio \cite{ODonnellServedio:07} (see also Sec.~\ref{app:l2-example}). This issue also arises in the problem of learning real-valued monotone decision trees \cite{ODonnellServedio:07}. They overcome the problem by first discretizing the function and proving that Friedgut's theorem can be extended to the discrete case (as long as the discretization step is not too small). The problem with using this approach for submodular functions is that it does not preserve submodularity and can increase total influence of the resulting function to $\Omega(\sqrt{n})$ with discretization parameters necessary for the approach to work (consider for example a linear function $\fr{n}\sum_i x_i$).

Here we instead prove a generalization of Friedgut's theorem to all real-valued functions. We show that Friedgut's theorem holds for real-valued functions if the total $\ell_\kappa^\kappa$-influence  (for some constant $\kappa \in [1, 2)$) is small in addition to total $\ell_2^2$-influence. Self-bounding functions have low total $\ell_1$-influence and hence for our purposes $\kappa =1$ would suffice. We prove the slightly more general version as it could be useful elsewhere (and the proof is essentially the same).
\begin{theorem}
\label{th:rvjunta}
Let $f:\zo^n\rightarrow \RR$ be any function, $\eps \in (0,1)$ and $\kappa \in (1,2)$. For $d$ such that $\sum_{|S|>d} \hat{f}(S)^2 \leq \eps/2$, let $$I = \{i\in[n] \cond \infl^{\kappa}_i(f) \geq \alpha\} \mbox{ for}$$ $$\alpha = \left((\kappa-1)^{d-1} \cdot \eps/ (2 \cdot\infl^{\kappa}(f)) \right)^{\kappa/(2-\kappa)}\ .$$
Then for the set $\I_d =\{S \subseteq I \cond |S| \leq d\}$ we have $\sum_{S\not\in \I_d} \hat{f}(S)^2 \leq \eps$.
\end{theorem}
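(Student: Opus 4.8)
**The plan is to mirror Friedgut's original argument, replacing one elementary inequality with a refined bound that exploits the small $\ell_\kappa^\kappa$-influence.** The overall structure: we must show that the Fourier weight on sets $S$ that are \emph{not} contained in the high-$\ell_\kappa^\kappa$-influence set $I$, or that have $|S|>d$, is at most $\eps$. The weight on $|S|>d$ is at most $\eps/2$ by the choice of $d$, so it suffices to bound $\sum_{S \not\subseteq I,\ |S| \le d} \hat f(S)^2 \le \eps/2$.

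First I would set up the standard decomposition of $\ell_2^2$-influence by degree. For a coordinate $i$, write $\infl^2_i(f) = \sum_{S \ni i} \hat f(S)^2$, and note $\sum_{i \notin I} \sum_{S \ni i,\ |S|\le d} \hat f(S)^2 \ge \sum_{S \not\subseteq I,\ |S|\le d} \hat f(S)^2$ (each such $S$ contains at least one coordinate outside $I$ and is counted at least once). So it is enough to show that for each $i \notin I$, the low-degree part of $\infl^2_i(f)$ is small --- more precisely that summing these over the (at most $n$) coordinates outside $I$ stays below $\eps/2$. The key quantitative step is a \emph{hypercontractive / level-$d$ inequality} relating $\infl^2_i(f)$ restricted to degree $\le d$ to $\infl^\kappa_i(f)$: writing $g_i = \tfrac12 \partial_i f$, the degree of $g_i$ restricted to sets of size $\le d$ in the $\partial_i f$ expansion is $\le d-1$, and the $(2,\kappa)$-hypercontractivity estimate gives $\|g_i^{\le d-1}\|_2^2 \le (\kappa-1)^{-(d-1)} \|g_i\|_\kappa^2 = (\kappa-1)^{-(d-1)} (\infl^\kappa_i(f))^{2/\kappa}$. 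Combined with $\infl^\kappa_i(f) < \alpha$ for $i \notin I$, this bounds the low-degree part of $\infl^2_i(f)$ by $(\kappa-1)^{-(d-1)} \alpha^{2/\kappa}$; plugging in the stated value of $\alpha$ makes this equal to $\eps/(2\,\infl^\kappa(f))$ times a factor bounded using $\infl^\kappa_i(f) \le \infl^\kappa(f)$, so that summing over $i \notin I$ yields $\eps/2$.

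More carefully, the summation over $i$ should be weighted: for each $i \notin I$ we have $\sum_{S \ni i,\ |S| \le d} \hat f(S)^2 \le (\kappa-1)^{-(d-1)} (\infl^\kappa_i(f))^{2/\kappa}$, and since $2/\kappa > 1$ we can write $(\infl^\kappa_i(f))^{2/\kappa} = \infl^\kappa_i(f) \cdot (\infl^\kappa_i(f))^{(2-\kappa)/\kappa} \le \infl^\kappa_i(f) \cdot \alpha^{(2-\kappa)/\kappa}$ using $\infl^\kappa_i(f) < \alpha$. Summing over all $i \notin I$ and bounding $\sum_i \infl^\kappa_i(f) \le \infl^\kappa(f)$ gives the low-degree off-$I$ weight at most $(\kappa-1)^{-(d-1)} \alpha^{(2-\kappa)/\kappa} \infl^\kappa(f)$, and the definition of $\alpha$ is precisely chosen so that this equals $\eps/2$. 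Adding the high-degree weight of $\eps/2$ completes the bound $\sum_{S \notin \I_d} \hat f(S)^2 \le \eps$.

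\textbf{The main obstacle} is establishing the $(2,\kappa)$-hypercontractive bound $\|g^{\le m}\|_2 \le (\kappa-1)^{-m/2} \|g\|_\kappa$ in the form needed and verifying that $\partial_i$ shifts degree down by exactly one (so a size-$\le d$ term in $f$ becomes a size-$\le d-1$ term in $\partial_i f$), which is why the exponent is $d-1$ rather than $d$. The hypercontractivity inequality itself is classical (the Bonami--Beckner inequality in its $(2,q)$ form for $q<2$), so the real work is bookkeeping: tracking the exponents $2/\kappa$ versus $(2-\kappa)/\kappa$, confirming the algebraic identity that the stated $\alpha$ makes the final product equal $\eps/2$, and handling the degenerate normalizations (e.g.\ if $\infl^\kappa(f) = 0$, in which case $f$ is constant and the claim is trivial). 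None of these individual steps is deep, but assembling them so the constants line up exactly with the displayed formula for $\alpha$ is the delicate part.
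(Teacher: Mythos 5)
Your proposal is correct and follows essentially the same route as the paper: split the weight into the degree $>d$ part (at most $\eps/2$ by the choice of $d$) and the low-degree part not contained in $I$, bound the latter coordinate-by-coordinate via the $(2,\kappa)$ Bonami--Beckner inequality applied to $\tfrac12\partial_i f$ (which is exactly the paper's Lemma on low-influence coordinates, packaged there through the noise operator $T_{\sqrt{\kappa-1}}$ rather than a truncation $g_i^{\le d-1}$), and then sum using $\infl^\kappa_i(f)<\alpha$ for $i\notin I$ and $\sum_i\infl^\kappa_i(f)\le\infl^\kappa(f)$ so that the stated $\alpha$ yields exactly $\eps/2$. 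The bookkeeping you flag (degree shift by one in $\partial_i f$, the exponents $2/\kappa$ versus $(2-\kappa)/\kappa$) matches the paper's computation.
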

To obtain Theorem \ref{thm:Friedgut-junta-intro} from this statement we use it with $\eps^2$ error and let $d= 2 \cdot  \infl^2(f)/\epsilon^2$ which, by Lemma \ref{lem:low-degree-conc}, gives the desired bound on $\sum_{|S|>d} \hat{f}(S)^2$. Note that $g = \sum_{S \in \I_d} \hat{f}(S) \chi_S$ is a function of Fourier degree $d$ that depends only on variables in $I$. Further, $\|f-g\|_2^2 \leq \eps^2$ and the set $I$ has size at most
\equ{\label{eq:bound-i} |I| \leq \infl^{\kappa}(f)/\alpha = 2^{O(\infl^2(f)/\epsilon^2)} \cdot \eps^{2\kappa/(2-\kappa)} \cdot (\infl^{\kappa}(f))^{2/(2-\kappa)}.}
Also note that Theorem \ref{th:rvjunta} does not allow us to directly bound $|I|$ in terms of $\infl^1(f)$ since it does not apply to $\kappa =1$. However for every $\kappa \in [1,2]$, $\infl^{\kappa}(f) \leq \infl^1(f) + \infl^2(f)$ and therefore we can also bound $|I|$ using equation (\ref{eq:bound-i}) for $\kappa=4/3$ and then substituting $\infl^{4/3}(f) \leq \infl^1(f) + \infl^2(f)$. This gives the proof of Theorem \ref{thm:Friedgut-junta-intro} (first part). The second part of Theorem \ref{thm:Friedgut-junta-intro} now follows from Lemma \ref{lem:submod}.

Our proof of Theorem \ref{th:rvjunta} is a simple modification of the proof of Friedgut's theorem from  \cite{DinurFriedgut:05course}. We will need the notion of a noise operator.

\begin{definition}[The noise operator]
For $\alpha \in [0,1], x \in \zon$, we define a distribution $N_\alpha(x)$ over $y \in \zon$ by letting $y_i = x_i$ with probability $1-\alpha$ and $y_i = 1-x_i$ with probability $\alpha$, independently for each $i$. For $\rho \in [-1,1]$, the noise operator $T_\rho$ acts on functions $f:\zon \rightarrow \RR$, and is defined by $$(T_\rho f)(x) = \E_{y \sim N_{1/2-\rho/2}(x)}[f(y)].$$ In the Fourier basis the noise operator satisfies: $\widehat{(T_\rho f)}(S) = \rho^{|S|}\hat{f}(S)$, for every $S \subseteq [n]$.
\end{definition}

Following Friedgut's proof, we will require a bound on $\|T_\rho f\|_2$ in terms of $\|f\|_\kappa$. This lemma is a special case of the Hypercontractive inequality of Bonami and Beckner \cite{Bon70,Bec75}.
\begin{lemma}
\label{lem:hypercont}
For any $f:\zo^n\rightarrow \RR$, and any $\kappa \in [1,2]$,  $\|T_{\sqrt{\kappa-1}} f\|_2 \leq \| f \|_\kappa$.
\end{lemma}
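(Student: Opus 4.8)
The plan is to prove this the classical way, as an instance of the Bonami--Beckner hypercontractive inequality: establish the one-variable ``two-point'' inequality, and then lift it to $n$ variables by an induction that tensorizes one coordinate at a time. Throughout write $\rho=\sqrt{\kappa-1}\in[0,1]$ and set $p=\kappa/2\in(1/2,1]$; working with the exponent $p\le 1$ is convenient because $L^p$ then satisfies the \emph{reverse} Minkowski inequality $\|u+v\|_p\ge\|u\|_p+\|v\|_p$ for nonnegative $u,v$, which is exactly what the lifting step needs. The endpoints are trivial ($\kappa=2$ gives $T_1=\mathrm{Id}$; $\kappa=1$ gives $T_0 f=\E[f]$ and $|\E[f]|\le\E|f|$), so assume $\kappa\in(1,2)$.

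\textbf{Base case $n=1$.} Writing $f=a+b\,\chi_{\{1\}}$, the Fourier formula for $T_\rho$ gives $\|T_\rho f\|_2^2=a^2+\rho^2 b^2=a^2+(\kappa-1)b^2$, while $\|f\|_\kappa^\kappa=\tfrac12\big(|a+b|^\kappa+|a-b|^\kappa\big)$. So the claim reduces to
$$\tfrac12\big(|a+b|^\kappa+|a-b|^\kappa\big)\ \ge\ \big(a^2+(\kappa-1)b^2\big)^{\kappa/2}.$$
If $a=0$ this reads $|b|^\kappa\ge(\kappa-1)^{\kappa/2}|b|^\kappa$, true since $\kappa-1\le1$. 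Otherwise, using the symmetries $a\mapsto-a$ and $b\mapsto-b$, take $a>0$, $b\ge0$, put $t=b/a$, and divide through: it remains to check, for all $t\ge0$ and $\kappa\in(1,2)$, that $\tfrac12\big((1+t)^\kappa+|1-t|^\kappa\big)\ge\big(1+(\kappa-1)t^2\big)^{\kappa/2}$. This is the classical two-point inequality; I would verify it by elementary calculus --- both sides equal $1$ with matching first and second derivatives at $t=0$, and one checks that the difference stays nonnegative, handling $[0,1]$ and $[1,\infty)$ separately (on $[0,1]$ e.g.\ by comparing the power series of $(1\pm t)^\kappa$ with that of $(1+(\kappa-1)t^2)^{\kappa/2}$, and on $[1,\infty)$ by a monotonicity/convexity argument).

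\textbf{Inductive step.} Assume the lemma in $n-1$ variables. Given $f:\zon\to\RR$, write $x=(x_1,x')$ with $x'\in\zo^{n-1}$, let $f_0=\E_{x_1}[f]$, and define $f_1:\zo^{n-1}\to\RR$ by $f(x_1,x')=f_0(x')+(-1)^{x_1}f_1(x')$ (the split of $f$ into its $x_1$-independent and $\chi_{\{1\}}$-linear parts). Since the noise operator factors as a tensor product over coordinates, $T_\rho f(x_1,x')=(T_\rho^{\mathrm{rest}}f_0)(x')+\rho(-1)^{x_1}(T_\rho^{\mathrm{rest}}f_1)(x')$, where $T_\rho^{\mathrm{rest}}$ is the noise operator on $\zo^{n-1}$ and the extra factor $\rho$ records the first coordinate (each $\hat f(S)$ with $1\in S$ picks up one more factor of $\rho$). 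Averaging over the uniform $x_1$ (the cross term vanishes since $\E_{x_1}[(-1)^{x_1}]=0$) gives $\|T_\rho f\|_2^2=\|T_\rho^{\mathrm{rest}}f_0\|_2^2+\rho^2\|T_\rho^{\mathrm{rest}}f_1\|_2^2$, and the induction hypothesis bounds this by $\|f_0\|_\kappa^2+\rho^2\|f_1\|_\kappa^2$. On the other hand, applying the base case pointwise in $x'$ with $a=f_0(x')$, $b=f_1(x')$ yields $\big(\E_{x_1}|f(x_1,x')|^\kappa\big)^{2/\kappa}\ge f_0(x')^2+\rho^2 f_1(x')^2$ for every $x'$. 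Taking $L^p$-norms over $x'$ ($p=\kappa/2\le1$), using monotonicity of $L^p$ and then reverse Minkowski,
$$\|f\|_\kappa^2=\big\|(\E_{x_1}|f|^\kappa)^{1/p}\big\|_{L^p(x')}\ \ge\ \big\|f_0^2+\rho^2 f_1^2\big\|_{L^p(x')}\ \ge\ \|f_0^2\|_{L^p}+\rho^2\|f_1^2\|_{L^p}=\|f_0\|_\kappa^2+\rho^2\|f_1\|_\kappa^2 .$$
Chaining the two bounds gives $\|T_\rho f\|_2^2\le\|f\|_\kappa^2$, which closes the induction.

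\textbf{Main obstacle.} Given the one-variable inequality and reverse Minkowski, the inductive step is routine bookkeeping; the only genuinely delicate point is the two-point inequality $\tfrac12((1+t)^\kappa+|1-t|^\kappa)\ge(1+(\kappa-1)t^2)^{\kappa/2}$, whose ``elementary'' proof requires a careful case analysis near $t=1$, where the left-hand side is not smooth. Since this is exactly the textbook special case of the Bonami--Beckner hypercontractive inequality, a fully acceptable alternative --- and the one we adopt --- is simply to cite it.
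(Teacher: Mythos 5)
Your proposal is correct, but it does more than the paper does: the paper gives no proof of this lemma at all and simply cites it as a special case of the Bonami--Beckner hypercontractive inequality, which is also the option you ultimately adopt. The proof sketch you supply is the standard textbook route and is sound: the endpoint cases $\kappa=1,2$ are trivial as you say; the $n=1$ case reduces exactly to the classical two-point inequality $\frac12\left((1+t)^\kappa+|1-t|^\kappa\right)\geq\left(1+(\kappa-1)t^2\right)^{\kappa/2}$ (the ``$(\kappa,2)$'' form with noise rate $\sqrt{\kappa-1}$); and your inductive step is the usual tensorization, correctly combining orthogonality in the first coordinate, the induction hypothesis on $f_0,f_1$, the pointwise one-variable bound, monotonicity of $t\mapsto\E[t^{p}]$ for nonnegative arguments, and the reverse Minkowski inequality for $p=\kappa/2\leq 1$. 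The only piece you defer is the elementary but fiddly verification of the two-point inequality near $t=1$; that is exactly the content of the classical result, so either writing out that calculus carefully or citing Bonami--Beckner (as the paper does) closes the argument. In short, your route buys a self-contained proof at the cost of one delicate elementary inequality, while the paper's route is a one-line appeal to the literature; both are valid.
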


The proof of Theorem \ref{th:rvjunta} relies on two lemmas. The first one is Lemma~\ref{lem:low-degree-conc}, stated above.
The second and key lemma is the following bound on the sum of squares of all low-degree Fourier coefficients that include a variable of low influence.
\begin{lemma}
\label{lem:low-influence-bound-general}
Let $f:\zo^n\rightarrow \RR$, $\kappa \in (1,2)$, $\alpha >0$ and $d$ be an integer $\geq 1$. Let $I = \{i\in[n] \cond \infl^{\kappa}_i(f) \geq \alpha\}$. Then $$\sum_{S\not\subseteq I, |S|\leq d}\hat{f}(S)^2 \leq (\kappa-1)^{1-d} \cdot \alpha^{2/\kappa-1} \cdot \infl^{\kappa}(f)\ .$$
\end{lemma}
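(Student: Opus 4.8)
The plan is to follow the standard proof of Friedgut's theorem, isolating the low-degree Fourier mass supported on variables of small $\ell_\kappa^\kappa$-influence, and bounding it via hypercontractivity. Concretely, let $\rho = \sqrt{\kappa-1} \in (0,1)$. For each index $i \notin I$ we have $\infl^\kappa_i(f) = \|\tfrac12 \partial_i f\|_\kappa^\kappa < \alpha$; the key point is that $\infl^2_i(f) = \|\tfrac12\partial_i f\|_2^2$ can be controlled in terms of $\infl^\kappa_i(f)$ using hypercontractivity applied to the function $\tfrac12\partial_i f$. Indeed, by Lemma~\ref{lem:hypercont}, $\|T_\rho (\tfrac12\partial_i f)\|_2 \leq \|\tfrac12 \partial_i f\|_\kappa$, and since $\widehat{\partial_i f}(S)$ is supported on sets not containing $i$, the operator $T_\rho$ multiplies the degree-$m$ part of $\tfrac12\partial_i f$ by $\rho^m$. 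Restricting attention to Fourier coefficients $\hat f(S)$ with $i \in S$ and $|S| \leq d$, the corresponding coefficient of $\tfrac12\partial_i f$ has degree $|S|-1 \leq d-1$, so it is scaled by at least $\rho^{d-1}$ under $T_\rho$. Thus
\[
\rho^{2(d-1)} \sum_{S \ni i,\ |S|\leq d} \hat f(S)^2 \ \leq\ \|T_\rho(\tfrac12\partial_i f)\|_2^2 \ \leq\ \|\tfrac12\partial_i f\|_\kappa^2 \ =\ \bigl(\infl^\kappa_i(f)\bigr)^{2/\kappa}.
\]

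Next I would convert the bound $\bigl(\infl^\kappa_i(f)\bigr)^{2/\kappa}$ into something summable. Since $2/\kappa > 1$ we cannot sum these directly, but for $i \notin I$ we have $\infl^\kappa_i(f) < \alpha$, so $\bigl(\infl^\kappa_i(f)\bigr)^{2/\kappa} = \infl^\kappa_i(f)\cdot \bigl(\infl^\kappa_i(f)\bigr)^{2/\kappa - 1} \leq \infl^\kappa_i(f)\cdot \alpha^{2/\kappa - 1}$. Summing over all $i \notin I$:
\[
\rho^{2(d-1)} \sum_{i\notin I}\ \sum_{S\ni i,\ |S|\leq d}\hat f(S)^2 \ \leq\ \alpha^{2/\kappa-1}\sum_{i\notin I}\infl^\kappa_i(f)\ \leq\ \alpha^{2/\kappa-1}\,\infl^\kappa(f).
\]
Finally, every set $S$ with $S \not\subseteq I$ and $|S| \leq d$ contains at least one index $i \notin I$, so $\sum_{S\not\subseteq I,\,|S|\leq d}\hat f(S)^2 \leq \sum_{i\notin I}\sum_{S\ni i,\,|S|\leq d}\hat f(S)^2$. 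Combining with $\rho^{2(d-1)} = (\kappa-1)^{d-1}$ and dividing through gives
\[
\sum_{S\not\subseteq I,\ |S|\leq d}\hat f(S)^2 \ \leq\ (\kappa-1)^{1-d}\,\alpha^{2/\kappa-1}\,\infl^\kappa(f),
\]
as claimed.

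The main obstacle, and the only place subtlety enters beyond routine manipulation, is the exponent bookkeeping in the hypercontractivity step: one must apply Lemma~\ref{lem:hypercont} to $\tfrac12\partial_i f$ rather than to $f$ itself, track that its Fourier support sits on sets of size $|S|-1$ (hence the $\rho^{d-1}$ and the shift $d-1 \mapsto 1-d$ in the exponent), and correctly handle the fact that $\|\cdot\|_\kappa^2 = (\|\cdot\|_\kappa^\kappa)^{2/\kappa}$ turns the $\ell_\kappa^\kappa$-influence into its $2/\kappa$ power — which is why the threshold $\alpha$ must be spent as $\alpha^{2/\kappa-1}$ and a single factor of $\infl^\kappa_i(f)$ retained for summation. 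Everything else (the union over $i \notin I$ inside each relevant $S$, the passage from $\rho^2 = \kappa-1$) is bookkeeping. Note that $\kappa \in (1,2)$ strictly is needed so that $\rho \in (0,1)$ and $2/\kappa - 1 > 0$; the case $\kappa = 1$ is degenerate, consistent with the hypothesis of the lemma.
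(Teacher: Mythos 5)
Your proof is correct and takes essentially the same route as the paper's: hypercontractivity (Lemma~\ref{lem:hypercont}) applied to $\tfrac12\partial_i f$, the observation that the relevant Fourier coefficients live at degree $|S|-1\leq d-1$ so they shrink by at most $\rho^{d-1}$, and the splitting $(\infl^{\kappa}_i(f))^{2/\kappa}\leq \alpha^{2/\kappa-1}\,\infl^{\kappa}_i(f)$ for $i\notin I$. The only difference is organizational (you argue coordinate-by-coordinate and then sum, while the paper first bounds the left side by $\sum_{|S|\leq d}|S\cap \bar{I}|\hat{f}(S)^2$ and applies hypercontractivity at the end), which does not change the substance.
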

\begin{proof}
We first observe that by the properties of the Fourier transform of $\partial_i f$ (see Sec.~\ref{sec:lowsens-prelims}) and the noise operator $T_\rho$, we have
\equ{\label{eq:partial-noise}
\left\|T_\rho \frac{\partial_i f}{2}\right\|_2^2 = \sum_{S\subseteq[n], S \ni i} (\rho^2)^{|S|-1} \hat{f}(S)^2.
}
Next we bound the sum in terms of norms of $T_{\sqrt{\kappa-1}}$ applied to $\partial_i f$'s.
\alequn{\sum_{S\not\subseteq I, |S|\leq d}\hat{f}(S)^2 &\leq \sum_{S\subseteq[n], |S|\leq d}|S\cap \bar{I}|\hat{f}(S)^2 \leq (\kappa-1)^{1-d} \sum_{S\subseteq[n], |S|\leq d}|S\cap \bar{I}| (\kappa-1)^{|S|-1}\hat{f}(S)^2 \\ & =
(\kappa-1)^{1-d}\sum_{i\in \bar{I}} \sum_{S\subseteq[n], S \ni i} (\kappa-1)^{|S|-1} \hat{f}(S)^2 = (\kappa-1)^{1-d}\sum_{i\in \bar{I}}\left\|T_{\sqrt{\kappa-1}} \frac{\partial_i f}{2}\right\|_2^2,}
where the last equality follows from eq.~(\ref{eq:partial-noise}).
Now we can apply Lemma \ref{lem:hypercont} to obtain:
\alequn{\sum_{i\in \bar{I}}\left\|T_{\sqrt{\kappa-1}} \frac{\partial_i f}{2}\right\|_2^2 &\leq \sum_{i\in \bar{I}}\left\|\frac{\partial_i f}{2}\right\|_{\kappa}^2 = \sum_{i\in \bar{I}}\E\left[\left|\frac{\partial_i f}{2}\right|^{\kappa}\right]^{\frac{1}{\kappa}\cdot 2}\\& = \sum_{i\in \bar{I}} \left(\infl_i^{\kappa}(f)\right)^{2/\kappa}  \\& \leq  \max_{i\in \bar{I}} \left(\infl_i^{\kappa}(f)\right)^{2/\kappa-1} \cdot \sum_{i\in \bar{I}} \infl_i^{\kappa}(f) \\& \leq \cdot \alpha^{2/\kappa-1} \cdot \infl^{\kappa}(f).}
\end{proof}

We now proceed to obtain Theorem \ref{th:rvjunta} by combining Lemmas \ref{lem:low-degree-conc} and \ref{lem:low-influence-bound-general}.

\begin{proof}[Proof of Thm.~\ref{th:rvjunta}]

Observe that
$$\sum_{S \not\in \I_d} \hat{f}(S)^2 = \sum_{S\subseteq[n], |S|>d}\hat{f}(S)^2 + \sum_{S\not\subseteq I, |S|\leq d}\hat{f}(S)^2\ .$$
For our choice of $d$, 
$\sum_{S\subseteq[n], |S|>d}\hat{f}(S)^2 \leq \eps/2$.

Now, by Lemma \ref{lem:low-influence-bound-general} the second part can be bounded by
$$\sum_{S\not\subseteq I, |S|\leq d}\hat{f}(S)^2 \leq (\kappa-1)^{1-d} \cdot \alpha^{2/\kappa-1} \cdot \infl^{\kappa}(f) =
(\kappa-1)^{1-d} \cdot  \left((\kappa-1)^{d-1} \cdot \eps/(2\infl^{\kappa}(f)) \right) \cdot \infl^{\kappa}(f) = \eps/2\ .$$
\end{proof}

We now give a slightly simpler version of Thm.~\ref{th:rvjunta} for functions that have low total $\ell_1$-influence, such as self-bounding functions.
\begin{corollary}
\label{cor:lowsens-junta-l1-spectral-bound}
Let $f:\zo^n \rightarrow [0,1]$ be any function and $\eps>0$. For $d = 2\cdot \infl^1(f)/\epsilon^2$ and $\alpha = 2^{-4d}$ let
$$I = \{ i\in[n] \cond \infl^1_i(f) \geq \alpha\}.$$ There exists a function $p$ of Fourier degree $d$ over variables in $I$, such that $\|f - p\|_2 \leq \epsilon$ and $\|\hat{p}\|_1 \leq 2^{O(\infl^1(f)^2/\epsilon^4)}$.
\end{corollary}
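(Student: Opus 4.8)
The plan is to obtain this as a corollary of Theorem~\ref{th:rvjunta} with the fixed exponent $\kappa = 4/3$ and error parameter $\eps^2$ (so that the $\ell_2$-error comes out to $\eps$), and then to translate the index set produced there — which is phrased in terms of $\ell_{4/3}^{4/3}$-influences — into the set $I$ of the statement, which is phrased in terms of $\ell_1$-influences. The bridge is the elementary fact that, for $f:\zo^n\to[0,1]$, we have $|\tfrac12\partial_i f|\in[0,\tfrac12]$ pointwise, so $t\mapsto t^\kappa$ is non-increasing in $\kappa$ at such arguments $t$; hence for every coordinate $i$ and every $\kappa\in[1,2]$,
\[ \infl^2_i(f)\ \leq\ \infl^\kappa_i(f)\ \leq\ \infl^1_i(f), \]
and likewise for the totals. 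The approximator will be $p=\sum_{S\in\I_d}\hat f(S)\chi_S$, the Fourier truncation supplied by Theorem~\ref{th:rvjunta}.

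First I would verify the hypothesis of Theorem~\ref{th:rvjunta} on $d$: since $d=2\infl^1(f)/\eps^2\geq 2\infl^2(f)/\eps^2$, Lemma~\ref{lem:low-degree-conc} gives $\sum_{|S|>d}\hat f(S)^2\leq\infl^2(f)/d\leq\eps^2/2$, as needed for error $\eps^2$. For $\kappa=4/3$ one has $\kappa/(2-\kappa)=2$, so the threshold from the theorem is $\alpha_{\mathrm{thm}}=\big((1/3)^{d-1}\,\eps^2/(2\,\infl^{4/3}(f))\big)^2$. Using $\infl^{4/3}(f)\leq\infl^1(f)$ and $\eps^2=2\infl^1(f)/d$ this is at least $\big((1/3)^{d-1}/d\big)^2=9\cdot 9^{-d} d^{-2}$, and since $9\cdot(16/9)^d\geq d^2$ for all integers $d\geq 1$ we get $\alpha_{\mathrm{thm}}\geq 2^{-4d}=\alpha$. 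Therefore the theorem's index set $\{i:\infl^{4/3}_i(f)\geq\alpha_{\mathrm{thm}}\}$ is contained in $\{i:\infl^1_i(f)\geq\alpha_{\mathrm{thm}}\}\subseteq\{i:\infl^1_i(f)\geq\alpha\}=I$, so $p$ has Fourier degree at most $d$, depends only on variables in $I$, and by the conclusion of Theorem~\ref{th:rvjunta} satisfies $\|f-p\|_2^2\leq\eps^2$.

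Next I would bound $\|\hat p\|_1=\sum_{S\in\I_d}|\hat f(S)|$ by Cauchy--Schwarz and Parseval: $\|\hat p\|_1\leq|\I_d|^{1/2}\big(\sum_S\hat f(S)^2\big)^{1/2}=|\I_d|^{1/2}\,\|f\|_2\leq|\I_d|^{1/2}$, because $f$ has range $[0,1]$. Since every $i\in I$ contributes at least $\alpha$ to $\infl^1(f)=\sum_i\infl^1_i(f)$, we have $|I|\leq\infl^1(f)/\alpha=\infl^1(f)\,2^{4d}$, and hence $|\I_d|\leq\sum_{k\leq d}\binom{|I|}{k}\leq(d+1)\,|I|^d$. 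Taking logs and substituting $d=2\infl^1(f)/\eps^2$ gives $\log_2|\I_d|\leq\log_2(d+1)+d\log_2\infl^1(f)+4d^2=O(\infl^1(f)^2/\eps^4)$ (the middle term is lower order, using $\log_2 t\leq t$), so $\|\hat p\|_1\leq 2^{O(\infl^1(f)^2/\eps^4)}$.

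The only step requiring care is the bookkeeping in the second paragraph — chaining the influence inequalities and checking the crude bound $\alpha_{\mathrm{thm}}\geq 2^{-4d}$; the generous constant $4$ in the definition of $\alpha$ is exactly what absorbs the $(1/3)^{d-1}$ and $1/d^2$ factors, and in the degenerate regime $\infl^1(f)<\eps^2/2$ (where $d<1$) one simply replaces $d$ by $\max\{d,1\}$ throughout. Everything else is a direct invocation of Theorem~\ref{th:rvjunta}, Lemma~\ref{lem:low-degree-conc}, Parseval, and Cauchy--Schwarz. One could equally run the argument with a general $\kappa\in(1,2)$, using the bound $\infl^\kappa(f)\leq\infl^1(f)+\infl^2(f)$ from the discussion following Theorem~\ref{th:rvjunta}; restricting to range $[0,1]$ and $\kappa=4/3$ just makes the constants cleanest.
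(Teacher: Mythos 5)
Your proposal is correct and follows essentially the same route as the paper's proof: apply Theorem~\ref{th:rvjunta} with $\kappa=4/3$ and error $\eps^2$, use $\infl^{4/3}_i(f)\leq\infl^1_i(f)$ (valid since $|\partial_i f|/2\leq 1$) to pass to the $\ell_1$-influence set and to check $2^{-4d}\leq\bigl((1/3)^{d-1}\eps^2/(2\infl^{4/3}(f))\bigr)^2$, and then bound $\|\hat{p}\|_1$ by counting the at most $|I|^{O(d)}$ nonzero coefficients. The only cosmetic difference is that you bound $\|\hat p\|_1$ via Cauchy--Schwarz and Parseval, whereas the paper simply uses $|\hat f(S)|\leq 1$ per coefficient; both give $2^{O(\infl^1(f)^2/\eps^4)}$.
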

\begin{proof}
We first note that, for every $i$, $\frac{\partial_i f}{2}$ has range in $[-1,1]$ and therefore for every $\kappa \geq 1$,
$$\infl_i^\kappa(f) = \E\left[\left|\frac{\partial_i f}{2}\right|^\kappa\right] \leq \E\left[\left|\frac{\partial_i f}{2}\right|\right] = \infl_i^1(f). $$
In particular, $\infl^2(f) \leq \infl^1(f)$ and $\infl^{4/3}(f) \leq \infl^1(f)$. We can now apply  Thm.~\ref{th:rvjunta} with $\kappa=4/3$ to obtain that for $d = 2\cdot\infl^1(f)/\epsilon^2 \geq 2\cdot\infl^2(f)/\epsilon^2$, $\alpha = 2^{-4d} \leq \left(3^{-d+1} \cdot \eps^2/ (2\cdot\infl^{4/3}(f)) \right)^2$ and
$$I' = \{ i\in[n] \cond \infl^{4/3}_i(f) \geq \alpha\} $$ we have that
$$\sum_{S \not\subseteq I'\mbox{ or } |S| > d} (\hat{f}(S))^2 \leq \eps^2 .$$
Let $p = \sum_{S \subseteq I',\ |S|\leq d} \hat{f}(S)\chi_S$. Then
$\|f - p\|_2^2 \leq \epsilon^2$. 
Now we observe that $\infl_i^{4/3}(f) \leq \infl_i^1(f)$ implies that $I' \subseteq I$ and therefore $p$ is a function of Fourier degree $d$ over variables in $I$. To bound $\|\hat{p}\|_1$ we observe that $|I| \leq \infl^1(f)/\alpha$ and therefore the total number of non-zero Fourier coefficients of $p$ is at most $$\sum_{j\leq d} {|I| \choose j} \leq |I|^d = (2^{4d} \cdot \infl^1(f))^{d} =  2^{O(\infl^1(f)^2/\epsilon^4)}\ .$$ To get the desired bound on  $\|\hat{p}\|_1$ it now suffices to note that $f$ has range  $[-1,1]$ and therefore for every $S \subseteq [n]$, $|\hat{p}(S)| \leq |\hat{f}(S)| \leq 1$.
\end{proof}

\eat{
For the special case of XOS functions we can further distill the following structural corollary.
\begin{corollary}
\label{cor:XOS-junta-bound}
Let $f:\zo^n \rightarrow [0,1]$ be an XOS function and $\eps>0$. There exists a $2^{8/\eps^2}$-junta $p$ of Fourier degree $2/\eps^2$, such that $\|f - p\|_2 \leq \epsilon$.
\end{corollary}
}

\section{Lower Bound Examples}
\label{sec:lower-bounds}
Here we show three simple examples: The first one shows that Theorem~\ref{thm:submod-junta} is almost optimal, in the sense that the dependence on $\epsilon$ cannot be better than $1/\epsilon^2$. The second example shows that  Corollary~\ref{cor:lowsens-junta-l1-spectral-bound} is essentially optimal even for XOS functions. Finally, the third example shows that Theorem~\ref{th:rvjunta} requires the use of $\ell_\kappa^\kappa$-influences for $\kappa < 2$ rather than just $\ell_2^2$-influences.

\subsection{Lower Bound On Junta Size For Linear Functions}

We prove that even for linear functions, an $\epsilon$-approximation (even in $\ell_1$-norm) requires at least $1/\epsilon^2$ variables.

\begin{lemma}
\label{lem:linear-example}
Consider a linear function $f:\{0,1\}^n \rightarrow [0,1]$, $$f(x) = \frac{1}{a} \sum_{i \in A} x_i$$ where $|A| = a$. Then every function $g:\{0,1\}^n \rightarrow \RR$ that depends on less than $\frac{a}{2}$ variables has $\|f-g\|_1 = \Omega(\sqrt{1/a})$.
\end{lemma}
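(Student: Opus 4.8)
The plan is to reduce this to a statement about $\ell_2$-distance, since for functions bounded in a fixed range, controlling the $\ell_2$-error is the natural route to a lower bound and $\|f-g\|_1 \geq \|f-g\|_2^2$ is not directly useful here — instead I will lower bound $\|f-g\|_2$ directly and then observe that it suffices, or argue about $\ell_1$ directly via a variance/anticoncentration argument. The cleanest approach: let $J$ be the set of (fewer than $a/2$) variables that $g$ depends on, and write $A' = A \setminus J$, so $|A'| > a/2$. The key point is that $g$ is constant along the coordinates in $A'$, whereas $f$ genuinely fluctuates in those coordinates. Concretely, condition on the coordinates outside $A'$ (i.e. fix $x_{\bar{A'}}$); then $g$ is a fixed constant on the resulting subcube $\{0,1\}^{A'}$, while $f$ restricted to that subcube equals $c(x_{\bar{A'}}) + \frac{1}{a}\sum_{i \in A'} x_i$ for some constant $c$. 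So on each such subcube the best $\ell_2$ (or $\ell_1$) approximation by a constant is controlled by the spread of $\frac{1}{a}\sum_{i\in A'} x_i$.

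The main computation is then: for a uniformly random $y \in \{0,1\}^{A'}$ with $|A'| = m > a/2$, the random variable $Z = \frac{1}{a}\sum_{i \in A'} y_i$ has mean $m/(2a)$ and variance $m/(4a^2) \geq 1/(8a)$. Hence $\E[(Z - t)^2] \geq \Var[Z] \geq 1/(8a)$ for every constant $t$, which already gives $\|f-g\|_2 = \Omega(\sqrt{1/a})$ by averaging over the conditioning. For the $\ell_1$ bound I would use an anticoncentration fact: $Z$ is (a scaling of) a Binomial$(m, 1/2)$, and for such a variable $\E[|Z - t|] \geq c\cdot \mathrm{sd}(Z) = \Omega(\sqrt{m}/a) = \Omega(\sqrt{1/a})$ for every $t$ — this follows e.g. from the fact that $\E|Z-t| \geq \E|Z - \mathrm{median}(Z)| \geq \tfrac12 \cdot (\text{interquartile-type spread})$, or more simply from $\E|Z-t| \geq \Var[Z]/\|Z - t\|_\infty$ combined with the fact that one can restrict to the event where $|Z - \E Z|$ is at most $O(\sqrt{m}/a)$, on which the conditional variance is still $\Omega(m/a^2)$. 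Averaging $\E_{y}[|f - g|]$ over the outer coordinates $x_{\bar{A'}}$ preserves the bound, so $\|f - g\|_1 = \Omega(\sqrt{1/a})$.

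The main obstacle is purely the $\ell_1$ anticoncentration step: moving from a variance lower bound (trivial) to an $\ell_1$ lower bound requires a little care because $\ell_1$ is not controlled by variance in general. The safe route is to note $|Z - t| \geq \tfrac{1}{2\sqrt m /a}\,\big( (Z-t)^2 \wedge (m/a^2) \big)$ is false in general, so instead I would invoke the standard fact that for $W \sim \mathrm{Bin}(m,1/2)$ one has $\E|W - t| \geq c\sqrt{m}$ for all real $t$ and an absolute constant $c>0$ (a consequence of the local central limit theorem, or of Khintchine-type inequalities: $\E|W - \E W| = \Theta(\sqrt m)$ and translating $t$ away from $\E W$ only increases $\E|W-t|$ by convexity of $t \mapsto \E|W-t|$ having its minimum at the median, which is within $1$ of $\E W$). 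Scaling by $1/a$ gives $\E|Z - t| \geq c\sqrt m / a \geq c\sqrt{a/2}/a = \Omega(1/\sqrt a)$, and the conditioning/averaging argument finishes the proof.
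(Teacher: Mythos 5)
Your proposal is correct and follows essentially the same route as the paper: fix the coordinates that $g$ depends on, observe that on each resulting subcube $f$ is a shifted binomial-type linear function with variance at least $\Omega(1/a)$ while $g$ is constant, and conclude via anticoncentration of the binomial that the $\ell_1$-distance to any constant is $\Omega(\sqrt{1/a})$. The paper states the anticoncentration step more tersely (``with constant probability $f_{x_B}$ deviates from its expectation by at least one standard deviation''), whereas you spell it out via the mean-absolute-deviation/median argument, but the substance is identical.
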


\begin{proof}
Suppose that $g$ depends only on a subset of variables $B$. Denote by $f_{x_B}$ the restriction of $f$ to $\{0,1\}^{\bar{B}}$ after the coordinates on $B$ have been fixed to $x_B$. Note that $f_{x_B}$ is still a linear function. Hence, the closest function to $f$ depending only on $x_B$ (whether in $\ell_1$ or $\ell_2$) is $g(x) = \E[f_{x_B}] = \frac{1}{a} \sum_{i \in B} x_i + \frac{1}{2a} |A \setminus B|$.

Let us compute the distance between $f$ and $g$: After fixing the coordinates on $B$, $f_{x_B}$ is a linear function of variance
$$ \Var[f_{x_B}] = \sum_{i \in A \setminus B} \frac{1}{a^2} \Var[x_i] = |A \setminus B| \cdot \frac{1}{4 a^2}.$$
This means that with constant probability, $f_{x_B}$ deviates from its expectation by at least $\sqrt{\Var[f_{x_B}]} = \frac{1}{2a} \sqrt{|A \setminus B|}$. Consequently, $|f(x) - g(x)| > \frac{1}{2a} \sqrt{|A \setminus B|}$ with constant probability and $\|f-g\|_1 = \Omega(\frac{1}{2a} \sqrt{|A \setminus B|})$. If $|A \setminus B| \geq \frac{a}{2}$, then we obtain $\|f-g\|_1 = \Omega(\sqrt{1/a})$.
\end{proof}

\subsection{Lower Bound On Junta Size For XOS Functions}

Here we prove that Theorem \ref{thm:Friedgut-junta-intro} is close-to-tight and, in particular, Theorem~\ref{thm:submod-junta} cannot be extended to XOS functions. In fact, we show that $2^{\Omega({1}/{\epsilon})}$ variables are necessary for an $\epsilon$-approximation to an XOS function. Our lower bound is based on the Tribes DNF function studied by Ben-Or and Linial \cite{Ben-OrLinial:85} with AND replaced by a linear function. The Tribes DNF was also used by Friedgut to prove tightness of his theorem for Boolean functions \cite{Friedgut:98}.

\begin{theorem}
\label{thm:XOS-example}
Suppose that $n = a b$ where $b = 2^a$ and consider an XOS function
$$f(x) = \frac{1}{a} \max_{1 \leq j \leq b} \sum_{i \in A_j} x_i$$
 where $(A_1,\ldots,A_b)$ is a partition of $[n]$ into sets of size $|A_j| = a$. Then every function $g:\{0,1\}^n \rightarrow \RR$ that depends on fewer than $2^{a-1}$ variables has $\|f-g\|_1 = \Omega(1/a)$.
\end{theorem}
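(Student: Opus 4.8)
The plan is to follow the template of Friedgut's lower bound for Tribes, adapted to the real-valued "max of linear functions" construction. The intuition is that $f$ is essentially an average of $b = 2^a$ "local" linear functions, each living on a disjoint block $A_j$ of $a$ coordinates; for a typical input, $\frac1a \sum_{i \in A_j} x_i \approx \frac12$, and the maximum over the $b$ blocks is, with high probability, close to (but noticeably above) $\frac12$. A function $g$ depending on fewer than $2^{a-1}$ variables can only "see" coordinates in at most $2^{a-1}$ of the $b$ blocks, and among the remaining $\geq b/2 = 2^{a-1}$ untouched blocks the maximum still fluctuates. The key point is that conditioned on the coordinates $g$ depends on, $f$ still has constant-order variance scaled by $1/a$, so $g$ cannot track $f$ better than $\Omega(1/a)$ in $\ell_1$.

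First I would set up the notation: let $B$ be the set of (fewer than $2^{a-1}$) variables $g$ depends on, and let $G \subseteq [b]$ be the indices $j$ such that $A_j \cap B \neq \emptyset$; then $|G| < 2^{a-1}$, so the complementary set $G' = [b] \setminus G$ has $|G'| \geq 2^{a-1}$. Condition on an arbitrary assignment $x_B$ to the coordinates in $B$. Over the remaining coordinates, for each $j \in G'$ the quantity $Y_j := \frac1a \sum_{i \in A_j} x_i$ is an independent (across $j \in G'$) sum of $a$ i.i.d. Bernoulli($\frac12$) variables scaled by $1/a$, hence $Y_j$ has mean $\frac12$ and standard deviation $\frac{1}{2\sqrt a}$, and it is well spread: $\Pr[Y_j \geq \frac12 + \frac{c}{\sqrt a}]$ is bounded below by an absolute constant $p_0 > 0$ (by the Central Limit Theorem / Berry–Esseen, or a direct binomial anti-concentration estimate for a fixed constant $c$). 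Since $f(x) \geq \max_{j \in G'} Y_j$, and the $Y_j$ for $j \in G'$ are independent, $\Pr[\max_{j \in G'} Y_j \geq \frac12 + \frac{c}{\sqrt a}] \geq 1 - (1-p_0)^{|G'|} \geq 1 - (1-p_0)^{2^{a-1}}$, which is $\geq \frac34$ for all $a$ larger than an absolute constant. On the other hand, a union bound over all $b = 2^a$ blocks together with a standard binomial tail bound shows $\Pr[\max_{1 \le j \le b} Y_j \le \frac12 + \frac{C\sqrt{\log b}}{\sqrt a}] \geq \frac34$; since $\log b = a$, this upper tail is $\frac12 + O(1)$, i.e. $f$ is bounded by an absolute constant with probability $\geq \frac34$ — but more importantly, I want a two-point spread, so I would instead argue that with probability $\geq \frac14$ (conditioned on $x_B$) we have $f(x) \ge \frac12 + \frac{c}{\sqrt a}$ while with probability $\geq \frac14$ we have $f(x) \leq \frac12 + \frac{c}{2\sqrt a}$ (the latter because $\max_{j\in[b]} Y_j$ is itself not too large too often — this needs that $\max_j Y_j$ has fluctuations on the scale $\Theta(\sqrt{a}/\sqrt a) = \Theta(1)$... ).

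Let me reorganize: the cleanest route avoids bounding the global max from above and instead uses that, conditioned on $x_B$, the random variable $f(x)$ takes a value $\geq \frac12 + \frac{c}{\sqrt a}$ with probability $\geq \frac14$ and a value $\leq \frac12 + \frac{c}{4\sqrt a}$ with probability $\geq \frac14$ — both events referring only to randomness in the $G'$-blocks, using that $\max_{j \in G'} Y_j$ concentrates around $\frac{1}{2} + \Theta(\frac{\sqrt{\log|G'|}}{\sqrt a}) = \frac12 + \Theta(1)$ with fluctuations $o(1)$, hence separately straddles any fixed level $\frac12 + \frac{c}{\sqrt a}$ only if $c$ is chosen small relative to that $\Theta(1)$. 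Wait — since $\log |G'| \geq a - 1$, the typical value of $\max_{j\in G'} Y_j$ is $\frac12 + \Theta(1)$, which is a constant, not $O(1/\sqrt a)$; I should therefore pick the two test levels to be $\frac12 + \kappa_1$ and $\frac12 + \kappa_2$ for two fixed constants $0 < \kappa_2 < \kappa_1$ lying on either side of the median of $\max_{j \in G'} Y_j$, each attained with probability $\geq \frac14$ by anti-concentration of the max of many independent scaled binomials. Then for any value $g(x_B, \cdot) = v$ that $g$ outputs (it is constant in the $\bar B$ coordinates), either $|v - (\frac12 + \kappa_1)| \ge \frac{\kappa_1 - \kappa_2}{2}$ or $|v - (\frac12 + \kappa_2)| \ge \frac{\kappa_1 - \kappa_2}{2}$, so $\E_{\bar B}[|f - g| \mid x_B] \geq \frac14 \cdot \frac{\kappa_1-\kappa_2}{2} = \Omega(1)$. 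Hmm, but the claimed bound is $\Omega(1/a)$, not $\Omega(1)$ — so in fact this argument would give a stronger bound, which means I've mis-scaled something. The resolution: $f$ has range $[0,1]$ with the $\frac1a$ in front, and indeed $\max_j \sum_{i\in A_j} x_i$ ranges up to $a$, so $f$ ranges up to $1$; the max of $2^a$ independent Binomial$(a,\frac12)/a$ values is indeed $\frac12 + \Theta(\sqrt{a\log 2}/a\cdot\sqrt a)$... let me recompute: $\max$ of $b$ i.i.d. $\mathrm{Bin}(a,1/2)$ is $\frac a2 + \Theta(\sqrt{a \log b}) = \frac a2 + \Theta(a)$ since $\log b = a$; dividing by $a$ gives $\frac12 + \Theta(1)$. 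So the max genuinely sits at a constant above $\frac12$, and its fluctuation scale is $\Theta(\sqrt{a\log b}/a) = \Theta(1/\sqrt a) \cdot \sqrt{\log b} \cdot$... actually fluctuation of the max of $b$ i.i.d. is $\Theta(\sigma/\sqrt{\log b}) = \Theta(\sqrt a / \sqrt a) = \Theta(1)$ — no, the max of $b$ i.i.d. with sub-Gaussian parameter $\sigma$ has fluctuations $\Theta(\sigma/\sqrt{\log b})$ around its mean, here $= \Theta(\sqrt a/\sqrt a) = \Theta(1)$ after dividing by $a$... this is getting delicate, and pinning down whether the final answer is $\Omega(1/a)$ or larger is exactly the quantitative heart of the matter.

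The main obstacle, therefore, is the anti-concentration estimate for $M := \max_{j \in G'} \frac1a \sum_{i\in A_j} x_i$: I need to exhibit two constants (or two levels separated by $\Omega(1/a)$, matching the theorem) each hit with probability bounded below, conditioned on arbitrary $x_B$. I would handle this by (i) a lower tail: $\Pr[M \leq \frac12 + t] \geq \prod_{j\in G'}\Pr[Y_j \le \frac12+t]$, and for $t$ a suitable constant below the median this product is $\ge$ constant using that each factor is $\ge$ some $q(t) \in (0,1)$ and $|G'| \le b = 2^a$ while each $Y_j$ is $\le 1$; (ii) an upper "tail": $\Pr[M \ge \frac12 + t'] = 1 - \prod_{j\in G'}\Pr[Y_j < \frac12 + t'] \geq 1 - (1 - p(t'))^{|G'|}$ with $|G'|\ge 2^{a-1}$, so for $t'$ a suitable constant this is $\geq \frac34$ for large $a$. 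Choosing $t' > t$ both constants, any single value $v = g(x_B,\cdot)$ is at distance $\geq \frac{t'-t}{2}$ from at least one of $\frac12+t, \frac12+t'$, giving $\|f-g\|_1 \geq \Omega(t'-t) = \Omega(1)$, which in particular implies the claimed $\Omega(1/a)$; for small $a$ the statement is vacuous since $2^{a-1} < 1$ forces $g$ to depend on $0$ variables and then even $\|f-g\|_1 = \Omega(1/a)$ follows from the variance of $f$ itself being $\Omega(1/a^2)$ by a one-block argument. Finally I would assemble these estimates, integrate over $x_B$, and conclude $\|f-g\|_1 = \E_{x_B}\big[\E_{\bar B}[|f-g|\mid x_B]\big] = \Omega(1/a)$.
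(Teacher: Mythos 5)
Your final plan hinges on an anti-concentration claim that is false at the scale you use it. You want two \emph{constant} levels $\tfrac12+t<\tfrac12+t'$ such that, conditioned on $x_B$, the max $M=\max_{j\in G'}Y_j$ over the $|G'|\ge 2^{a-1}$ untouched blocks falls below the lower level with constant probability and above the upper level with constant probability. The upper-tail half is fine, but the lower-tail half fails: your justification that $\prod_{j\in G'}\Pr[Y_j\le \tfrac12+t]$ is bounded below ``because each factor is some constant $q(t)\in(0,1)$'' is exactly backwards --- a constant $q<1$ raised to the power $2^{a-1}$ tends to $0$ doubly exponentially. In fact, since each block is all-ones with probability $2^{-a}$ and there are at least $2^{a-1}$ untouched blocks, with probability $1-o(1)$ some untouched block has at least $a-1$ ones, so $M\ge 1-1/a$ almost always; $M$ (and $f$ itself) concentrates within $O(1/a)$ of $1$, not around $\tfrac12+\Theta(1)$. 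Your conclusion $\|f-g\|_1=\Omega(1)$ is therefore not just unproven but false: the $0$-junta $g\equiv 1$ satisfies $\|f-g\|_1=O(1/a)$, which is also why the theorem's bound $\Omega(1/a)$ is the right one. A secondary issue is that you condition on an \emph{arbitrary} $x_B$ and only use $f\ge M$; to get an upper bound on $f$ at the lower level you must also control the partially-fixed (touched) blocks, which an arbitrary conditioning does not do.

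The working argument (and the one in the paper) lives entirely at the top of the range, where the block-full event has probability $2^{-a}$, exactly matched to the number of blocks. Condition on the variables of the touched blocks being such that no touched block sums to $a$; this happens with constant probability $\approx(1-2^{-a})^{2^{a-1}}\simeq e^{-1/2}$. Under this conditioning, $f(x)=1$ iff some untouched block is all-ones, an event of constant probability bounded away from $0$ and $1$ (since $2^{a-1}<|{\cal D}|\le 2^a$); otherwise $f(x)\le 1-1/a$. Since $g$ is determined by the touched-block variables, its fixed value is at distance at least $\tfrac{1}{2a}$ from $f(x)$ with constant probability, giving $\|f-g\|_1=\Omega(1/a)$. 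So the two ``levels'' you need are $1$ and $1-1/a$, separated by $1/a$, not two constants separated by $\Omega(1)$; repairing your proof means replacing the constant-level anti-concentration step by this top-of-range argument.
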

\begin{proof}
Suppose that $g$ depends on fewer than $2^{a-1}$ variables. This means that there are fewer than $2^{a-1}$ parts where $g$ depends on any variable. Let us denote the parts where $g$ does not depend on any variable by $\cal D$; we have $|{\cal D}| > 2^{a-1}$.

We observe the following: For each part, $\Pr[\sum_{i \in A_j} x_i = a] = 2^{-a}$ (all $a$ variables should be equal to $1$). Therefore, with probability at least $(1 - 2^{-a})^{2^{a-1}} \simeq e^{-1/2}$ we have $\sum_{i \in A_j} x_i < a$ for all $j \notin {\cal D}$. Let us condition on some values of $\{x_i: i \in \bigcup_{j \notin {\cal D}} A_j \}$ such that this is the case. In this event, $f(x) = 1$ iff we have $\sum_{i \in A_j} x_i = a$ for at least one of the parts $j \in {\cal D}$. This happens with constant probability (since $2^{a-1} < |{\cal D}| \leq 2^a$), bounded away from both $0$ and $1$. Hence, $f(x)$ is either $1$ or at most $1-1/a$, both with constant nonzero probabilities.

On the other hand, our function $g$ does not depend on the variables in $\bigcup_{j \in {\cal D}} A_j$ at all. Therefore, given the variables $\{x_i: i \in \bigcup_{j \notin {\cal D}} A_j \}$, $g(x)$ has a fixed value, and with constant probability it differs from $f(x)$ by at least $\frac{1}{2a}$. Overall, this happens with constant probability, and hence $\|f-g\|_1 = \Omega(1/a)$.
\end{proof}

\subsection{Lower Bound For Total $\ell_2^2$-influence}
\label{app:l2-example}

Here we show that a generalization of Friedgut's theorem to real-valued functions cannot use total $\ell_2^2$-influence only. A similar example also appears in \cite{ODonnellServedio:07}.

\begin{lemma}
There is an absolute constant $\alpha>0$ and a function $f:\on^n \rightarrow [-1,1]$ for any $n$, such that $\infl^2(f) \leq 1$, and for any function $g$ that depends only on $n/2$ variables, $\|f-g\|_1 \geq \alpha$.
\end{lemma}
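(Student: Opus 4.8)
The plan is to construct a function that is essentially a "thin" sum of many parity-like bumps so that its $\ell_2^2$-influence stays bounded by $1$ while no small junta captures the bulk of its $\ell_2$-mass. A clean choice is to take $f$ to be a scaled sum of singleton parities, i.e. $f(x) = \frac{c}{\sqrt{n}}\sum_{i=1}^n x_i$ over $\on^n$, where $x_i \in \{-1,1\}$ and $c$ is an absolute constant chosen so that $f$ has range essentially in $[-1,1]$ with constant probability (more carefully, one can truncate $f$ to $[-1,1]$, which only helps the lower bound and can be shown to not destroy the influence bound — but let me first explain why the untruncated version already has the right parameters). For this $f$, $\widehat{f}(\{i\}) = c/\sqrt{n}$ for each $i$ and all other Fourier coefficients vanish, so $\infl^2_i(f) = c^2/n$ and $\infl^2(f) = c^2$, which is $\leq 1$ for $c \leq 1$.

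The key point is that any $g$ depending only on a set $B$ of $n/2$ variables is, in $\ell_2$, best approximated by the conditional expectation, and the residual $f - \E[f \mid x_B]$ is exactly $\frac{c}{\sqrt{n}}\sum_{i\notin B} x_i$, a scaled sum of $n/2$ independent $\pm 1$ variables. Hence $\|f-g\|_2^2 \geq \Var_{x}\!\big(\frac{c}{\sqrt n}\sum_{i\notin B} x_i\big) = \frac{c^2}{n}\cdot \frac{n}{2} = \frac{c^2}{2}$, a constant. To pass from $\ell_2$ to $\ell_1$ one uses anticoncentration: by the Berry–Esseen theorem (or a direct second-moment/fourth-moment argument, since $\frac{c}{\sqrt n}\sum_{i \notin B} x_i$ has bounded fourth moment relative to its variance), $|f(x) - \E[f\mid x_B]|$ exceeds a constant with constant probability, and therefore also $|f(x) - g(x)| \geq \alpha$ with constant probability for some absolute $\alpha>0$, giving $\|f-g\|_1 = \Omega(1)$. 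Since $B$ was an arbitrary set of $n/2$ coordinates and $g$ an arbitrary function of those coordinates, this is the claimed statement.

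The one technical wrinkle — and the step I expect to require the most care — is reconciling the range requirement $f:\on^n \to [-1,1]$ with the influence bound $\infl^2(f) \leq 1$. The raw linear function $\frac{c}{\sqrt n}\sum x_i$ is unbounded; with $c$ a small constant it lies in $[-1,1]$ only with high (not full) probability. Two ways to handle this: (i) replace $x_i$ by independent "sparse" values — e.g. let $f(x) = \max_i(\ldots)$-free construction fail here, so instead (ii) truncate: set $f(x) = \operatorname{clip}\big(\frac{c}{\sqrt n}\sum_i x_i, -1, 1\big)$. Clipping is a $1$-Lipschitz postprocessing, so each discrete derivative $\partial_i f$ has magnitude at most that of the unclipped function, namely $2c/\sqrt n$; hence $\infl^2_i(f) \leq c^2/n$ and $\infl^2(f)\leq c^2 \leq 1$ still holds. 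Meanwhile, conditioned on $x_B$, as long as the partial sum over $B$ is within the bulk (which happens with constant probability), the clipped residual still has constant variance and constant-probability anticoncentration, so the $\ell_1$ lower bound survives with a possibly smaller absolute constant $\alpha$. Assembling these pieces — influence bound via Lipschitz postprocessing, residual-variance lower bound via orthogonality of parities, and $\ell_1$ anticoncentration via Berry–Esseen — completes the proof.
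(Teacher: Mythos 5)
Your proposal is correct and is essentially the paper's own argument: the paper uses exactly the truncated linear function $f(x)=\operatorname{clip}\bigl(\tfrac{1}{\sqrt n}\sum_i x_i,-1,1\bigr)$, bounds $\infl^2(f)\le 1$ via the $2/\sqrt n$ bound on each discrete derivative, and then conditions on the partial sum over the junta's coordinates lying in the bulk so that, with constant probability, the remaining sum pushes $f$ to $\pm 1$ on the side opposite the sign of $g(x_J)$, giving $|f-g|\ge$ const with constant probability. Your Berry--Esseen/anticoncentration phrasing of that last step is just a slightly different wrapping of the same sign argument, so no substantive difference or gap remains.
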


\begin{proof}
Let
\begin{itemize}
\item $f(x) = \frac{1}{\sqrt{n}} \sum_{i=1}^{n} x_i$ for $|\sum_{i=1}^{n} x_i| \leq \sqrt{n}$,
\item $f(x) = 1$ for $\sum_{i=1}^{n} x_i > \sqrt{n}$, and
\item $f(x) = -1$ for for $\sum_{i=1}^{n} x_i < -\sqrt{n}$.
\end{itemize}

The total $\ell_2^2$-influence is easy to estimate:
$$ \infl^2(f) = \sum_{i=1}^{n} \E[(\frac12 \partial_i f(x))^2]  \leq n \cdot \frac{1}{n} = 1.$$

Now assume that $g:\on^n \rightarrow \RR$ depends only on a subset of coordinates $J$, $|J| = n/2$.
Condition on any choice of values for $x_J$ such that $|\sum_{i \in J} x_i| < \sqrt{n}$. (This happens with constant probability for random $x_J$.) The remaining $n/2$ variables satisfy with constant probability $(\sgn \ g(x_J)) (\sum_{i \notin J} x_i) < -2\sqrt{n}$ (recall that $g$ depends only on $x_J$). This implies that $(\sgn \ g(x)) \cdot \sum_{i=1}^{n} x_i < -\sqrt{n}$; i.e., $g(x)$ and $f(x)$ have opposite signs and moreover $|\sum_{i=1}^{n} x_i| > \sqrt{n}$, so $|f(x)| = 1$. Thus with constant probability, $|f(x) - g(x)| \geq 1$.
\end{proof}

\section{Applications to PAC Learning}
\label{sec:applications-learning}
We now show that our approximation of submodular and low-influence functions by juntas can be used to give faster PAC and PMAC learning algorithms for these classes of functions.

\subsection{Preliminaries: Models of Learning}

We consider two models of learning based on the PAC model \cite{Valiant:84} which assumes that the learner has access to random examples of an unknown function from a known class of functions. In the first model we measure the performance of the learner by $\ell_1$-error between the target and the hypothesis, which generalizes the notion of disagreement error used for learning Boolean functions (\eg \cite{Haussler:92}).
\begin{definition}[PAC learning with $\ell_1$-error]
Let $\F$ be a class of real-valued functions on $\zo^n$ and let $D$ be a distribution on $\zo^n$. An algorithm $\A$ PAC-learns $\F$ on $D$, if given $\epsilon > 0$, for every target function $f \in \F$, given access to random  independent samples from $D$ labeled by $f$, with probability at least $2/3$,  $\A$ returns a hypothesis $h$ such that $\E_{x \sim D} [|f (x) - h(x) | ] \leq  \epsilon. $ $\A$ is said to be \emph{proper} if $h \in \F$. $\A$ is said to be \emph{efficient} if $h$ can be evaluated in polynomial time on any input and the running time of $\A$ is polynomial in $n$ and $1/\epsilon$.
\end{definition}
In some cases we bound the $\ell_2$-error of the hypothesis which also upper-bounds its $\ell_1$-error. While in general Valiant's model does not make assumptions on the distribution $D$, here we only consider the {\em distribution-specific} version of the model in which the distribution is fixed and is uniform over $\zo^n$.

The second model that we consider is the PMAC model introduced by Balcan and Harvey \cite{BalcanHarvey:12full}
which requires a multiplicative-factor approximation of the target function.
A PMAC learner with approximation factor $\alpha$ and error $\eps$ is an algorithm which outputs a hypothesis $h$ that satisfies $\Pr_{x \sim D} [f (x) \leq h(x) \leq \alpha f(x) ] \geq  1-\epsilon $. We say that $h$ multiplicatively $(\alpha,\eps)$-approximates $f$ over $D$ in this case.\footnote{The definition in \cite{BalcanHarvey:12full} uses the condition $h(x) \leq f(x) \leq \alpha h(x)$ which is equivalent up to scaling the hypothesis by $\alpha$.}

\subsection{Finding Influential Variables}
In order to exploit the fact that a submodular function can be approximated by a junta we need to find the variables of the junta. Unfortunately, the criterion for including variables given in Algorithm \ref{alg:junta} cannot be (efficiently) evaluated using random examples alone. Instead we give a general way to find a larger approximating junta whenever an approximating junta exists. For a real-valued $f$ over $\zo^n$ and $\eps \in [0,1]$ let $s_f(\eps)$ denote the smallest $s$ such that there exists an $s$-junta $g$ for which $\|f-g\|_2 \leq \eps$. For a set of indices $I \subseteq [n]$ we say that a function is an $I$-junta if it depends only on variables in $I$.
\begin{theorem}
\label{thm:find-junta-examples}
Let $f:\zo^n \rightarrow [0,1]$ be a submodular function. There exists an algorithm, that given any $\eps > 0$ and access to random and uniform examples of $f$, with probability at least $5/6$, finds a set of variables $I$ of size at most $32 \cdot (s_f(\eps/2))^2/\eps$ such that there exists a submodular $J$-junta $h$ for $J \subseteq I$ of size $s_f(\eps/2)$ satisfying $\|f -h\|_1 \leq \eps$. The algorithm runs in time $O(n^2 \log (n) \cdot (s_f(\eps/2))^4/\eps^2)$ and uses $O(\log (n) \cdot (s_f(\eps/2))^4/\eps^2)$ examples.
\end{theorem}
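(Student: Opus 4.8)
The plan is to reduce the problem to finding a variable set $I$ that contains the support of an $\ell_2$-optimal $s_f(\eps/2)$-junta, and to locate that support by estimating the low-order Fourier coefficients of $f$ from the random sample. Write $s = s_f(\eps/2)$ and fix an $s$-junta $g$ with $\|f-g\|_2 \le \eps/2$ depending on a set $J^*$, $|J^*|=s$. The best $J^*$-junta in $\ell_2$ is the conditional expectation $f^{J^*}(x)=\E_y[f(x_{J^*},y)]$, so $\|f-f^{J^*}\|_2 \le \|f-g\|_2 \le \eps/2$; averaging out coordinates preserves submodularity and the range $[0,1]$, so $f^{J^*}$ is a submodular $J^*$-junta with $\|f-f^{J^*}\|_1 \le \eps/2 \le \eps$. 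Hence it suffices to output $I$ of the required size with $J^*\subseteq I$ (then $J=J^*$, $h=f^{J^*}$ witness the conclusion). Moreover there is slack: since the target error is $\eps$ but the optimum is only $\eps/2$, $I$ is still acceptable if it misses a subset $K\subseteq J^*$ whose total $\ell_2^2$-influence is at most $\eps^2/4$ --- we then average $K$ out and pad $J=(J^*\setminus K)$ back to size $s$ with arbitrary elements of $I$. This slack is what lets us use a simple threshold rather than identifying $J^*$ exactly.

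The algorithm estimates, from the sample, every degree-$\le 2$ Fourier coefficient $\hat f(\{i\})=\E_\U[f\chi_{\{i\}}]$ and $\hat f(\{i,j\})=\E_\U[f\chi_{\{i,j\}}]$ (there are $O(n^2)$ of them, which is the source of the $n^2$ in the running time), forms the per-variable proxy
\[
\Phi_i \;=\; \hat f(\{i\})^2 + \sum_{j\neq i}\hat f(\{i,j\})^2 \;=\; \bigl\|\tfrac12(\partial_i f)^{\leq 1}\bigr\|_2^2 ,
\]
the degree-$\le 1$ Fourier weight of $\tfrac12\partial_i f$, and puts $i$ into $I$ iff its empirical estimate exceeds a threshold $\theta=\Theta(\eps/s^2)$. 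Since $\sum_i\Phi_i=\sum_{|S|=1}\hat f(S)^2+2\sum_{|S|=2}\hat f(S)^2\leq\sum_S|S|\hat f(S)^2=\infl^2(f)\leq\infl^1(f)\leq 2$ for submodular $f$ with range $[0,1]$ (Lemma~\ref{lem:submod} together with $\infl^2\leq\infl^1$ for $[0,1]$-valued functions), at most $2/\theta=O(s^2/\eps)$ variables clear the threshold, giving $|I|\leq 32 s^2/\eps$.

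The remaining --- and technically most delicate --- step is correctness: to show that the low-$\Phi$ variables of $J^*$ carry little $\ell_2^2$-influence, i.e.\ $\sum_{i\in J^*:\,\Phi_i<\theta}\infl^2_i(f)\leq\eps^2/4$, so that discarding them (one application of the averaging/concentration step behind Theorem~\ref{thm:submod-junta}, via Corollary~\ref{cor:Lipschitz}) changes $f$ by at most $\eps/2$ in $\ell_1$. This is where submodularity is essential: for $i\in J^*$ the marginal $\partial_i f$ equals, up to an $O(\eps)$ $\ell_2$-perturbation, $\partial_i f^{J^*}$, which is a \emph{bounded, monotone} function of the $\leq s-1$ remaining variables of $J^*$; such a function cannot concentrate its Fourier mass at high degree without having comparably large degree-$\leq 1$ mass, with only a $\mathrm{poly}(s)$ loss, so that $\infl^2_i(f)\leq\mathrm{poly}(s)\,\Phi_i+O(\eps^2/s)$. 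The clean way to get this is to apply hypercontractivity (Lemma~\ref{lem:hypercont}) to $\partial_i f$ exactly as in the proof of the real-valued Friedgut theorem (Theorem~\ref{th:rvjunta}), using the boundedness and monotonicity of the marginal that submodularity supplies; I expect this comparison to be the main obstacle.

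Finally, the sampling bound is routine: estimating each of the $O(n^2)$ coefficients to additive accuracy $\nu=\Theta(\theta)=\Theta(\eps/s^2)$ costs $O(\nu^{-2}\log(n/\delta))$ examples by Hoeffding's inequality, and a union bound over all coefficients with $\delta=1/(6n^2)$ makes the thresholding correct with probability $\geq 5/6$. Substituting $\nu=\Theta(\eps/s^2)$ yields the claimed $O(\log n\cdot s^4/\eps^2)$ sample complexity and $O(n^2\log n\cdot s^4/\eps^2)$ running time.
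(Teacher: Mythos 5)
Your algorithmic shell matches the paper's: estimate all degree-$1$ and degree-$2$ Fourier coefficients from $O(\log n\cdot s^4/\eps^2)$ examples, threshold to get $I$, and witness the junta by the conditional expectation of $f$ onto $I\cap J^*$ (which is indeed submodular and, by Lemma~\ref{lem:properties}(\ref{itm:closest}), at least as close as $g$). The gap is in the selection rule and its correctness, and it is not just the step you flag as unproven --- at your parameters the rule actually fails. You threshold $\Phi_i=\hat f(\{i\})^2+\sum_j\hat f(\{i,j\})^2$ at $\theta=\Theta(\eps/s^2)$ precisely so that $\sum_i\Phi_i\le\infl^2(f)\le 2$ gives $|I|\le 2/\theta=O(s^2/\eps)$. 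But consider the linear (hence monotone submodular) function $f(x)=c\sum_{i\in[s]}x_i$ with $c$ just below $2\sqrt{\theta}$ and $s=\Theta(1/\eps)$ tuned to the constant in your $\Theta$: every variable has $\Phi_i=c^2/4<\theta$ and there are no degree-$2$ coefficients, so $I=\emptyset$, yet the best junta over $I$ (a constant) has $\ell_1$-error $\approx 0.4\,c\sqrt{s}=\Theta(\sqrt{\eps/s})\gg\eps$, while $s_f(\eps/2)=\Theta(s)$. The root cause is the square root inherent in any $\ell_2$/Parseval accounting: correctness forces per-coefficient thresholds linear in $\eps$ (about $\eps/s$ for degree $1$ and $\eps/s^2$ for degree $2$), i.e.\ a $\Phi$-threshold of order $\eps^2/s^2$ or smaller, and then your size bound $2/\theta$ is far worse than $32s^2/\eps$. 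The deferred lemma $\infl^2_i(f)\le\mathrm{poly}(s)\,\Phi_i+O(\eps^2/s)$ cannot repair the bookkeeping either: even in the favorable unate case the generic relation is only $\infl^2_i\le\infl^1_i=|\hat f(\{i\})|\le\sqrt{\Phi_i}$ (and for non-monotone submodular $f$ the equality $|\hat f(\{i\})|=\infl^1_i$ fails outright), so up to $s$ discarded junta variables can contribute about $s\sqrt{\theta}\approx\sqrt{\eps}$, not $\eps/2$; moreover the purely multiplicative ``$\mathrm{poly}(s)$ loss'' claim is false in general (the indicator of the all-ones point on $s$ variables has $\|g\|_2^2=2^{-s}$ against degree-$\le 1$ weight $(s+1)4^{-s}$), so the additive term would have to do all the work, which again forces a much smaller threshold.

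The idea you are missing is the identity, special to submodular functions, $|\hat f(\{i,j\})|=\frac12\infl^1_j(\partial_i f)$, which holds because $\partial_{i,j}f\le 0$ pointwise, so $|\E[\partial_{i,j}f]|=\E[|\partial_{i,j}f|]$ (Lemma~\ref{lem:properties}(\ref{itm:second-is-infl})). The paper thresholds the individual coefficients ($|\hat f(\{i\})|$ at $\Theta(\eps/s)$, $|\hat f(\{i,j\})|$ at $\Theta(\eps/s^2)$) and argues entirely in $\ell_1$: if $i\in J^*$ is discarded, then $\|\partial_i f_{J^*}\|_1\le 2|\hat f(\{i\})|+\sum_{j\in J^*}2|\hat f(\{i,j\})|=O(\eps/s)$, so averaging $i$ out of $f_{J^*}$ costs $O(\eps/s)$ in $\ell_1$ and the at most $s$ discarded variables together cost at most $\eps/2$ (Lemma~\ref{lem:junta-error-bound}) --- no square roots and no hypercontractivity. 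The same identity also gives the junta-size bound: every selected variable satisfies $\infl^1_i(f)\ge\min(\alpha,\beta)=\Theta(\eps/s^2)$, and $\infl^1(f)\le 2$ (Lemma~\ref{lem:submod}) yields $|I|\le 32 s^2/\eps$ (Lemma~\ref{lem:important-var-bound}). In other words, the cardinality of $I$ is charged to the total $\ell_1$-influence rather than to $\infl^2$ via Parseval, and that is exactly what allows thresholds linear in $\eps$ without blowing up $|I|$ --- the two requirements that are irreconcilable in your version.
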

Our algorithm selects all variables that have a large degree-1 or 2 Fourier coefficient. This is the same algorithm as the one used in \cite{FeldmanKV:13} (with different values of thresholds). However the analysis in \cite{FeldmanKV:13} relies crucially on the spectral $\ell_1$-norm of an $\eps$-approximating function $g$ and gives a junta of size $\poly(\|\hat{g}\|_1)$. As can be seen from the lower bound in \cite{FeldmanKV:13}, the spectral $\ell_1$-norm of any function that $\eps$-approximates certain submodular functions must be exponential in $1/\eps$ and therefore this argument is not useful for our purposes. Instead we give a new and more general argument that relies on the fact that total $\ell_1$-influence of submodular functions is upper-bounded by a constant (Lemma \ref{lem:submod}).

For a function $f$ and a set of indices $I$, we define the {\em projection} of $f$ to $I$ to be the function over $\zo^n$ whose value depends only on the variables in $I$ and its value at $x_I$ is the expectation of $f$ over all the possible values of variables outside of $I$, namely $f_I(x) = \E_{y\sim \U}[f(x_I,y_{\bar{I}})]$.
We start by establishing several simple properties of projections and influences.
\begin{lemma}
\label{lem:properties}
Let $f:\zo^n \rightarrow \RR$ be any function, $i \in [n]$ and $I \subseteq [n]$. Then
\begin{enumerate}
\item for every $I$-junta $h$, $\|f-h\|_2 \geq \|f-f_I\|_2$; \label{itm:closest}
\item If $i \in I$ then $(\partial_i f)_I = \partial_i f_I$; \label{itm:partial-swap}
\item $\infl^1_{i}(f) \leq \|f\|_1$; \label{itm:infl-level}
\item $\infl^1_{i}(f_I) \leq \infl^1_{i}(f)$; \label{itm:infl-monot}
\item $|\hat{f}(\{i\})| \leq \infl^1_{i}(f)$; \label{itm:infl-from-chow}
\item \cite{FeldmanKV:13} for all $j \neq i$, $|\hat{f}(\{i,j\})| = \infl^1_{i}(\partial_j f)/2$; \label{itm:second-is-infl}
\item $\|f - f_I\|_1 \leq \sum_{j \not\in I} \infl^1_i(f)$; for all $J \subseteq [n]$, $\|f_J - f_{I\cap J}\|_1 \leq \sum_{j \in J\setminus I} \infl^1_j(f_J)$. \label{itm:project-error}
\end{enumerate}
\end{lemma}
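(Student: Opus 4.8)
The plan is to verify the seven items essentially one at a time, since each is a short consequence of a Fourier identity, Jensen's inequality, or a hybrid (telescoping) argument; the only item carrying real content is \ref{itm:project-error}, and the rest are one-liners. Throughout I would use the basic fact that $f_I$ is exactly the conditional expectation $\E[f\mid x_I]$, equivalently $f_I = \sum_{S\subseteq I}\hat f(S)\chi_S$.

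For \ref{itm:closest}: any $I$-junta $h$ has $\hat h(S)=0$ for $S\not\subseteq I$, so by Parseval $\|f-h\|_2^2=\sum_{S\subseteq I}(\hat f(S)-\hat h(S))^2+\sum_{S\not\subseteq I}\hat f(S)^2\ge \sum_{S\not\subseteq I}\hat f(S)^2=\|f-f_I\|_2^2$ (equivalently: on each fiber $\{x:x_I=\xi\}$, $f-f_I$ has fiber-mean zero while $f_I-h$ is constant on the fiber, so Pythagoras holds fiberwise and one averages over $\xi$). For \ref{itm:partial-swap}: when $i\in I$, averaging over coordinates in $\bar I$ and differencing coordinate $i$ act on disjoint coordinates, hence commute; explicitly, from $\partial_i f=-2\sum_{S\ni i}\hat f(S)\chi_{S\setminus i}$ both $(\partial_i f)_I$ and $\partial_i f_I$ equal $-2\sum_{S\ni i,\,S\subseteq I}\hat f(S)\chi_{S\setminus i}$. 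For \ref{itm:infl-level}: the triangle inequality gives $\infl^1_i(f)=\tfrac12\E[|f(x_{i\leftarrow1})-f(x_{i\leftarrow0})|]\le \tfrac12\E[|f(x_{i\leftarrow1})|]+\tfrac12\E[|f(x_{i\leftarrow0})|]=\|f\|_1$. For \ref{itm:infl-monot}: if $i\notin I$ then $f_I$ does not depend on $x_i$, so $\infl^1_i(f_I)=0$; if $i\in I$, then by \ref{itm:partial-swap} $\tfrac12\partial_i f_I=(\tfrac12\partial_i f)_I$ is a fiberwise average of $\tfrac12\partial_i f$, hence $\infl^1_i(f_I)=\|(\tfrac12\partial_i f)_I\|_1\le\|\tfrac12\partial_i f\|_1=\infl^1_i(f)$ by Jensen. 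For \ref{itm:infl-from-chow}: the constant Fourier coefficient of $\partial_i f$ is $-2\hat f(\{i\})$, so $\hat f(\{i\})=-\tfrac12\E[\partial_i f]$ and $|\hat f(\{i\})|\le\tfrac12\E[|\partial_i f|]=\infl^1_i(f)$.

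For \ref{itm:second-is-infl} (from \cite{FeldmanKV:13}): the constant coefficient of $\partial_{i,j}f$ is $4\hat f(\{i,j\})$, so $\hat f(\{i,j\})=\tfrac14\E[\partial_{i,j}f]$; invoking submodularity of $f$ (so $\partial_{i,j}f\le 0$ everywhere), $\E[|\partial_{i,j}f|]=|\E[\partial_{i,j}f|]=-4\hat f(\{i,j\})$, whence $\infl^1_i(\partial_j f)=\tfrac12\E[|\partial_{i,j}f|]=2|\hat f(\{i,j\})|$. Note this is the one place where the hypothesis that $f$ is submodular is actually used, even though the lemma is stated for arbitrary $f$; I would flag that.

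Finally, for \ref{itm:project-error} I would run a hybrid argument. Enumerate the variables outside $I$ as $j_1,\dots,j_m$ and set $g_0=f$, $g_k=f_{I\cup\{j_{k+1},\dots,j_m\}}$, so $g_m=f_I$ and $g_k$ is $g_{k-1}$ with coordinate $j_k$ averaged out. The pointwise identity $|g(x)-\E_{x_i}[g](x)|=\tfrac12|\partial_i g(x)|$ gives $\|g_{k-1}-g_k\|_1=\infl^1_{j_k}(g_{k-1})$, and since $g_{k-1}$ is itself a projection of $f$, item \ref{itm:infl-monot} yields $\infl^1_{j_k}(g_{k-1})\le\infl^1_{j_k}(f)$; summing with the triangle inequality gives $\|f-f_I\|_1\le\sum_k\|g_{k-1}-g_k\|_1\le\sum_{j\notin I}\infl^1_j(f)$. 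For the second inequality I would apply the first to $f_J$ in place of $f$ and $I\cap J$ in place of $I$, using that $(f_J)_{I\cap J}=f_{I\cap J}$ (averaging the $J$-junta $f_J$ over $\overline{I\cap J}$ is the same as averaging $f$) and that $\infl^1_j(f_J)=0$ for $j\notin J$, so the sum over $j\notin I\cap J$ collapses to $\sum_{j\in J\setminus I}\infl^1_j(f_J)$. The only real obstacle here is bookkeeping — making sure the intermediate hybrids are genuinely projections of $f$ so that \ref{itm:infl-monot} applies at each step — and the pointwise identity above, which is an immediate case check on $x_i\in\{0,1\}$.
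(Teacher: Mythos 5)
Your proposal is correct and follows essentially the same route as the paper's own proof: the same one-line Jensen/triangle-inequality arguments for items (1)--(5), the same use of $\partial_{i,j}f\le 0$ for item (6), and the same variable-by-variable averaging argument (one-step bound plus item (4) plus triangle inequality) for item (7), including the reduction of the second inequality to the first via $f_J$. Your remark that item (6) genuinely requires submodularity despite the ``any function'' phrasing is accurate and matches what the paper's proof implicitly assumes.
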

\begin{proof}
\begin{enumerate}
\item As is well-known, for any set of $m$ real values $a_1,\ldots,a_m$, the value of $\sum_i (b-a_i)^2$ is minimized when $b = \frac{1}{m}\sum a_i$. Therefore $f_I$ is the $I$-junta closest (in $\ell_2$-norm) to $f$.
\item For $b\in \zo$ let $f_b$ be defined as $f_{i \leftarrow b}(x) = f(x_{i \leftarrow b})$. First, observe that if $i \in I$ then we can exchange the restriction and projection operators on $f$, that is, for every $x$, $f_{i \leftarrow b,I}(x) = f_{I,i \leftarrow b}(x)$. Now
$$(\partial_i f)_I = \left(f_{i \leftarrow 1}-f_{i \leftarrow 0}\right)_I = f_{i \leftarrow 1,I} - f_{i \leftarrow 0,I} = f_{I,i \leftarrow 1} - f_{I,i \leftarrow 0} = \partial_i f_I\ .$$
\item
    $$\infl^1_{i}(f) = \E\left[\left|\frac{f(x_{i \leftarrow 1}) - f(x_{i \leftarrow 0})}{2}\right|\right] \leq \E\left[\frac{|f(x_{i \leftarrow 1})| + |f(x_{i \leftarrow 0})|}{2}\right] = \E[|f(x)|] = \|f\|_1 .$$
\item Convexity of $|\cdot |$ implies that for every function $g:\zo^n \rightarrow \RR$, $\|g_I\|_1 \leq \|g\|_1$. Together with property (\ref{itm:partial-swap}) this implies that
$$\infl^1_i(f_I) = \E[|\partial_i f_I|]/2 =  \E[|(\partial_i f)_I|]/2 \leq \E[|\partial_i f|]/2 = \infl^1_{i}(f)\ .$$
\item $$|\hat{f}(\{i\})| = |\E[\partial_i f]|/2 \leq \E[|\partial_i f|]/2 = \infl^1_{i}(f)\ .$$
\item $$|\hat{f}(\{i,j\})| =^{*} \fr{4} |\E[\partial_{i,j} f]| =^{**} \fr{4} \E_\U[|\partial_{i,j} f|] = \fr{2}\infl^1_{i}(\partial_j f) .$$
Here, $(*)$ follows from the basic properties of the Fourier spectrum of partial derivatives (see Sec.~\ref{sec:lowsens-prelims}) and $(**)$ is implied by second partial derivatives of a submodular function being always non-positive (see Sec.~\ref{sec:prelims}).
\item First,
\alequn{\|f- f_{[n]\setminus\{j\}}\|_1 &= \E\left[\left|\frac{f(x_{j \leftarrow 0}) + f(x_{j \leftarrow 1})}{2} - f(x)\right|\right] \leq \E\left[\frac{\left|f(x_{j \leftarrow 0}) -f(x)\right|}{2}\right] + \E\left[\frac{\left|f(x_{j \leftarrow 1}) -f(x)\right|}{2}\right] \\ &= \E[|(f(x_{j \leftarrow 1}) - f(x_{j \leftarrow 0})|/2] = \E[|\partial_j(f)|]/2 = \infl^1_j(f)\ .}
Together with property (\ref{itm:infl-monot}), this implies that for any $j \not\in I$, $\|f_{I\cup \{j\}} - f_I \|_1 \leq \infl^1_j(f_{I\cup \{j\}}) \leq \infl^1_j(f)$.
By applying this iteratively to all $j \not\in I$ and using the triangle inequality we obtain that $$\|f - f_I\|_1 \leq \sum_{j \not\in I} \infl^1_j(f)\ .$$ To obtain the second part we apply the first part to $f_J$ and obtain
$$\|f_J - f_{I\cap J}\|_1 \leq \sum_{j \not\in I} \infl^1_j(f_J)\ .$$ Observe that for all $j \not\in J$, $\infl^1_j(f_J) = 0$ and hence
$$\sum_{j \not\in I} \infl^1_j(f_J) = \sum_{j \in J\setminus I} \infl^1_j(f_J) \ .$$
\end{enumerate}
\end{proof}

We now prove that throwing away variables with small degree-1 or 2 Fourier coefficients does not affect a projection of $f$ to a small set of variables $J$ significantly.
\begin{lemma}
\label{lem:junta-error-bound}
Let $f:\zo^n \rightarrow \RR$ be a real-valued function and let $J \subseteq [n]$. Let
$$I' = \left\{ i \lcond |\hat{f}(\{i\})| \geq \frac{\eps}{2\cdot|J|} \right.\right\} \bigcup \left\{ i \lcond \exists j, |\hat{f}(\{i,j\})| \geq \frac{\eps}{2 \cdot |J|^2} \right.\right\}\ $$ and let $I \supseteq I'$. Then $\|f_J - f_{I\cap J}\|_1 \leq \eps$.
\end{lemma}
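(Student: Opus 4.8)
The plan is to derive the bound from the second part of Lemma~\ref{lem:properties}(\ref{itm:project-error}), which already gives
$\|f_J - f_{I\cap J}\|_1 \le \sum_{j\in J\setminus I}\infl^1_j(f_J)$, and then to prove a clean per-coordinate estimate: for every $j\in J\setminus I$ we have $\infl^1_j(f_J) < \eps/|J|$. Since $|J\setminus I|\le |J|$, summing these estimates immediately yields $\|f_J-f_{I\cap J}\|_1<\eps$. (The argument uses that $f_J$ is again submodular; this is where I appeal to item~(\ref{itm:second-is-infl}) below, which relies on non-positivity of the discrete second derivatives. As one should expect, some such hypothesis is needed — a large parity has all degree-$1$ and degree-$2$ Fourier coefficients zero yet is far from any constant.)

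To estimate $\infl^1_j(f_J)$ for a fixed $j\in J\setminus I$, note first that $j\notin I'$, so $|\hat f(\{j\})|<\eps/(2|J|)$ and $|\hat f(\{i,j\})|<\eps/(2|J|^2)$ for every $i$; moreover $\widehat{f_J}(\{j\})=\hat f(\{j\})$ and $\widehat{f_J}(\{i,j\})=\hat f(\{i,j\})$ since $\{j\},\{i,j\}\subseteq J$. I would split $\partial_j f_J$ into its mean and its fluctuation,
\[
\infl^1_j(f_J)=\tfrac12\E\big[|\partial_j f_J|\big]\le \tfrac12\big|\E[\partial_j f_J]\big|+\tfrac12\big\|\partial_j f_J-\E[\partial_j f_J]\big\|_1 .
\]
For the mean, $|\E[\partial_j f_J]| = 2|\widehat{f_J}(\{j\})| = 2|\hat f(\{j\})| < \eps/|J|$. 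For the fluctuation, apply the \emph{first} part of Lemma~\ref{lem:properties}(\ref{itm:project-error}) to the function $\partial_j f_J$ with the empty set (so that $(\partial_j f_J)_\emptyset = \E[\partial_j f_J]$): this gives $\|\partial_j f_J-\E[\partial_j f_J]\|_1\le \sum_i \infl^1_i(\partial_j f_J)$. Now invoke Lemma~\ref{lem:properties}(\ref{itm:second-is-infl}) for the submodular function $f_J$: $\infl^1_i(\partial_j f_J)=2|\widehat{f_J}(\{i,j\})|=2|\hat f(\{i,j\})|$, which vanishes unless $i\in J\setminus\{j\}$. Hence $\|\partial_j f_J-\E[\partial_j f_J]\|_1\le \sum_{i\in J\setminus\{j\}}2|\hat f(\{i,j\})| < |J|\cdot \tfrac{\eps}{|J|^2}=\tfrac{\eps}{|J|}$. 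Combining, $\infl^1_j(f_J) < \tfrac12(\eps/|J|+\eps/|J|)=\eps/|J|$.

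Finally, summing over $j\in J\setminus I$,
\[
\|f_J - f_{I\cap J}\|_1 \;\le\; \sum_{j\in J\setminus I}\infl^1_j(f_J)\;<\; |J\setminus I|\cdot\frac{\eps}{|J|}\;\le\;\eps ,
\]
which is the claim.

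I do not expect a serious obstacle here: the proof is essentially bookkeeping once the right two ingredients are identified. The one point that must be handled carefully is the role of submodularity — it enters only through item~(\ref{itm:second-is-infl}), i.e.\ through the fact that $\E[|\partial_{i,j}f_J|]=|\E[\partial_{i,j}f_J]|$ because $\partial_{i,j}f_J\le 0$, which is precisely what lets $\infl^1_i(\partial_j f_J)$ be controlled by the single small Fourier coefficient $|\hat f(\{i,j\})|$ rather than only lower-bounded by it. The other thing to get right is the matching of the two thresholds with the powers of $|J|$: the threshold $\eps/(2|J|)$ makes the degree-$1$ part contribute at most $\eps/(2|J|)$ per coordinate, and the threshold $\eps/(2|J|^2)$ together with the at-most-$|J|$ coordinates $i$ makes the degree-$2$ part contribute at most $\eps/(2|J|)$ per coordinate, so that $\infl^1_j(f_J)<\eps/|J|$ survives the final sum over the (at most $|J|$) discarded coordinates.
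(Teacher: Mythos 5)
Your proof is correct and takes essentially the same route as the paper's: the same reduction via Lemma~\ref{lem:properties}(\ref{itm:project-error}), the same split of $\partial_j f_J$ into its mean (which the paper writes as $(\partial_j f)_\emptyset = -2\hat{f}(\{j\})$) plus its fluctuation, and the same control of the fluctuation by $\infl^1_i(\partial_j f_J) \leq 2|\hat{f}(\{i,j\})|$ coming from Lemma~\ref{lem:properties}(\ref{itm:second-is-infl}), with identical bookkeeping of the two thresholds. Your side remark is also accurate: although the lemma is stated for arbitrary real-valued $f$, submodularity (non-positive second derivatives, entering through item~(\ref{itm:second-is-infl}) and the fact that projections preserve it) is implicitly needed, exactly as in the paper's own proof.
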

\begin{proof}
By Lem.~\ref{lem:properties}(\ref{itm:project-error}) we obtain that
\equ{\|f_J - f_{I\cap J}\|_1 \leq \sum_{i \in J\setminus I} \infl^1_i(f_J) = \fr{2}\sum_{i \in J\setminus I} \|\partial_i f_J\|_1. \label{eq:bound-total-error}} We now apply Lem.~\ref{lem:properties}(\ref{itm:project-error}) to $\partial_i f$ and the empty set projection:
\equ{\|(\partial_i f)_J - (\partial_i f)_\emptyset\|_1 \leq \sum_{j \in J\setminus\{i\}} \infl^1_j((\partial_i f)_J)\ . \label{eq:bound-partial}}
By Lem.~\ref{lem:properties}(\ref{itm:infl-monot},\ref{itm:second-is-infl}), $\infl^1_j((\partial_i f)_J) \leq \infl^1_j(\partial_i f) = 2|\hat{f}(\{i,j\})|$. For $i \not\in I$, $|\hat{f}(\{i,j\})| \leq \eps/(2|J|^2)$. By substituting this into equation (\ref{eq:bound-partial}) we get that
$$\|(\partial_i f)_J - (\partial_i f)_\emptyset\|_1 \leq \sum_{j \in J\setminus\{i\}} 2 \cdot\frac{\eps}{2 \cdot |J|^2} \leq \frac{\eps}{|J|} .$$
Now we note that $(\partial_i f)_\emptyset \equiv \E[\partial_i f] = -2\hat{f}(\{i\})$ and (by Lem.~\ref{lem:properties}(\ref{itm:partial-swap})) $(\partial_i f)_J = \partial_i f_J$. This implies that for $i \not\in I$,
$$ \|\partial_i f_J\|_1 \leq \|\partial_i f_J - (\partial_i f)_\emptyset\|_1 + \|(\partial_i f)_\emptyset\|_1 \leq
\frac{\eps}{|J|} +  2|\hat{f}(\{i\})| \leq \frac{2\eps}{|J|}\ .$$
Substituting this into equation (\ref{eq:bound-total-error}) we obtain that
$$\|f_J - f_{I\cap J}\|_1 \leq \fr{2}\sum_{i \in J\setminus I} \|\partial_i f_J\|_1 \leq  \fr{2}\sum_{i \in J\setminus I}  \frac{2\eps}{|J|} \leq \eps\ . $$
\end{proof}

We next bound the number of variables that have large degree-1 or degree-2 Fourier coefficient (a weaker bound is also implied by Parseval's identity).
\begin{lemma}
\label{lem:important-var-bound}
Let $f:\zo^n \rightarrow [0,1]$ be a submodular function and $\alpha,\beta > 0$. Let
$$I = \left\{ i \lcond |\hat{f}(\{i\})| \geq \alpha \right.\right\} \bigcup \left\{ i \lcond \exists j, |\hat{f}(\{i,j\})| \geq \beta \right.\right\}\ .$$ Then $|I| \leq \frac{2}{\min\{\alpha,\beta\}}$.
\end{lemma}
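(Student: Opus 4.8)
The plan is to show that each variable placed in $I$ individually accounts for at least $\min\{\alpha,\beta\}$ worth of single-coordinate $\ell_1$-influence, and then to invoke the global bound $\infl^1(f)\le 2$ from Lemma~\ref{lem:submod} (which is the only place submodularity enters). The fact that the claimed size bound is $2/\min\{\alpha,\beta\}$ rather than $2/\alpha+2/\beta$ already suggests treating both defining conditions of $I$ uniformly, via a lower bound on $\infl^1_i(f)$.

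For the degree-$1$ part, if $|\hat f(\{i\})|\ge\alpha$ then Lemma~\ref{lem:properties}(\ref{itm:infl-from-chow}) immediately gives $\infl^1_i(f)\ge|\hat f(\{i\})|\ge\alpha$. For the degree-$2$ part, suppose $|\hat f(\{i,j\})|\ge\beta$ for some $j\ne i$. Here I would use Lemma~\ref{lem:properties}(\ref{itm:second-is-infl}), which states $|\hat f(\{i,j\})|=\tfrac12\infl^1_i(\partial_j f)$; since the mixed discrete derivative is symmetric, $\partial_i\partial_j f=\partial_j\partial_i f$, we have $\infl^1_i(\partial_j f)=\infl^1_j(\partial_i f)$, and then Lemma~\ref{lem:properties}(\ref{itm:infl-level}) applied to the function $\partial_i f$ and coordinate $j$ gives $\infl^1_j(\partial_i f)\le\|\partial_i f\|_1=\E[|\partial_i f|]=2\infl^1_i(f)$. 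Chaining these, $\beta\le|\hat f(\{i,j\})|\le\infl^1_i(f)$. (If one prefers a self-contained derivation: $|\hat f(\{i,j\})|=\tfrac14|\E[\partial_{i,j}f]|\le\tfrac14\E[|\partial_{i,j}f|]$, and $|\partial_{i,j}f(x)|\le|(\partial_i f)(x_{j\leftarrow 1})|+|(\partial_i f)(x_{j\leftarrow 0})|$, so averaging over uniform $x$ gives $\E[|\partial_{i,j}f|]\le 2\,\E[|\partial_i f|]$, hence $|\hat f(\{i,j\})|\le\infl^1_i(f)$.)

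Combining the two cases, every $i\in I$ satisfies $\infl^1_i(f)\ge\min\{\alpha,\beta\}$. Summing over $i\in I$ and using that a submodular $f$ with range $[0,1]$ has $\sum_{i\in[n]}\infl^1_i(f)=\infl^1(f)\le 2$ by Lemma~\ref{lem:submod}, we obtain $|I|\cdot\min\{\alpha,\beta\}\le\sum_{i\in I}\infl^1_i(f)\le 2$, i.e.\ $|I|\le 2/\min\{\alpha,\beta\}$.

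I do not expect a real obstacle here: the only step that is not a one-line manipulation is the inequality $|\hat f(\{i,j\})|\le\infl^1_i(f)$, bounding a second-order Fourier coefficient by a first-order influence, and even that is a short triangle-inequality argument (equivalently a two-step application of parts (\ref{itm:second-is-infl}) and (\ref{itm:infl-level}) of Lemma~\ref{lem:properties}). The one point worth double-checking is that submodularity is used only through the bound $\infl^1(f)\le 2$; all the Fourier/influence identities and inequalities above hold for an arbitrary bounded real-valued $f$.
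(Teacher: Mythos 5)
Your proof is correct and follows essentially the same route as the paper: both arguments show that every $i\in I$ satisfies $\infl^1_i(f)\geq\min\{\alpha,\beta\}$ using Lemma~\ref{lem:properties}(\ref{itm:infl-level},\ref{itm:infl-from-chow},\ref{itm:second-is-infl}) (the paper's chain is exactly $\infl^1_i(f)=\frac12\|\partial_i f\|_1\geq\frac12\infl^1_j(\partial_i f)=|\hat f(\{i,j\})|$), and then sum and invoke $\infl^1(f)\leq 2$ from Lemma~\ref{lem:submod}. Your closing observation is also accurate: submodularity is needed only for the total-influence bound (and, in the paper's phrasing, for the equality in part (\ref{itm:second-is-infl}), though only the inequality direction is used), as your self-contained triangle-inequality derivation of $|\hat f(\{i,j\})|\leq\infl^1_i(f)$ confirms.
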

\begin{proof}
If $i\in I$ then either $|\hat{f}(\{i\})| \geq \alpha$ or $|\hat{f}(\{i,j\})| \geq \beta$ for some $j\neq i$. In the former case, by Lem.~\ref{lem:properties}(\ref{itm:infl-from-chow}), $\infl^1_i(f) \geq |\hat{f}(\{i\})| \geq \alpha$ and in the latter case, by Lem.~\ref{lem:properties}(\ref{itm:infl-level},\ref{itm:second-is-infl})
$$ \infl^1_i(f) = \fr{2}\|\partial_i f\|_1 \geq \fr{2}\infl^1_{j}(\partial_i f)= |\hat{f}(\{i,j\})| \geq \beta\ .$$
This implies that for all $i \in I$, $\infl^1_i(f) \geq \min\{\alpha,\beta\}$. By Lemma \ref{lem:submod}, $\infl^1(f) = \sum_{i \in [n]} \infl^1_i(f) \leq 2$. This gives the claimed bound on $|I|$.
\end{proof}

We are now ready to complete the proof of Theorem \ref{thm:find-junta-examples}.
\begin{proof}[Proof of Theorem \ref{thm:find-junta-examples}]
Let $J \subseteq [n]$ be a set of indices of size $s_f(\eps/2)$ such that there exists a $J$-junta $g$ for which $\|f-g\|_2 \leq \eps/2$. By Lem.~\ref{lem:properties}(\ref{itm:closest}), this implies that $\|f-f_J\|_1 \leq \|f-f_J\|_2 \leq \|f-g\|_2 \leq \eps/2$.
Let $$I' = \left\{ i \lcond |\hat{f}(\{i\})| \geq \frac{\eps}{4\cdot s_f(\eps/2)} \right.\right\} \bigcup \left\{ i \lcond \exists j, |\hat{f}(\{i,j\})| \geq \frac{\eps}{8 \cdot (s_f(\eps/2))^2} \right.\right\}\ .$$
By Lemma \ref{lem:junta-error-bound}, for any $I \supseteq I'$, $\|f_J - f_{I\cap J}\|_1 \leq \eps/2$. In particular, it is easy to see that  
$f_{I\cap J}$ is a submodular $(I\cap J)$-junta. Clearly, $J \cap I \subseteq I$ and $|J \cap I| \leq s_f(\eps/2)$. By the triangle inequality, $\|f - f_{I\cap J}\|_1 \leq \eps$.

All we need now is to find a small set of indices $I \supseteq I'$. We simply estimate degree-1 and 2 Fourier coefficients of $f$ to accuracy $\eps/(32 \cdot (s_f(\eps/2))^2)$ with confidence at least $5/6$ using random examples. Let $\tilde{f}(S)$ for $S\subseteq [n]$ of size 1 or 2 denote the obtained estimates. We define
$$I = \left\{ i \lcond |\tilde{f}(\{i\})| \geq \frac{3\eps}{16 \cdot s_f(\eps/2)} \right.\right\} \bigcup \left\{ i \lcond \exists j, |\tilde{f}(\{i,j\})| \geq \frac{3\eps}{32 \cdot (s_f(\eps/2))^2} \right.\right\}\ .$$
If estimates are within the desired accuracy, then clearly, $I \supseteq I'$. At the same time $I \subseteq I''$, where
$$I'' = \left\{ i \lcond |\hat{f}(\{i\})| \geq \frac{\eps}{8\cdot s_f(\eps/2)} \right.\right\} \bigcup \left\{ i \lcond \exists j, |\hat{f}(\{i,j\})| \geq \frac{\eps}{16 \cdot (s_f(\eps/2))^2} \right.\right\}\ .$$ By Lem.~\ref{lem:important-var-bound}, $|I''| \leq 32 \cdot (s_f(\eps/2))^2/\eps$.

Finally, to bound the running time we observe that, by the standard application of Chernoff bound with the union bound, $O(\log (n) \cdot  (s_f(\eps/2))^4/\eps^2)$ random examples are sufficient to obtain the desired estimates with confidence of $5/6$. The estimation of the coefficients can be done in $O(n^2 \log(n) \cdot (s_f(\eps/2))^4/\eps^2)$ time.
\end{proof}

Our main structural result together with Theorem \ref{thm:find-junta-examples} imply that, given random examples of a submodular function $f$, one can find $\tilde{O}(1/\eps^{5})$ variables such that there exists a submodular $\tilde{O}(1/\eps^2)$-junta over those variables $\eps$-close to $f$.
\begin{corollary}
\label{cor:find-junta-examples}
Let $f:\zo^n \rightarrow [0,1]$ be a submodular function. There exists an algorithm, that given any $\eps > 0$ and access to random and uniform examples of $f$, with probability at least $5/6$, finds a set of variables $I$ of size $\tilde{O}(1/\eps^{5})$ such that there exists
a submodular $J$-junta $h$ for $J \subseteq I$ of size $\tilde{O}(1/\eps^{2})$ satisfying $\|f -h\|_1 \leq \eps$. The algorithm runs in time $\tilde{O}(n^2 /\eps^{10})$ and uses $\tilde{O}(\log (n) / \eps^{10})$ examples.
\end{corollary}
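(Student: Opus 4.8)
The plan is to obtain this corollary as a direct composition of Theorem~\ref{thm:submod-junta} with the generic variable-finding procedure of Theorem~\ref{thm:find-junta-examples}. First I would invoke Theorem~\ref{thm:submod-junta} with error parameter $\eps/2$: it guarantees that every submodular $f:\zo^n\to[0,1]$ is $(\eps/2)$-close in $\ell_2$ to a submodular junta of size $O(\frac{1}{\eps^2}\log\frac1\eps)$, which in the notation of Theorem~\ref{thm:find-junta-examples} says exactly that $s_f(\eps/2) = \tilde O(1/\eps^2)$. Substituting this bound into the three quantities produced by Theorem~\ref{thm:find-junta-examples} then gives all the claimed estimates at once: a set $I$ of size $32(s_f(\eps/2))^2/\eps = \tilde O(1/\eps^5)$; a submodular $J$-junta $h$ with $J\subseteq I$, $|J|=s_f(\eps/2)=\tilde O(1/\eps^2)$, and $\|f-h\|_1\le\eps$; running time $O(n^2\log n\cdot(s_f(\eps/2))^4/\eps^2)=\tilde O(n^2/\eps^{10})$; and sample complexity $O(\log n\cdot(s_f(\eps/2))^4/\eps^2)=\tilde O(\log n/\eps^{10})$.

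The one point requiring care is that the algorithm in Theorem~\ref{thm:find-junta-examples} is phrased in terms of the a priori unknown quantity $s_f(\eps/2)$ — it uses this value both to set the degree-$1$ and degree-$2$ Fourier-coefficient thresholds defining $I$ and to choose the number of samples. I would resolve this by running the procedure with the \emph{upper bound} $s := O(\frac{1}{\eps^2}\log\frac1\eps)$ supplied by Theorem~\ref{thm:submod-junta} in place of $s_f(\eps/2)$, and then checking that the conclusions survive this replacement. The size bound $|I|\le 32 s^2/\eps$ follows directly from Lemma~\ref{lem:important-var-bound}, which depends only on the thresholds actually used. For the approximation guarantee, replacing $s_f(\eps/2)$ by the larger value $s$ only \emph{shrinks} the thresholds $\eps/(2s)$ and $\eps/(2s^2)$, hence the ``heavy'' coordinate set $I'$ of Lemma~\ref{lem:junta-error-bound} can only grow; so for the witness set $J$ of size $s_f(\eps/2)\le s$ realizing $\|f-f_J\|_1\le\eps/2$ we still get $\|f_J-f_{I\cap J}\|_1\le\eps/2$, and $f_{I\cap J}$ is a submodular junta on at most $|J|=s_f(\eps/2)=\tilde O(1/\eps^2)$ variables with $\|f-f_{I\cap J}\|_1\le\eps$ by the triangle inequality. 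The probability of success stays $5/6$ since it is governed solely by the accuracy of the empirical Fourier estimates, whose sample size is set using $s\ge s_f(\eps/2)$.

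I do not expect any genuine obstacle here beyond this bookkeeping: all the substantive work lies in Theorems~\ref{thm:submod-junta} and~\ref{thm:find-junta-examples}, and the corollary is just their combination together with the observation that the unknown optimal junta size may be replaced throughout by any valid upper bound on it.
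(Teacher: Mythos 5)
Your proposal is correct and follows exactly the paper's route: the corollary is obtained by plugging the bound $s_f(\eps/2)=\tilde O(1/\eps^2)$ from Theorem~\ref{thm:submod-junta} into Theorem~\ref{thm:find-junta-examples}. Your extra observation that the unknown $s_f(\eps/2)$ can safely be replaced by this explicit upper bound (since shrinking the Fourier-coefficient thresholds only enlarges $I'$, preserving the guarantee of Lemma~\ref{lem:junta-error-bound}) is a valid piece of bookkeeping that the paper leaves implicit.
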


For general low-influence functions we do not expect to be able to find the influential variables efficiently using random examples alone. For example, Boolean $k$-juntas have total $\ell_1$-influence of at most $k$ but finding the influential variables in $n^{o(k)}$ time is a notoriously hard open problem. However in the special case of monotone functions it is well-known that the influential variables can be found efficiently from random examples alone \cite{Servedio:04mondnf}. The detection of influential variables is based on a simple relationship between $\ell_1$-influences of a monotone (and even unate) function and its degree-1 Fourier coefficients.
\begin{lemma}
Let $f$ be a unate real-valued function. Then for every $i\in [n]$, $$ |\hat{f}(\{i\})| =  \infl_i^{1}(f) .$$
\end{lemma}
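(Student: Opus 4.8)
The plan is to isolate the only consequence of unateness that the statement needs: for each coordinate $i$, the discrete derivative $\partial_i f(x) = f(x_{i\leftarrow 1}) - f(x_{i\leftarrow 0})$ has a constant sign over all $x \in \zo^n$. Recall that $f$ being unate means precisely that for every $i\in[n]$ there is $\sigma_i \in \{-1,+1\}$ with $\sigma_i\,\partial_i f(x) \ge 0$ for all $x$ (equivalently, $f$ is monotone non-decreasing or non-increasing in the $i$-th coordinate, with the remaining coordinates held fixed, and the direction may depend on $i$). I would begin the proof by stating this explicitly, since the argument is coordinate-by-coordinate and does not require global monotonicity.

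Next I would record the two identities already present in the excerpt. Since $\partial_i f$ does not depend on the $i$-th coordinate, its Fourier expansion is $\partial_i f = -2\sum_{S \ni i}\hat f(S)\,\chi_{S\setminus\{i\}}$ (see Section~\ref{sec:lowsens-prelims}); taking the expectation keeps only the term with $S\setminus\{i\} = \emptyset$, i.e.\ $S=\{i\}$, so $\E[\partial_i f] = -2\hat f(\{i\})$ and hence $|\hat f(\{i\})| = \tfrac12\,|\E[\partial_i f]|$. On the other hand, by the definition of the $\ell_1$-influence, $\infl^1_i(f) = \|\tfrac12\partial_i f\|_1 = \tfrac12\,\E\big[\,|\partial_i f|\,\big]$.

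It then only remains to compare $|\E[\partial_i f]|$ with $\E[|\partial_i f|]$. In general Jensen's inequality gives $|\E[\partial_i f]| \le \E[|\partial_i f|]$, which is exactly the bound recorded in Lemma~\ref{lem:properties}(\ref{itm:infl-from-chow}); the point here is that unateness makes it an equality. Indeed, for a unate $f$ we have $\partial_i f(x) = \sigma_i\,|\partial_i f(x)|$ for every $x$, so $\E[\partial_i f] = \sigma_i\,\E[|\partial_i f|]$ and therefore $|\E[\partial_i f]| = \E[|\partial_i f|]$. Combining this with the two identities of the previous paragraph yields $|\hat f(\{i\})| = \tfrac12\,|\E[\partial_i f]| = \tfrac12\,\E[|\partial_i f|] = \infl^1_i(f)$, as claimed. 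There is no genuine obstacle in this argument; the only thing requiring a little care is to invoke the right formulation of unateness (constant sign of each partial derivative) so that the reasoning applies to one coordinate at a time, and to keep track of the factor $\tfrac12$ in the normalization of $\infl^1_i$.
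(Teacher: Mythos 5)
Your proof is correct and is essentially identical to the paper's: both reduce the claim to the observation that unateness forces $\partial_i f$ to have constant sign, so $\E[|\partial_i f|] = |\E[\partial_i f]| = 2|\hat f(\{i\})|$. The only difference is that you spell out the Fourier identity and the sign argument in more detail than the paper's one-line computation.
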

\begin{proof}
By definition, $\infl_i^{1}(f) = \E[|\partial_i f|]/2$. For a unate $f$, $\partial_i f$ is either non-negative for all $x$ or non-positive for all $x$. Therefore
$$\infl_i^{1}(f) = \E[|\partial_i f|]/2 = |\E[\partial_i f]|/2 = |\hat{f}(\{i\})| .$$
\end{proof}

Therefore to find influential variables it is sufficient to estimate the degree-1 Fourier coefficients (in the same way as in the proof of Thm.~\ref{thm:find-junta-examples}). As an immediate corollary of this observation and Cor.~\ref{cor:lowsens-junta-l1-spectral-bound} we get the following algorithm.
\begin{corollary}
\label{cor:find-junta-lowsens-examples}
Let $f:\zo^n \rightarrow [0,1]$ be any function. There exists an algorithm, that given any $\eps > 0$ and access to random and uniform examples of $f$, with probability at least $5/6$, finds a set of variables $I$ of size $2^{O(\infl^1(f)/\eps^{2})}$ such that there exists a function $p$ of Fourier degree $2\cdot \infl^1(f)/\eps^{2}$ over variables in $I$ satisfying $\|f - p\|_2 \leq \epsilon$. The algorithm runs in time $\tilde{O}(n) \cdot 2^{O(\infl^1(f)/\eps^{2})}$ and uses $\log (n) \cdot 2^{O(\infl^1(f)/\eps^{2})}$ examples.
\end{corollary}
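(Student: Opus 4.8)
The plan is to combine the structural statement of Corollary~\ref{cor:lowsens-junta-l1-spectral-bound} with a cheap procedure---based on the unate identity $\infl^1_i(f)=|\hat f(\{i\})|$ proved just above---that reads the influential variables off the degree-$1$ Fourier spectrum. Fix $d=2\,\infl^1(f)/\eps^2$ and $\alpha=2^{-4d}$ exactly as in Corollary~\ref{cor:lowsens-junta-l1-spectral-bound}, and set $I_\alpha=\{i:\infl^1_i(f)\geq\alpha\}$. That corollary provides a Fourier-degree-$d$ function $p$ depending only on variables in $I_\alpha$ with $\|f-p\|_2\leq\eps$, and since each $i\in I_\alpha$ contributes at least $\alpha$ to $\infl^1(f)$ we also have $|I_\alpha|\leq\infl^1(f)/\alpha=2^{O(\infl^1(f)/\eps^2)}$. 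Hence it suffices for the algorithm to output, with probability at least $5/6$, some set $I$ with $I_\alpha\subseteq I$ and $|I|=2^{O(\infl^1(f)/\eps^2)}$; the same $p$ then depends only on variables in $I$ and witnesses the claim.

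To produce such an $I$, estimate each degree-$1$ coefficient $\hat f(\{i\})=\E_\U[f(x)\chi_{\{i\}}(x)]$ by an empirical average over $m=O(\alpha^{-2}\log n)$ uniform examples. Since $|f\chi_{\{i\}}|\leq1$, Hoeffding's inequality makes each estimate $\tilde f(\{i\})$ within $\alpha/4$ of $\hat f(\{i\})$ except with probability $1/(6n)$, so a union bound over the $n$ coordinates gives accuracy for all of them with probability at least $5/6$. Because $\alpha^{-2}=2^{8d}=2^{O(\infl^1(f)/\eps^2)}$, this uses $\log(n)\cdot2^{O(\infl^1(f)/\eps^2)}$ examples and runs in time $\tilde O(n)\cdot2^{O(\infl^1(f)/\eps^2)}$. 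Output $I=\{i:|\tilde f(\{i\})|\geq\tfrac34\alpha\}$. Conditioned on all estimates being accurate: by the unate identity, $i\in I_\alpha$ forces $|\hat f(\{i\})|=\infl^1_i(f)\geq\alpha$, hence $|\tilde f(\{i\})|\geq\tfrac34\alpha$ and $i\in I$, so $I\supseteq I_\alpha$; and $I\subseteq\{i:|\hat f(\{i\})|\geq\tfrac12\alpha\}$, whose size is at most $\E[f^2]/(\alpha/2)^2\leq4/\alpha^2=2^{O(\infl^1(f)/\eps^2)}$ by Parseval (using that $f$ has range $[0,1]$). This yields all the stated guarantees.

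The step carrying the whole argument---and the only nonstandard one---is the detection of influential variables from random examples: without membership queries one cannot evaluate $\infl^1_i(f)$ (or even $\infl^2_i(f)$) directly, and for a completely general $f$ a variable of large $\ell_1$-influence may have $\hat f(\{i\})=0$ (e.g.\ a parity), so the degree-$1$ estimates alone are blind to it. The unate identity $\infl^1_i(f)=|\hat f(\{i\})|$ repairs this using only the $n$ degree-$1$ coefficients, which is exactly what pins the running time at $\tilde O(n)\cdot2^{O(\infl^1(f)/\eps^2)}$ and what makes the bound meaningful for the monotone functions (such as XOS) to which the corollary is applied. Everything else---the Hoeffding and union-bound estimates, the Parseval size bounds, and the $\ell_2$ approximation guarantee---is routine or inherited directly from Corollary~\ref{cor:lowsens-junta-l1-spectral-bound}.
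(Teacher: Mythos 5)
Your proposal is correct and follows essentially the same route as the paper: estimate the degree-$1$ Fourier coefficients from $\log(n)\cdot 2^{O(\infl^1(f)/\eps^2)}$ random examples, use the unate identity $\infl^1_i(f)=|\hat f(\{i\})|$ to certify that every variable in the set $I_\alpha$ from Corollary~\ref{cor:lowsens-junta-l1-spectral-bound} is retained, and bound the output size via thresholding (the paper leaves these estimation details to the proof of Theorem~\ref{thm:find-junta-examples}). Your remark that the argument genuinely needs unateness/monotonicity (the stated ``any function'' is only meaningful as applied to monotone functions such as XOS) matches the paper's own framing.
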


\subsection{Proper PAC Learning of Submodular Functions} 
In this section we use our junta approximation result and the algorithm for finding the influential variables to get a proper learning algorithm for submodular functions. The previous result on approximation by juntas \cite{FeldmanKV:13} only gives a doubly exponential $2^{2^{O(1/\eps^2)}}$ dependence of running time on $\eps$. This algorithm also serves as a step in our PMAC learning algorithm.

\begin{theorem}
\label{thm:proper-learning}
There exists an algorithm $\A$ that given $\eps > 0$ and access to random uniform examples of any submodular $f:\zon\rightarrow [0,1]$, with probability at least $2/3$, outputs a {\bf submodular} function $h$, such that $\|f-h\|_1 \leq \epsilon$. Further, $h$ is a $J$-junta for some $J$ of size $O(1/\epsilon^2 \cdot \log (1/\epsilon))$ variables, $\A$ also returns $J$ and runs in time $\tilde{O}(n^2/\eps^{10}) + 2^{\tilde{O}(1/\eps^2)}$ and uses $\tilde{O}(\log(n)/\eps^{10}) + 2^{\tilde{O}(1/\eps^2)}$ random examples.
\end{theorem}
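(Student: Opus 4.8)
The plan is to bootstrap the improper variable-finding algorithm of Corollary~\ref{cor:find-junta-examples} into a proper one by brute force over juntas, followed by a linear program that restores submodularity. First I would run the algorithm of Corollary~\ref{cor:find-junta-examples} with error parameter $\eps/4$; with probability $\ge 5/6$ it returns a set $I$ of $\tilde O(1/\eps^5)$ variables together with the guarantee that some $J^\ast\subseteq I$ of size $s:=s_f(\eps/8)$ admits a submodular $J^\ast$-junta $\eps/4$-close to $f$ in $\ell_1$. Inspecting the proof of Theorem~\ref{thm:find-junta-examples}, this junta can be taken to be the projection $f_{J^\ast}$ itself, which is a $[0,1]$-valued submodular $J^\ast$-junta (being an average of restrictions of $f$, each of which is submodular). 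By the explicit bound in Theorem~\ref{thm:submod-junta} we have $s=O(\tfrac1{\eps^2}\log\tfrac1\eps)$, and the algorithm uses this bound in place of $s_f(\eps/8)$.

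Second, since $|I|=\tilde O(1/\eps^5)$, there are only $\sum_{j\le s}\binom{|I|}{j}\le|I|^{s}=2^{\tilde O(1/\eps^2)}$ subsets $J\subseteq I$ with $|J|\le s$, and I would enumerate all of them. I would draw a single batch of $m=2^{\tilde O(1/\eps^2)}$ random examples (reused for every candidate $J$), and for each $J$ and each $x_J\in\{0,1\}^J$ estimate $f_J(x_J)=\E_y[f(x_J,y)]$ by the empirical average of the labels of the examples whose $J$-coordinates equal $x_J$. A Hoeffding bound together with a union bound over the $2^{\tilde O(1/\eps^2)}$ candidates $J$ and their $2^{|J|}$ subcubes shows that $m=2^{\tilde O(1/\eps^2)}$ examples suffice for all these estimates $\tilde f_J$ to be within $\eps/16$ of $f_J$ in $\ell_\infty$ with probability $\ge5/6$. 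For each $J$ I would then compute, by linear programming, a function $h_J:\{0,1\}^J\to[0,1]$ minimizing $\|h_J-\tilde f_J\|_\infty$ subject to the box constraints $0\le h_J\le1$ and the polyhedrally many submodularity constraints $\partial_{i,j}h_J\le0$ (one for each pair $i\neq j$ in $J$ and each setting of the remaining coordinates). This LP has $2^{|J|}$ variables and $2^{\tilde O(1/\eps^2)}$ constraints and so is solvable in $2^{\tilde O(1/\eps^2)}$ time; its output $h_J$ is viewed as a submodular $J$-junta on $\{0,1\}^n$ (a submodular function on $\{0,1\}^J$ extends to one on $\{0,1\}^n$ by ignoring the other coordinates, since $(A\cup B)\cap J=(A\cap J)\cup(B\cap J)$ etc.).

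Third, using a fresh batch of $2^{\tilde O(1/\eps^2)}$ examples I would estimate $\|f-h_J\|_1=\E_z|f(z)-h_J(z)|$ to additive accuracy $\eps/16$ for every candidate $J$ (again by Hoeffding and a union bound, $h_J$ being an explicitly known bounded function), obtaining estimates $\hat e_J$, and output $h:=h_{\hat J}$ for the $\hat J$ minimizing $\hat e_J$. To see this succeeds: for $J=J^\ast$, the point $f_{J^\ast}$ is feasible for the LP and within $\eps/16$ of $\tilde f_{J^\ast}$, so $\|h_{J^\ast}-\tilde f_{J^\ast}\|_\infty\le\eps/16$ and hence $\|h_{J^\ast}-f_{J^\ast}\|_1\le\|h_{J^\ast}-f_{J^\ast}\|_\infty\le\eps/8$; combined with $\|f-f_{J^\ast}\|_1\le\eps/4$ this gives $\|f-h_{J^\ast}\|_1\le 3\eps/8$, so $\hat e_{\hat J}\le\hat e_{J^\ast}\le3\eps/8+\eps/16$ and finally $\|f-h_{\hat J}\|_1\le\hat e_{\hat J}+\eps/16\le\eps$. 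The output is a submodular $\hat J$-junta with $|\hat J|\le s=O(\tfrac1{\eps^2}\log\tfrac1\eps)$, and $\hat J$ is returned. The running time is $\tilde O(n^2/\eps^{10})$ for Corollary~\ref{cor:find-junta-examples} plus $2^{\tilde O(1/\eps^2)}$ for enumeration, estimation and the LPs, and the sample complexity is $\tilde O(\log(n)/\eps^{10})+2^{\tilde O(1/\eps^2)}$; choosing the constants in the three high-probability steps appropriately makes the total failure probability at most $1/3$.

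The main obstacle is not computational but structural: in contrast to improper learning, the empirical estimate $\tilde f_J$ is generally not submodular, so it cannot be output directly. The key observations making this benign are that the set of $[0,1]$-valued submodular functions on $\{0,1\}^J$ is a polytope cut out by $2^{O(|J|)}=2^{\tilde O(1/\eps^2)}$ linear inequalities, so projecting onto it is an LP of affordable size, and that the true projection $f_{J^\ast}$ is itself a feasible point of that LP lying close to $\tilde f_{J^\ast}$, so this correction step only loses a constant factor in the error. A secondary point to handle carefully is the sampling: estimating $f_J$ on all $2^{|J|}$ subcubes is exactly what forces the $2^{\tilde O(1/\eps^2)}$ sample complexity, and one must union-bound over all $2^{\tilde O(1/\eps^2)}$ candidate juntas, which is affordable because each contributes only $\mathrm{poly}(1/\eps)$ to the exponent after taking logarithms.
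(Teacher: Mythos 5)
Your proposal is correct, and it shares the paper's overall architecture (run Corollary~\ref{cor:find-junta-examples} to get a candidate set $I$ of size $\tilde O(1/\eps^5)$, enumerate all $2^{\tilde O(1/\eps^2)}$ subsets $J\subseteq I$ of size $O(\frac{1}{\eps^2}\log\frac{1}{\eps})$, and for each solve an LP over the polytope of $[0,1]$-valued submodular $J$-juntas using $2^{\tilde O(1/\eps^2)}$ samples), but the statistical middle of the argument is genuinely different. The paper performs empirical $\ell_1$-risk minimization directly: for each $J$ it minimizes the empirical $\ell_1$-error on the sample subject to submodularity constraints, and justifies this via a uniform convergence bound (over \emph{all} bounded $J$-juntas, union-bounded over the candidate $J$'s), returning the first candidate whose empirical error is at most $3\eps/4$. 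You instead estimate the projection $f_J$ pointwise on each of the $2^{|J|}$ subcubes by conditional averaging, project the estimate in $\ell_\infty$ onto the submodular polytope, and then do model selection with a fresh validation sample; correctness hinges on the fact that for the right $J^\ast$ the true projection $f_{J^\ast}$ is itself a feasible point of the LP that is $\ell_1$-close to $f$. Your route needs only Hoeffding/Chernoff bounds (no uniform convergence over a continuous function class) at the cost of a per-subcube sample-size argument and an extra validation phase, and it yields the same asymptotic time and sample bounds. One small remark: you do not actually need to ``inspect the proof'' of Theorem~\ref{thm:find-junta-examples} to know that $f_{J^\ast}$ is close to $f$ — from its statement alone, if $h$ is a submodular $J^\ast$-junta with $\|f-h\|_1\leq\eps/4$, then by convexity of the absolute value (as in Lemma~\ref{lem:properties}) $\|f_{J^\ast}-h\|_1=\|(f-h)_{J^\ast}\|_1\leq\|f-h\|_1\leq\eps/4$, so the triangle inequality gives $\|f-f_{J^\ast}\|_1\leq\eps/2$, and submodularity of $f_{J^\ast}$ follows since it is an average of restrictions of $f$; this would make your second paragraph self-contained up to constants.
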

\paragraph{The proper learning algorithm.}
\begin{enumerate}
\item Run the algorithm from Cor.~\ref{cor:find-junta-examples} to find a set of variables $I$ of size $s$ such that there exists a submodular $t$-junta $g$ over variables in $I$ satisfying $\|f -g\|_1 \leq \eps/2$ (with probability at least $5/6$).
\item Request $m$ random examples: $(x^1,f(x^1)),(x^2,f(x^2)),\ldots,(x^m,f(x^m))$.
\item FOR every subset $J\subseteq I$ of size $t$ DO
\begin{enumerate}
\item Solve an LP to find a $J$-junta $h:\zo^n \rightarrow [0,1]$ that minimizes $\frac{1}{m}\sum_{i\leq m} |f(x^i) - h(x^i)|$ with constraints requiring that $h$ be submodular.
\item If $\frac{1}{m}\sum_{i\leq m} |f(x^i) - h(x^i)| \leq 3\eps/4$ then return $h$, $J$ and terminate.
\end{enumerate}
\item Return $h \equiv 0$.
\end{enumerate}
\begin{proof}
The specific choices of $s =\tilde{O}(1/\eps^{5})$ and $t = O(1/\epsilon^2 \cdot \log (1/\epsilon))$ are determined by Cor.~\ref{cor:find-junta-examples}. We choose the number of examples $m$ so as to ensure that, with probability at least $5/6$, for every $J$-junta $h$ such that $J \subseteq I$ and $|J| = t$, \equ{\left|\E[|f(x)-h(x)|] - \frac{1}{m}\sum_{i\leq m} |f(x^i) - h(x^i)| \right| \leq \frac{\eps}{4}\ .\label{eq:uniform-conv}}
Standard uniform convergence bounds \cite{Vapnik:98}
\eat{
\footnote{Alternatively one can obtain the bound as follows: (1) it is sufficient to obtain equation (\ref{eq:uniform-conv}) with $\eps/8$ in place of $\eps/4$ for all functions $h$ whose values are discretized with granularity $\eps/8$; (2) there are at most $(8/\eps)^{2^t}$ different $J$-juntas with such granularity and range $[0,1]$; (3) we can use the union bound over Chernoff bounds applied to each individual function to obtain essentially the same bound on $m$.}}
imply that for any fixed set $J$, using $O(2^t/\eps^2 \cdot \log(1/\delta))$ examples will suffice to make sure that equation (\ref{eq:uniform-conv}) holds with probability at least $1-\delta$ for all $J$-juntas with range $[0,1]$. Using the union bound over all ${s \choose t} = 2^{\tilde{O}(1/\eps^2)}$ subsets of $I$ we get that $m=2^{\tilde{O}(1/\eps^2)}$ will suffice to achieve the desired guarantee.

Now, by Cor.~\ref{cor:find-junta-examples}, there exist $J' \subseteq I$ of size $t$ and a submodular $J'$-junta $g$ that satisfies $\|f -g\|_1 \leq \eps/2$. By equation (\ref{eq:uniform-conv}), $\frac{1}{m}\sum_{i\leq m} |f(x^i) - g(x^i)| \leq \|f -g\|_1+\eps/4 \leq 3\eps/4$. This implies that when $J=J'$ the solution of LP will be returned as a hypothesis and the algorithm will not reach Step (4) (assuming that Step (1) is successful and equation (\ref{eq:uniform-conv}) holds).

For any $h$ returned as a hypothesis, $\frac{1}{m}\sum_{i\leq m} |f(x^i) - h(x^i)| \leq 3\eps/4$ and therefore by equation (\ref{eq:uniform-conv}), $\|f -h\|_1 \leq \frac{1}{m}\sum_{i\leq m} |f(x^i) - h(x^i)| +\eps/4 \leq \eps$. This implies that if Step (1) is successful and equation (\ref{eq:uniform-conv}) holds then the algorithm will output a hypothesis $h$ with $\ell_1$-error of at most $\eps$. These conditions hold with probability at least $5/6$.

For a $t$-junta $h$, the minimization of $\ell_1$-error on examples, submodularity and range $[0,1]$ can all be expressed in a linear program with $O(t^2 \cdot 2^t)$ constraints on the values of $h$ at $2^t$ points. The solution to this LP can be found in time $2^{O(t)}$. Therefore the total running time of Step (3) is ${s \choose t} \cdot 2^{O(t)} =  2^{\tilde{O}(1/\eps^2)}$. Combining this with the bounds on the running time and the number of examples from Cor.~\ref{cor:find-junta-examples} we get the claimed bounds.
\end{proof}

\subsection{PMAC Learning of Submodular Functions}
\label{sec:PMAC-learning}
We now show that approximation by a junta can also be used to obtain a PMAC learning algorithm for submodular functions. Our algorithm is based on a reduction from multiplicative approximation to additive approximation. Specifically we use the additive approximation algorithm (Thm.~\ref{thm:proper-learning}) to find a function $g:\zon \rightarrow [0,1]$ over a set of variables $J \subseteq [n]$ that has low $\ell_1$-error. We then prove that,  for at least $1/10$ fraction of values $z \in \zo^J$, $g$ gives a multiplicative approximation to $f$ on at least $1-\eps$ fraction of points $(z,y)$ for $y \in \zo^{\bar{J}}$. This reduces the problem to finding multiplicative approximation to $f$ for values $z$ where the above guarantee does not hold. In other words, we reduce the problem to $\frac{9}{10} 2^{|J|}$ instances of the same problem on a subcube of $\zon$ and execute our algorithm recursively for each of those instances. Importantly, this step solves the problem on $1/10$-fraction of all the inputs and therefore the depth of the recursion needs to be at most $O(\log{(1/\eps)})$. This makes the total number of executions of this procedure upper bounded by $2^{O(|J| \cdot \log{(1/\eps)})}$.

A crucial property of submodular functions that is needed for this reduction step to work is that when a (non-negative) submodular function $f$ is scaled so that $\|f\|_\infty = 1$, then $f$ equals at least some constant $c_1$ on at least a constant fraction of inputs. We obtain this property from the following lemma (from \cite{FeigeMV:07}).
\begin{lemma}
\label{lem:submod-norms}
Let $f:\zon \rightarrow \RR_+$ be a submodular function. Then $\|f\|_1 \geq \frac14 \|f\|_\infty$.
\end{lemma}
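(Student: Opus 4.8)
The plan is to pass to the multilinear extension and exploit its concavity along monotone directions. Write $F:[0,1]^n\to\RR_+$ for the multilinear extension of $f$, i.e.\ $F(x)=\E[f(\hat x)]$ where $\hat x$ has independent $\{0,1\}$ coordinates with $\E[\hat x_i]=x_i$. Then $\|f\|_1=\E_{x\sim\U}[f(x)]=F(\fr2\b1)$ since $f\ge 0$. The only property of $F$ needed is that it is concave along every coordinatewise-nonnegative direction: submodularity gives $\partial_{i,j}f\le 0$ for all $i\ne j$, hence $\partial^2 F/\partial x_i\partial x_j\le 0$ for $i\ne j$, while $\partial^2 F/\partial x_i^2=0$ by multilinearity; thus $d^\top\nabla^2F(x)\,d=\sum_{i\ne j}d_id_j\,\partial^2F/\partial x_i\partial x_j\le 0$ whenever $d\ge 0$. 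Consequently, whenever $u\le v$ coordinatewise, $t\mapsto F((1-t)u+tv)$ is concave on $[0,1]$, so $F(\tfrac{u+v}{2})\ge\tfrac12 F(u)+\tfrac12 F(v)$.

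Let $T\subseteq[n]$ attain $f(T)=\|f\|_\infty$, and apply this midpoint inequality twice along nonnegative directions. First, $\fr2\b1$ is the midpoint of the segment with endpoints $\fr2\b1_{\bar T}$ and $\b1_T+\fr2\b1_{\bar T}$, whose difference is the nonnegative vector $\b1_T$, so
\[ F(\tfrac12\b1)\ \ge\ \tfrac12\,F(\tfrac12\b1_{\bar T})\ +\ \tfrac12\,F(\b1_T+\tfrac12\b1_{\bar T}). \]
Second, $\b1_T+\fr2\b1_{\bar T}$ is the midpoint of the segment with endpoints $\b1_T$ and $\b1$ (difference $\b1_{\bar T}\ge 0$), so $F(\b1_T+\fr2\b1_{\bar T})\ge\fr2 F(\b1_T)+\fr2 F(\b1)=\fr2 f(T)+\fr2 f([n])$. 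Discarding the nonnegative quantities $F(\fr2\b1_{\bar T})$ and $f([n])$, we conclude $\|f\|_1=F(\fr2\b1)\ge\fr4 f(T)=\fr4\|f\|_\infty$. Equivalently, this is the Feige--Mirrokni--Vondr\'ak random-set argument: sample $R\sim\U$, condition on $A=R\cap T$, average out $B=R\setminus T$ using that $B\mapsto f(A\cup B)$ is submodular and nonnegative to get $\ge\fr2 f(A)$, then average out $A$ over subsets of $T$ to get $\ge\fr4 f(T)$.

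The one point to be careful about is the \emph{order} of operations: applying submodularity directly between $R$ and $T$, in the form $f(R)\ge f(R\cup T)+f(R\cap T)-f(T)$, produces a $-f(T)$ term that cancels all the gains and yields only the trivial bound $\|f\|_1\ge 0$. The constant $\fr14$ (which is tight, e.g.\ for the directed-cut function of a perfect matching, where the maximum equals $|E|$ but a uniformly random set cuts each edge with probability $\fr14$) instead requires bounding the ``top-of-subcube'' contributions by nonnegativity and using the concavity of the multilinear extension as above. This is exactly the argument of \cite{FeigeMV:07}.
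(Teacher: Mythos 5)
Your proof is correct. The paper does not prove Lemma~\ref{lem:submod-norms} itself---it imports it directly from \cite{FeigeMV:07}---and your argument (concavity of the multilinear extension along nonnegative directions, applied on the segment from $\tfrac12\b1_{\bar T}$ to $\b1_T+\tfrac12\b1_{\bar T}$ and then from $\b1_T$ to $\b1$, discarding the nonnegative terms $F(\tfrac12\b1_{\bar T})$ and $f([n])$) is exactly the Feige--Mirrokni--Vondr\'ak random-subset argument, so it correctly supplies the omitted proof, and your tightness example (directed cuts of a perfect matching) is the standard one showing the constant $\tfrac14$ cannot be improved.
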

We note that this property also holds for XOS functions (with $\frac12$ instead of $\frac14$; see \cite{Feige:06}).
Together with Chernoff-Hoeffding's bound, Lemma \ref{lem:submod-norms} also implies the following lemma.
\begin{lemma}

\label{lem:submod-sample}
There is a constant $c>0$ such that for any submodular $f:\zon \rightarrow \RR_+$, $\gamma \in (0,1)$, any integer $t$, and $t$ points $x^1,x^2,\ldots,x^t$ drawn randomly and uniformly from $\zon$, it holds that $$\Pr\left[\left|\fr{t}\sum_{i\in [t]} f(x^i) - \E[f]\right| \geq \gamma \E[f] \right] \leq 2 e^{-c t \gamma^2} .$$
\end{lemma}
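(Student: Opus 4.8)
The plan is to deduce this from Lemma~\ref{lem:submod-norms} together with a standard Hoeffding bound for bounded i.i.d.\ random variables. The only point requiring care is that $f$ is \emph{a priori} unbounded, so Hoeffding cannot be applied to $f$ directly; the submodular range inequality is exactly what removes this obstruction.

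I would first dispose of the degenerate case: if $\E[f] = 0$ then, since $f$ is non-negative, $f \equiv 0$, and both $\fr{t}\sum_{i\in[t]} f(x^i)$ and $\E[f]$ equal $0$, so the statement is trivial; hence I may assume $\E[f] > 0$. Next I would invoke Lemma~\ref{lem:submod-norms}, which gives $\E[f] = \|f\|_1 \geq \fr{4}\|f\|_\infty$, so that $0 \leq f(x) \leq 4\E[f]$ for every $x \in \zon$. Thus the range bound for non-negative submodular functions converts the a priori unbounded quantity $f$ into one whose range is controlled by its own mean.

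Then I would set $Y_i = f(x^i)$ for $i \in [t]$; these are i.i.d.\ random variables taking values in $[0,M]$ with $M = 4\E[f]$ and $\E[Y_i] = \E[f]$. Hoeffding's inequality gives, for any $\lambda > 0$,
$$\Pr\left[\left|\fr{t}\sum_{i \in [t]} Y_i - \E[f]\right| \geq \lambda\right] \leq 2\exp\left(-\frac{2t\lambda^2}{M^2}\right).$$
Taking $\lambda = \gamma\E[f]$ and using $M = 4\E[f]$ makes the exponent $-2t\gamma^2\E[f]^2/(16\E[f]^2) = -t\gamma^2/8$, which yields the claim with $c = \fr{8}$.

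The argument is essentially routine once Lemma~\ref{lem:submod-norms} is in hand; the only (minor) obstacle is recognizing that the inequality $\|f\|_\infty \leq 4\|f\|_1$ is precisely what makes a Chernoff/Hoeffding-type estimate applicable to an unbounded submodular function, together with separating out the degenerate case $\E[f] = 0$.
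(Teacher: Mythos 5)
Your proof is correct and follows essentially the same route as the paper: both invoke Lemma~\ref{lem:submod-norms} to bound $\|f\|_\infty \leq 4\E[f]$ and then apply a Chernoff--Hoeffding bound to the i.i.d.\ variables $f(x^i)$ with range controlled by the mean. Your explicit treatment of the case $\E[f]=0$ and the explicit constant $c=1/8$ are harmless additions; no gap.
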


\begin{proof}
Suppose $\|f\|_\infty = M$, hence $f(x^i)$ are independent random variables in $[0,M]$. By Chernoff-Hoeffding bounds,
$\Pr[| \frac{1}{t} \sum_{i=1}^{t} f(x^i) - \E[f]| > \delta M] \leq 2 e^{-c' \delta^2 t}$ for some $c'>0$. We also have $\E[f] = \|f\|_1 \geq \frac14 M$, hence we can set $\delta = \frac14 \gamma$ and the lemma follows.
\end{proof}

We now present the details of the algorithm and its analysis.
\begin{theorem}[Thm.~\ref{thm:pmac-learn-submod-intro} restated]
\label{thm:pmac-learn-submod}
There exists an algorithm $\A$ that given $\gamma,\eps \in (0,1]$ and access to random and uniform examples of any submodular function $f:\zo^n \rightarrow \RR_+$, with probability at least $2/3$, outputs a function $h$ which multiplicatively $(1+\gamma,\eps)$-approximates $f$ (over the uniform distribution). Further, $\A$ runs in time $\tilde{O}(n^2) \cdot 2^{\tilde{O}(1/(\eps\gamma)^2)}$ and uses $\log(n) \cdot 2^{\tilde{O}(1/(\eps\gamma)^2)} $ examples.
\end{theorem}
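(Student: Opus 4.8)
The plan is to reduce multiplicative approximation to the additive ($\ell_1$) approximation of Theorem~\ref{thm:proper-learning}, by recursively partitioning $\{0,1\}^n$ into subcubes on each of which a \emph{constant} hypothesis works multiplicatively. \emph{Normalization.} Since multiplicative $(1+\gamma,\epsilon)$-approximation is invariant under rescaling $f$ by a positive constant, I would first estimate $\mu=\E[f]$ to within a factor $2$ from $O(1)$ random examples (Lemma~\ref{lem:submod-sample}), and replace $f$ by $f/(8\hat\mu)$; by Lemma~\ref{lem:submod-norms} ($\|f\|_1\geq\tfrac14\|f\|_\infty$) the rescaled function has range in $[0,1]$ and mean at least a universal constant $c_0>0$ (the case $\E[f]=0$, i.e.\ $f\equiv 0$, is trivial). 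Whenever the algorithm recurses into a subcube it renormalizes the restriction of $f$ to that subcube the same way — a restriction of a submodular function is submodular, so Lemma~\ref{lem:submod-norms} still applies — tracking the product of scaling factors so that the final hypothesis is a multiplicative approximation of the original $f$.

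\emph{The reduction step (key lemma).} Run Theorem~\ref{thm:proper-learning} with error parameter $\eta=c_1\epsilon\gamma$ for a small universal constant $c_1$, obtaining a submodular $J$-junta $g$ with $\|f-g\|_1\leq\eta$ and $|J|=O(\eta^{-2}\log(1/\eta))=\tilde O(1/(\epsilon\gamma)^2)$. The claim is that for a constant fraction of the assignments $z\in\{0,1\}^J$, the constant $h_z:=(1+\gamma/3)\,g(z)$ satisfies $\Pr_{y\in\{0,1\}^{\bar J}}[\,f(z,y)\leq h_z\leq(1+\gamma)f(z,y)\,]\geq 1-\epsilon$. To prove it, set $\mu_z=\E_y[f(z,y)]$ and $A=\{z:\mu_z\geq c_0/2\}$; since $\mu_z\leq\|f\|_\infty\leq 1$ and $\E_z[\mu_z]=\E[f]\geq c_0$, a reverse Markov inequality gives $\Pr_z[A]=\Omega(1)$. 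Since $\E_z\E_y|f(z,y)-g(z)|=\|f-g\|_1\leq\eta$, a Markov inequality restricted to $A$ shows that a constant fraction of all $z$ satisfy both $\mu_z\geq c_0/2$ and $\E_y|f(z,y)-g(z)|=O(\eta)$. For such a $z$, $|g(z)-\mu_z|=O(\eta)$, hence $g(z)=\Omega(1)$, and by Markov over $y$ the fraction of $y$ with $|f(z,y)-g(z)|>O(\eta/\epsilon)=O(\gamma)$ is at most $\epsilon$; taking $c_1$ small makes $O(\gamma)\ll\gamma\,g(z)$, so on a $1-\epsilon$ fraction of $y$ we get $f(z,y)\in[(1-\gamma/3)g(z),(1+\gamma/3)g(z)]$, which yields the multiplicative bound with $h_z=(1+\gamma/3)g(z)$ (using $\tfrac{1+\gamma/3}{1-\gamma/3}\leq 1+\gamma$ for $\gamma\in(0,1]$). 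Note only submodularity of $f$ (through normalization) is used here, not that $g$ is submodular.

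\emph{Recursion and accounting.} The random examples falling into a subcube $\{z\}\times\{0,1\}^{\bar J}$ are uniform within it; use them to estimate $\mu_z$ and $\E_y|f(z,y)-g(z)|$ for every $z\in\{0,1\}^J$, declare $z$ good if both estimates pass suitable thresholds (in which case set the hypothesis on that subcube to $(1+\gamma/3)g(z)$), and otherwise recurse on the subcube with parameters $(\gamma,\epsilon/4)$. By the key lemma a constant ($\geq 1/32$, say) fraction of $z$ are good and every good $z$ gives a $(1+\gamma,\epsilon/4)$-approximation on its subcube (the extra slack absorbs estimation error), so after $O(\log(1/\epsilon))$ levels the total measure of cells never declared good is at most $\epsilon/2$; output anything there. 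The good cells contribute at most $\epsilon/4$ to the error, so $h$ is a multiplicative $(1+\gamma,\epsilon)$-approximation. The number of invocations is $\sum_{d\leq O(\log(1/\epsilon))}2^{d|J|}=2^{O(|J|\log(1/\epsilon))}=2^{\tilde O(1/(\epsilon\gamma)^2)}$, each calling Theorem~\ref{thm:proper-learning} (time $\tilde O(n^2/\eta^{10})+2^{\tilde O(1/\eta^2)}$) plus some estimations. A cell at depth $d$ has measure $\approx 2^{-d|J|}$, so to have enough examples in every cell and to union-bound the failure probability over all invocations it suffices to draw $\log(n)\cdot 2^{\tilde O(1/(\epsilon\gamma)^2)}$ examples in total, and the running time is $\tilde O(n^2)\cdot 2^{\tilde O(1/(\epsilon\gamma)^2)}$, the $n^2$ coming from estimating degree-$\leq 2$ Fourier coefficients inside Corollary~\ref{cor:find-junta-examples}.

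\emph{Main obstacle.} The delicate part is the reduction step: a function that is $\ell_1$-close to a junta need not be close to it in $\ell_\infty$, so a multiplicative approximation cannot hold on every subcube, which is exactly why only a constant fraction of subcubes are resolved per level and a recursion of depth $\Theta(\log(1/\epsilon))$ is intrinsic. The argument must use submodular structure twice — $\|f\|_1\geq\tfrac14\|f\|_\infty$, globally for normalization and on every subcube so that a constant fraction of subcubes have mean bounded away from $0$ — and must take $\eta$ linear (not quadratic) in $\epsilon\gamma$, so that after the two Markov steps (over $z$, then over $y$) the residual deviation stays $o(\gamma)$ relative to the now $\Omega(1)$ value $g(z)$ while $|J|=\tilde O(1/(\epsilon\gamma)^2)$; this is what lets the recursion depth $O(\log(1/\epsilon))$ be absorbed into the $\tilde O(\cdot)$ in the stated bounds.
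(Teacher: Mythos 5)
Your proposal is correct and takes essentially the same route as the paper's proof of Theorem~\ref{thm:pmac-learn-submod}: normalize via $\|f\|_1 \geq \frac14\|f\|_\infty$ (Lemma~\ref{lem:submod-norms}), run the $\ell_1$ proper learner (Theorem~\ref{thm:proper-learning}) with accuracy $\Theta(\gamma\eps)$, use a reverse-Markov/Markov argument to show a constant fraction of the $2^{|J|}$ subcubes are ``good'' (the paper's criterion is $g(z)\geq 1/20$ and conditional $\ell_1$-error $\leq 20\eps'$, yielding a $1/10$ fraction), output the constant $(1+\Theta(\gamma))g(z)$ there, and recurse on the rest for $O(\log(1/\eps))$ levels with the same accounting of $2^{\tilde{O}(1/(\eps\gamma)^2)}$ invocations. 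The only caveat is that the error parameter must stay fixed down the recursion (the paper uses $\eps'=\gamma\eps/2400$ at every level with a depth cap of $10\log(1/\eps)$); literally passing $\eps/4$ and re-quartering it at each recursive call would inflate the junta size at deep levels and break the stated $2^{\tilde{O}(1/(\eps\gamma)^2)}$ bound, but your own error accounting ($\eps/4$ within good cells plus $\eps/2$ unresolved) already corresponds to the fixed-parameter version.
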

\eat{
\paragraph{The PMAC learning algorithm.}
}
\begin{proof}
The algorithm $\A$ relies on the reduction we outlined above. Let $\A'(k)$ denote the execution of the learning algorithm at the $k$-th level of recursion. $\A$ executes $\A'(0)$ and $\A'(k)$ is the following algorithm.
\begin{enumerate}
\item If $k\geq 10\log(1/\eps)$ then $\A'(k)$ {\bf returns} the hypothesis $h \equiv 0$.
\item Otherwise, the algorithm estimates $\E[f]$ to within a multiplicative factor of $6/5$ (with probability at least $1-\delta$ for $\delta$ to be defined later). Let $\mu$ denote the obtained estimate (that is, $\E[f] \leq \mu \leq \frac{6}{5} \E[f]$). If $\mu = 0$ the algorithm {\bf returns} $h \equiv 0$. Otherwise, we define the function $f' = f/(4\mu)$.  By Lemma \ref{lem:submod-norms} we know that $\|f'\|_\infty = \|f\|_\infty/(4\mu) \leq 4 \E[f]/(4\mu) \leq 1$.
\item We run our $\ell_1$-learning algorithm from Theorem \ref{thm:proper-learning} on random examples of $f'$ with accuracy $\eps' = \gamma\eps/2400$ and confidence $1-\delta$ (using the standard confidence boosting technique). Let $g$ be the hypothesis output by the algorithm and $J$ be the set of indices of variables it depends on. We treat $g$ as a function on $\zo^J$.
\item We define the output hypothesis $h$ by defining for every $z \in \zo^J$ a function $h_z : \zo^{\bar{J}} \rightarrow \RR_+$ and then setting $h(x) = h_{x_J}(x_{\bar{J}})$. For $z \in \zo^J$, $h_z$ is defined as follows:
\begin{enumerate}
\item \label{item:cond-1} If $g(z) \geq 1/20$ and $\E_{y\in \zo^{\bar{J}}}[|g(z)-f'(z,y)|] \leq 20\eps'$ then define $h_z \equiv (4\mu)(1+\gamma/60)g(z)$
\item Otherwise, execute $\A'(k+1)$ on function $f_z$ over $\zo^{\bar{J}}$ defined as $f_z(y) = f(z,y)$. Let $h_z$ be the output of this execution.
\end{enumerate}
To simulate random examples of $f_z$ we draw random examples of $f$ until an example $(x,\ell)$ is obtained such that $x_J = z$. Note that we cannot find $\E_{y\in \zo^{\bar{J}}}[|g(z)-f'(z,y)|]$ exactly but an estimate within $\eps'/2$ with probability $1-\delta$ would suffice (with minor adjustments in the constants).

$\A'(k)$ {\bf returns} the hypothesis $h$.
\end{enumerate}

We now prove the correctness of the algorithm under the assumption that random estimations and executions of the algorithm from Theorem \ref{thm:proper-learning} are successful. First we observe that if the condition in step (\ref{item:cond-1}) holds then $h_z$ multiplicatively $(1+\gamma,\eps/2)$-approximates $f_z$ over the uniform distribution on $\zo^{\bar{J}}$. By Markov's inequality, the condition  $\E_{y\in \zo^{\bar{J}}}[|g(z)-f'(z,y)|] \leq 20\eps' = \eps\gamma/120$ implies that $$\Pr_{y\in \zo^{\bar{J}}}[|g(z)-f'(z,y)| \geq \gamma/60] \leq \eps/2\ .$$ This means that on all but $\eps/2$ fraction of points $y$, $|g(z)-f'(z,y)| \leq \gamma/60$. On those points $g(z) + \gamma/60 \geq f'(z,y)$. In addition, $f'(z,y) \geq g(z) - \gamma/60 \geq 1/20 - \gamma/60 \geq 1/30$ and therefore $g(z)+ \gamma/60 \leq f'(z,y) + \gamma/30 \leq (1+\gamma) f'(z,y)$. This implies that $g(z)+ \gamma/60$ multiplicatively $(1+\gamma,\eps/2)$-approximates $f'(z,y)$.
By our definition $h_z \equiv (4\mu)(1+\gamma/60)g(z)$ and $f_z(y) = (4\mu) f'(z,y)$.

Now we observe that we can partition the domain $\zon$ into two sets of points:
 \begin{enumerate}
 \item Set $G$ where either $\mu = \E[f] = 0$ or the value output by the hypothesis is $(4\mu)(1+\gamma/60)g(z)$, where $g$ is returned by one of the invocations of the additive approximation algorithm;
 \item The set of points where the recursion reached depth $k>10\log(1/\eps)$.
 \end{enumerate}
By the construction, the points in $G$ can be divided into disjoint subcubes such that in each of them the conditional probability that the hypothesis we output does not satisfy the multiplicative guarantee is at most $\eps/2$. Therefore the hypothesis does not satisfy the multiplicative guarantee on at most fraction $\eps/2$ of the points in $G$.

To finish the proof of correctness it suffices to show that $\Pr[x \not\in G] \leq \eps/2$. To establish this we prove that the fraction of points on which $\A'(k)$ is invoked is at most $(9/10)^{-k}$. We prove this by induction (with $k=0$ being obvious). For any $k < 10\log(1/\eps)$ if $\mu = 0$ then $\A'(k+1)$ is not invoked. Otherwise, we know that $g$ satisfies
$$\E[|f'(x) - g(x_J)|] = \E_{z \sim \zo^J} \left[\E_{y\in \zo^{\bar{J}}}[|f'(z,y) - g(z)|]\right] \leq \eps'\ .$$
Therefore by Markov's inequality,
\equ{\Pr_{z \sim \zo^J} \left[\E_{y\in \zo^{\bar{J}}}[|f'(z,y) - g(z)|] \geq 20\eps' \right]  \leq 1/20\label{eq:high-error-l1} .}
We also know that $$\E_{z \sim \zo^J}[g(z)] \geq \E[f'(x)] - \eps' \geq \frac{\E[f(x)]}{4\mu} -\eps' \geq \frac{5}{24} - \frac{\eps\gamma}{2400} > \frac{1}{5}\ .$$ At the same time $g$ has range $[0,1]$ and hence
$$\E_{z \sim \zo^J}[g(z)] \leq \fr{20}\Pr_{z \sim \zo^J}[g(z) < 1/20] + \Pr_{z \sim \zo^J}[g(z) \geq 1/20]\ .$$
This implies that $\Pr_{x \sim \zo^J}[g(z) \geq 1/20] \geq 1/5 - 1/20 = 3/20$. Together with equation (\ref{eq:high-error-l1}) this implies that the fraction of $z$'s for which both $g(z) \geq 1/20$ and $\E_{y\in \zo^{\bar{J}}}[|f'(z,y) - g(z)|] \leq 20\eps'$ hold is at least $3/20-1/20 = 1/10$. This implies that $\A'(k+1)$ will be invoked on at most $9/10$-fraction of inputs on which  $\A'(k)$ was invoked, proving the inductive claim.

We now also establish the bounds on the running time and sample complexity of this algorithm. Let $t$ be the bound on the size of $J$ in any of the executions of the additive approximation algorithm (Thm.~\ref{thm:proper-learning}). Note that the size of junta does not depend on $n$ or the confidence parameter $\delta$ and therefore is $\tilde{O}(1/(\eps\gamma)^2)$. Let $r$ be the total number of times $\A'$ is executed. From our correctness analysis we can conclude that $r \leq {(2^t)}^{10\log(1/\eps)} = 2^{\tilde{O}(1/(\eps\gamma)^2)}$. It is sufficient to set $\delta = 1/(9r)$ to ensure that the total probability of failure is at most $1/3$. By Lemma \ref{lem:submod-sample} and Thm.~\ref{thm:proper-learning} it is easy to see that each execution of $\A'(k)$ runs in time $\tilde{O}(n^2) \cdot 2^{\tilde{O}(1/(\eps\gamma)^2)}$ and uses $\log(n) \cdot 2^{\tilde{O}(1/(\eps\gamma)^2)}$ examples (of $f$ restricted to a subcube). Simulating a random example for the execution of $\A'(k)$ requires filtering examples which have $t \cdot k$ variables set to a specific value. This means that simulating $\A'(k)$ requires $2^{jk} = 2^{\tilde{O}(1/(\eps\gamma)^2)}$ times more examples of $f$ than the number of examples required by $\A'(k)$. Altogether all $r$ executions run in $\tilde{O}(n^2) \cdot 2^{\tilde{O}(1/(\eps\gamma)^2)}$ and use $\log(n) \cdot 2^{\tilde{O}(1/(\eps\gamma)^2)}$ examples. Note that we made the standard assumption that manipulating values of $f$ takes $O(1)$ time.
\end{proof}

\subsection{Learning of Low-Sensitivity Functions}
We now show that using variants of well-known techniques we can obtain close-to-optimal PAC
learning algorithms for real-valued functions of low total $\ell_1$-influence. We start by proving a generalization of Theorem \ref{thm:XOS-learning-intro}.
\begin{theorem}[subsumes Thm.~\ref{thm:XOS-learning-intro}]
\label{thm:learn-lowinfluence}
Let $\C^+_a$ be the set of all unate functions with range in $[0,1]$ and total $\ell_1$-influence of at most $a$.
There exists an algorithm $\A$ that given $\eps > 0$ and access to random uniform examples of any  $f\in \C^+_a$, with probability at least $2/3$, outputs a function $h$, such that $\|f-h\|_2 \leq \epsilon$. Further, $\A$ runs in time $\tilde{O}(n) \cdot 2^{O(a^2/\eps^{4})}$ and uses $\log n \cdot 2^{O(a^2/\eps^{4})} $ examples.
\end{theorem}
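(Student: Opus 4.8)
The plan is to compose the two pieces already in hand: the fact that for a unate function the influential variables can be found from random examples (because $|\hat{f}(\{i\})| = \infl^1_i(f)$), and the standard low-degree / $\ell_2$-regression algorithm run over the resulting small set of variables.

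First I would invoke the algorithm of Corollary~\ref{cor:find-junta-lowsens-examples} with accuracy parameter $\eps/2$. Since $f$ is unate with $\infl^1(f) \le a$, this returns, with probability at least $5/6$, a set $I$ of variables with $|I| \le 2^{O(a/\eps^2)}$, together with the guarantee that there is a polynomial $p$ of Fourier degree $d := O(a/\eps^2)$ supported on the variables of $I$ with $\|f - p\|_2 \le \eps/2$. This step runs in time $\tilde{O}(n) \cdot 2^{O(a/\eps^2)}$ and uses $\log n \cdot 2^{O(a/\eps^2)}$ examples; the $\log n$ factor, needed to estimate all degree-$1$ Fourier coefficients simultaneously and thereby pin down $I$, is the only place where $n$ enters the sample complexity.

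Second, set $C = \{ S \subseteq I : |S| \le d \}$. The support of any degree-$d$ polynomial on the variables of $I$ lies inside $C$, and the $\ell_2$-closest function to $f$ with Fourier support in $C$ is the projection $q = \sum_{S \in C} \hat{f}(S)\chi_S$; hence $\|f - q\|_2 \le \|f - p\|_2 \le \eps/2$. Moreover $|C| \le |I|^d \le 2^{O(a/\eps^2) \cdot O(a/\eps^2)} = 2^{O(a^2/\eps^4)}$. I would then estimate each coefficient $\hat{f}(S) = \E[f(x)\chi_S(x)]$, $S \in C$, from fresh uniform examples; since $f(x)\chi_S(x) \in [-1,1]$, a Hoeffding bound and a union bound over $C$ show that $O(|C|\log(|C|/\delta)/\eps^2) = 2^{O(a^2/\eps^4)} \log(1/\delta)$ examples suffice to obtain estimates $\tilde{f}(S)$ with $\sum_{S \in C}(\tilde{f}(S) - \hat{f}(S))^2 \le \eps^2/4$, with probability $1 - \delta$. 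Outputting $h = \sum_{S \in C} \tilde{f}(S)\chi_S$ and applying the triangle inequality in $\ell_2$ gives $\|f - h\|_2 \le \|f - q\|_2 + \|q - h\|_2 \le \eps/2 + \eps/2 = \eps$. Choosing the failure probabilities of the two phases to be small constants makes the overall success probability at least $2/3$; the running time is dominated by evaluating the $2^{O(a^2/\eps^4)}$ parities on each of $2^{O(a^2/\eps^4)}$ examples, i.e.\ $\tilde{O}(n) \cdot 2^{O(a^2/\eps^4)}$, and the sample complexity is $\log n \cdot 2^{O(a^2/\eps^4)}$, as claimed. Finally, since monotone functions are unate and XOS functions are monotone with $\infl^1(f) \le 1$ by Lemma~\ref{lem:submod}, taking $a = 1$ recovers Theorem~\ref{thm:XOS-learning-intro}.

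There is no genuine obstacle here: the argument is the routine composition flagged as ``fairly standard'' in the introduction, combining Corollary~\ref{cor:find-junta-lowsens-examples} with $\ell_2$-regression over low-degree parities as in \cite{FeldmanKV:13}. The one point requiring (mild) care is that we cannot learn the specific polynomial $p$ directly, since we do not know $f$'s spectrum; instead we regress over the full collection $C$ of all degree-$\le d$ monomials on $I$. This is harmless because $C$ contains the support of $p$ and is still only of size $2^{O(a^2/\eps^4)}$, so the resulting blow-up in time and samples is exactly the quadratic-in-exponent loss appearing in the statement.
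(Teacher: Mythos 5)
Your proposal is correct and follows essentially the paper's proof: run the algorithm of Corollary~\ref{cor:find-junta-lowsens-examples} to identify a set $I$ of $2^{O(a/\eps^2)}$ variables supporting a degree-$O(a/\eps^2)$ approximation, then fit a function over the $2^{O(a^2/\eps^4)}$ low-degree parities on $I$. The only (inessential) deviation is in the second phase, where the paper solves a least-squares regression over these parities and appeals to uniform-convergence bounds, while you estimate each Fourier coefficient directly via Hoeffding plus a union bound; since the distribution is exactly uniform and the parities are orthonormal, both give the same $\tilde{O}(n)\cdot 2^{O(a^2/\eps^4)}$ time and $\log n\cdot 2^{O(a^2/\eps^4)}$ sample bounds.
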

\begin{proof}
Using Corollary~\ref{cor:find-junta-lowsens-examples} we can find a set of variables $I$  of size $|I| = 2^{O(a/\eps^2)}$ such that there exists a function $p$ of Fourier degree $d = 2a/\eps^{2}$ over variables in $I$ satisfying $\|f - p \|_2 \leq \eps/2$. This function is a linear combination of $m=2^{O(a^2/\eps^{4})}$ parities. Standard uniform convergence bounds \cite{Vapnik:98,BartlettMendelson:02} imply that by using $t=O(m/\eps^2)$ random samples and then solving the least squares  regression over all $m$ parities, (with probability $\geq 5/6$) we will get a function $h$ such that $\|f-h\|_2 \leq \|f-p\|_2 + \eps/2 \leq \eps$.
This step requires $n \cdot \poly(m/\eps)  = n \cdot 2^{O(a^2/\eps^{4})}$ time.
\end{proof}

XOS functions have total $\ell_1$-influence of at most 1 and are monotone. Therefore as an immediate corollary of Thm.~\ref{thm:learn-lowinfluence} we obtain Theorem \ref{thm:XOS-learning-intro}.
\eat{
\begin{corollary}
\label{cor:PAC-XOS}
Let $\C_{XOS}$ be the set of all XOS functions on $\zo^n$ with range $[0,1]$. There exists an algorithm $\A$ that given $\eps > 0$ and access to random uniform examples of any  $f\in \C_{XOS}$, with probability at least $2/3$, outputs a function $h$, such that $\|f-h\|_2 \leq \epsilon$. Further, $\A$ runs in time $\tilde{O}(n) \cdot 2^{O(1/\eps^{4})}$ and uses $\log n \cdot 2^{O(1/\eps^{4})}$ examples.
\end{corollary}
}

It is easy to see that the algorithm for learning XOS functions returns a function that is a $2^{O(1/\eps^2)}$-junta. In addition, as we have noted, Lemma \ref{lem:submod-norms} also holds for XOS functions. Therefore using essentially the same reduction that we used in Theorem \ref{thm:pmac-learn-submod} we can obtain a PMAC learning algorithm for XOS functions with the following guarantees.
\begin{corollary}
\label{cor:pmac-learn-xos}
There exists an algorithm $\A$ that given $\gamma,\eps \in (0,1]$ and access to random and uniform examples of any XOS function $f:\zo^n \rightarrow \RR_+$, with probability at least $2/3$, outputs a function $h$ which multiplicatively $(1+\gamma,\eps)$-approximates $f$ (over the uniform distribution). Further, $\A$ runs in time $\tilde{O}(n) \cdot 2^{2^{\tilde{O}(1/(\eps\gamma)^2)}}$ and uses $\log(n) \cdot 2^{2^{\tilde{O}(1/(\eps\gamma)^2)}} $ examples.
\end{corollary}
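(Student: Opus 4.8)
The plan is to run the recursive reduction from the proof of Theorem~\ref{thm:pmac-learn-submod} essentially verbatim, changing only its two class-specific ingredients. In place of the additive $\ell_1$-learner for submodular functions (Theorem~\ref{thm:proper-learning}) we use the $\ell_2$-learner for unate functions of low total $\ell_1$-influence (Theorem~\ref{thm:learn-lowinfluence} with $a=1$): an XOS function is monotone with $\infl^1(f)\le 1$, so Theorem~\ref{thm:learn-lowinfluence} produces, for accuracy $\eps'$, a hypothesis that is a linear combination of parities over a set $J$ with $|J|=2^{O(1/\eps'^2)}$; clipping this hypothesis to $[0,1]$ only decreases its $\ell_2$-error (since the target is in $[0,1]$) and yields a $[0,1]$-valued $J$-junta $g$ with $\|f'-g\|_1\le\|f'-g\|_2\le\eps'$. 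In place of Lemma~\ref{lem:submod-norms} we use the inequality $\|f\|_1\ge\tfrac12\|f\|_\infty$, which holds for XOS functions. With these substitutions the algorithm $\A'(k)$ is exactly as before: estimate $\E[f]$ within a factor $6/5$, normalize to $f'$ with $\|f'\|_\infty\le1$ (valid because $\E[f]=\|f\|_1\ge\tfrac12\|f\|_\infty$), run the XOS learner with accuracy $\eps'=\Theta(\gamma\eps)$ to get $g$ on $J$, and for each $z\in\zo^J$ either output the constant $(4\mu)(1+\gamma/60)g(z)$ on the subcube $\{z\}\times\zo^{\bar J}$ (when $g(z)$ is bounded away from $0$ and $g$ has small conditional $\ell_1$-error there) or recurse on $f_z(y)=f(z,y)$, stopping at depth $\Theta(\log(1/\eps))$.

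The correctness analysis transfers without change. As in Theorem~\ref{thm:pmac-learn-submod}, Markov's inequality together with $\|f'\|_1\ge\tfrac12\|f'\|_\infty$ shows that on a $\ge 1/10$ fraction of $z\in\zo^J$ the ``good'' branch fires, and on such a subcube the constant hypothesis multiplicatively $(1+\gamma,\eps)$-approximates $f$ on a $1-\eps/2$ fraction of points; hence at most a $(9/10)^k$ fraction of $\zon$ is still uncovered after $k$ recursion levels, which is $\le\eps/2$ once $k=\Theta(\log(1/\eps))$, and on the covered part the failure probability is $\le\eps/2$. The only extra point to verify is that the recursion is well-founded: $f_z$ is a restriction of an XOS function, hence monotone (so unate) and, being a restriction of a self-bounding function, itself $1$-self-bounding; by Lemma~\ref{lem:submod} applied to $f_z$ we get $\infl^1(f_z)\le\|f_z\|_1\le\|f_z\|_\infty$, which is $\le 1$ after the re-normalization performed inside the recursive call, so Theorem~\ref{thm:learn-lowinfluence} applies at every level.

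The genuinely different part is the complexity accounting, and this is exactly where the bound degrades from singly to doubly exponential. In the submodular case the approximating junta had size $t=\tilde{O}(1/(\eps\gamma)^2)$, so the recursion tree had $(2^t)^{O(\log(1/\eps))}=2^{\tilde{O}(1/(\eps\gamma)^2)}$ nodes; for XOS functions the $2^{\Omega(1/\eps)}$ lower bound of Theorem~\ref{thm:XOS-example} forces $|J|=2^{O(1/\eps'^2)}=2^{\tilde{O}(1/(\eps\gamma)^2)}$, so the number of recursive invocations is $r\le(2^{|J|})^{O(\log(1/\eps))}=2^{2^{\tilde{O}(1/(\eps\gamma)^2)}}$. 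Setting the per-call confidence to $1/(9r)$ (and boosting) keeps the overall failure probability below $1/3$. Each call of the XOS learner costs $\tilde{O}(n)\cdot 2^{O(1/\eps'^4)}$ time and $\log(n)\cdot 2^{O(1/\eps'^4)}$ examples, and at recursion depth $k$ simulating one example of $f_z$ requires filtering examples of $f$ on about $k|J|$ fixed coordinates, a $2^{k|J|}=2^{2^{\tilde{O}(1/(\eps\gamma)^2)}}$ blow-up; multiplying through by $r$ yields the stated running time $\tilde{O}(n)\cdot 2^{2^{\tilde{O}(1/(\eps\gamma)^2)}}$ and sample complexity $\log(n)\cdot 2^{2^{\tilde{O}(1/(\eps\gamma)^2)}}$. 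The main obstacle is purely this bookkeeping — ensuring the doubly-exponentially many invocations each get enough examples and confidence — together with the (routine) check that restrictions of XOS functions remain within the scope of Theorem~\ref{thm:learn-lowinfluence}; no new idea beyond the submodular argument is needed.
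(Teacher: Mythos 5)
Your proposal is correct and follows the paper's own route: the paper proves this corollary exactly by plugging the XOS PAC learner (Theorem~\ref{thm:learn-lowinfluence}, whose hypothesis is a $2^{O(1/\eps^2)}$-junta) and the XOS norm bound $\|f\|_1 \geq \tfrac12\|f\|_\infty$ into the recursive reduction of Theorem~\ref{thm:pmac-learn-submod}, with the doubly-exponential cost coming from the exponentially larger junta. Your additional checks (clipping to $[0,1]$, restrictions of XOS remaining monotone and low-influence, and the bookkeeping over the doubly-exponentially many recursive calls) are exactly the details the paper leaves implicit.
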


\section{Applications to Testing and Agnostic Learning}
\label{sec:additinal-app}
\subsection{Testing}
\label{sec:applications-testing}

As is well-known, proper learning with some additional properties (the learner always returns a submodular hypothesis even if the target is not, or we can verify efficiently whether the hypothesis is submodular) also implies testing from random examples with essentially the same bounds on the running time and the number of examples (in the context of Boolean functions this was first observed in \cite{GoldreichGR98}). Considering our proper learning algorithm (Theorem~\ref{thm:proper-learning}), we obtain the following result.

\begin{corollary}
\label{cor:testing}
There is a testing algorithm that given $\eps>0$ and access to random examples of a function $f:\zo^n \rightarrow [0,1]$,
runs in time $\tilde{O}(n^2/\eps^{10}) + 2^{\tilde{O}(1/\eps^2)}$ and uses $\tilde{O}(\log(n)/\eps^{10}) + 2^{\tilde{O}(1/\eps^2)}$ random examples and distinguishes with probability at least $2/3$ the following two cases:
\begin{enumerate}
\item $f$ is submodular;
\item for any submodular function $h$, $\|f-h\|_1 > \epsilon$.
\end{enumerate}
\end{corollary}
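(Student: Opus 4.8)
The plan is to invoke the standard reduction from property testing to proper learning (in the spirit of \cite{GoldreichGR98}), instantiated with the algorithm of Theorem~\ref{thm:proper-learning} as the learner. The two features of that algorithm that make this work are: first, it \emph{always} outputs a submodular hypothesis $h$ --- the linear program in its Step~3(a) enforces submodularity and the only fallback, $h\equiv 0$, is submodular --- regardless of whether the input $f$ is submodular; second, its running time and sample complexity stay bounded by $\tilde O(n^2/\eps^{10})+2^{\tilde O(1/\eps^2)}$ and $\tilde O(\log(n)/\eps^{10})+2^{\tilde O(1/\eps^2)}$ on \emph{every} $f:\zon\to[0,1]$. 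For the second point I would note that the learner's internal parameters --- in particular the size of the candidate variable set produced through Corollary~\ref{cor:find-junta-examples}, and the size of each junta enumerated in Step~3 --- are fixed to the values $\tilde O(1/\eps^2)$ and $\tilde O(1/\eps^5)$, which are the correct bounds when $f$ is submodular by Theorem~\ref{thm:submod-junta}; running the identical procedure on an arbitrary $f$ does not change these resource bounds, and if some internal quantity ever exceeds its promised size (which can only happen when $f$ is not submodular) the tester may simply abort and reject.

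The testing algorithm I would use is then: run the proper learner of Theorem~\ref{thm:proper-learning} with accuracy $\eps/2$, amplifying its confidence to $5/6$ by the standard trick of taking $O(1)$ independent runs and keeping the one with smallest empirical $\ell_1$-error on a held-out sample, obtaining a submodular junta $h$; then draw a \emph{fresh} sample $x^1,\dots,x^m$ of $m=O(1/\eps^2)$ uniform points and form $\widehat D=\frac1m\sum_{i\le m}|f(x^i)-h(x^i)|$ as an estimate of $\|f-h\|_1$; accept (``$f$ is submodular'') iff $\widehat D\le \tfrac34\eps$. For correctness, since $h$ is fixed once the learning phase ends and $|f-h|$ has range in $[0,1]$, Hoeffding's inequality gives $|\widehat D-\|f-h\|_1|\le \eps/4$ with probability at least $5/6$ already for $m=O(1/\eps^2)$; I would then condition on this event together with the success of the learner. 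If $f$ is submodular, the learner guarantees $\|f-h\|_1\le\eps/2$, so $\widehat D\le \tfrac34\eps$ and the tester accepts. If $\|f-g\|_1>\eps$ for every submodular $g$, then in particular the submodular hypothesis $h$ satisfies $\|f-h\|_1>\eps$, so $\widehat D>\tfrac34\eps$ and the tester rejects. A union bound over the two $1/6$ failure events yields overall success probability at least $2/3$, and the stated resource bounds follow because they are dominated by the single call to Theorem~\ref{thm:proper-learning} (the fresh estimation phase costs only $O(1/\eps^2)$ extra examples and $O(n/\eps^2)$ extra time).

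The one step I expect to require genuine care is the second feature above: arguing that the proper learner, when handed a function that is \emph{not} submodular, still terminates within the promised resource bounds and still emits an honest submodular hypothesis, so that the quantity $\|f-h\|_1$ it is compared against is a legitimate certificate of $\eps$-farness from submodularity. Once that is pinned down, what remains is a routine empirical-estimation-plus-union-bound argument, and nothing beyond Theorem~\ref{thm:proper-learning} and Hoeffding's inequality is needed.
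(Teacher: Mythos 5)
Your proposal is correct and follows essentially the same route as the paper: the paper proves this corollary exactly by the standard reduction from testing to proper learning (citing the observation of Goldreich, Goldwasser and Ron), instantiated with the algorithm of Theorem~\ref{thm:proper-learning}, whose hypothesis is always submodular. Your additional care about resource bounds and aborting when an internal quantity exceeds its promised size on non-submodular inputs is exactly the ``additional property'' the paper alludes to, so nothing is missing.
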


Next, we consider the value query oracle model, where we can query $f(x)$ for any $x \in \on^n$.
Here, we can improve the number of queries from $2^{\tilde{O}(1/\epsilon^2)}$ to $2^{\tilde{O}(1/\epsilon)}$, by plugging in the submodularity tester from \cite{SV11}.
The testing algorithm in \cite{SV11} provides the following guarantee: For $f:\on^n \rightarrow \RR$, it queries $1/\epsilon^{O(\sqrt{n} \log n)}$ values of $f$ and
\begin{itemize}
\item If $f$ is submodular, it returns YES.
\item If $f$ is $\epsilon$-far {\em in Hamming distance} (i.e., for any submodular $h$, $\Pr[f(x) \neq h(x)] > \epsilon$), then it returns NO.
\end{itemize}

We observe that this is a stronger notion of testing than testing in $\ell_1$-distance: If $f$ is $\epsilon$-far from submodular in $\ell_1$-distance, it is also at least $\epsilon$-far in Hamming distance. Hence we can use this tester as a building block in our algorithm.

To obtain a tester in $\ell_1$-distance, we can apply the techniques from above to reduce dimension to $O(\frac{1}{\epsilon^2} \log \frac{1}{\epsilon})$ and apply the testing algorithm of Seshadhri and Vondr\'ak \cite{SV11} in this setting. The testing algorithm works as follows.

\paragraph{The testing algorithm.}
\begin{enumerate}
\item Run the algorithm from Cor.~\ref{cor:find-junta-examples} to find a set of variables $I$ of size $|I| = \poly(1/\epsilon)$ such that if $f$ is submodular then the projected function $g = f_I$ is also submodular and satisfies $\|f-g\|_1 \leq \eps/4$ (with high probability).
\item Run Algorithm~\ref{alg:junta} on function $g$ to further reduce the set of variables to $J \subset I$, $|J| = O(\frac{1}{\epsilon^2} \log \frac{1}{\epsilon})$, such that if $f$ is submodular then the function $h = g_J = f_J$ satisfies $\|f-h\|_1 \leq \eps/2$ w.h.p.
\item Let $y^{(1)},\ldots,y^{(m)}$ be uniformly random samples in $\{0,1\}^{\bar{J}}$ and
define $\tilde{h}(x) = \frac{1}{m} \sum_{i=1}^{m} f(x_J,y^{(i)})$ (an approximation to $h$) where $m = \poly(1/\eps)$. Clearly we can simulate a value query to $\tilde{h}$ by $m$ value queries to $f$.
\item Estimate the distance $\|f-\tilde{h}\|_1$ by taking $\poly(1/\epsilon)$ samples; if $\|f-\tilde{h}\|_1 > 3\eps/4$ then answer NO.
\item Run the submodularity tester of \cite{SV11} on function $\tilde{h}$ with parameter $\epsilon/4$ and return its answer.
\end{enumerate}

\begin{theorem}
\label{thm:testing}
The testing algorithm above runs in time $\poly(n,1/\epsilon) + 2^{\tilde{O}(1/\eps)}$, uses $\poly(1/\epsilon) \log n + 2^{\tilde{O}(1/\epsilon)}$ queries to $f$, and distinguishes with probability at least $2/3$ the following two cases:
\begin{enumerate}
\item $f$ is submodular;
\item for any submodular function $h$, $\|f-h\|_1 > \epsilon$.
\end{enumerate}
\end{theorem}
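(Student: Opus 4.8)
The plan is to verify that the five-step algorithm stated above is correct and meets the claimed bounds; the nontrivial ingredients --- Corollary~\ref{cor:find-junta-examples}, Algorithm~\ref{alg:junta}/Lemma~\ref{lem:submod-reduce}, and the Hamming-distance submodularity tester of \cite{SV11} --- are already in hand, so the work is to compose them correctly and track the approximation errors. The one conceptual point I would stress is why the dimension reduction happens first: the tester of \cite{SV11} uses $(\log d/\epsilon)^{O(\sqrt d\log d)}$ queries on $d$ variables, which is useless for $d=n$ but is only $2^{\tilde O(1/\epsilon)}$ once $d=O(\frac1{\epsilon^2}\log\frac1\epsilon)$ --- exactly the junta size our structural theorems guarantee. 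So Steps 1--2 bring $f$ down to a function of $d = O(\frac1{\epsilon^2}\log\frac1\epsilon)$ variables (Step 1 by Fourier-coefficient thresholding to $\poly(1/\epsilon)$ variables, Step 2 by running Algorithm~\ref{alg:junta} on the projection $g=f_I$), Step 3 replaces it by a sampled version $\tilde h$ with an approximate value oracle, Step 4 sanity-checks $\|f-\tilde h\|_1$, and Step 5 tests $\tilde h$.

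For completeness I would argue: if $f$ is submodular, Corollary~\ref{cor:find-junta-examples} (with accuracy a small constant times $\epsilon$) gives $g=f_I$ submodular with $\|f-g\|_1$ small; Algorithm~\ref{alg:junta} applied to the submodular $g$ over $\{0,1\}^I$ --- this is the construction of Lemma~\ref{lem:submod-reduce}, and Lemma~\ref{lem:size} caps the selected set at $O(\frac1{\epsilon^2}\log\frac{|I|}{\epsilon}) = O(\frac1{\epsilon^2}\log\frac1\epsilon)$ --- returns $h=f_J$ submodular with $\|f-h\|_1$ small; and $\tilde h(x) = \frac1m\sum_i f(x_J,y^{(i)})$, being an average of restrictions of the submodular $f$, is itself submodular, with $\E[\|h-\tilde h\|_1] \le \|f-f_J\|_2/\sqrt m \le 1/\sqrt m$, so $m=\poly(1/\epsilon)$ keeps $\|f-\tilde h\|_1$ small; then Step 4 does not reject and the tester of \cite{SV11} returns YES on the submodular $\tilde h$. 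For soundness I would run the same algorithm (its early steps do not look at whether $f$ is submodular) and argue by contradiction: if it ever accepts, then Step 4 passed so $\|f-\tilde h\|_1$ is a controlled fraction of $\epsilon$, and Step 5 answered YES so $\tilde h$ is within Hamming distance a small fraction of $\epsilon$ of a submodular $h'$; since $\tilde h\in[0,1]$ one shows $h'$ can be taken $[0,1]$-valued, whence $\|f-h'\|_1 \le \|f-\tilde h\|_1 + \|\tilde h-h'\|_1 < \epsilon$, contradicting that $f$ is $\epsilon$-far. The resource tally is then routine: Step 1 contributes the $\poly(n,1/\epsilon)$ time and $\poly(1/\epsilon)\log n$ examples, Steps 2--4 add $\poly(n,1/\epsilon)$ time and $\poly(1/\epsilon)$ value queries (each query to $\tilde h$ costs $m=\poly(1/\epsilon)$ queries to $f$), and Step 5 contributes the $2^{\tilde O(1/\epsilon)}$ terms because it runs on $d = O(\frac1{\epsilon^2}\log\frac1\epsilon)$ variables.

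Three points will need care, and the second is the real obstacle. (i) Algorithm~\ref{alg:junta} is written with exact value-query access to $g=f_I$, but a value query to $f_I$ is an expectation over the erased coordinates; this, along with the Fourier estimates of Step 1, the estimate of $\|f-\tilde h\|_1$, and the conditional-$\ell_1$ estimates, must be done by sampling. The fix is routine --- take each estimate accurate to a small fraction of $\epsilon$, union-bound the $\poly(1/\epsilon) + 2^{\tilde O(1/\epsilon)}$ failure events to probability $\le 1/6$, and re-run the relevant analyses (e.g.\ Lemma~\ref{lem:size}) with slightly shifted thresholds --- but it must be carried out, and the constants in Steps 4 and 5 must be chosen with a little slack so the accumulated errors sum to strictly below $\epsilon$. (ii) The hard part will be reconciling the Hamming-distance guarantee of \cite{SV11} with our $\ell_1$ goal in the soundness direction: a Hamming-close submodular function need not be $\ell_1$-close unless its range is bounded, and truncation to $[0,1]$ does not in general preserve submodularity. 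I would resolve this by working throughout with the class of $[0,1]$-valued submodular functions (the natural comparison class, since $f$ has range $[0,1]$), and by observing that a $[0,1]$-valued function that is Hamming-close to \emph{some} submodular function is in fact Hamming-close to a $[0,1]$-valued one --- a submodular function agreeing with a bounded function off a set of tiny measure cannot deviate wildly, since submodularity forces a large deviation at any point to propagate upward through its cone. (iii) Finally, when $f$ is not submodular, $g=f_I$ need not be submodular, so Algorithm~\ref{alg:junta} is not guaranteed to select only $O(\frac1{\epsilon^2}\log\frac1\epsilon)$ variables; I would simply have Step 2 reject if it tries to exceed that many --- harmless for completeness and correct for soundness, and it keeps the dimension handed to \cite{SV11} bounded by $O(\frac1{\epsilon^2}\log\frac1\epsilon)$.
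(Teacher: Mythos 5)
Your proposal follows essentially the same route as the paper's proof: the same five-step algorithm, the same completeness argument (each $f(x_J,y^{(i)})$ is submodular, so $\tilde h$ is submodular and close to $h=f_J$, hence Steps 4 and 5 pass), the same triangle-inequality soundness argument, and the same resource accounting, with your points (i) and (iii) being care about details the paper treats briefly (the paper does note that values of the projection/multilinear extension are estimated by $\poly(1/\epsilon)$ queries). The only substantive divergence is your point (ii), which you call the real obstacle. The paper dispatches it in one line: if $f$ is $\epsilon$-far from submodular in $\ell_1$ then it is $\epsilon$-far in Hamming distance, which is immediate when the submodular witness is, like $f$ and $\tilde h$, $[0,1]$-valued (Hamming distance then upper-bounds $\ell_1$-distance); no new structural lemma is invoked. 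If you do want to handle unbounded-range witnesses coming out of the \cite{SV11} guarantee rigorously, note that your proposed justification --- that ``a large deviation at any point propagates upward through its cone'' --- is false as stated: the submodular function $h(x) = -M\max\{0,\sum_i x_i-(n-1)\}$ agrees with the zero function everywhere except at the single point $\b1$, where it equals $-M$, so an arbitrarily large (negative) deviation need not propagate at all. Thus your sub-claim that a $[0,1]$-valued function Hamming-close to \emph{some} submodular function is Hamming-close to a $[0,1]$-valued submodular one would need an actual proof; the simpler repair, consistent with the paper's reading, is to measure farness against $[0,1]$-valued submodular functions throughout. Finally, your point (iii) --- aborting Step 2 if Algorithm~\ref{alg:junta} tries to select more than $O(\frac{1}{\epsilon^2}\log\frac{1}{\epsilon})$ variables --- is a legitimate detail the paper leaves implicit: the bound of Lemma~\ref{lem:size} uses submodularity of the input, so without such a cap the dimension handed to the \cite{SV11} tester (and hence its query count) is not controlled when $f$ is far from submodular; adding the cap is harmless for completeness and preserves the claimed $2^{\tilde{O}(1/\eps)}$ bound.
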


\begin{proof}
Provided that the input function $f$ is submodular, the algorithm from Corollary~\ref{cor:find-junta-examples} and Algorithm~\ref{alg:junta} successfully finds a subset of variables $J$, $|J| = O(\frac{1}{\epsilon^2} \log \frac{1}{\epsilon})$ such that $h = f_J$ is $\epsilon/2$-close to $f$ (in $\ell_1$). We define the function $\tilde{h}$ by averaging $\poly(1/\epsilon)$ functions $f(x_J, y^{(i)})$ where $y^{(i)}$ are random values in $\{0,1\}^{\bar{J}}$. By Chernoff bounds, $\tilde{h}$ is within a $\poly(\epsilon)$ pointwise error of the true projection $h = f_J$. Since $\|h-f\|_1 \leq \epsilon/2$, we will obtain with high probability $\| \tilde{h} - f \|_1 \leq 3 \epsilon/4$ and the tester will pass Step 4. Moreover, note that $\tilde{h}$ is a submodular function (with probability $1$), since each of the functions $f(x_J, y^{(i)})$ is submodular. Finally, in Step 5, we use the submodularity tester from \cite{SV11} which will confirm that $\tilde{h}$ is a submodular function and answer YES.

Conversely, if the input function $f$ is $\epsilon$-far from submodular, it cannot be the case that the projected function $h = f_J$ is simultaneously $\epsilon/2$-close to submodular and also $\epsilon/2$-close to $f$. Therefore, either the test in Step 4 fails because $\tilde{h}$ is far from $f$, or the tester in Step 5 fails because $\tilde{h}$ is far from submodularity. Either way, we answer NO with high probability.

The time and query complexity of the tester can be analyzed as follows. Step 1 runs in time $\tilde{O}(n^2 / \epsilon^{10})$, and uses $\tilde{O}(\log (n) / \epsilon^{10})$ queries (in this case random examples).  Step 2 runs in time $\poly(1/\epsilon)$ and uses $\poly(1/\epsilon)$ queries, since it is essentially a greedy algorithm on $\poly(1/\epsilon)$ variables. We need to estimate the values of the multilinear extension $F(x)$ within a $\poly(\epsilon)$ additive error, which can be done with $\poly(1/\epsilon)$ queries. In Step 3, we generate $\poly(1/\epsilon)$ random samples in $\{0,1\}^{\bar{J}}$, which takes $n \cdot \poly(1/\epsilon)$ time. In Step 4, we estimate the distance $\|\tilde{h} - f\|_1$ using $\poly(1/\epsilon)$ queries to $f$ and $\tilde{h}$; each query to $\tilde{h}$ can be simulated by $\poly(1/\epsilon)$ queries to $f$. Finally, in Step 5 we run the submodularity tester of \cite{SV11} in dimension $n' = O(\frac{1}{\epsilon^2} \log \frac{1}{\epsilon})$. The running time and query complexity of this tester is $1/\epsilon^{O(\sqrt{n'} \log n')} = 2^{\tilde{O}(1/\epsilon)}$.
Again, we are simulating each query to $\tilde{h}$ by $\poly(1/\epsilon)$ queries to $f$, which is absorbed in the $2^{\tilde{O}(1/\epsilon)}$ query complexity.
\end{proof}

\label{sec:agnostic-learning}
\subsection{Agnostic Learning}
We now give some applications to agnostic learning \cite{KearnsSS:94,Haussler:92} with $\ell_1$-error. Our result generalizes those first obtained for submodular functions in \cite{CheraghchiKKL:12} and in addition uses our junta approximation result to reduce the sample complexity from $n^{O(1/\eps^2)}$ to $\log n \cdot 2^{O(1/\eps^{4})}$.

We start with a brief review of the agnostic model. Agnostic learning generalizes the definition of PAC learning to scenarios where one cannot assume that the input labels are consistent with a function from a given class \cite{Haussler:92,KearnsSS:94} (for example as a result of noise in the labels).
\begin{definition}[Agnostic learning with $\ell_1$-error] 
Let $\F$ be a class of real-valued functions on $\zo^n$ and let $D$ be any fixed distribution on $\zo^n$. For any distribution $P$ over $\zo^n \times [0,1]$, let $\mbox{opt}(P,\F)$ be defined as: $$\mbox{opt}(P,\F) =  \inf_{f \in \F} \E_{(x,\ell) \sim P} [ |\ell - f(x) |] .$$ An algorithm $\A$, is said to agnostically learn $\F$ on $D$ if for every $\epsilon> 0$ and any distribution $P$ on $\zo^n \times [0,1]$ such that the marginal of $P$ on $\zo^n$ is $D$, given access to random independent examples drawn from $P$, with probability at least $\frac{2}{3}$, $\A$ outputs a hypothesis $h$ such that $$\E_{(x,\ell) \sim P} [ |h(x)- \ell| ] \leq \mbox{opt}(P, \F) + \epsilon.$$
\end{definition}

We now describe our agnostic learning algorithm for functions with low total $\ell_1$-influence. The algorithm is essentially a Least-Absolute-Error LP (or $\ell_1$-regression) over low-degree monomials with an additional constraint on $\ell_1$-norm of the coefficient vector of the polynomial (which is also its spectral $\ell_1$-norm). The sample complexity analysis of this algorithm is based on the following uniform convergence result of Kakade, Sridharan and Tewari \cite{KakadeST:08} which we include here for completeness.
\begin{theorem}[\cite{KakadeST:08}]
\label{thm:kst}
For $N,W,R > 0$, Let $\B_1^N(W) = \{ w \in \RR^N \cond \|w\|_1 \leq W\}$ and $\B_\infty^N(R) = \{ z \in \RR^N \cond \|z\|_\infty \leq R\}$. Let $P$ be a distribution over $\B_\infty^N(R) \times \RR$. Let $\{(z^1,y^1),\ldots,(z^t,y^t)\}$ be a set of $t$ i.i.d.~samples from $P$. Then, with probability at least $1-\delta$ over the choice of samples, it holds that:
$$\forall w \in \B_1^N(W),\   \left| \E_{(z,y) \sim P}[ |\la z, w \ra - y| ] - \frac{1}{t} \sum_{i=1}^t  |\la z^i, w \ra - y^i| \right| \leq 4 \cdot W R \cdot \sqrt{\frac{\log(N/\delta)}{t}} .$$
\end{theorem}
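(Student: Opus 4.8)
The plan is to prove Theorem~\ref{thm:kst} by the standard symmetrization/contraction/Massart route, so I only outline the steps and skip the routine calculations. Write $\ell_w(z,y) = |\la z,w\ra - y|$, $\mathcal{L} = \{\ell_w : w \in \B_1^N(W)\}$, and let $\Phi(S) = \sup_{w \in \B_1^N(W)} | \E_P[\ell_w] - \frac1t\sum_{i=1}^t \ell_w(z^i,y^i) |$ denote the empirical process evaluated on a sample $S=((z^1,y^1),\dots,(z^t,y^t))$. The argument has four steps: (i) bound $\E_S[\Phi(S)]$ by twice the Rademacher complexity $\mathcal{R}_t(\mathcal{L})$ of the loss class; (ii) strip the absolute value via the Ledoux--Talagrand contraction principle, reducing to the Rademacher complexity of the linear class $\mathcal{F} = \{z \mapsto \la z,w\ra : \|w\|_1\le W\}$; (iii) evaluate $\mathcal{R}_t(\mathcal{F})$ using $\ell_1/\ell_\infty$ duality and Massart's finite-class lemma; (iv) upgrade the expectation bound to a high-probability bound via McDiarmid's inequality.

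For step (i), the classical symmetrization argument (a ghost sample, then Rademacher signs $\sigma_1,\dots,\sigma_t$) gives $\E_S[\Phi(S)] \le 2\,\mathcal{R}_t(\mathcal{L})$, where $\mathcal{R}_t(\mathcal{L}) = \E_{S,\sigma}[\sup_{w}\frac1t\sum_i \sigma_i \ell_w(z^i,y^i)]$. For step (ii), for each fixed $y$ the map $u \mapsto |u-y|$ is $1$-Lipschitz and differs from the $1$-Lipschitz function vanishing at $0$ by the constant $|y|$, which contributes nothing to a Rademacher average; hence $\mathcal{R}_t(\mathcal{L}) \le \mathcal{R}_t(\mathcal{F})$. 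For step (iii), $\mathcal{R}_t(\mathcal{F}) = \E_\sigma[\sup_{\|w\|_1\le W}\la \tfrac1t\sum_i \sigma_i z^i, w\ra] = \tfrac{W}{t}\,\E_\sigma[ \| \sum_i \sigma_i z^i \|_\infty ] = \tfrac{W}{t}\,\E_\sigma[\max_{j\le N} | \sum_i \sigma_i z^i_j | ]$; since each $z^i_j\in[-R,R]$, each sum $\sum_i \sigma_i z^i_j$ is a sum of $t$ independent mean-zero terms bounded by $R$ in absolute value, hence sub-Gaussian with variance proxy $tR^2$, and the maximal inequality over $2N$ such variables gives $\E_\sigma[\max_j|\cdot|]\le R\sqrt{2t\log(2N)}$. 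Combining, $\mathcal{R}_t(\mathcal{L}) \le WR\sqrt{2\log(2N)/t}$, so $\E_S[\Phi(S)] \le 2WR\sqrt{2\log(2N)/t}$.

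For step (iv), replacing a single pair in $S$ changes every empirical mean $\frac1t\sum_i \ell_w$, and hence $\Phi$, by at most $\frac1t$ times the maximum over $w$ of the change in $\ell_w$ at that pair; under a boundedness assumption on the labels (satisfied in all our applications, where $y\in[0,1]$ and $R=O(1)$, or in general after a truncation step) this fluctuation is $O(WR)$, so $\Phi$ has bounded differences of order $WR/t$ and McDiarmid's inequality yields $\Phi(S) \le \E_S[\Phi(S)] + O(WR\sqrt{\log(1/\delta)/t})$ with probability at least $1-\delta$. Adding the two contributions and using $\sqrt{\log(2N)}+\sqrt{\log(1/\delta)} = O(\sqrt{\log(N/\delta)})$ gives a bound of the claimed form, the explicit constant $4$ following from slightly more careful bookkeeping.

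I expect step (iii) to be the only genuinely non-routine point: it is precisely $\ell_1/\ell_\infty$ duality together with Massart's lemma (equivalently, the sub-Gaussian maximal inequality over the $2N$ vertices of the $\ell_1$-ball) that yields the $\sqrt{\log N}$ rather than $\sqrt{N}$ dependence, while steps (i), (ii), (iv) are entirely standard. A secondary subtlety is that $\ell_w$ itself is unbounded when $y$ ranges over all of $\RR$, which is why the concentration step needs the labels (or the loss values) to be effectively bounded.
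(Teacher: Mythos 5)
The paper does not actually prove this statement---it is imported verbatim from \cite{KakadeST:08} ``for completeness''---so there is no internal proof to compare against, and your sketch is essentially the argument of that cited source: symmetrization, Ledoux--Talagrand contraction to the linear class, $\ell_1$/$\ell_\infty$ duality plus Massart's finite-class (sub-Gaussian maximal) lemma to get the $\sqrt{\log N}$ dependence, and McDiarmid to pass to a high-probability bound. The sketch is sound, and you correctly flag the one genuine caveat: as literally stated with $y$ ranging over all of $\RR$ the loss class is unbounded, so the concentration step (and indeed the finiteness of the expectation) requires bounded labels or a truncation --- which is satisfied in the paper's application, where the labels lie in $[0,1]$ and $R=1$.
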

\begin{theorem}
Let $\C_a$ be the class of all functions with range in $[0,1]$ and total $\ell_1$-influence of at most $a$. There exists an algorithm that learns $\C_a$ agnostically with $\ell_1$-error, runs in time $n^{O(a/\eps^2)}$ and uses $\log (n) \cdot 2^{O(a^2/\eps^{4})} $ examples.
\end{theorem}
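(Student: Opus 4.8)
The plan is to reduce agnostic learning of $\C_a$ to a single constrained least-absolute-error linear program ($\ell_1$-regression) over the low-degree monomials, and then argue near-optimality by comparing the LP's output against the best polynomial that approximates the best function in $\C_a$. The only structural ingredient needed is Corollary~\ref{cor:lowsens-junta-l1-spectral-bound}: every $f$ with range $[0,1]$ (in particular every $f \in \C_a$, for which $\infl^1(f)\le a$) is $\eps'$-close in $\ell_2$, and hence in $\ell_1$ under the uniform measure, to a polynomial $p$ of Fourier degree $d = O(a/(\eps')^2)$ whose spectral $\ell_1$-norm obeys $\|\hat p\|_1 \le W := 2^{O(a^2/(\eps')^4)}$. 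The crucial feature is that $d$ and $W$ depend only on $a$ and $\eps'$, not on $n$.

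Concretely, I would set $\eps' = \eps/3$, fix $d = O(a/\eps^2)$ and $W = 2^{O(a^2/\eps^4)}$ as above, let $N = \sum_{j \le d}\binom{n}{j} = n^{O(a/\eps^2)}$ be the number of parities $\chi_S$ with $|S|\le d$, draw $t$ i.i.d.\ examples $(x^i,\ell^i)$ from $P$, and solve
$$ \min_{w \in \RR^N:\ \|w\|_1 \le W}\ \frac{1}{t}\sum_{i=1}^t \left| \sum_{|S|\le d} w_S\, \chi_S(x^i) - \ell^i \right|. $$
This is an LP in $N + O(t)$ variables, solvable in $\poly(N,t) = n^{O(a/\eps^2)}$ time; the output is $h = \sum_{|S|\le d} w_S \chi_S$, optionally clipped to $[0,1]$ (which never increases the $\ell_1$-error against labels in $[0,1]$).

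For correctness, I would pick $f^\ast \in \C_a$ with $\E_{(x,\ell)\sim P}[|f^\ast(x)-\ell|] \le \mathrm{opt}(P,\C_a)+\eps/3$ and let $p$ be its degree-$d$ approximator from Corollary~\ref{cor:lowsens-junta-l1-spectral-bound}, with coefficient vector $w^\ast$ satisfying $\|w^\ast\|_1 = \|\hat p\|_1 \le W$ and $\|f^\ast - p\|_1 \le \eps/3$; since $D = \U$, this last quantity is the expected disagreement under the example distribution, so $w^\ast$ is feasible with population $\ell_1$-loss at most $\mathrm{opt}(P,\C_a)+2\eps/3$. Then I would invoke Theorem~\ref{thm:kst} with feature range $R = 1$ (parities take values in $\{-1,1\}$) and radius $W$: choosing $t = \Theta\bigl(W^2\log(N/\delta)/\eps^2\bigr)$ makes empirical and population $\ell_1$-losses agree to within $\eps/6$ uniformly over the feasible ball, so the empirical minimizer $h$ has population loss at most (population loss of $w^\ast$) $+\,\eps/3 \le \mathrm{opt}(P,\C_a)+\eps$, with probability $\ge 1-\delta$; take $\delta = 1/3$. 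Substituting $\log N = O((a/\eps^2)\log n)$ and $W = 2^{O(a^2/\eps^4)}$ gives sample complexity $\log(n)\cdot 2^{O(a^2/\eps^4)}$, and the running time is dominated by solving the LP, i.e.\ $n^{O(a/\eps^2)}$.

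I do not expect a genuine obstacle once Corollary~\ref{cor:lowsens-junta-l1-spectral-bound} is in hand; the points requiring care are (i) that, unlike the realizable monotone case, one cannot recover the influential variables from noisy examples, so the regression must range over \emph{all} $n^{O(d)}$ low-degree monomials — this is exactly what makes the running time $n^{O(a/\eps^2)}$ rather than $\poly(n)$, while the sample complexity stays essentially $n$-free because the norm bound $W$ is dimension-free; and (ii) routine bookkeeping of the three $\eps/3$ error contributions (near-optimality of $f^\ast$, the approximation $f^\ast \approx p$, and uniform convergence) together with checking that the constrained $\ell_1$-regression is an honest LP solvable in time polynomial in $N$ and $t$.
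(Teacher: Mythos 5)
Your proposal is correct and follows essentially the same route as the paper: both apply Corollary~\ref{cor:lowsens-junta-l1-spectral-bound} to get a degree-$O(a/\eps^2)$ approximator with spectral $\ell_1$-norm $2^{O(a^2/\eps^4)}$, solve a spectral-norm-constrained $\ell_1$-regression LP over all low-degree parities, and invoke Theorem~\ref{thm:kst} with $R=1$ for uniform convergence, yielding the same time and sample bounds. The only differences are cosmetic (your $\eps/3$ splits versus the paper's $\eps/2$ and $\eps/4$, and the optional clipping step).
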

\begin{proof}
Let $P$ be any distribution over $\zon \times [0,1]$ whose marginal distribution over $\zon$ is uniform and let $g^*\in \C_a$ be such that $\E_{(x,\ell) \sim P}[|g^*(x)-\ell|] = \min_{g\in \C_a}\{\E_{(x,\ell) \sim P}[|g(x)-\ell|]\} = \Delta$. By Corollary \ref{cor:lowsens-junta-l1-spectral-bound}, we know that there exists a function $p$ of Fourier degree $d = O(a/\epsilon^2)$ such that $\|p-g^*\|_1 \leq \eps/2$ and $W = \|\hat{p}\|_1 = 2^{O(d^2)}$.

We draw $t$ examples $\{(x^i,\ell^i)\}_{i\leq t}$ where $t$ is chosen so as to ensure that, with probability $\geq 5/6$, the maximum of $$\left|\E_{P}[|p'(x)-\ell|] - \fr{t} \sum_{i\leq t} |p'(x^i) - \ell^i| \right|$$ taken over all functions $p'$ of Fourier degree $d$ and spectral $\ell_1$-norm $\leq W$ is at most $\eps/4$.

Now we formulate an LP over coefficients $\{\alpha_{S}\}_{S\subseteq [n],\ |S| \leq d}$ for all parities of degree at most $d$ that minimizes $$\sum_{i\leq t} \left| \sum_{S\subseteq [n],\ |S| \leq d} \alpha_S \chi_S(x^i) - \ell^i\right| $$ subject to
$\sum_{S\subseteq [n],\ |S| \leq d} |\alpha_S| \leq W$. Let $p'(x)$ be the function obtained by solving this LP.
By the definition of our LP, \equ{\sum_{i\leq t} |p'(x^i) - \ell^i| \leq \sum_{i\leq t} |p(x^i) - \ell^i|\ .\label{eq:lp-bound}}
Both $p$ and $p'$ are functions of Fourier degree $d$ and spectral $\ell_1$-norm $\leq W$ and therefore by our choice of $t$, with probability at least $2/3$, $$\E_{P}[|p(x)-\ell|] \geq \fr{t} \sum_{i\leq t} |p(x^i) - \ell^i| - \eps/4\ $$ and $$\E_{P}[|p'(x)-\ell|] \leq \fr{t} \sum_{i\leq t} |p'(x^i) - \ell^i| + \eps/4\ .$$
This implies that $$\E_{P}[|p'(x)-\ell|] -\E_{P}[|p(x)-\ell|] \leq \fr{t} \sum_{i\leq t} |p'(x^i) - \ell^i| - \fr{t} \sum_{i\leq t} |p(x^i) - \ell^i| + \eps/2.$$
By combining this with eq.~(\ref{eq:lp-bound}), we get that $\E_{P}[|p'(x)-\ell|] \leq \E_{P}[|p(x)-\ell|] +\eps/2$ and hence our hypothesis $h(x) = p'(x)$ satisfies
\alequn{\E_P[|p'(x)-\ell|] & \leq
\E_{P}[|p(x)-\ell|] + \eps/2 \\ & \leq
\E_{P}[|g^*(x)-\ell|] + \|g^* - p\|_1 + \eps/2 \leq \E_{P}[|g^*(x)-\ell|]  + \eps = \Delta +\eps .}

Finally, to bound $t$ we use Thm.~\ref{thm:kst}. Note that the dimension $N= |\{ S \cond S\subseteq [n], |S| \leq d \}| \leq n^d$, $\ell_1$ constraint on the sum of coefficients is $W$ and $R = 1$ since the range of each parity function is $\{-1,1\}$. This implies that taking $t = O(W^2 \log m /\eps^2 ) = 2^{O(d)} \log n$ examples will ensure uniform convergence with error $\eps/4$ and confidence $5/6$. The running time of solving the LP is $n^{O(d)}$.
\end{proof}

Given access to value queries one can make agnostic learning more efficient. We first remark that if one is only concerned with squared error, then agnostic learning of all functions with spectral $\ell_1$-norm of $W$ can be done in time $\poly(n,W,1/\eps)$ using the algorithm of Kushilevitz and Mansour \cite{KushilevitzMansour:93} (see \cite{FeldmanKV:13} for details). Achieving agnostic learning for $\ell_1$-error is substantially more involved. This problem was solved by Gopalan, Kalai and Klivans \cite{GopalanKK:08} who proved the following theorem.
\begin{theorem}[Sparse $\ell_1$-regression \cite{GopalanKK:08}]
\label{th:gkk}
For $W>0$, we define $\C_W$ as $\{ p(x) \cond \|\hat{p}\|_1 \leq W\}$. There exists an algorithm $\A$ that given $\eps > 0$ and access to value queries for any real-valued $f:\zo^n\rightarrow [-1,1]$, with probability at least $2/3$, outputs a function $h$, such that $\|f-h\|_1 \leq \Delta + \epsilon$, where $\Delta = \min_{p\in \C_W}\{\|f-p\|_1\}$. Further, $\A$ runs in time $\poly(n,W,1/\eps)$
\end{theorem}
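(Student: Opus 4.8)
The plan is to cast the problem as smooth convex optimization over the $\ell_1$-ball in Fourier space and solve it by a Frank--Wolfe (conditional gradient) iteration whose linear-minimization step is implemented by the Goldreich--Levin/Kushilevitz--Mansour algorithm. Identify a polynomial $p$ with spectral norm $\|\hat p\|_1 \le W$ with its vector of Fourier coefficients; then $\C_W$ is precisely the convex hull of the atoms $\{\pm W\chi_S : S\subseteq [n]\}$. The objective $L(p)=\|f-p\|_1=\E_x[\,|f(x)-p(x)|\,]$ is convex but non-differentiable, so first replace $|\cdot|$ by a $\frac1\mu$-smooth surrogate, e.g.\ the Huber loss $\ell_\mu(z)=z^2/(2\mu)$ for $|z|\le\mu$ and $\ell_\mu(z)=|z|-\mu/2$ otherwise, and set $L_\mu(p)=\E_x[\ell_\mu(f(x)-p(x))]$. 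Then $0\le L(p)-L_\mu(p)\le \mu/2$ everywhere, and since $\ell_\mu''\le 1/\mu$ the function $L_\mu$ is convex and $\frac1\mu$-smooth in the $\ell_2$ (equivalently, Parseval) norm over the bounded set $\C_W$.

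Now run Frank--Wolfe on $L_\mu$ over $\C_W$ from $p_0=0$. At step $t$ the gradient of $L_\mu$ at $p_t$, viewed as a function on $\zon$, is $g_t(x)=-\ell_\mu'(f(x)-p_t(x))\in[-1,1]$, and its parity-basis coordinate in position $S$ is $\widehat{g_t}(S)$; the linear-minimization oracle must return an atom minimizing $\la\nabla L_\mu(p_t),v\ra$, i.e.\ $v=-\,W\,\sgn(\widehat{g_t}(S^*))\,\chi_{S^*}$ with $S^*=\arg\max_S|\widehat{g_t}(S)|$. Two observations make this tractable: (i) $g_t$ takes values in $[-1,1]$, so $\sum_S\widehat{g_t}(S)^2\le 1$; (ii) $g_t(x)$ can be evaluated at any point using one value query to $f$ together with an evaluation of $p_t$, which we maintain explicitly as a sum of at most $t$ scaled parities. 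Hence the KM/Goldreich--Levin procedure, run with threshold $\theta$, finds in time and query complexity $\poly(n,1/\theta)$ some $S$ with $|\widehat{g_t}(S)|\ge\max_{S'}|\widehat{g_t}(S')|-\theta$; this is an additive-$W\theta$ approximate linear-minimization oracle. Update $p_{t+1}=(1-\eta_t)p_t+\eta_t v_t$ with $\eta_t=2/(t+2)$.

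The Frank--Wolfe analysis with a $\frac1\mu$-smooth objective and an $O(W\theta)$-approximate oracle gives, after $T$ steps, $L_\mu(p_T)\le \min_{p\in\C_W}L_\mu(p) + O\big(W^2/(\mu T)+W\theta\big)$. Taking $\mu=\eps/2$, $\theta=c\eps/W$ for a small constant $c$, and $T=O(W^2/\eps^2)$, and adding back the $O(\mu)$ smoothing error, yields $\|f-p_T\|_1=L(p_T)\le \min_{p\in\C_W}L(p)+\eps=\Delta+\eps$; clipping $p_T$ pointwise to $[-1,1]$ only decreases $\|f-p_T\|_1$ since $f$ has range $[-1,1]$. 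Finally, all the expectations above (defining $L_\mu$, its gradient, and the Fourier coefficients) are over $\U$ and are in fact estimated from $\poly(n,W,1/\eps)$ random examples, and KM is itself a randomized, sampling-based procedure; a Chernoff/union-bound argument over the $\poly(W/\eps)$-many sparse polynomials and gradient functions that arise along the run shows every relevant quantity is accurate to within $O(\eps)$ with probability $\ge 2/3$. The total running time and sample/query complexity are $\poly(n,W,1/\eps)$.

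The main obstacle is the non-smoothness of $\ell_1$: plain subgradient Frank--Wolfe does not converge, so one must smooth, and then keep the smoothing parameter $\mu$, the iteration count $T$, and the Goldreich--Levin threshold $\theta$ in a mutually consistent regime --- $\theta$ small enough ($\lesssim \eps/W$) that the approximate-oracle error does not accumulate, yet the KM cost $\poly(n,1/\theta)$ remaining $\poly(n,W,1/\eps)$. A secondary subtlety, standard in this line of work, is that $f$ is accessed only through value queries/samples, so one needs uniform convergence over the polynomially-many sparse, spectral-norm-$\le W$ hypotheses produced along the run; this is exactly the kind of bound provided by Theorem~\ref{thm:kst}.
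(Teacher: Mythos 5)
The paper never proves this statement: Theorem~\ref{th:gkk} is quoted verbatim from Gopalan, Kalai and Klivans \cite{GopalanKK:08} and is used purely as a black box (to derive Theorem~\ref{th:agn-learn-lowsens-valueq}), so there is no internal proof to compare against --- what you have written is a proof of the cited result itself. Your sketch is essentially sound, and it takes a genuinely different route from the original: GKK perform $\ell_1$-regression over the Fourier $\ell_1$-ball by gradient-descent-type updates combined with \emph{approximate Bregman projections} back onto the ball, the projection being implemented by a Kushilevitz--Mansour-style sparse recovery, whereas you smooth the absolute loss (Huber with $\mu=\Theta(\eps)$) and run Frank--Wolfe over the convex hull of the atoms $\pm W\chi_S$, using KM only as an approximate linear-minimization oracle. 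The latter is arguably cleaner: the LMO needs just one near-largest Fourier coefficient of the bounded function $g_t(x)=-\ell_\mu'(f(x)-p_t(x))$, which KM supplies in $\poly(n,W/\eps)$ time, and the standard recursion with $\eta_t=2/(t+2)$, curvature $O(W^2/\mu)$ and per-step additive oracle error $O(W\theta)$ indeed yields $\Delta+\eps$ after $T=O(W^2/\eps^2)$ iterations; your parameter regime $\mu=\Theta(\eps)$, $\theta=\Theta(\eps/W)$ is consistent and keeps everything $\poly(n,W,1/\eps)$.

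Two minor points, neither fatal. First, the appeal to uniform convergence (Theorem~\ref{thm:kst}) is unnecessary in the value-query model: $g_t$ is computable exactly at any point from one query to $f$ plus evaluation of the explicitly maintained $t$-sparse $p_t$, so KM's own guarantee together with a union bound over the $T=\poly(W/\eps)$ calls already controls all the randomness. Second, you should handle the degenerate LMO case explicitly: when every coefficient of $g_t$ is below the threshold $\theta$, KM may return nothing, and one should then take the zero atom (which lies in $\C_W$ and still satisfies the $O(W\theta)$-approximate LMO condition); likewise a sign error of the KM estimate can only occur when the coefficient is $O(\theta)$, so it is absorbed in the same error term.
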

By Corollary \ref{cor:lowsens-junta-l1-spectral-bound}, we know that every function with range in $[0,1]$ and total $\ell_1$-influence of at most $a$ can be $\eps$-approximated in $\ell_1$ norm by a function of spectral $\ell_1$-norm $2^{O(a^2/\eps^4)}$. Together with Theorem \ref{th:gkk} this implies the existence of the following learning algorithm.
\begin{theorem}
\label{th:agn-learn-lowsens-valueq}
Let $\C_a$ be the class of all functions with range in $[0,1]$ and total $\ell_1$-influence of at most $a$. There exists an algorithm that, given access to value query oracle, learns $\C_a$ agnostically with $\ell_1$-error in $\poly(n) \cdot 2^{O(a^2/\eps^4)}$ time.
\end{theorem}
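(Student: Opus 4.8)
The plan is to combine the structural result of Corollary~\ref{cor:lowsens-junta-l1-spectral-bound} with the sparse $\ell_1$-regression algorithm of Theorem~\ref{th:gkk}. In the value-query setting the target distribution $P$ is supported on pairs $(x,f(x))$ for a fixed function $f:\zo^n\rightarrow[0,1]$ with $x$ uniform, so $\mathrm{opt}(P,\C_a) = \inf_{g\in\C_a}\|f-g\|_1 =: \mathrm{opt}$, and the value query oracle simply returns $f(x)$. First I would fix $g^*\in\C_a$ with $\|f-g^*\|_1 \le \mathrm{opt}+\eps/4$. Since $g^*$ has range $[0,1]$ and $\infl^1(g^*)\le a$, applying Corollary~\ref{cor:lowsens-junta-l1-spectral-bound} to $g^*$ with error parameter $\eps/4$ produces a polynomial $p$ of Fourier degree $O(a/\eps^2)$ with spectral $\ell_1$-norm $W := \|\hat p\|_1 = 2^{O(a^2/\eps^4)}$ and $\|g^*-p\|_2\le\eps/4$, hence $\|g^*-p\|_1\le\|g^*-p\|_2\le\eps/4$ over the uniform distribution. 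By the triangle inequality $\|f-p\|_1\le \mathrm{opt}+\eps/2$, and since $p\in\C_W$ this shows $\Delta := \min_{q\in\C_W}\|f-q\|_1 \le \mathrm{opt}+\eps/2$.

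Next I would run the Gopalan--Kalai--Klivans algorithm of Theorem~\ref{th:gkk} on $f$ (whose range $[0,1]$ is contained in $[-1,1]$, as required) with norm bound $W$ and accuracy $\eps/2$. With probability at least $2/3$ it returns a hypothesis $h$ with $\|f-h\|_1 \le \Delta + \eps/2 \le \mathrm{opt} + \eps$, which is exactly the agnostic guarantee. Its running time is $\poly(n,W,1/\eps) = \poly(n)\cdot 2^{O(a^2/\eps^4)}$, as claimed.

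The only work is the bookkeeping: splitting the error budget $\eps$ between the approximation step and the regression step, and verifying the hypotheses of the two black boxes — namely that $\C_a$ is contained in the class of $[0,1]$-valued functions of total $\ell_1$-influence at most $a$ to which Corollary~\ref{cor:lowsens-junta-l1-spectral-bound} applies, that $\ell_2$-closeness implies $\ell_1$-closeness under the uniform distribution, and that the approximating polynomial $p$ lies in the class $\C_W$ over which Theorem~\ref{th:gkk} competes. There is no substantial obstacle; the content has already been extracted in Corollary~\ref{cor:lowsens-junta-l1-spectral-bound}, and the point of this theorem is merely that constant total $\ell_1$-influence forces the spectral $\ell_1$-norm of the best polynomial approximator to be finite (indeed $2^{O(a^2/\eps^4)}$), which is precisely the parameter governing the complexity of $\ell_1$-regression.
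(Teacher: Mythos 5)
Your proposal is correct and follows exactly the paper's route: the paper's proof is precisely the combination of Corollary~\ref{cor:lowsens-junta-l1-spectral-bound} (which bounds the spectral $\ell_1$-norm of an approximating polynomial by $2^{O(a^2/\eps^4)}$) with the sparse $\ell_1$-regression algorithm of Theorem~\ref{th:gkk}. Your version merely makes explicit the error-budget bookkeeping and the triangle-inequality step that the paper leaves implicit.
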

We remark that for the special case of submodular functions, a slightly faster ($\poly(n) \cdot 2^{O(1/\eps^2)}$-time) and attribute-efficient algorithm was given in \cite{FeldmanKV:13}. 

\section{Discussion and Open Problems}
Our paper essentially resolves the question of additive approximation by juntas for submodular functions and multiplicative approximation by juntas for monotone submodular functions. However many natural questions and gaps in our bounds still remain. The most obvious one is whether there exists a multiplicative approximation junta for non-monotone submodular functions similar to the monotone case (of size $\poly(1/\gamma, \log (1/\epsilon))$ as in Theorem~\ref{thm:PMAC-junta}). We note that the existence of a multiplicative $(1+\gamma,\epsilon)$-approximation junta of size $2^{\poly(1/\gamma,1/\epsilon)}$ follows from our PMAC-learning algorithm for non-monotone submodular functions (Section~\ref{sec:PMAC-learning}). It is an interesting question whether there is indeed a significant gap between monotone and non-monotone submodular functions or not.  Another natural question is whether every XOS function can be multiplicatively $(1+\gamma,\epsilon)$-approximated by a junta of exponential in $1/\eps$ and $1/\gamma$ size.

It would also be interesting to understand under what distributional assumptions (besides uniform/product) such strong approximation-by-junta results hold. Algorithmically, it is interesting whether a junta of close to optimal size (that is, $\tilde{O}(1/\eps^2)$) can be found in polynomial time given random examples alone. Our algorithm in Theorem \ref{thm:find-junta-examples} only finds a larger $\tilde O(1/\eps^5)$-junta in polynomial time.


\section*{Acknowledgements}
We would like to thank Seshadhri Comandur and Pravesh Kothari for useful discussion. We also thank the anonymous FOCS and SICOMP referees for their comments and useful suggestions. 

\bibliographystyle{alpha}

\bibliography{submodjuntarefs}

\newcommand{\etalchar}[1]{$^{#1}$}
\begin{thebibliography}{KGGK06}

\bibitem[BCIW12]{BalcanCIW:12}
M.F. Balcan, F.~Constantin, S.~Iwata, and L.~Wang.
\newblock Learning valuation functions.
\newblock {\em COLT}, 23:4.1--4.24, 2012.

\bibitem[BDF{\etalchar{+}}12]{BadanidiyuruDFKNR:12}
A.~Badanidiyuru, S.~Dobzinski, Hu~Fu, R.~Kleinberg, N.~Nisan, and
  T.~Roughgarden.
\newblock Sketching valuation functions.
\newblock In {\em SODA}, pages 1025--1035, 2012.

\bibitem[Bec75]{Bec75}
W.~Beckner.
\newblock Inequalities in {F}ourier analysis.
\newblock {\em Ann. of Math. (2)}, 102(1):159--182, 1975.

\bibitem[BH12]{BalcanHarvey:12full}
M.F. Balcan and N.~Harvey.
\newblock Submodular functions: Learnability, structure, and optimization.
\newblock {\em CoRR}, abs/1008.2159, 2012.
\newblock Earlier version in STOC 2011.

\bibitem[BLB03]{BoucheronLB03}
St{\'{e}}phane Boucheron, G{\'{a}}bor Lugosi, and Olivier Bousquet.
\newblock Concentration inequalities.
\newblock In {\em Advanced Lectures on Machine Learning, {ML} Summer Schools
  2003, Canberra, Australia, February 2-14, 2003, T{\"{u}}bingen, Germany,
  August 4-16, 2003, Revised Lectures}, pages 208--240, 2003.

\bibitem[BLM00]{BoucheronLM:00}
S.~Boucheron, G.~Lugosi, and P.~Massart.
\newblock A sharp concentration inequality with applications.
\newblock {\em Random Struct. Algorithms}, 16(3):277--292, 2000.

\bibitem[BLN06]{LLN06}
D.~J.~Lehmann B.~Lehmann and N.~Nisan.
\newblock Combinatorial auctions with decreasing marginal utilities.
\newblock {\em Games and Economic Behavior}, 55:1884--1899, 2006.

\bibitem[BM02]{BartlettMendelson:02}
P.~Bartlett and S.~Mendelson.
\newblock {Rademacher and Gaussian} complexities: Risk bounds and structural
  results.
\newblock {\em JMLR}, 3:463--482, 2002.

\bibitem[BOL85]{Ben-OrLinial:85}
M.~Ben-Or and N.~Linial.
\newblock Collective coin flipping, robust voting schemes and minima of banzhaf
  values.
\newblock In {\em FOCS}, pages 408--416, 1985.

\bibitem[Bon70]{Bon70}
A.~Bonami.
\newblock \'{E}tude des coefficients de {F}ourier des fonctions de
  {$L^{p}(G)$}.
\newblock {\em Ann. Inst. Fourier (Grenoble)}, 20(fasc. 2):335--402 (1971),
  1970.

\bibitem[BOSY13]{BlaisOSY:13manu}
E.~Blais, K.~Onak, R.~Servedio, and G.~Yaroslavtsev.
\newblock Concise representations of discrete submodular functions, 2013.
\newblock Personal communication.

\bibitem[Bou02]{Bourgain:02}
Jean Bourgain.
\newblock On the distribution of the fourier spectrum of boolean functions.
\newblock {\em Israel Journal of Mathematics}, 131(1):269--276, 2002.

\bibitem[CKK{\etalchar{+}}06]{ChawlaKKRS:06}
S.~Chawla, R.~Krauthgamer, R.~Kumar, Y.~Rabani, and D.~Sivakumar.
\newblock On the hardness of approximating multicut and sparsest-cut.
\newblock {\em Computational Complexity}, 15(2):94--114, 2006.

\bibitem[CKKL12]{CheraghchiKKL:12}
M.~Cheraghchi, A.~Klivans, P.~Kothari, and H.~Lee.
\newblock Submodular functions are noise stable.
\newblock In {\em SODA}, pages 1586--1592, 2012.

\bibitem[DF05]{DinurFriedgut:05course}
I.~Dinur and E.~Friedgut.
\newblock Lecture notes for analytical methods in combinatorics and
  computer-science (lect 5).
\newblock Available at {\tt http://www.cs.huji.ac.il/\verb+~+analyt/}, 2005.

\bibitem[DFKO06]{DinurFKO:06}
I.~Dinur, E.~Friedgut, G.~Kindler, and R.~O'Donnell.
\newblock {On the Fourier tails of bounded functions over the discrete cube}.
\newblock In {\em STOC}, pages 437--446, 2006.

\bibitem[DS05]{DinurSafra:05}
I.~Dinur and S.~Safra.
\newblock On the hardness of approximating minimum vertex cover.
\newblock {\em Annals of Mathematics}, 162, 2005.

\bibitem[DS06]{DS06}
S.~Dobzinski and M.~Schapira.
\newblock An improved approximation algorithm for combinatorial auctions with
  submodular bidders.
\newblock In {\em SODA}, pages 1064--1073, 2006.

\bibitem[DS09]{DiakonikolasServedio:09}
I.~Diakonikolas and R.~Servedio.
\newblock Improved approximation of linear threshold functions.
\newblock In {\em CCC}, pages 161--172, 2009.

\bibitem[Edm70]{E70}
Jack Edmonds.
\newblock Matroids, submodular functions and certain polyhedra.
\newblock {\em Combinatorial Structures and Their Applications}, pages 69--87,
  1970.

\bibitem[Fei06]{Feige:06}
Uriel Feige.
\newblock On maximizing welfare when utility functions are subadditive.
\newblock In {\em ACM STOC}, pages 41--50, 2006.

\bibitem[FFI01]{FFI00}
L.~Fleischer, S.~Fujishige, and S.~Iwata.
\newblock A combinatorial, strongly polynomial-time algorithm for minimizing
  submodular functions.
\newblock {\em JACM}, 48(4):761--777, 2001.

\bibitem[FK14]{FeldmanK14}
Vitaly Feldman and Pravesh Kothari.
\newblock Learning coverage functions and private release of marginals.
\newblock In {\em {COLT}}, pages 679--702, 2014.

\bibitem[FKN02]{FriedgutKN:02}
E.~Friedgut, G.~Kalai, and A.~Naor.
\newblock Boolean functions whose {Fourier} transform is concentrated on the
  first two levels.
\newblock {\em Adv. in Appl. Math}, 29, 2002.

\bibitem[FKV13]{FeldmanKV:13}
V.~Feldman, P.~Kothari, and J.~Vondr\'ak.
\newblock Representation, approximation and learning of submodular functions
  using low-rank decision trees.
\newblock {\em COLT}, 2013.

\bibitem[FMV07a]{FMV07}
U.~Feige, V.~Mirrokni, and J.~Vondr{\'a}k.
\newblock Maximizing non-monotone submodular functions.
\newblock pages 461--471, 2007.

\bibitem[FMV07b]{FeigeMV:07}
U.~Feige, V.~Mirrokni, and J.~Vondr{\'a}k.
\newblock Maximizing non-monotone submodular functions.
\newblock In {\em IEEE FOCS}, pages 461--471, 2007.

\bibitem[Fra97]{F97}
Andr\'as Frank.
\newblock Matroids and submodular functions.
\newblock {\em Annotated Biblographies in Combinatorial Optimization}, pages
  65--80, 1997.

\bibitem[Fri98]{Friedgut:98}
E.~Friedgut.
\newblock Boolean functions with low average sensitivity depend on few
  coordinates.
\newblock {\em Combinatorica}, 18(1):27--35, 1998.

\bibitem[GGR98]{GoldreichGR98}
Oded Goldreich, Shafi Goldwasser, and Dana Ron.
\newblock Property testing and its connection to learning and approximation.
\newblock {\em J. {ACM}}, 45(4):653--750, 1998.

\bibitem[GHIM09]{GHIM09}
M.~Goemans, N.~Harvey, S.~Iwata, and V.~Mirrokni.
\newblock Approximating submodular functions everywhere.
\newblock In {\em SODA}, pages 535--544, 2009.

\bibitem[GHRU11]{GuptaHRU:11}
A.~Gupta, M.~Hardt, A.~Roth, and J.~Ullman.
\newblock Privately releasing conjunctions and the statistical query barrier.
\newblock In {\em STOC}, pages 803--812, 2011.

\bibitem[GKK08]{GopalanKK:08}
P.~Gopalan, A.~Kalai, and A.~Klivans.
\newblock Agnostically learning decision trees.
\newblock In {\em STOC}, pages 527--536, 2008.

\bibitem[GKS05]{GKS05}
C.~Guestrin, A.~Krause, and A.~Singh.
\newblock Near-optimal sensor placements in gaussian processes.
\newblock In {\em ICML}, pages 265--272, 2005.

\bibitem[GMR12]{GopalanMR:12}
P.~Gopalan, R.~Meka, and O.~Reingold.
\newblock {DNF} sparsification and a faster deterministic counting algorithm.
\newblock In {\em CCC}, pages 126--135, 2012.

\bibitem[GV06]{GoemansVondrak:06}
M.~Goemans and J.~Vondr{\'a}k.
\newblock Covering minimum spanning trees of random subgraphs.
\newblock {\em Random Struct. Algorithms}, 29(3):257--276, 2006.

\bibitem[Hau92]{Haussler:92}
D.~Haussler.
\newblock {Decision theoretic generalizations of the PAC model for neural net
  and other learning applications}.
\newblock {\em Information and Computation}, 100(1):78--150, 1992.

\bibitem[KGGK06]{KGGK06}
A.~Krause, C.~Guestrin, A.~Gupta, and J.~Kleinberg.
\newblock Near-optimal sensor placements: maximizing information while
  minimizing communication cost.
\newblock In {\em IPSN}, pages 2--10, 2006.

\bibitem[KKL88]{KahnKL:88}
J.~Kahn, G.~Kalai, and N.~Linial.
\newblock {The influence of variables on Boolean functions}.
\newblock In {\em FOCS}, pages 68--80, 1988.

\bibitem[KM93]{KushilevitzMansour:93}
E.~Kushilevitz and Y.~Mansour.
\newblock {Learning decision trees using the Fourier spectrum}.
\newblock {\em SIAM Journal on Computing}, 22(6):1331--1348, 1993.

\bibitem[Kol01]{Koltchinskii01}
Vladimir Koltchinskii.
\newblock Rademacher penalties and structural risk minimization.
\newblock {\em {IEEE} Transactions on Information Theory}, 47(5):1902--1914,
  2001.

\bibitem[KR06]{KrauthgamerRabani:06}
R.~Krauthgamer and Y.~Rabani.
\newblock Improved lower bounds for embeddings into {L1}.
\newblock In {\em SODA}, pages 1010--1017, 2006.

\bibitem[KR08]{KhotRegev:08}
S.~Khot and O.~Regev.
\newblock Vertex cover might be hard to approximate to within 2-$\epsilon$;.
\newblock {\em JCSS}, 74(3):335--349, May 2008.

\bibitem[KSG08]{KSG08}
A.~Krause, A.~Singh, and C.~Guestrin.
\newblock Near-optimal sensor placements in gaussian processes: Theory,
  efficient algorithms and empirical studies.
\newblock {\em JMLR}, 9:235--284, 2008.

\bibitem[KSS94]{KearnsSS:94}
M.~Kearns, R.~Schapire, and L.~Sellie.
\newblock Toward efficient agnostic learning.
\newblock {\em Machine Learning}, 17(2-3):115--141, 1994.

\bibitem[KST08]{KakadeST:08}
S.~Kakade, K.~Sridharan, and A.~Tewari.
\newblock On the complexity of linear prediction: Risk bounds, margin bounds,
  and regularization.
\newblock In {\em NIPS}, pages 793--800, 2008.

\bibitem[Lov83]{L83}
L\'aszl\'o Lov\'asz.
\newblock Submodular functions and convexity.
\newblock {\em Mathematical Programmming: The State of the Art}, pages
  235--257, 1983.

\bibitem[Man95]{Mansour:95}
Y.~Mansour.
\newblock {An $O(n^{\log \log n})$ learning algorithm for DNF under the uniform
  distribution}.
\newblock {\em Journal of Computer and System Sciences}, 50:543--550, 1995.

\bibitem[MR06]{MR06}
C.~McDiarmid and B.~Reed.
\newblock Concentration for self-bounding functions and an inequality of
  talagrand.
\newblock {\em Random structures and algorithms}, 29:549--557, 2006.

\bibitem[NS92]{NisanSzegedy:92}
N.~Nisan and M.~Szegedy.
\newblock On the degree of boolean functions as real polynomials.
\newblock {\em Computational Complexity}, 4:462--467, 1992.

\bibitem[OS07]{ODonnellServedio:07}
R.~O'Donnell and R.~Servedio.
\newblock Learning monotone decision trees in polynomial time.
\newblock {\em SIAM J. Comput.}, 37(3):827--844, 2007.

\bibitem[Que95]{Q95}
Maurice Queyranne.
\newblock A combinatorial algorithm for minimizing symmetric submodular
  functions.
\newblock In {\em SODA}, pages 98--101, 1995.

\bibitem[RY13]{RaskhodnikovaYaroslavtsev:13}
S.~Raskhodnikova and G.~Yaroslavtsev.
\newblock Learning pseudo-boolean {k-DNF} and submodular functions.
\newblock In {\em SODA}, 2013.

\bibitem[Ser04]{Servedio:04mondnf}
R.~Servedio.
\newblock {On learning monotone DNF under product distributions}.
\newblock {\em Information and Computation}, 193(1):57--74, 2004.

\bibitem[SV11]{SV11}
C.~Seshadhri and J.~Vondr\'ak.
\newblock Is submodularity testable?
\newblock In {\em Innovations in computer science}, pages 195--210, 2011.

\bibitem[Val84]{Valiant:84}
L.~G. Valiant.
\newblock A theory of the learnable.
\newblock {\em Communications of the ACM}, 27(11):1134--1142, 1984.

\bibitem[Vap98]{Vapnik:98}
V.~Vapnik.
\newblock {\em Statistical Learning Theory}.
\newblock Wiley-Interscience, New York, 1998.

\bibitem[Von08]{Vondrak08}
J.~Vondr{\'a}k.
\newblock Optimal approximation for the submodular welfare problem in the value
  oracle model.
\newblock In {\em STOC}, pages 67--74, 2008.

\bibitem[Von10]{Vondrak10}
J.~Vondr\'ak.
\newblock A note on concentration of submodular functions, 2010.
\newblock arXiv:1005.2791v1.

\end{thebibliography}

\appendix

\section{Submodular Functions over General Product Distributions}
\label{sec:product-distribution}
In this section, we address the question of extending our results to general product distributions on $\zo^n$.
We present our submodular junta approximation result in this more general setting to illustrate how parameters of the distribution affect the statement of the results. Replicating all our results in this general setting is beyond the scope of this work. We note that the extension of the Fourier analysis based tools that we use to this setting is well-known.
\begin{theorem}
\label{thm:submod-junta-product}
Let $\cD$ be a product distribution on $\zo^n$ and let $p_0 = \min_{i \in [n], a \in \zo} \Pr_{x \sim \cD}[x_i = a] > 0$.
Then for any $\epsilon \in (0,\frac12)$ and any submodular function $f:\zo^n \rightarrow [0,1]$, there exists a submodular function $g:\zo^n \rightarrow [0,1]$ depending only on a subset of variables $J \subseteq [n]$, $|J| = O(\frac{1}{p_0 \epsilon^2} \log \frac{1}{p_0 \epsilon})$, such that $\E_{x \sim \cD}[|f(x) - g(x)|^2] \leq \epsilon^2$.
\end{theorem}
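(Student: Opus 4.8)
The plan is to mirror the proof of Theorem~\ref{thm:submod-junta}: first establish a $\cD$-analogue of the reduction Lemma~\ref{lem:submod-reduce} that cuts the number of variables down to $O\!\bigl(\frac{1}{p_0\epsilon^2}\log\frac{|J|}{\epsilon}\bigr)$, and then iterate it (exactly as Theorem~\ref{thm:submod-junta} is deduced from Lemma~\ref{lem:submod-reduce}) to remove the dependence on $n$ and reach $O\!\bigl(\frac{1}{p_0\epsilon^2}\log\frac{1}{p_0\epsilon}\bigr)$ variables. For the reduction step I would run the very same Algorithm~\ref{alg:junta} with the same $\delta$-subsampling, but with the parameter choices $\alpha=\Theta(\epsilon^2)$ and $\delta=\Theta\!\bigl(p_0/\log\frac{|J|}{\epsilon}\bigr)$. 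Lemma~\ref{lem:size} then goes through unchanged, since it only uses that each chosen element lands in $S(\delta)$ with probability $\delta$, giving $|J'|\le 4/(\alpha\delta)=O\!\bigl(\frac{1}{p_0\epsilon^2}\log\frac{|J|}{\epsilon}\bigr)$. The ``good point'' part of the analysis is also distribution-free: submodularity forces $|\partial_i f|\le\alpha$ on the whole subcube when $x_{J'}$ is good, and then Corollary~\ref{cor:Lipschitz}, which is stated for an \emph{arbitrary} product distribution, yields $\Var_{\cD|_{\bar{J'}}}[f_{x_{J'}}]\le 2\alpha=O(\epsilon^2)$.

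The one genuinely new point is the analogue of Lemma~\ref{lem:high-prob}: controlling, for each $i\notin J'$, the $\cD$-probability that $x_{J'}$ is ``bad for $i$.'' The relevant down-monotone families $\cF_i=\{x\in\zo^{J'}:\partial_i f(x)>\alpha\}$ and $\cF_i'=\{F\subseteq J':\partial_i f(\b1_{J\setminus F})<-\alpha\}$ still satisfy $\Pr[J'(\delta)\in\cF_i],\Pr[J'(\delta)\in\cF_i']\le\tfrac12$ by the algorithm's stopping rule together with submodularity (coupling $J'(\delta)\supseteq S(\delta)$). Writing $\Pr[J'(\delta)\in\cF]=(1-\delta)^{\phi(\delta)}$ with $\phi$ non-decreasing (Lemma~\ref{lem:boosting}), the bound $\le\tfrac12$ forces $\phi(\delta)\ge 1/(2\delta)$, hence, since $\phi$ is non-decreasing and $\delta\le p_0$, $\Pr[J'(p_0)\in\cF]=(1-p_0)^{\phi(p_0)}\le(1-p_0)^{1/(2\delta)}\le e^{-p_0/(2\delta)}$. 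Finally, because every marginal $\Pr_{\cD}[x_i=1]$ and $\Pr_{\cD}[x_i=0]$ is at least $p_0$, the monotone coupling ``include $i$ iff $U_i\le q_i$'' (with $U_i$ i.i.d.\ uniform, $q_i\ge p_0$) gives $\Pr_{\cD}[\cF]\le\Pr[J'(p_0)\in\cF]$ for every down-monotone $\cF$; applying this to $\cF_i$ under $\cD|_{J'}$ and to $\cF_i'$ under the complemented measure yields $\Pr_{\cD|_{J'}}[x_{J'}\text{ bad for }i]\le 2e^{-p_0/(2\delta)}$. By the choice of $\delta$ this is at most $\epsilon^2/(8|J|)$, so a union bound over $i$ gives $\Pr_{\cD}[x_{J'}\text{ bad}]\le\epsilon^2/8$. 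Taking $h(x)=\E_{y\sim\cD|_{\bar{J'}}}[f(x_{J'},y)]$ (again submodular with range $[0,1]$), we get $\E_{\cD}[(f-h)^2]=\E_{x_{J'}\sim\cD|_{J'}}[\Var_{\cD|_{\bar{J'}}}[f_{x_{J'}}]]\le\epsilon^2/8+\epsilon^2/8=\epsilon^2/4$, which is the reduction lemma; the theorem then follows by the same two-step iteration as in Theorem~\ref{thm:submod-junta}, which collapses the $\log|J|$ factor down to $\log\frac1{p_0\epsilon}$.

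The main obstacle is exactly this bridge: the boosting lemma is phrased for i.i.d.\ subsampling at a \emph{single} probability, so one must get from the $\delta$-subsampling used inside the algorithm to the non-uniform measure $\cD$ on the subcube. The clean resolution is the two hops ``$\delta$-subsample $\to$ $p_0$-subsample'' (boosting lemma, legitimate because $\delta\le p_0$) and ``$p_0$-subsample $\to$ $\cD$'' (monotone coupling, legitimate because all marginals are $\ge p_0$); this is also where the $1/p_0$ in the bound comes from, since we may only subsample at probability $\le p_0$ while still amplifying by a factor $2^{-\Omega(p_0/\delta)}$. One could alternatively invoke the weighted version of the boosting lemma of \cite{GoemansVondrak:06} directly, but the coupling is more elementary. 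Everything else — the size bound, the Lipschitz/concentration step, and the iteration — is routine once Lemma~\ref{lem:Lipschitz} and Corollary~\ref{cor:Lipschitz} are used in their product-distribution forms.
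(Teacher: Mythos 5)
Your proof is correct and yields the same junta size, but it handles the one genuinely new ingredient by a different route than the paper. The paper does not re-run Algorithm~\ref{alg:junta} with uniform $\delta$-subsampling: it modifies the selection rule itself (Algorithm~\ref{alg:prod-junta}), testing marginal values against sets drawn from product distributions $\cD_0,\cD_1$ whose coordinate marginals are the $\eta$-th powers of those of $\cD$, and then amplifies the ``$\leq 1/2$'' selection criterion to $\cD$ in a single step via the non-uniform boosting lemma of \cite{GoemansVondrak:06} (Lemma~\ref{lem:boosting-asymmetric}), getting per-variable bad probability exactly $(1/2)^{1/\eta}$; the size bound then needs a small adaptation (each chosen variable lands in a $\cD_0$-sample with probability at least $1-(1-p_0)^\eta \geq \eta p_0$, giving $|J'| \leq 4/(p_0\alpha\eta)$). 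You instead keep the original uniform-$\delta$ algorithm, so Lemma~\ref{lem:size} applies verbatim, and you bridge to $\cD$ in two hops: the uniform boosting lemma (Lemma~\ref{lem:boosting}) from $\delta$ to $p_0$ (valid since $\delta\leq p_0$), followed by a monotone coupling from the $p_0$-i.i.d.\ sample to $\cD$, using that every marginal of $\cD$ (for both values) is at least $p_0$ and that the relevant families $\cF_i,\cF_i'$ are down-monotone. Both routes give per-variable bad probability $2^{-\Omega(p_0/\delta)}$ resp.\ $(1/2)^{1/\eta}$ and junta size $O\bigl(\frac{1}{p_0\epsilon^2}\log\frac{|J|}{\epsilon}\bigr)$, matching Lemma~\ref{lem:submod-product-reduce}; the remaining steps (the Lipschitz argument plus Corollary~\ref{cor:Lipschitz} under $\cD$, averaging over $\cD$ restricted to $\bar{J'}$, and the bootstrapping iteration, which the paper also only sketches) coincide. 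What each buys: your version is more elementary, reusing the uniform-case machinery unchanged and adding only a standard stochastic-domination argument on top of Lemma~\ref{lem:boosting}, whereas the paper's one-shot amplification works intrinsically with the individual marginals of $\cD$ rather than collapsing them to the worst-case $p_0$ at the coupling step — slightly cleaner, and the natural route if one wanted bounds sensitive to per-coordinate biases; since the final statement is phrased in terms of $p_0$ alone, the two are quantitatively equivalent.
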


Note the dependence of the size of the junta on $p_0$. Let us show first that a factor of $\Omega(1/p_0)$ is necessary.

\begin{propos}
Let $s \geq 2$ be even and let $\cD$ be a product distribution on $\zo^n$ such that $\Pr_{x \sim \cD}[x_i = 1] = p_0 = 1 - (1/2)^{2/s}$ for all $i \in [n]$. Let $f(x) = \min \{ \sum_{i \in S} x_i, 1 \}$ where $|S| = s$.
Then there is no function $g:\zo^n \rightarrow \RR$ such that $\E_{x \sim \cD}[|f(x)-g(x)|^2] < 1/8$ and $g$ depends on fewer than $s/2$ variables.
\end{propos}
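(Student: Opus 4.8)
The plan is to first identify $f$ with an OR. Since $x_i\in\{0,1\}$, we have $f(x)=\min\{\sum_{i\in S}x_i,1\}=1$ iff some $x_i=1$ for $i\in S$, and $f(x)=0$ iff $x_i=0$ for all $i\in S$. Under the product distribution $\cD$ with parameter $p_0=1-(1/2)^{2/s}$ this gives
$\Pr_{\cD}[f=0]=(1-p_0)^{s}=\bigl((1/2)^{2/s}\bigr)^{s}=1/4$, and more generally $(1-p_0)^{s/2}=1/2$; these two identities are the only special facts about $p_0$ we will use.

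Next, suppose $g$ depends only on a set of coordinates $B$ with $|B|<s/2$; since $s$ is even this means $|B|\le s/2-1$. I would fix the coordinates $x_B$ and split the subcubes indexed by $\{0,1\}^{\bar B}$ into two types. Call $x_B$ \emph{bad} (Case 1) if some $i\in S\cap B$ has $x_i=1$; on such a subcube $f\equiv 1$, and I simply drop these subcubes since they only help $g$. Call $x_B$ \emph{good} (Case 2) otherwise; then all coordinates of $S\cap B$ are $0$ on the subcube, so the restriction of $f$ equals the OR of the set $S\setminus B$, which is nonempty because $|S\setminus B|\ge s-|B|\ge s/2+1>0$. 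On a good subcube this restriction takes the value $0$ with conditional probability exactly $q:=(1-p_0)^{|S\setminus B|}$, independent of which good $x_B$ we chose. Since $g$ depends only on $B$, it is a constant $c$ on this subcube, and for a $\{0,1\}$-valued random variable that is $0$ with probability $q$ and $1$ with probability $1-q$, every constant $c$ satisfies $\E[(\,\cdot\,-c)^2]\ge q(1-q)$ (the Bernoulli variance, attained at the mean $c=1-q$). Hence each good subcube contributes at least $q(1-q)$ to $\E_{\cD}[|f-g|^2]$.

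Finally I would assemble the bound. Let $a:=\Pr_{\cD}[x_B\text{ good}]=\Pr_{\cD}[\,\forall i\in S\cap B:\ x_i=0\,]=(1-p_0)^{|S\cap B|}$. Then $aq=(1-p_0)^{|S\cap B|+|S\setminus B|}=(1-p_0)^{s}=1/4$. Moreover $|S\setminus B|\ge s-|B|\ge s/2+1>s/2$, and since $0<1-p_0<1$ this forces $q<(1-p_0)^{s/2}=1/2$. Therefore
$$\E_{\cD}\bigl[|f-g|^2\bigr]\;\ge\;a\cdot q(1-q)\;=\;(aq)(1-q)\;=\;\frac14(1-q)\;>\;\frac14\cdot\frac12\;=\;\frac18,$$
contradicting $\E_{\cD}[|f-g|^2]<1/8$; hence no such $g$ exists. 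I do not expect a real obstacle here: the argument is elementary, and the only delicate point is that the arithmetic ``$aq=1/4$ with $q<1/2$'' must beat the threshold $1/8$ \emph{exactly} — which is why $p_0$ is tuned so that $(1-p_0)^{s}=1/4$ and $(1-p_0)^{s/2}=1/2$. One should also double-check the two structural facts used above, namely that $S\setminus B$ is nonempty (so Case-2 restrictions are genuinely non-constant) and that $q$ is the same for every good subcube (which holds because conditioning a product distribution on coordinate values leaves a product distribution).
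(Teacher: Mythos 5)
Your proof is correct and follows essentially the same route as the paper's: condition on the coordinates of the junta, observe that on each relevant subcube the restriction of $f$ is a Bernoulli (OR) variable whose variance $q(1-q)$ lower-bounds the squared error of the constant value that $g$ takes there, and combine via the tuned identities $(1-p_0)^s=1/4$ and $(1-p_0)^{s/2}=1/2$. The only cosmetic difference is that the paper assumes w.l.o.g.\ $J\subseteq S$ with $|J|=s/2$ and conditions on the single event $x_J=0$ (obtaining a Bernoulli$(1/2)$ and the bound $\geq 1/8$), whereas you sum over all ``good'' subcubes without the w.l.o.g.\ step and use $|S\setminus B|>s/2$ to get the strict bound $>1/8$; both versions establish the proposition.
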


Note that $s = -2 / \log_2 (1-p_0) = \Omega(1/p_0)$, so the claim is that we need $\Omega(1/p_0)$ variables to approximate $f$ even within a constant $\ell_2$ error.
To prove this, consider any function $g$ depending on $|J|=s/2$ variables. Variables outside of $S$ do not affect $f$ so we may assume that $J \subset S$. Note that $f(x)$ attains only values $0$ and $1$. With respect to $\cD$, we have $\Pr_{x \sim \cD}[f(x) = 1] = \Pr_{x \sim \cD}[\exists i \in S; x_i = 1] = 1 - (1-p_0)^s = 1 - 1/4 = 3/4$. Furthermore, $\Pr_{x \sim \cD}[\forall i \in J; x_i = 0] = (1-p_0)^{s/2} = 1/2$. Conditioned on $x_i=0$ for all $i \in J$, we have $f(x) = 0$ with probability $(1-p_0)^{s/2} = 1/2$ and $f(x) = 1$ with probability $1/2$ . However, $g(x)$ has the same value in all these cases and hence we get $\E[|f(x) - g(x)|^2 \mid x_J = {0}] \geq 1/4$ (the best choice is to set $g(x) = 1/2$ whenever $x_J = {0}$). Since $\Pr_{x \sim \cD}[x_J = {0}] = 1/2$, we obtain $\E_{x \sim \cD}[|f(x)-g(x)|^2] \geq 1/8$.

\

Next, we turn to the proof of Theorem~\ref{thm:submod-junta-product}. As in the case of uniform distributions, it is sufficient to prove the following lemma which is then iterated to obtain Theorem~\ref{thm:submod-junta-product}.

\begin{lemma}
\label{lem:submod-product-reduce}
Let $\cD$ be a product distribution on $\zo^n$ such that $p_0 = \min_{i \in [n], a \in \zo} \Pr_{x \sim \cD}[x_i = a] > 0$.
For any $\epsilon \in (0,\frac12)$ and any submodular function $f:\{0,1\}^J \rightarrow [0,1]$,
there exists a submodular function $h:\{0,1\}^J \rightarrow [0,1]$ depending only on a subset of variables $J' \subseteq J$, $|J'| = O(\frac{1}{p_0 \epsilon^2} \log \frac{|J|}{\epsilon^2})$, such that $\E_{x \sim \cD}[|f(x) - h(x)|^2] \leq \frac14 \epsilon^2$.
\end{lemma}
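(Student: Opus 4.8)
The plan is to run \emph{essentially Algorithm~\ref{alg:junta} verbatim} — in particular still sampling subsets of the already-chosen variables at the \emph{uniform} rate $\delta$ — and to let the product distribution $\cD$ enter only in the final concentration step. Concretely, I would use parameters $\alpha = \tfrac1{16}\epsilon^2$ and $\delta = \Theta(p_0/\log(|J|/\epsilon^2))$; note $\delta \le p_0 \le \tfrac12$. Lemma~\ref{lem:size} concerns only the $\delta$-sampling selection rule together with submodularity and makes no reference to any distribution, so it still yields $|J'| \le \tfrac{4}{\alpha\delta} = O(\tfrac{1}{p_0\epsilon^2}\log\tfrac{|J|}{\epsilon^2})$, which is the claimed bound. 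As in the uniform case I then set $h(x) = \E_{y\sim\cD|_{\bar{J'}}}[f(x_{J'},y)]$; this is a submodular $J'$-junta with range $[0,1]$, and since $\cD$ is a product measure, $\E_{\cD}[(f-h)^2] = \E_{x_{J'}\sim\cD|_{J'}}[\,\Var_{\cD|_{\bar{J'}}}[f_{x_{J'}}]\,]$.

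The one step needing a genuinely new argument is the analogue of Lemma~\ref{lem:high-prob}: for every $i\in J\setminus J'$ both $\Pr_{x_{J'}\sim\cD|_{J'}}[\partial_i f(x_{J'})>\alpha]$ and $\Pr_{x_{J'}\sim\cD|_{J'}}[\partial_i f(x_{J'}+\b1_{J\setminus J'})<-\alpha]$ must be at most $\epsilon^2/(16|J|)$. Fix $i$ and consider the family $\cF = \{F\subseteq J' : \partial_i f(\b1_F) > \alpha\}$, which is down-monotone because submodularity makes $\partial_i f$ non-increasing. The selection rule (with submodularity used exactly as in Lemma~\ref{lem:high-prob} to pass from $S$ to $J'$) gives $\Pr_{F\sim\mathrm{Ber}(\delta)^{\otimes J'}}[F\in\cF]\le\tfrac12$. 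The key observation is that for $x_{J'}\sim\cD|_{J'}$ the set $\{j : (x_{J'})_j = 1\}$ is a product measure with all marginals $p_j\ge p_0$, hence it stochastically dominates $\mathrm{Ber}(p_0)^{\otimes J'}$; since $\cF$ is down-monotone this gives $\Pr_{x_{J'}\sim\cD|_{J'}}[\partial_i f(x_{J'})>\alpha]\le\Pr_{\mathrm{Ber}(p_0)^{\otimes J'}}[\,\cdot\in\cF\,]$. Finally the boosting lemma (Lemma~\ref{lem:boosting}) amplifies from rate $\delta$ to rate $p_0$: writing $\sigma_p=(1-p)^{\phi(p)}$ with $\phi$ non-decreasing, $\sigma_\delta\le\tfrac12$ forces $\phi(\delta)\ge\tfrac{\log 2}{2\delta}$ (using $\delta\le\tfrac12$), and since $\delta\le p_0$, $\sigma_{p_0}=(1-p_0)^{\phi(p_0)}\le(1-p_0)^{\log 2/(2\delta)}\le\epsilon^2/(16|J|)$ for the chosen $\delta$ (using $\log\tfrac1{1-p_0}\ge p_0$). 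The second bound is symmetric, using $\cF' = \{F\subseteq J' : \partial_i f(\b1_{J\setminus F})<-\alpha\}$ (down-monotone, and the relevant random set has marginals $1-p_j\ge p_0$) together with the criterion for including variables in the set $T$.

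With these two tail bounds the remainder is identical to the uniform proof: call $x_{J'}$ \emph{good} when $\partial_i f(x_{J'})\le\alpha$ and $\partial_i f(x_{J'}+\b1_{J\setminus J'})\ge-\alpha$ for all $i\in J\setminus J'$; by submodularity $f_{x_{J'}}$ is then $\alpha$-Lipschitz, so Corollary~\ref{cor:Lipschitz} (valid for any product distribution) gives $\Var_{\cD|_{\bar{J'}}}[f_{x_{J'}}]\le 2\alpha\,\E[f_{x_{J'}}]\le 2\alpha=\tfrac18\epsilon^2$. A union bound over the $\le 2|J|$ events shows $x_{J'}$ is bad with probability at most $\tfrac18\epsilon^2$, hence $\E_{\cD}[(f-h)^2]\le\tfrac18\epsilon^2\cdot 1 + \tfrac18\epsilon^2=\tfrac14\epsilon^2$. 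The main (essentially only) obstacle is spotting the correct use of stochastic domination: one should \emph{not} try to sample at a $\cD$-biased rate inside the algorithm and invent a biased boosting lemma; rather one keeps the algorithm uniform and exploits that $\cD|_{J'}$ dominates $\mathrm{Ber}(p_0)^{\otimes J'}$ on down-monotone events — precisely the direction needed to feed into Lemma~\ref{lem:boosting} — with the extra $1/p_0$ factor in $|J'|$ arising exactly because $\delta$ must be taken proportional to $p_0$ so that $(1-p_0)^{1/(2\delta)}$ beats $\epsilon^2/|J|$.
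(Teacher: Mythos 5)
Your proposal is correct, but it takes a different route from the paper's own proof. The paper does exactly what you advise against: it modifies the selection procedure itself (Algorithm~\ref{alg:prod-junta}), sampling the already-chosen variables from biased product distributions $\cD_0,\cD_1$ with $\Pr_{\cD_0}[x_i=0]=(\Pr_{\cD}[x_i=0])^{\eta}$, and then invokes a \emph{non-uniform} boosting lemma of Goemans and Vondr\'ak (Lemma~\ref{lem:boosting-asymmetric}) to pass directly from the selection distribution to $\cD$, with the size bound $|J'|\le \frac{4}{p_0\alpha\eta}$ coming from the fact that each selected variable appears in the biased sample with probability at least $1-(1-p_0)^{\eta}\ge \eta p_0$. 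You instead keep Algorithm~\ref{alg:junta} verbatim at uniform rate $\delta=\Theta(p_0/\log(|J|/\epsilon^2))$, so Lemma~\ref{lem:size} applies word for word, and you bridge to $\cD$ in two steps: the symmetric boosting lemma (Lemma~\ref{lem:boosting}) amplifies the selection guarantee from rate $\delta$ to rate $p_0$, and stochastic domination on down-monotone families transfers the bound from $\mathrm{Ber}(p_0)^{\otimes J'}$ to $\cD|_{J'}$ (for the first event via the one-set, whose marginals are at least $p_0$; for the second via the zero-set, whose marginals are $1-p_j\ge p_0$). Both arguments are sound, both use the same final averaging-plus-Lipschitz-concentration step (Corollary~\ref{cor:Lipschitz} is stated for arbitrary product distributions), and both land on the same junta size $O(\frac{1}{p_0\epsilon^2}\log\frac{|J|}{\epsilon^2})$, since the extra $1/p_0$ enters in your version through $\delta\propto p_0$ and in the paper's through the $\eta p_0$ inclusion probability. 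What your version buys is economy: no new algorithm and no non-uniform boosting lemma are needed, only a standard coupling argument on top of the uniform-case machinery. What the paper's version buys is that the selection rule is tailored to the actual marginals rather than to the worst-case $p_0$, which is the more natural generalization and the one you would want if you hoped for per-coordinate (rather than $p_0$-dependent) bounds, even though the theorem as stated does not exploit this.
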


In the following, we prove this lemma. Again, our proof relies on a greedy procedure to select the significant variables, and the boosting lemma to obtain a high-probability bound on the event that the function is sufficiently Lipschitz in the remaining variables. We need the following general version of the boosting lemma (\cite{GoemansVondrak:06}, somewhat reformulated here).

\begin{lemma}[non-uniform boosting lemma]
\label{lem:boosting-asymmetric}
Let $\cF \subseteq \zo^X$ be down-monotone and $\eta \in (0,1)$. Let $\cD, \cD'$ be product distributions on $\zo^X$ where $\Pr_{x \sim \cD'}[x_i = 0] = (\Pr_{x \sim \cD}[x_i = 0])^\eta$ for each $i \in X$. Then
$$ \Pr_{x \sim \cD'}[x \in \cF] \geq \left(\Pr_{x \sim \cD}[x \in \cF] \right)^\eta.$$
\end{lemma}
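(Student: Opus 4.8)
The plan is to prove Lemma~\ref{lem:boosting-asymmetric} by induction on $|X|$, peeling off one coordinate at a time and reducing the inductive step to a single elementary inequality about scalars. Write $q_i = \Pr_{x\sim\cD}[x_i=0]$, so $\Pr_{x\sim\cD'}[x_i=0]=q_i^{\eta}$; note that $q_i^{\eta}\ge q_i$ because $q_i\in[0,1]$ and $\eta\in(0,1)$, and that $q_i^{\eta}\in\{0,1\}$ iff $q_i\in\{0,1\}$, so $\cD$ and $\cD'$ have the same support. For $|X|=0$ (or $|X|=1$) the claim is immediate: a down-monotone $\cF$ is either empty or all of $\zo^{X}$, or in the one-coordinate case equals $\{0\}$, and in every case both sides are equal.

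For the inductive step, fix a coordinate $n\in X$ and set $\cF_0=\{y\in\zo^{X\setminus\{n\}} : (y,0)\in\cF\}$ and $\cF_1=\{y : (y,1)\in\cF\}$. Down-monotonicity of $\cF$ gives that $\cF_0$ and $\cF_1$ are down-monotone and, crucially, that $\cF_1\subseteq\cF_0$ (if $(y,1)\in\cF$ then $(y,0)\le(y,1)$ lies in $\cF$). Let $\cD^{-},\cD'^{-}$ be the marginals of $\cD,\cD'$ on $\zo^{X\setminus\{n\}}$; these are product distributions satisfying the hypotheses of the lemma with the same $\eta$. Put $a=\Pr_{\cD^{-}}[\cF_0]$ and $b=\Pr_{\cD^{-}}[\cF_1]$, so $0\le b\le a\le 1$, and by the inductive hypothesis $\Pr_{\cD'^{-}}[\cF_0]\ge a^{\eta}$ and $\Pr_{\cD'^{-}}[\cF_1]\ge b^{\eta}$. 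Conditioning on $x_n$ yields
\[ \Pr_{x\sim\cD'}[x\in\cF]=q_n^{\eta}\Pr_{\cD'^{-}}[\cF_0]+(1-q_n^{\eta})\Pr_{\cD'^{-}}[\cF_1]\ \ge\ q_n^{\eta}a^{\eta}+(1-q_n^{\eta})b^{\eta}, \]
while $\Pr_{x\sim\cD}[x\in\cF]=q_n a+(1-q_n)b$. Hence it suffices to prove the scalar inequality: for all $p\in[0,1]$, $\eta\in(0,1)$, and $0\le b\le a\le 1$,
\[ p^{\eta}a^{\eta}+(1-p^{\eta})\,b^{\eta}\ \ge\ \bigl(pa+(1-p)b\bigr)^{\eta}. \]

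This is exactly where the down-monotonicity is used, via the ordering $b\le a$ (the inequality is false without it, e.g.\ $a=0$, $b=1$). To prove it, assume $a>0$ (otherwise both sides vanish), divide by $a^{\eta}$, and set $t=b/a\in[0,1]$; the claim becomes $\phi(t):=p^{\eta}+(1-p^{\eta})t^{\eta}-(p+(1-p)t)^{\eta}\ge 0$. For $p\in\{0,1\}$ one has $\phi\equiv 0$, so assume $p\in(0,1)$. Then $\phi(0)=\phi(1)=0$, and a short computation shows $\phi'(t)\ge 0$ is equivalent to $\tfrac{1-p^{\eta}}{1-p}\ge\bigl(\tfrac{p}{t}+1-p\bigr)^{\eta-1}$. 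The right-hand side increases continuously from $0$ (as $t\to0^{+}$) to $1$ (at $t=1$), since $p/t+1-p$ decreases and the exponent $\eta-1$ is negative; the left-hand side is a constant in $(0,1)$ because $p^{\eta}\ge p$. Therefore $\phi'$ is positive on an initial interval and negative afterwards, i.e.\ it changes sign at most once from $+$ to $-$; combined with $\phi(0)=\phi(1)=0$ this forces $\phi\ge 0$ on $[0,1]$. This closes the induction and, since $\eta<1$, proves the lemma.

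The main obstacle is choosing the right reduction: one should condition on a single coordinate rather than try to interpolate all the $q_i$ simultaneously, and one must check that the two restrictions remain down-monotone with $\cF_1\subseteq\cF_0$ so that both the inductive hypothesis applies and the ordering $b\le a$ holds — that ordering is precisely what makes the scalar inequality true and is the only point at which the hypothesis on $\cF$ enters. The remaining work (the one-variable sign-change analysis of $\phi$ and the degenerate cases $q_i\in\{0,1\}$, $a=0$) is routine. I also note that specializing to $q_i=1-p$ for all $i$ recovers the uniform boosting lemma (Lemma~\ref{lem:boosting}) with $\phi(p)$ there equal to $\ln\sigma_p/\ln(1-p)$.
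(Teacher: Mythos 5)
Your proof is correct. Note that the paper itself does not prove Lemma~\ref{lem:boosting-asymmetric} at all: it is quoted (``somewhat reformulated'') from Lemma~3 of Goemans and Vondr\'ak \cite{GoemansVondrak:06}, so there is no internal proof to compare against. Your argument is a sound self-contained derivation in the same spirit as the original: induct on the coordinates, split on one coordinate $x_n$, use $\cF_1 \subseteq \cF_0$ (the only place down-monotonicity enters, as you correctly flag --- this is what guarantees $b \le a$, without which the scalar inequality fails), and reduce to the one-variable inequality $p^{\eta}a^{\eta} + (1-p^{\eta})b^{\eta} \ge (pa+(1-p)b)^{\eta}$ for $0 \le b \le a$. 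Your verification of that inequality is fine: after normalizing to $t=b/a$, the function $\phi$ vanishes at $t=0$ and $t=1$, and the condition $\phi'(t)\ge 0$ is equivalent to the constant $\frac{1-p^{\eta}}{1-p} \in (0,1)$ dominating $\bigl(\frac{p}{t}+1-p\bigr)^{\eta-1}$, which increases from $0$ to $1$; the resulting single sign change of $\phi'$ from $+$ to $-$ together with the boundary zeros forces $\phi \ge 0$. The degenerate cases ($p\in\{0,1\}$, $a=0$) and the observation that the marginals again satisfy the hypotheses are handled correctly, and your closing remark that the choice $q_i = 1-p'$, $\eta = \ln(1-p)/\ln(1-p')$ recovers the monotonicity of $\phi(p)=\ln\sigma_p/\ln(1-p)$ in Lemma~\ref{lem:boosting} is also right.
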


We choose the significant variables using the following algorithm.

\begin{algorithm}
\label{alg:prod-junta}
Given $f:\{0,1\}^J \rightarrow [0,1]$ and a product distribution $\cD$ on $\zo^J$, produce a small set of important coordinates $J'$ as follows (for parameters $\alpha,\eta>0$):
\begin{itemize}
\item Let $\cD_0$ be a product distribution such that $\Pr_{x \sim \cD_0}[x_i=0] = (\Pr_{x \sim \cD}[x_i=0])^\eta$ for each $i \in J$, and $\cD_1$ a product distribution such that $\Pr_{x \sim \cD_1}[x_i=1] = (\Pr_{x \sim \cD}[x_i=1])^\eta$ for each $i \in J$.
\item Set $S = T = \emptyset$.
\item As long as there is $i \notin S$ such that $\Pr_{x \sim \cD_0}[\partial_i f(x \wedge \b1_S) > \alpha] > 1/2$, include $i$ in $S$. \\
{\em (This step is sufficient for monotone submodular functions.)}
\item As long as there is $i \notin T$ such that $\Pr_{x \sim \cD_1}[\partial_i f(x \vee \b1_{J \setminus T}) < -\alpha] > 1/2$, include $i$ in $T$. \\
{\em (This step deals with non-monotone submodular functions.)}
\item Return $J' = S \cup T$.
\end{itemize}
\end{algorithm}

Note how the algorithm changed from the case of uniform distributions: the criterion for selecting variables is now based on discrete derivatives with respect to a non-uniformly sampled set, according to a distribution derived in a certain way from the target distribution $\cD$. The goal of this criterion is to achieve the following guarantee.

\begin{lemma}
With the same notation as above, for any $i \in J \setminus J'$
$$ \Pr_{x \sim \cD}[\partial_i f(x \wedge \b1_{J'}) > \alpha] \leq (1/2)^{1/\eta} $$
and
$$ \Pr_{x \sim \cD}[\partial_i f(x \vee \b1_{J \setminus J'}) < -\alpha] \leq (1/2)^{1/\eta}. $$
\end{lemma}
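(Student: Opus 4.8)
The plan is to mirror the proof of Lemma~\ref{lem:high-prob} line by line, substituting the non-uniform boosting lemma (Lemma~\ref{lem:boosting-asymmetric}) for its uniform counterpart. For the first inequality I would start from the stopping rule of Algorithm~\ref{alg:prod-junta}: since $S$ is built to completion, every $i \notin S$ (in particular every $i \notin J' = S \cup T$) satisfies $\Pr_{x \sim \cD_0}[\partial_i f(x \wedge \b1_S) > \alpha] \leq 1/2$. Because $S \subseteq J'$, the point $x \wedge \b1_{J'}$ dominates $x \wedge \b1_S$ coordinatewise, and submodularity makes $\partial_i f$ non-increasing, so the event $\{\partial_i f(x \wedge \b1_{J'}) > \alpha\}$ is contained in $\{\partial_i f(x \wedge \b1_S) > \alpha\}$; hence $\Pr_{x \sim \cD_0}[\partial_i f(x \wedge \b1_{J'}) > \alpha] \leq 1/2$ as well.

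Next I would set, for $y \in \zo^{J'}$, $\hat y \in \zo^J$ to be the extension of $y$ by zeros on $J \setminus J'$, and define $\cF = \{ y \in \zo^{J'} : \partial_i f(\hat y) > \alpha \}$; submodularity (non-increasing $\partial_i f$) makes $\cF$ down-monotone. The law of $x \wedge \b1_{J'}$ under $\cD$ (resp.\ $\cD_0$) is just the restriction of $\cD$ (resp.\ $\cD_0$) to the coordinates in $J'$, and by construction these restrictions satisfy $\Pr_{\cD_0}[y_j = 0] = (\Pr_{\cD}[y_j = 0])^\eta$ for each $j \in J'$. Applying Lemma~\ref{lem:boosting-asymmetric} with $\cD_0$ in the role of the boosted distribution then gives $\Pr_{x \sim \cD_0}[x \wedge \b1_{J'} \in \cF] \geq (\Pr_{x \sim \cD}[x \wedge \b1_{J'} \in \cF])^\eta$, and combining with the bound $1/2$ above yields $\Pr_{x \sim \cD}[\partial_i f(x \wedge \b1_{J'}) > \alpha] \leq (1/2)^{1/\eta}$, the first inequality.

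For the second inequality I would reduce to the first by the complementation trick already used in the proof of Lemma~\ref{lem:size}: pass to the submodular function $\bar f$ given by $\bar f(\b1_R) = f(\b1_{J \setminus R})$ (so $\partial_i \bar f(\b1_R) = -\partial_i f(\b1_{J \setminus R})$) and to the complemented product distribution $\bar{\cD}$ with $\Pr_{\bar{\cD}}[x_i = 0] = \Pr_{\cD}[x_i = 1]$. Under this correspondence the $T$-selection rule of Algorithm~\ref{alg:prod-junta} for $f$ becomes exactly the $S$-selection rule for $\bar f$, the distribution $\cD_1$ becomes the $\cD_0$-analogue for $\bar{\cD}$, and the event $\{\partial_i f(x \vee \b1_{J \setminus J'}) < -\alpha\}$ becomes $\{\partial_i \bar f(x' \wedge \b1_{J'}) > \alpha\}$ for $x' = \b1 - x$; the first inequality applied to $\bar f$ and $\bar{\cD}$ then gives the claim. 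The main place where I would have to be careful --- and the only real obstacle --- is the bookkeeping in this second case: one has to complement the right coordinates so that the event of interest is a genuinely down-monotone set in a coordinate system for which $\cD_1$ is the $\eta$-th power of the base distribution in the sense demanded by Lemma~\ref{lem:boosting-asymmetric}. Everything else is a routine transcription of Lemma~\ref{lem:high-prob}, with the extra observation that the boosting lemma acts coordinatewise and therefore applies verbatim to the $J'$-marginals.
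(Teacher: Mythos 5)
Your proposal is correct and follows essentially the same route as the paper's own (much terser) proof: pass from the stopping rule on $S$ (resp.\ $T$) to $J'$ via submodularity, note the resulting event on $\zo^{J'}$ is down-monotone, apply Lemma~\ref{lem:boosting-asymmetric} with $\cD_0$ (resp.\ the flipped $\cD_1$) as the boosted distribution, and handle the second inequality by complementing the cube. Your explicit bookkeeping with $\bar f$ and $\bar{\cD}$ is just a spelled-out version of the paper's ``flipping the cube'' remark.
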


\begin{proof}
Directly from Lemma~\ref{lem:boosting-asymmetric}, applied to events on the subcube $\zo^{J'}$: for a variable that was not selected by the algorithm, we have $\Pr_{x \sim \cD_0}[\partial_i f(x \wedge \b1_{J'}) > \alpha] \leq 1/2$. Since this is a down-monotone event, and $\Pr_{x \sim \cD_0}[x_i=0] = (\Pr_{x \sim \cD}[x_i=0])^\eta$,  Lemma~\ref{lem:boosting-asymmetric} implies $\Pr_{x \sim \cD}[\partial_i f(x \wedge \b1_{J'})] \leq (1/2)^{1/\eta}$. Similarly (by flipping the cube to change an up-monotone event into a down-monotone one), we obtain $\Pr_{x \sim \cD}[\partial_i f(x \vee \b1_{J \setminus J'}) < -\alpha] \leq (1/2)^{1/\eta}$.
\end{proof}

The analysis of the size of set $J'$ is identical to the proof of Lemma~\ref{lem:size}. Compared to Lemma~\ref{lem:size}, the only difference is that we keep track of a random subset of the selected variables, sampled according to the distribution $\cD_0$ (or complement of $\cD_1$, in the second part of the proof). Each selected variable appears in this random set with probability at least $1 - (1-p_0)^\eta \geq \eta p_0$ and hence contributes at least $\frac12 p_0 \eta \alpha$ to its expected value. Therefore, we obtain the following.

\begin{lemma}
The number of variables chosen by the procedure above is $|J'| \leq \frac{4}{p_0 \alpha \eta}$.
\end{lemma}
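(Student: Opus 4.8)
The plan is to follow the proof of Lemma~\ref{lem:size} essentially verbatim, bounding $|S|$ and $|T|$ separately and then combining through $|J'| = |S \cup T| \le |S| + |T|$. The only structural difference from the uniform case is that we now keep track of a random subset of the already-selected variables sampled from the non-uniform distribution $\cD_0$ (for the bound on $|S|$), respectively from the complement of $\cD_1$ (for $|T|$), instead of from $S(\delta)$; consequently the only quantity that changes is the probability with which a given selected variable is caught in that random subset.

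First I would bound $|S|$. Exactly as in Lemma~\ref{lem:size}, for $R \subseteq S$ let $R^+$ collect those $i \in R$ whose marginal value on the previously selected (and retained) coordinates exceeds $\alpha$; submodularity together with the telescoping identity $f(R^+) = f({\bf 0}) + \sum_{i \in R^+} \partial_i f(\b1_{R_{<i}^+}) > \alpha |R^+|$ holds for every $R \subseteq S$, so $f(R^+) \le 1$ turns a lower bound on $\E[|R^+|]$ into an upper bound on $|S|$. Take $R = x \wedge \b1_S$ with $x \sim \cD_0$. At the moment $i$ was put into $S$ the selection rule guaranteed $\Pr_{x \sim \cD_0}[\partial_i f(x \wedge \b1_{S_{<i}}) > \alpha] > \frac12$, and this event depends only on the coordinates of $x$ in $S_{<i}$ and hence is independent of $\{x_i = 1\}$; therefore each $i \in S$ lands in $(x \wedge \b1_S)^+$ with probability at least $\frac12 \Pr_{x \sim \cD_0}[x_i = 1]$, and by linearity $\E_{x \sim \cD_0}[f((x \wedge \b1_S)^+)] > \alpha\, \E_{x \sim \cD_0}[|(x \wedge \b1_S)^+|] \ge \frac12 \alpha \sum_{i \in S} \Pr_{x \sim \cD_0}[x_i = 1]$.

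The new quantitative ingredient is the elementary estimate $\Pr_{x \sim \cD_0}[x_i = 1] = 1 - (\Pr_{x \sim \cD}[x_i = 0])^\eta \ge 1 - (1-p_0)^\eta \ge \eta p_0$, using $\Pr_{x \sim \cD}[x_i = 0] \le 1 - p_0$ and Bernoulli's inequality for the exponent $\eta \in (0,1)$. Plugging this in and using $f \in [0,1]$ gives $1 \ge \frac12 \alpha \eta p_0 |S|$, i.e.\ $|S| \le 2/(\alpha \eta p_0)$. For $|T|$ I would run the identical argument on the submodular function $\bar f(\b1_R) = f(\b1_{J \setminus R})$, noting $\partial_i \bar f(\b1_R) = -\partial_i f(\b1_{J \setminus R})$ and that the complement of $\cD_1$ plays the role of $\cD_0$, so that the $T$-criterion of Algorithm~\ref{alg:prod-junta} is literally the $S$-criterion applied to $\bar f$; this yields $|T| \le 2/(\alpha \eta p_0)$, and adding the two bounds gives $|J'| \le 4/(p_0 \alpha \eta)$. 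There is no real obstacle here: the combinatorial core (submodularity, the telescoping identity, and the range bound $f \le 1$) is untouched from Lemma~\ref{lem:size}; the only points requiring a line of care are the Bernoulli estimate $1-(1-p_0)^\eta \ge \eta p_0$ and the independence of $\{x_i = 1\}$ from the marginal-value event, which holds precisely because the latter is determined by coordinates strictly preceding $i$ in the greedy order.
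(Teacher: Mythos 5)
Your proposal is correct and follows essentially the same route as the paper: the paper also reduces to the proof of Lemma~\ref{lem:size}, with the sole change being that a selected variable is caught in the $\cD_0$-sampled (resp.\ complement-of-$\cD_1$-sampled) subset with probability at least $1-(1-p_0)^\eta \geq \eta p_0$, yielding $|S|, |T| \leq 2/(p_0\alpha\eta)$ and hence $|J'| \leq 4/(p_0\alpha\eta)$. Your additional remarks (the Bernoulli estimate and the independence of $\{x_i=1\}$ from the marginal-value event, which is determined by coordinates preceding $i$) are exactly the details the paper leaves implicit.
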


The rest of the analysis proceeds similarly to the uniform distribution case. We choose $\eta = 1 / \log_2 \frac{16|J|}{\epsilon^2}$ and $\alpha = \frac{1}{16} \epsilon^2$. This ensures that when $x$ is sampled according to $\cD$ restricted to $J'$, the probability that $f(x,y)$  is $\alpha$-Lipschitz in the variables $y$ is at least $1 - 2|J| \cdot (1/2)^{1/\eta} \geq 1 - \frac18 \epsilon^2$. For those points $x$ where $f(x,y)$ is $\alpha$-Lipschitz in $y$, we get by Corollary~\ref{cor:Lipschitz} that the variance of $f$ is at most $2\alpha = \frac18 \epsilon^2$. Therefore, as before we conclude that a junta on the variables indexed by $J'$ approximates $f$ within $\ell_2$ error $\frac12 \epsilon$. The size of the junta is $|J'| = O(\frac{1}{\alpha \eta p_0}) = O(\frac{1}{p_0 \epsilon^2} \log \frac{|J|}{\epsilon^2})$, which proves Lemma~\ref{lem:submod-product-reduce}.

\end{document}